\documentclass{article}

\usepackage{fullpage}
\usepackage{natbib}
\usepackage[utf8]{inputenc} 
\usepackage[T1]{fontenc}
\usepackage{hyperref}       
\usepackage{url}            
\usepackage{booktabs}       
\usepackage{amsfonts}       
\usepackage{nicefrac}       
\usepackage{microtype}      
\usepackage{xcolor}
\usepackage{algorithm}
\usepackage{algorithmicx}
\usepackage[noend]{algpseudocode}
\usepackage{amsmath}
\usepackage{amssymb}
\usepackage{mathtools}
\usepackage{amsthm}
\usepackage{aliascnt}
\usepackage[capitalize,noabbrev]{cleveref}
\usepackage{enumitem}
\usepackage{physics}

\usepackage[mathlines]{lineno}
\usepackage{etoolbox}
\newcommand*\linenomathpatch[1]{
  \cspreto{#1}{\linenomath}
  \cspreto{#1*}{\linenomath}
  \csappto{end#1}{\endlinenomath}
  \csappto{end#1*}{\endlinenomath}
}
\newcommand*\linenomathpatchAMS[1]{
  \cspreto{#1}{\linenomathAMS}
  \cspreto{#1*}{\linenomathAMS}
  \csappto{end#1}{\endlinenomath}
  \csappto{end#1*}{\endlinenomath}
}

\expandafter\ifx\linenomath\linenomathWithnumbers
  \let\linenomathAMS\linenomathWithnumbers
  
  \patchcmd\linenomathAMS{\advance\postdisplaypenalty\linenopenalty}{}{}{}
\else
  \let\linenomathAMS\linenomathNonumbers
\fi

\linenomathpatch{equation}
\linenomathpatchAMS{gather}
\linenomathpatchAMS{multline}
\linenomathpatchAMS{align}
\linenomathpatchAMS{alignat}
\linenomathpatchAMS{flalign}

\theoremstyle{plain}

\newtheorem{theorem}{Theorem}[section]
\crefname{theorem}{Theorem}{Theorems}
\Crefname{theorem}{Theorem}{Theorems}

\newaliascnt{proposition}{theorem}

\aliascntresetthe{proposition}
\crefname{proposition}{Proposition}{Propositions}
\Crefname{proposition}{Proposition}{Propositions}

\newaliascnt{lemma}{theorem}
\newtheorem{lemma}[lemma]{Lemma}
\aliascntresetthe{lemma}
\crefname{lemma}{Lemma}{Lemmas}
\Crefname{lemma}{Lemma}{Lemmas}

\newaliascnt{fact}{theorem}
\newtheorem{fact}[fact]{Fact}
\aliascntresetthe{fact}
\crefname{fact}{Fact}{Facts}
\Crefname{fact}{Fact}{Facts}

\newaliascnt{corollary}{theorem}
\newtheorem{corollary}[corollary]{Corollary}
\aliascntresetthe{corollary}
\crefname{corollary}{Corollary}{Corollaries}
\Crefname{corollary}{Corollary}{Corollaries}

\newaliascnt{claim}{theorem}

\aliascntresetthe{claim}
\crefname{claim}{Claim}{Claims}
\Crefname{claim}{Claim}{Claims}

\theoremstyle{definition}

\newaliascnt{definition}{theorem}
\newtheorem{definition}[definition]{Definition}
\aliascntresetthe{definition}
\crefname{definition}{Definition}{Definitions}
\Crefname{definition}{Definition}{Definitions}

\newaliascnt{assumption}{theorem}

\aliascntresetthe{assumption}
\crefname{assumption}{Assumption}{Assumptions}
\Crefname{assumption}{Assumption}{Assumptions}

\newaliascnt{problem}{theorem}
\newtheorem{problem}[problem]{Problem}
\aliascntresetthe{problem}
\crefname{problem}{Problem}{Problems}
\Crefname{problem}{Problem}{Problems}

\newaliascnt{remark}{theorem}
\newtheorem{remark}[remark]{Remark}
\aliascntresetthe{remark}
\crefname{remark}{Remark}{Remarks}
\Crefname{remark}{Remark}{Remarks}

\usepackage{amsmath,amsfonts,bm}

\def\eqref#1{equation~\ref{#1}}

\def\1{\bm{1}}

\DeclareMathAlphabet{\mathsfit}{\encodingdefault}{\sfdefault}{m}{sl}
\SetMathAlphabet{\mathsfit}{bold}{\encodingdefault}{\sfdefault}{bx}{n}

\def\sR{{\mathbb{R}}}

\newcommand{\E}{\mathbb{E}}

\newcommand{\Var}{\mathrm{Var}}

\usepackage{marginnote}

\allowdisplaybreaks

\algnewcommand{\LeftComment}[1]{\Statex $\triangleright$ #1}
\algnewcommand\algorithmicforeach{\textbf{for each}}
\algdef{S}[FOR]{ForEach}[1]{\algorithmicforeach\ #1\ \algorithmicdo}
\algnewcommand{\IIf}[1]{\State\algorithmicif\ #1\ \algorithmicthen}
\algnewcommand{\ElseIIf}[1]{\algorithmicelse\ #1}
\algnewcommand{\EndIIf}{\unskip\ \algorithmicend\ \algorithmicif}

\newcommand{\poly}{\operatorname{poly}}

\DeclareRobustCommand{\CM}{{\bf CM}}
\DeclareRobustCommand{\CMin}{{\bf CM}^\mathrm{in}}
\DeclareRobustCommand{\CMout}{{\bf CM}^\mathrm{out}}
\DeclareRobustCommand{\CMpos}{{\bf CM}^\mathrm{pos}}
\DeclareRobustCommand{\CMneg}{{\bf CM}^\mathrm{neg}}
\DeclareRobustCommand{\opt}{\mathrm{OPT}}

\DeclareRobustCommand{\neg}{\mathrm{neg}}
\DeclareRobustCommand{\pos}{\mathrm{pos}}
\definecolor{offwhite}{rgb}{0.98, 0.98, 0.98}

\newcommand{\eps}{\varepsilon}
\newcommand{\mcut}{\textsc{Max-Cut}}
\newcommand{\mcsp}{\textsc{Max-CSP}}

\newcommand{\mkcsp}{\textsc{Max-$k$CSP}}

\title{Learning-Augmented Streaming Algorithms for Approximating Boolean Max-CSPs\thanks{All authors contributed equally and are listed in alphabetical order.}}

\author{
  Yinhao Dong\thanks{School of Computer Science and Technology, University of Science and Technology of China. Email: \texttt{yhdong@mail.ustc.edu.cn}.}
  \and
  Pan Peng\thanks{School of Computer Science and Technology, University of Science and Technology of China. Email: \texttt{ppeng@ustc.edu.cn}.}
  \and
  Ali Vakilian\thanks{Rutgers University. Email: \texttt{a.vakilian@rutgers.edu}.}
  \and
  Santhoshini Velusamy\thanks{University of Waterloo. Email: \texttt{santhoshini.velusamy@uwaterloo.ca}.}
  \and
  Xiaoyang Xu\thanks{University of California, Santa Barbara. Email: \texttt{xiaoyangxu@ucsb.edu}.}
}
\date{}

\begin{document}

\maketitle

\begin{abstract}
We study learning-augmented streaming algorithms for estimating the value of Boolean Maximum Constraint Satisfaction Problems (Max-CSPs). Specifically, we consider streaming algorithms equipped with an $\varepsilon$-accurate oracle: for each variable, the oracle outputs a label in $\{-1,1\}$ that agrees with a fixed optimal assignment with probability $1/2+\varepsilon$, and disagrees otherwise. Prior work by Dong, Peng, and Vakilian [ITCS 2025] showed that, for the special case of Max-Cut, such an oracle enables a $(1/2+\Omega(\varepsilon^2))$-approximation using $\poly(1/\varepsilon)$ words of space in insertion-only streams, and $\poly(1/\varepsilon,\log n)$ words of space in dynamic streams. We substantially generalize and strengthen their result, provided that a highly accurate estimate of $\varepsilon$ is available. For any Boolean Max-$2$CSP and any $\eta>0$ independent of $\varepsilon$, we obtain, with high constant probability, a single-pass $(1-\eta)$-approximation using $\poly(1/\varepsilon,1/\eta)$ words of space in insertion-only streams, and $\poly(1/\varepsilon,1/\eta,\log n)$ words of space in dynamic streams, where $n$ is the number of variables. We further extend our techniques to general Boolean Max-$k$CSPs for $k\ge 3$, with slightly worse space complexity.
\end{abstract}

\newpage
\tableofcontents
\newpage
\section{Introduction}
Constraint Satisfaction Problems (CSPs) form a highly expressive and well-studied class of combinatorial optimization problems. 
The goal is to assign values from a finite domain to $n$ variables so as to maximize the number of satisfied constraints, where each constraint depends only on a small subset of variables.
This framework captures classical problems such as \textsc{Max-Cut}, \textsc{Max-Dicut}, and \textsc{Max-$2$AND}. 
Beyond their central role in approximation algorithms and complexity theory, CSP-like finite-domain
labeling formulations also appear throughout machine learning, especially in structured prediction,
conditional random fields, MAP inference, and energy minimization in graphical models and computer
vision~\citep{lafferty2001conditional,taskar2003max,tsochantaridis2004support,boykov2001fast}. 
More recently, CSPs and SAT have also served as benchmarks and targets for learning-based solvers and
algorithm-selection methods~\citep{xu2018towards,selsam2019learning}.

Recently, there has been a flurry of work on CSPs in the \emph{streaming} model (e.g.~\cite{guruswami2017streaming,chou2022linear,chou2024sketching}; see also \cref{sec:relatedwork}).  In this model, the constraints of the instance arrive
sequentially, possibly with deletions in the dynamic setting, and the algorithm
is allowed only limited memory, typically sublinear in the input size. Since
explicitly outputting an assignment for \textsc{Max-Cut} or more general
\textsc{Max-CSP}s already requires $\Omega(n)$ space, the standard streaming
objective is instead to estimate the optimum value of the instance.

A central motivation for studying streaming CSPs is that they admit unconditional, information-theoretic lower bounds (typically via communication complexity) capturing the limits of compressing an instance while preserving its optimum value. Streaming algorithms can still achieve meaningful approximations, and a key theme is understanding the trade-off between approximation quality and space complexity (see recent surveys \cite{sudan2022streaming,singer2025nine}). Given these abundant lower bounds, it is natural to ask whether they can be circumvented under reasonable additional assumptions, that is, \emph{beyond the worst-case} setting.

In many applications, the abundance of data, together with powerful prediction models such as deep neural networks and large language models, makes it possible to obtain reliable hints about the underlying optimization instance. This perspective has motivated the rapidly growing area of
\emph{learning-augmented algorithms}, also known as \emph{algorithms with
predictions}, which studies algorithms that exploit oracle predictions while
retaining provable guarantees.\footnote{See
\url{https://algorithms-with-predictions.github.io/} for an up-to-date
repository of papers in this area.} 
A particularly simple and widely used model is that of \emph{$\eps$-accurate predictions}, where the oracle predicts membership in a fixed optimal solution (or an optimal certificate) with probability only slightly better than random guessing, namely $1/2+\eps$. 
This model was introduced in~\cite{cohen2024max,ghoshal2025constraint} for problems such as \mcut~and \mcsp s, showing that even such mildly biased predictions can be leveraged to obtain provable improvements over prediction-free approximation barriers. 
For example, for \mcut~these results can go beyond the classical $0.878$ threshold achieved by the Goemans--Williamson SDP rounding~\cite{goemans1995improved}, which is known to be optimal under the Unique Games Conjecture~\cite{khot2007optimal}.

The work most closely related to ours is~\cite{dong2025learning}, which studies streaming \mcut~with $\eps$-accurate predictions. They obtain a streaming algorithm that breaks the Kapralov--Krachun lower bound~\cite{kapralov2019optimal} for beating a $1/2$-approximation in $o(n/2^{\poly(1/\eps)})$ space. This demonstrates that even
weak, noisy predictions can fundamentally change the power of streaming
algorithms for CSPs.

In this paper, we study {\em learning-augmented streaming algorithms for Boolean \mcsp s} and prove that they are significantly more powerful than classical streaming algorithms. In particular, we consider an $\eps$-accurate oracle similar to the one defined in \cite{dong2025learning}, where there is a hidden optimal assignment $\boldsymbol{\sigma}^*\in \{-1,+1\}^n$ to the variables $\{x_1,\dots,x_n\}$, and the streaming algorithm has (unlimited) oracle access to a prediction vector $\mathbf{Y}\in \{-1,+1\}^n$, where for every variable $x_i$, $Y_i = \sigma_i^*$ \emph{independently} with probability $1/2+\eps$.

\subsection{Our results}
Our main result shows that weakly accurate predictions dramatically increase the
power of streaming algorithms for Boolean \mkcsp{}s. For any constant
$\eps>0$, access to an $\eps$-accurate oracle is sufficient to obtain an
arbitrarily accurate constant-factor approximation using extremely small space:
for insertion-only streams, where constraints are only added, our algorithms use
space independent of $n$ up to the word size; for dynamic streams, where
constraints may be both inserted and deleted, they use only $\poly(\log n)$
space. 
We assume that the prediction accuracy parameter $\eps$ is known to the algorithm in advance.\footnote{In fact, it suffices to know a highly accurate estimate of $\eps$ for our algorithms to work. Specifically, it is enough to have an estimate $\eps' \in (0,1/2]$ such that $\eps' = (1\pm O(\eta))\cdot \eps$ for \textsc{Max-$2$CSP}s; 
for \textsc{Max-$k$CSP}s the hidden constant may depend on $k$. See Appendix~\ref{sec:estimated-eps} for details. In contrast, the algorithm of \cite{dong2025learning} for \textsc{Max-Cut} requires a constant-factor estimate of $\eps$.}
We now state our main results.

\begin{theorem}\label{thm:generalMax-CSPs}
Let $\eps \in (0, \frac{1}{2}], \delta \in (0,1)$ and $k\in \mathbb{Z}_{>0}$. Let $f:\{-1,1\}^k\rightarrow\{0,1\}$ be a Boolean function of arity $k$. For any $\eta\in (0,1)$, given an $\eps$-accurate oracle $\mathcal{O}$ for a Boolean $\textsc{Max-}k\textsc{CSP}(f)$ instance $\Psi$, there exists a single-pass algorithm that outputs a $(1-\eta)$-approximation to the maximum value of $\Psi$ with probability at least $1-\delta$, using the following amount of space (in words):
\begin{enumerate}
    \item[(1)] if $k=2$, then the algorithm uses $\poly(1/\eps,1/\delta,1/\eta)$  space in insertion-only streams, and $\poly(1/\eps,1/\delta,1/\eta,\log n)$  space in dynamic streams; 
    \item[(2)] if $k\ge 3$, then the algorithm uses $\exp(\poly(k,1/\eps^k,1/\eta,\log(1/\delta)))$  space in insertion-only streams, and $\exp(\poly(k,1/\eps^k,1/\eta,\log(1/\delta))) \poly(\log n)$ space in dynamic streams.
\end{enumerate}
\end{theorem}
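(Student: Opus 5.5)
The approach is to reduce estimating $\opt(\Psi)$ to estimating low-degree ``polynomial statistics'' of the instance with respect to the noisy prediction $\mathbf{Y}$, and to compute those statistics with small-space sketches. Write the objective as $\mathrm{val}_\Psi(\sigma)=\sum_{C}f(\sigma_{C})$ and expand $f$ in the Fourier basis, $f(z)=\sum_{S\subseteq[k]}\hat f(S)\prod_{j\in S}z_j$. Then the value of any assignment is a degree-$\le k$ multilinear polynomial in $\sigma$. Set $\tilde Y_i = Y_i/(2\eps)$. Independence of the oracle bits gives $\E[\tilde Y_i]=\sigma^*_i$, so $\prod_{j\in S}\tilde Y_{i_j}$ is an unbiased estimator of $\prod_{j\in S}\sigma^*_{i_j}$ whenever the indices are distinct. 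Summing over constraints, $\widehat{\opt}:=\sum_C\sum_S \hat f(S)\prod_{j\in S}\tilde Y_{C_j}$ is an unbiased estimator of $\opt(\Psi)=\mathrm{val}_\Psi(\sigma^*)$. However, its variance can be as large as $\sum_{\text{variables}} \deg^2/\eps^{2k}$, which is too large when degrees are concentrated. The core of the proof is therefore a decomposition into \emph{high-degree} and \emph{low-degree} parts.

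\textbf{Step 1 (structure, $k=2$).} Call a variable heavy if its weighted degree exceeds $\tau m$, with $\tau=\poly(\eps,\eta,\delta)$. There are at most $2/\tau$ heavy variables. For light variables, the contribution to $\widehat{\opt}$ of degree-1 and degree-2 Fourier terms has variance $O(\sum_i d_i^2/\eps^4)\le O(\tau m^2/\eps^4)$. Since $\opt\ge \Omega_f(m)$ for any nontrivial $f$ (a random assignment satisfies a constant fraction), Chebyshev gives additive error $\eta\,\opt$.

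\textbf{Step 2 (heavy part).} Heavy variables cannot be handled by the noisy predictions. Instead, I would brute-force over all $2^{O(1/\tau)}$ assignments to them, or better, estimate for each heavy variable $h$ the signed sum $\sum_{C\ni h}(\cdot)\tilde Y_{\text{other}}$ and choose its sign optimally. This step needs the observation that, conditioned on setting heavy variables optimally, light variables still follow $\sigma^*$ in expectation. I expect this to be the main obstacle. Replacing $\sigma^*$ on heavy variables by an optimized choice yields a value $\ge \opt-\eta\opt$ and $\le \opt$, but the optimization must be done on noisy estimates. A union bound over the $2^{O(1/\tau)}$ choices would cost too much for $k=2$, where the target is polynomial space. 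So for $k=2$ I would instead argue that the heavy-vs-light interaction is linear in each heavy sign, which makes the optimization separable up to heavy-heavy edges. Those edges can be stored exactly, since there are only $O(1/\tau^2)$ heavy pairs.

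\textbf{Step 3 (streaming implementation).} In insertion-only streams, heavy hitters are found with Misra--Gries in $O(1/\tau)$ words. The quantities $\sum_C \hat f(S)\prod \tilde Y$ are linear in the stream and are maintained as running sums, since $\mathbf{Y}$ is queried on arrival. The light-part restriction requires knowing heaviness online. I would handle this by also maintaining the per-heavy-candidate sums inside the Misra--Gries counters, so that their contributions can be subtracted at the end. In dynamic streams, I would replace Misra--Gries with CountSketch/$\ell_1$ heavy hitters, and maintain the linear sums together with sketches of each candidate's incident sums, at a cost of $\poly(\log n)$. Amplify from constant success probability to $1-\delta$ via the median trick. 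This would be polynomial in $1/\delta$ if Chebyshev alone is used.

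\textbf{Step 4 ($k\ge3$).} The same template applies, but degrees now generalize to co-degrees of every subset $T$ of size $<k$. Variance control requires all such co-degrees to be below $\tau m$, with $\tau\approx \eps^{2k}\eta^2$. I would perform an iterative regularity-type decomposition, repeatedly marking heavy tuples. This produces $\exp(\poly(k,1/\eps^k,1/\eta))$ heavy structures, and the optimization over them is done by brute force with a union bound. The union bound needs Chernoff-type concentration and hence the $\log(1/\delta)$ inside the exponent. This accounts for the stated exponential space bound.
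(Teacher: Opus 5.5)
Your $k=2$ plan is essentially the paper's argument. Your Fourier decoder $\sum_S\hat f(S)\prod_{j\in S}\tilde Y$ coincides with the paper's AND decoder after its reduction to \textsc{Max-$k$AND}, and choosing each heavy sign separately from its linearly decoded incidence to the light side is exactly its per-variable $\max$. The differences are minor: the paper discards heavy--heavy constraints as $O(|H|^2)\le \eta m/8$ for large $m$ rather than storing them, and finds heavy variables with reservoir sampling plus CountMin rather than Misra--Gries. One slip: the median trick cannot amplify the success probability, because every repetition sees the same prediction vector $\mathbf{Y}$. You must get $1-\delta$ directly from Chebyshev with $\tau\propto\delta$, which yields exactly the $\poly(1/\delta)$ that part (1) allows.

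The genuine gap is Step 4.

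\textbf{What the obstacle is not.} Co-degrees are a red herring. With the heavy coordinates fixed, changing one light label $Y_i$ moves the decoded value by at most $(2\eps)^{-k}\deg(i)$. So only single-variable degrees matter, and these bound all co-degrees anyway.

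\textbf{Why a fixed threshold fails.} The real obstacle is closing the union bound over the $2^{|H|}$ heavy assignments, and Chernoff/McDiarmid with a fixed threshold $\tau m$ does not do it. The per-assignment failure probability is about $2\exp(-\eta^2\eps^{2k}/(2k\tau))$. Meanwhile $|H|$ can be $\Theta(k/\tau)$ while the light variables still carry a constant fraction of the total degree at individual degrees near $\tau m$. Both exponents then scale as $1/\tau$ (roughly $\eta^2\eps^{2k}/(2k\tau)$ versus $(k/\tau)\ln 2$), so no choice of $\tau$ closes the bound.

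\textbf{The missing idea.} The threshold must shrink as $H$ grows: keep adding any $i\notin H$ with $\deg(i)>\theta m/(|H|+1)$. At termination $\sum_{i\in L}c_i^2\le k\theta m^2/((2\eps)^{2k}(|H|+1))$, so the concentration exponent becomes $\eta^2\eps^{2k}(|H|+1)/(2k\theta)$. With $\theta=\eps^{2k}\eta^2/(4k\log(4/\delta))$ this beats the factor $2^{|H|+1}$. At the same time, $km>\theta m\sum_{s\le |H|}1/s$ gives $\log(|H|+1)<k/\theta$. This harmonic bound, not Chernoff per se, is the source of the $\exp(\poly(k,1/\eps^k,1/\eta,\log(1/\delta)))$ size.

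\textbf{Streaming for $k\ge3$ is also unaddressed.} The decoded value is now a polynomial with monomials in several heavy variables. Its light factors depend on which other positions turn out heavy, and $H$ is known only at the end of the stream. The paper handles this with the identity $\frac{1+b\sigma}{2}=\bigl(\frac12+\frac{bY}{4\eps}\bigr)+b\bigl(\frac{\sigma}{2}-\frac{Y}{4\eps}\bigr)$. This lets it record, for every subset of positions of every constraint, an $H$-independent key in a single CountMin sketch.
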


\textbf{Comparison to previous learning-augmented algorithms.} Our results generalize the learning-augmented streaming algorithm
of~\cite{dong2025learning} for \textsc{Max-Cut} to arbitrary Boolean
\mcsp{}s, while also improving the approximation--space tradeoff even for the
special case of \textsc{Max-Cut}. Specifically,
\cite{dong2025learning} obtains a
$\frac12+\Omega(\eps^2)$ approximation using $\poly(1/\eps)$ space in
insertion-only streams and $\poly(1/\eps,\log n)$ space in dynamic streams.
In contrast, our algorithm achieves a $(1-\eta)$-approximation for any
$\eta>0$, at the cost of an additional polynomial dependence on $1/\eta$ in 
space.

\textbf{Comparison to previous classical algorithms.}
For one-pass streaming algorithms in the \emph{dynamic} setting without predictions, strong lower bounds are known.
Chou et al.~\cite{chou2024sketching} showed that for every \textsc{Max-CSP}, there exists an explicit constant $\alpha$ such that, for any $\eta>0$:
(i) there is an $(\alpha-\eta)$-approximation streaming algorithm using $\poly(\log n)$ space, whereas
(ii) any $(\alpha+\eta)$-approximation requires $\Omega(\sqrt{n})$ space. 
For Boolean \textsc{Max-$2$CSP}s, Chou et al.~\cite{chou2020optimal} proved
analogous lower bounds even in the insertion-only setting. For example, for
\textsc{Max-Dicut} and \textsc{Max-$2$AND}, the threshold is
$\alpha=4/9$, so any $(4/9+\eta)$-approximation requires
$\Omega(\sqrt n)$ space. For higher arity, the barriers are even stronger:
\cite{chou2022linear} showed that, without predictions, no $o(n)$-space
streaming algorithm can achieve better than a $2^{-(k-1)}$ approximation for
\textsc{Max-$k$AND}. Similarly, \cite{chou2020optimal} showed that no
$o(\sqrt n)$-space algorithm for \textsc{Max-$k$SAT} can surpass a
$1/\sqrt{2}$ approximation. These lower bounds hold for every $k$ and even on
instances with bounded maximum degree.

Our learning-augmented algorithms bypass these worst-case barriers. With only
weakly accurate independent predictions, they achieve approximations arbitrarily
close to $1$ using space that is independent of $n$ in insertion-only streams
and polylogarithmic in $n$ in dynamic streams, without imposing any bounded-degree
assumption.
We provide a technical overview in \cref{sec:techniques}.

\subsection{Related work}\label{sec:relatedwork}

\textbf{Streaming CSPs.}
There has been a long line of work on streaming algorithms for constraint satisfaction problems (CSPs), including:
single-pass lower bounds for \mcut~\citep{kogan2015sketching,kapralov2014streaming,kapralov2017,kapralov2019optimal};
multi-pass lower bounds for \mcut~\citep{assadi2020multi,assadi2021graph,chen2023towards,fei2025multi} and other CSPs~\citep{fei2025dichotomy};
algorithms and lower bounds for approximating \textsc{Max-Dicut}~\citep{guruswami2017streaming,chou2020optimal,saxena2023streaming,saxena2023improved,saxena2025streaming,velusamy2025,azarmehr2025half};
quantum algorithms and lower bounds for \mcut~and \textsc{Max-Dicut}~\citep{kallaugher2022quantum,kallaugher2024exponential,kallaugher2025design};
and results on other specific CSPs, including Unique Games~\citep{guruswami2019streaming}, monarchy-like predicates~\citep{chou2022sketching}, and \textsc{Max-$k$AND}~\cite{singer2023oblivious}.
More broadly, dichotomy theorems and results for streaming and sketching general CSPs have been established in~\citep{chou2021approximability,chou2022linear,chou2024sketching}.

\textbf{Learning-augmented algorithms.}
Learning-augmented algorithms have also been actively studied across several areas, including online algorithms~\citep{mahdian2007allocating,lykouris2021competitive,kumar2018improving,LindermayrMS25}, data structures~\cite{KraskaBCDP18,mitzenmacher2018model,VaidyaKMK21}, graph algorithms~\cite{dinitz2021faster,ghoshal2025constraint,braverman2024learning}, and sublinear-time algorithms~\cite{indyk2019learning,EdenINRSW21,LiLLVW23}.
Our algorithms fall within the framework of learning-augmented streaming algorithms~\cite{hsu2019learning,jiang2020learning,aamand2019frequency,chen2022triangle,aamand2023improved,AamandCGSW25,Dong25CC,Nagawanshi26moment},
which focuses on using predictions to improve space–accuracy trade-offs. 
More recently, learning-augmented (offline) algorithms have been studied for \mcut~and CSPs~\citep{cohen2024max,ghoshal2025constraint,aamand2025improved,attias2025learning} under variants of the prediction model considered here. 
In particular, for the \mcut~problem where the goal is to output a solution rather than merely estimate its value, \cite{cohen2024max} and \cite{aamand2025improved} achieved approximation ratios exceeding $0.878$ using predictions.
Additionally, \cite{ghoshal2025constraint} developed algorithms that yield near-optimal solutions when the average degree exceeds a threshold depending only on the prediction bias, independent of the instance size; they further extended these results to weighted graphs and the \textsc{Max-3Lin} problem. Finally, \cite{attias2025learning} studied learning-augmented Boolean satisfiability, obtaining both exponential-time speedups for {\sc $k$-SAT} and provable approximation improvements for {\sc Max-SAT} variants, including near-optimal guarantees on instances of sufficiently large average degree.

\section{Preliminaries}

\textbf{Notation.\ } 
Let $[n]:= \{1,2,\dots,n\}$ for any $n\in \mathbb{Z}_{>0}$.
Throughout the paper we let $G=(V,E)$ be a directed, unweighted graph with $n$ vertices and $m$ edges. We assume that $G$ is loopless and has no repeated directed edges, while both orientations of a vertex pair may appear. For each vertex $v\in V$, let $\deg(v)$ denote the number of edges incident to $v$, counting both incoming and outgoing edges.

Given a vertex $v\in V$ and disjoint sets $S, T \subseteq V$, let $e(S,v)$ denote the number of edges from $S$ to $v$; that is, $e(S,v) := |\{(u,v)\in E: u\in S\}|$. Similarly, let $e(v,T)$ denote the number of edges from $v$ to $T$; that is, $e(v,T) := |\{(v,u)\in E: u\in T\}|$.
We define $e(S,T)$ to be the number of edges from $S$ to $T$; $e(S, T) := |\{(u,v)\in E: u\in S\wedge v\in T\}| = \sum_{v\in T} e(S, v) = \sum_{v\in S} e(v, T)$.
A \emph{directed cut} (or \emph{dicut}) $(S, T)$ of $G$ is an ordered bipartition of $V$ into two disjoint subsets $S$ and $T$. 
Following the standard convention, we refer to $S$ as the \emph{left} side and $T$ as the \emph{right} side.
The \emph{value} of the dicut $(S, T)$ is $e(S,T)$.
In the descriptions of our algorithms for \textsc{Max-Dicut}, for a vertex set $S \subseteq V$, we define $S^+$ (resp. $S^-$) as the set of vertices with predicted $+$ (resp. $-$) signs by the given $\eps$-accurate oracle $\mathcal{O}$. Note that $S^+$ and $S^-$ form a bipartition of $S$.

\subsection{Problem definitions} 
\begin{problem}[Boolean \textsc{Max-$k$CSP}]\label{problem:kcsp}
Let $k\in \mathbb{Z}_{>0}$. Let $f:\{-1,1\}^k \rightarrow \{0,1\}$ be a Boolean function of arity $k$. Let $x_1,\dots,x_n$ be Boolean \emph{variables}.
Given a Boolean variable $x$, we refer to $x$ and $-x$ as the \emph{literals} associated with $x$. In other words, a literal is either a variable itself or its negation.
A \emph{constraint} $C$ is a triple $(f,\mathbf{j},\mathbf{b})$, where $\mathbf{j}=(j_1,\dots,j_k)\in [n]^k$ has distinct coordinates and $\mathbf{b}=(b_1,\dots,b_k)\in \{-1,1\}^k$. It represents applying the function $f$ to the literals $b_1x_{j_1}, \dots, b_k x_{j_k}$.

An instance $\Psi$ of the Boolean $\textsc{Max-$k$CSP}(f)$ problem consists of
$m$ constraints $C_1,\dots,C_m$ over $n$ variables $x_1,x_2,\dots,x_n$.
For an assignment $\boldsymbol{\sigma}\in \{-1,1\}^n$, we say a constraint $C=(f,\mathbf{j},\mathbf{b})$ is \emph{satisfied} by $\boldsymbol{\sigma}$ if $f(b_1\sigma_{j_1},\dots,b_k\sigma_{j_k})=1$.
The \emph{value} of $\boldsymbol{\sigma}$ on 
$\Psi$, denoted $\mathrm{val}_{\Psi}(\boldsymbol{\sigma})$,
is the number of constraints satisfied by $\boldsymbol{\sigma}$.
The goal of the problem is to compute the maximum  value on $\Psi$ over all assignments, i.e., $\max_{\boldsymbol{\sigma}\in \{-1,1\}^n} \mathrm{val}_{\Psi}(\boldsymbol{\sigma})$.

Unless stated otherwise, throughout the paper \textsc{Max-$k$CSP} refers to Boolean \textsc{Max-$k$CSP}.
Since the function $f$ will be clear from the context, we will often omit $f$ and write a constraint as $(\mathbf{j},\mathbf{b})$.
We assume that all \textsc{Max-$k$CSP} instances are simple and unweighted: no variable appears more than once in a constraint, and no two constraints have the same function $f$ and the same set of literals.
For each variable $x_i$, we further define $\deg(i)$ as the number of constraints in which $x_i$ appears, either positively as $x_i$ or negatively as $-x_i$.

\end{problem}

\begin{problem}[\textsc{Max-$k$AND}]\label{problem:kand}
    \textsc{Max-$k$AND} is the special case of $\textsc{Max-$k$CSP}(f)$ with $f=\mathrm{AND}$.
    Specifically, $\mathrm{AND}(b_1\sigma_{j_1},\dots,b_k\sigma_{j_k})=1$ iff $b_1\sigma_{j_1} =\cdots =b_k\sigma_{j_k}=1$.
\end{problem}

\textbf{Streaming model.\ }
In this work, we consider the streaming model of computation, where the input instance arrives as a sequence that the algorithm scans in one pass, with the goal of outputting a solution for the problem at hand while minimizing space usage.
For \textsc{Max-CSP}s, an instance $\Psi$ is represented as a sequence of constraints, i.e., $\langle C_1,C_2,\dots \rangle$ where each $C_i$ is a constraint in $\Psi$, and the goal is to output an estimate of the maximum value of $\Psi$. 
When the sequence contains only constraint insertions, we refer to it as an \emph{insertion-only} stream. When it contains both constraint insertions and deletions, we refer to it as a \emph{dynamic} stream and represent it as $\langle (C_1,\Delta_1),(C_2,\Delta_2),\dots \rangle$, where for each $i$, $\Delta_i=1$ (resp. $-1$) denotes the  insertion (resp. deletion) of constraint $C_i$.

\textbf{Prediction model.\ }
In this work, we assume that the algorithm is equipped with an oracle $\mathcal{O}$ that provides $\eps$-accurate predictions $\mathbf{Y}\in \{\pm 1\}^n$, where $\eps \in (0,\frac{1}{2}]$. 
We can also assume that each arriving constraint in the stream is annotated with the predicted labels of its associated variables. Specifically, for each variable index $i \in [n]$, $ \Pr[Y_i = \sigma_i^*] = \frac{1}{2} + \eps $ and $ \Pr[Y_i = -\sigma_i^*] = \frac{1}{2} - \eps $, where $ \boldsymbol{\sigma}^* \in \{-1, 1\}^n $ is some fixed but unknown optimal assignment to the variables. We also assume that the predictions are independent.\footnote{In fact, our algorithms only require $4$-wise independence for \textsc{Max-$2$CSP}s.}

\textbf{Problem statement.\ }
In this paper, we consider the problem of estimating the maximum value of a Boolean \textsc{Max-CSP} instance, in the streaming model using $\eps$-accurate predictions.
We say an algorithm achieves an $\alpha$-approximation if it outputs a value $\mathrm{ALG}$ such that $\alpha \cdot \opt \le \mathrm{ALG}\le \opt$, where $\opt$ denotes the optimal (maximum) value.
We assume that the prediction accuracy parameter $\eps$ is known to the algorithm in advance.

\section{Technical overview}
\label{sec:techniques}
In this section, we provide a technical overview, primarily using \textsc{Max-Dicut} to illustrate our ideas. 
The complete algorithms and their analyses for \textsc{Max-Dicut} are provided in Appendix~\ref{sec:maxdicut}.
\begin{problem}[\textsc{Max-Dicut}]
    Given a directed graph $G=(V,E)$, the goal is to output the maximum value of all possible directed cuts, i.e., $\max_{S\subseteq V} e(S,V\setminus S)$.

    Note that \textsc{Max-Dicut} can be viewed as a special case of \textsc{Max-$2$AND}: the variables are indexed by the vertices, and each directed edge corresponds to a constraint with $\mathbf{b}=(1,-1)$ (see \Cref{problem:kand}). 
\end{problem}

Specifically, the input corresponds to a graph stream consisting of edge insertions (and deletions) of a directed graph $G$, arriving one by one in an arbitrary order. The goal is to estimate the value of the maximum directed cut of $G$, which maximizes the number of edges directed from the first set to the second across all bipartitions of the vertices. This problem naturally generalizes the maximum cut problem \textsc{Max-Cut} for undirected graphs.

At a high level, we view the oracle predictions, given as a vector $\mathbf{Y} \in \{-1,+1\}^n$, as a noisy encoding of an optimal cut $(S^*,T^*)$, represented by $\mathbf{x}^* \in \{\pm1\}^n$. We then design
a collection of \emph{linear measurements} of the stream that can be \emph{decoded} to recover the optimal value.

Before presenting our approach in detail, we note that a natural strategy, used in recent work on learning-augmented algorithms for streaming \mcut~\citep{dong2025learning}, is to follow the predictions, i.e., evaluate the cut induced by predictions.
Fix an optimal solution $\mathbf{x}^*\in\{\pm1\}^n$.
Since each endpoint may flip independently with probability $\frac12-\eps$, a cut edge under $\mathbf{x}^*$
may survive (i.e., still appears to be cut) under $\mathbf{Y}$ with probability at most
$\frac12+\poly(\eps)$, and therefore such arguments
inherently yield at best a $\frac12+\poly(\eps)$ approximation to the optimum value.

For \textsc{Max-Dicut}, the direct analogue of the survival-based approach is to evaluate the directed cut
$(S_\mathbf{Y},T_\mathbf{Y})$ induced by predictions, where $S_\mathbf{Y}=\{v:Y_v=+1\}$ and $T_\mathbf{Y}=\{v:Y_v=-1\}$.
However, an optimal edge $(u,v)$ with $u\in S^*$ and $v\in T^*$ contributes to $(S_\mathbf{Y},T_\mathbf{Y})$ only if
$Y_u=+1$ and $Y_v=-1$, which happens with probability $\tfrac14+O(\eps)$.
Hence this paradigm is inherently limited to a $(1/4+O(\eps))$-approximation for \textsc{Max-Dicut}.
By comparison, without predictions the best known one-pass $O(\log n)$-space algorithms achieve a $(4/9-\eta)$-approximation~\citep{chou2020optimal,guruswami2017streaming},
and with $\widetilde{O}(\sqrt n)$ space one can reach roughly a $0.485$-approximation on general graphs~\citep{saxena2023streaming,saxena2023improved}.
Very recently, a $(1/2-\eta)$-approximation in truly sublinear space was obtained for general graphs, matching the $1/2$ barrier~\citep{azarmehr2025half,kapralov2019optimal}.

Our approach departs from this paradigm: rather than estimating the fraction of optimal cut edges that ``survive'' the noise,
we use \emph{all edge outcomes induced by the noise} and explicitly \emph{invert the noise} to recover the hidden cut value.
Fix an optimal bipartition $(S^*,T^*)$ for $\textsc{Max-Dicut}$ and its corresponding labeling
$\mathbf{x}^*\in\{\pm1\}^n$.
The oracle returns $\mathbf{Y}\in\{\pm1\}^n$ such that for each $v\in V$,
\[
\Pr[Y_v=x_v^*]=\tfrac12+\eps,\qquad \Pr[Y_v=-x_v^*]=\tfrac12-\eps,
\]
independently over vertices.
Thus, $\mathbf{Y}$ can be seen as the output of a binary symmetric channel applied to the optimal labels.

For a directed edge $(u,v)$, the pair $(Y_u,Y_v)$ falls into one of three categories that are relevant to \textsc{Max-Dicut}: (1) predicted to go from left to right: $(Y_u,Y_v)=(+1,-1)$, (2) predicted to stay on the same side: $Y_u=Y_v$, (3) predicted to go from right to left: $(Y_u,Y_v)=(-1,+1)$.
Let $X_0,X_1,X_2$ denote the counts of edges in these three categories (restricted to the low-degree part of the graph,
as discussed below).
Crucially, the expectations $(\E[X_0],\E[X_1],\E[X_2])$ form a \emph{linear transformation} of the unknown triple
$(k_0,k_1,k_2)$, where:
$k_0$ is the number of edges going from $S^*$ to $T^*$,
$k_1$ is the number of edges staying inside $S^*$ or inside $T^*$,
and $k_2$ is the number of edges going from $T^*$ to $S^*$.
Since the noise model is i.i.d.\ across vertices, this linear map depends \emph{only} on $\eps$
and has full rank for any $\eps>0$.
Therefore, we can explicitly \emph{invert} this system, via Gaussian elimination, to obtain a linear estimator
\[
    \widetilde{k}_0 = \alpha_0(\eps) \cdot X_0 + \alpha_1(\eps) \cdot X_1 + \alpha_2(\eps)\cdot X_2,
\]
satisfying $\E[\widetilde{k}_0]=k_0$.
Informally, instead of using the predictions to estimate the number of survived edges of an actual cut,
we use the predictions to build an \emph{unbiased decoder} for the \emph{number of optimal cut edges}.
This is the core reason we can surpass the $1/2+\poly(\eps)$-approximation guarantee. 

\textbf{The main obstacle: dependence across edges and variance blow-up.}
While $\widetilde{k}_0$ is unbiased, it is not immediately concentrated:
even though $(Y_v)_{v\in V}$ are independent, the edge indicators are not independent
because edges share endpoints.
If a vertex participates in many edges, a single prediction bit $Y_v$ influences many summands,
and the variance of the edge-type counts can become too large.
To control this, we decompose the graph into low- and high-degree parts. We partition $V$ into low-degree and high-degree vertices: $L=\{v\in V:\deg(v)<\theta m\}$ and $H=V\setminus L$, for a ratio $\theta=\Theta(\poly(\eps,\eta)\cdot \delta)$ chosen in the analysis.
This decomposition has two key properties:

\begin{itemize}
    \item{{\em Low-degree stability:}} In the induced subgraph on $L$, each edge shares endpoints
    with only $O(\theta m)$ other edges. This bounds the covariance structure and yields
    $\Var[X_i]\le O(\theta m^2)$ for $i\in\{0,1,2\}$, which implies concentration of $X_0,X_1,X_2$
    around their expectations via Chebyshev's inequality.
    Plugging these bounds into the decoded estimator $\widetilde{k}_L$ gives $|\widetilde{k}_L-k_L|\le \poly(\eps,\eta)\cdot m$ with high probability, where $k_L$ denotes the number of optimal \textsc{Max-Dicut} edges completely inside $L$.

    \item{{\em Few high-degree vertices:}} Each $v\in H$ has degree at least $\theta m$,
    hence the number of such vertices is bounded by $|H|\le 2/\theta=\poly(1/\eps,1/\delta,1/\eta)$.
    This allows us to explicitly maintain per-vertex statistics for high-degree vertices.
\end{itemize}

\textbf{Decoding the contribution of high-degree vertices using 2-by-2 inversion.}
For every high-degree vertex $v\in H$ we maintain four counters describing its incident edges to $L$,
separated by the predicted label on the low-degree endpoint: $X_{L^+,v}, X_{L^-,v}, X_{v,L^+}, X_{v,L^-}$. These are the numbers of edges from $L^\pm$ to $v$ and from $v$ to $L^\pm$, respectively,
where $L^+=\{u\in L:Y_u=+1\}$ and $L^-=\{u\in L:Y_u=-1\}$.

Now fix a high-degree vertex $v$ and consider the unknown quantities $k_{S^*,v}, k_{T^*,v}, k_{v,S^*}, k_{v,T^*}$,
which count edges between $v$ and the \emph{true} partition $S^*\cap L$ vs.\ $T^*\cap L$.
The expectation of $(X_{L^+,v},X_{L^-,v})$ is a simple linear system in $(k_{S^*,v},k_{T^*,v})$
(and similarly for the outgoing direction).
Because the noise is independent across low-degree endpoints, this system can be inverted explicitly,
yielding unbiased estimators $\widetilde{k}_{S^*,v}$ and $\widetilde{k}_{v,T^*}$.
Moreover, each $X_{\cdot,v}$ is a sum of independent Bernoulli variables over low-degree endpoints, so by concentration and a union bound over $H$, with high probability every $v\in H$ satisfies $|\widetilde{k}_{S^*,v}-k_{S^*,v}| \le \poly(\eps,\eta) \cdot m$ and $|\widetilde{k}_{v,T^*}-k_{v,T^*}| \le \poly(\eps,\eta)\cdot m$.

\begin{figure}[t!]
    \centering
    \includegraphics[width=0.6\linewidth]{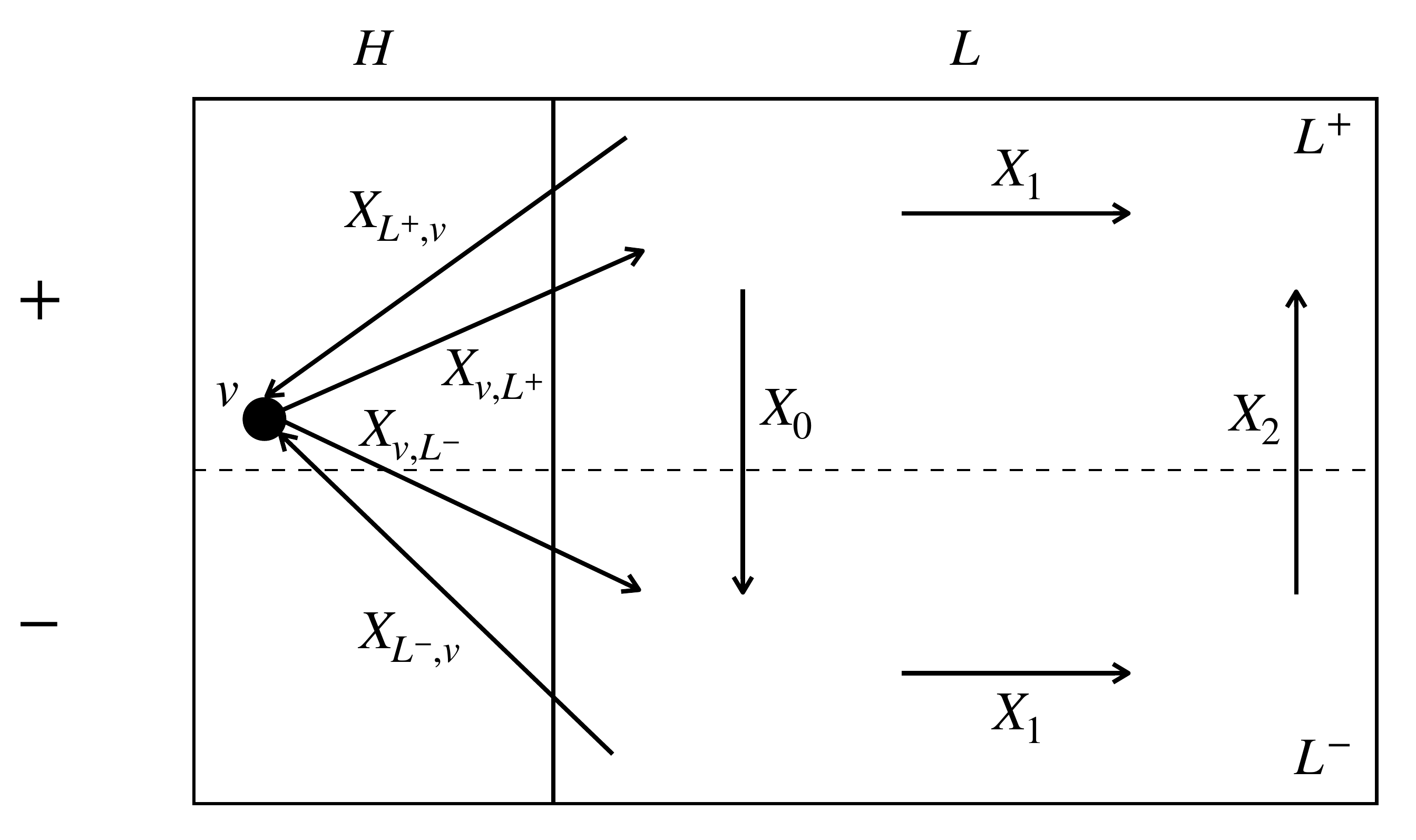}
    \caption{Illustration of the counters maintained in the algorithm for \textsc{Max-Dicut}. The rectangle represents the vertex set $V$, partitioned into high-degree vertices $H$ (left) and low-degree vertices $L$ (right). We also partition $V$ by the predicted labels (indicated by the horizontal dashed line): the upper region contains vertices predicted as $+1$ (denoted by ``$+$''), and the lower region contains vertices predicted as $-1$ (denoted by ``$-$'').}
    \label{fig:dicut-counters}
\end{figure}
\cref{fig:dicut-counters} illustrates the meaning of the maintained counters in the algorithm, including $X_0,X_1,X_2$ and the incidence counters $X_{L^+,v},X_{L^-,v},X_{v,L^+},X_{v,L^-}$ for each $v\in H$.

\textbf{Greedy assignment of high-degree vertices.}
Conditioned on the low-degree part behaving well, we may think of $(S^*\cap L,T^*\cap L)$ as fixed,
and decide independently for each $v\in H$ whether it should join the left side or the right side
in order to maximize the resulting \textsc{Max-Dicut} value contributed by edges between $v$ and $L$.
This contribution is exactly $\max\{k_{S^*,v},k_{v,T^*}\}$ for the true cut, and we approximate it by
$\max\{\widetilde{k}_{S^*,v},\widetilde{k}_{v,T^*}\}$.
Summing over $v\in H$ and adding the decoded low-degree estimate $\widetilde{k}_L$ yields our final estimator.

\textbf{``Non-realizable'' estimator.} A subtle but important point is that the resulting quantity need not correspond to the value of
\emph{any explicit cut computed from the predictions} (or any efficiently constructed cut at all);
it is best viewed as a \emph{decoded estimate} of the optimal value.
Nevertheless, the proof shows that it is a \emph{valid estimator}:
it is (i) close to $\opt$ up to $\poly(\eps,\eta)\cdot m$ additive error, and (ii) never exceeds $\opt$
once we account for the small estimation slack.

Finally, we must account for edges whose endpoints both lie in $H$.
Since $|H|$ is bounded by $\poly(1/\eps,1/\delta,1/\eta)$, the number of such edges becomes negligible compared to $\opt$
under a mild density assumption on $m$, and thus does not affect the $(1-o(1))$ approximation factor.

\textbf{Streaming implementation.} Our streaming implementation follows the same high-level techniques as the \textsc{Max-Cut} algorithm of~\citep{dong2025learning}. 
We implement the described algorithm in the streaming model by approximating the low-/high-degree decomposition and maintaining just enough statistics to run the same decoding at the end. 
Since we cannot afford to identify the high-degree set or store all degrees, we first use reservoir sampling (and $\ell_0$-sampling in the dynamic setting) to extract a small candidate set $\widetilde{H}$ that contains all truly high-degree vertices with high probability, and let $\widetilde{L}=V\setminus \widetilde{H}$. During the stream we maintain (i) a few exact global counters for the prediction-induced edge types and (ii) compact sketches that, for each $v\in \widetilde{H}$, estimate its incidence to the predicted sides. In post-processing, we subtract the estimated contribution of $\widetilde{H}$ from the global counts to recover the corresponding statistics on $\widetilde{L}$, and then apply the same approach of solving linear systems as in the offline algorithm for the $\widetilde{L}$ subgraph to obtain a near-optimal estimation of the \textsc{Max-Dicut} value.

\textbf{Extending to \textsc{Max-$k$AND}.}
For \textsc{Max-$2$AND}, our algorithm closely parallels that for \textsc{Max-Dicut}. A \textsc{Max-$2$AND} instance consists of constraints, which generalize directed edges, and each constraint contains two literals, which play a role analogous to edge directions.
For $k\ge 3$, 
the main challenge is choosing the high-degree set itself. It must be small enough to enumerate, but after enumeration the remaining oracle noise must concentrate uniformly for every high assignment. We grow $H$ with the moving threshold $\theta m/(|H|+1)$ where $\theta$ is a parameter constant dependent on $\eta, \delta$ and $\eps$: the harmonic-series bound keeps $|H|$ bounded in terms of $k,\eps,\eta,\delta$, while the same $(|H|+1)^{-1}$ margin makes every low variable's influence small enough to union-bound over all $2^{|H|}$ decoded values.

\textbf{Extending to general Boolean \textsc{Max-CSP}s.}
There is a standard black-box reduction that transforms any instance of Boolean \textsc{Max-$k$CSP} into an instance of \textsc{Max-$k$AND}, while preserving the optimum value and increasing the number of constraints by at most a factor of $2^k$. Applying this reduction allows us to immediately extend our results to Boolean \textsc{Max-CSP}s.

\section{Our algorithms for \textsc{Max-$2$AND}}\label{sec:max-dicut-2and}

In this section, we present our learning-augmented streaming algorithms for \textsc{Max-$2$AND}. To illustrate the main algorithmic ideas, we first describe an algorithm in the \emph{offline} setting, where the entire input instance is available, in \cref{sec:max-2and-main-text}. 
We then present its streaming implementations in insertion-only and dynamic streams in Sections~\ref{sec:insertion-2and} and~\ref{sec:dynamic-2and}, respectively.

\subsection{Offline algorithm}
\label{sec:max-2and-main-text}
\paragraph{Overview.}
We partition the variables into low-degree variables $L$ and high-degree variables $H$, where $i\in H$ iff $\deg(i)\ge \theta m$ for $\theta:=\Theta(\eps^{4}\delta\eta^2)$.
We then scan the constraints, query the oracle $\mathcal{O}$ on the variables involved, and update a small set of counters. 
These counters are analogous to those in the \textsc{Max-Dicut} case, as described in \cref{sec:techniques}, except that they count the corresponding constraints rather than edges. From these counters, we subsequently recover an estimate of the \textsc{Max-$2$AND} value.
The algorithm is formally given in  Algorithm~\ref{alg:max-2and-offline}.

Specifically, for each constraint whose associated variables are both in $L$, we maintain three counters: (i) $X_0$: the number of such constraints for which both literals are satisfied by the predicted assignment, (ii) $X_1$: the number for which exactly one literal is satisfied, and (iii) $X_2$: the number for which neither literal is satisfied. These statistics later let us decode an accurate estimate of the optimal contribution from constraints whose variables both lie in $L$ (Line~\ref{line:tildekL-2and}).

Furthermore, for each high-degree variable $i\in H$, we maintain four counters: (i) $X_{i,L(+)}^{\pos}$: the number of constraints in which $i$ appears positively and the low-degree literal is satisfied by the predicted assignment; (ii) $X_{i,L(-)}^{\pos}$: $i$ appears positively and the low-degree literal is not satisfied; (iii) $X_{i,L(+)}^{\neg}$: $i$ appears negatively and the low-degree literal is satisfied; and (iv) $X_{i,L(-)}^{\neg}$: $i$ appears negatively and the low-degree literal is not satisfied.
Using these counts, we decode accurate estimates of the corresponding quantities with respect to the unknown optimal assignment $\boldsymbol{\sigma}^*$ (Lines \ref{line:num-true-constraints} and \ref{line:tilde-i-L-pos-neg}).
We then assign each $i\in H$ to maximize its estimated contribution to the value, and combine this with the estimate from constraints whose variables both lie in $L$ to obtain our final \textsc{Max-$2$AND} value estimate (Line~\ref{line:return-max-2and}).

Next, we analyze the approximation guarantee of \cref{alg:max-2and-offline}.

\begin{algorithm}[t!]
\begin{algorithmic}[1]
\Require $\textsc{Max-$2$AND}$ instance $\Psi=(C_1,\dots,C_m)$, an $\eps$-accurate oracle $\mathcal{O} \rightarrow \{-1, 1\}$, $\delta,\eta \in (0,1)$
\Ensure Estimate of the maximum value of $\Psi$

\State $\theta \leftarrow \frac{\eps^{4}\delta\eta^2}{768}$, $L\leftarrow \{i\in [n]: \deg(i) < \theta m\}, H\leftarrow [n] \setminus L$
\State {\bf initialize} $X_0\leftarrow 0, X_1 \leftarrow 0, X_2 \leftarrow 0$.
\State {\bf initialize} $X_{i,L(+)}^{\pos} \leftarrow 0, X_{i,L(-)}^{\pos} \leftarrow 0, X_{i,L(+)}^{\neg} \leftarrow 0, X_{i,L(-)}^{\neg}\leftarrow 0$ for each $i\in H$.
\ForEach{constraint $((j_1,j_2),(b_1,b_2))\in \Psi$}
    \State $Y_{j_1} = \mathcal{O}(j_1), Y_{j_2} = \mathcal{O}(j_2)$
    \If{$j_1\in L$ and $j_2 \in L$}
        \If{$b_1 Y_{j_1} = b_2 Y_{j_2} = 1$}
            $X_0 \leftarrow X_0 +1$
        \EndIf
        \If{$b_1 Y_{j_1} \neq b_2 Y_{j_2}$}
            $X_1 \leftarrow X_1 +1$
        \EndIf
        \If{$b_1 Y_{j_1} = b_2 Y_{j_2} = -1$}
            $X_2 \leftarrow X_2 +1$
        \EndIf
    \EndIf
    
    \If{$j_1\in H$ and $j_2 \in L$}
        \If{$b_1 = 1$}
            \State \textbf{if} $b_2 Y_{j_2} = 1$ \textbf{then} $X_{j_1,L(+)}^{\pos} \leftarrow X_{j_1,L(+)}^{\pos} + 1$ \textbf{else} $X_{j_1,L(-)}^{\pos} \leftarrow X_{j_1,L(-)}^{\pos} + 1$      
        \Else{}
            \State \textbf{if} $b_2 Y_{j_2} = 1$ \textbf{then} $X_{j_1,L(+)}^{\neg} \leftarrow X_{j_1,L(+)}^{\neg} + 1$ \textbf{else} $X_{j_1,L(-)}^{\neg} \leftarrow X_{j_1,L(-)}^{\neg} + 1$ 
        \EndIf
    \EndIf
    
    \If{$j_1\in L$ and $j_2 \in H$}
    
        \If{$b_2 = 1$}
            \State \textbf{if} $b_1 Y_{j_1} = 1$ \textbf{then} $X_{j_2,L(+)}^{\pos} \leftarrow X_{j_2,L(+)}^{\pos} + 1$ \textbf{else} $X_{j_2,L(-)}^{\pos} \leftarrow X_{j_2,L(-)}^{\pos} + 1$      
        \Else{}
            \State \textbf{if} $b_1 Y_{j_1} = 1$ \textbf{then} $X_{j_2,L(+)}^{\neg} \leftarrow X_{j_2,L(+)}^{\neg} + 1$ \textbf{else} $X_{j_2,L(-)}^{\neg} \leftarrow X_{j_2,L(-)}^{\neg} + 1$ 
        \EndIf
    \EndIf
\EndFor
\State $\widetilde{k}_{L} \leftarrow \frac{1}{4\eps^2}[(\frac{1}{2}+\eps)^2 X_0 - (\frac{1}{2}+\eps)(\frac{1}{2}-\eps)X_1 + (\frac{1}{2}-\eps)^2 X_2]$ \phantomsection \label{line:tildekL-2and}

\ForEach{$i\in H$} 
\State $\widetilde{k}_{i,L(+)}^{\pos} \leftarrow (\frac{1}{2}+\frac{1}{4\eps})X_{i,L(+)}^{\pos} + (\frac{1}{2}-\frac{1}{4\eps})X_{i,L(-)}^{\pos}$
\phantomsection \label{line:num-true-constraints}
\State $\widetilde{k}_{i,L(+)}^{\neg} \leftarrow (\frac{1}{2}+\frac{1}{4\eps})X_{i,L(+)}^{\neg} + (\frac{1}{2}-\frac{1}{4\eps})X_{i,L(-)}^{\neg}$  \phantomsection \label{line:tilde-i-L-pos-neg}

\EndFor
\State \textbf{return} $\widetilde{k}_{L} + \sum_{i\in H}\max\{\widetilde{k}_{i,L(+)}^{\pos}, \widetilde{k}_{i,L(+)}^{\neg}\}  - \frac{\eta m}{16}$ 
\phantomsection \label{line:return-max-2and}
\end{algorithmic}
\caption{A learning-augmented \emph{offline} algorithm for estimating the \textsc{Max-$2$AND} value}
\label{alg:max-2and-offline}
\end{algorithm}

\begin{theorem}\label{thm:offlinemax2and}
    Let $\eps \in (0,\frac{1}{2}], \delta,\eta \in (0,1)$, and $\Psi$ be a  \textsc{Max-$2$AND} instance with $m=\Omega(\eps^{-26}\delta^{-10}\eta^{-14})$. Given an $\eps$-accurate oracle $\mathcal{O}$ for $\Psi$,
    \cref{alg:max-2and-offline} provides a $(1-\eta)$-approximation for the maximum value of $\Psi$ with probability at least $1-\delta$.
    \label{thm:max-2and-offline}
\end{theorem}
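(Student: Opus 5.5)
Fix the optimal assignment $\boldsymbol{\sigma}^*$ that defines the oracle. Split the optimal value into three parts: $\opt = k_L + k_{HL} + k_{HH}$. Here $k_L$ counts constraints with both variables in $L$ that are satisfied by $\boldsymbol{\sigma}^*$, $k_{HL}$ counts satisfied constraints with exactly one variable in $H$, and $k_{HH}$ counts satisfied constraints with both variables in $H$.

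Since $|H|\le 2/\theta$, at most $\binom{|H|}{2}\cdot 4 = O(\theta^{-2})$ constraints lie inside $H$. A uniformly random assignment satisfies each \textsc{Max-$2$AND} constraint with probability $1/4$, so $\opt\ge m/4$. Together with the assumed lower bound on $m$, this gives $k_{HH}\le \eta m/64$. The plan is to prove that with probability $\ge 1-\delta$ both
\[
|\widetilde{k}_L - k_L|\le \eta m/64 \quad\text{and}\quad \sum_{i\in H}\max\{\widetilde{k}^{\pos}_{i,L(+)},\widetilde{k}^{\neg}_{i,L(+)}\}\in\bigl[k^{*}_{HL}-\eta m/64,\ M_{HL}+\eta m/64\bigr]
\]
hold. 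Here $k^*_{HL}$ is the $H$–$L$ contribution under $\boldsymbol{\sigma}^*$, and $M_{HL}$ is the best $H$–$L$ contribution achievable when $L$ is fixed to $\boldsymbol{\sigma}^*|_L$ and each $i\in H$ chooses its sign freely.

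\textbf{Upper bound.} The assignment $(\boldsymbol{\sigma}^*|_L,\text{greedy on }H)$ is a genuine assignment. Its value is at least $k_L+M_{HL}$, so $k_L+M_{HL}\le\opt$. Subtracting $\eta m/16$ then makes the output at most $\opt$.

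\textbf{Lower bound.* } We have $k^*_{HL}\le M_{HL}$. Combining this with the two deviation bounds, the output is at least $\opt - k_{HH} - 3\eta m/64 - \eta m/16 \ge \opt-\eta m/4\cdot(\text{const}) \ge (1-\eta)\opt$, using $m\le 4\opt$.

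\textbf{Unbiasedness.* } For a constraint inside $L$, write $t\in\{0,1,2\}$ for the number of its literals falsified by $\boldsymbol{\sigma}^*$. Each predicted literal agrees with the true literal independently with probability $p=\tfrac12+\eps$, with $q=\tfrac12-\eps$. The contribution of the constraint to $(\E X_0,\E X_1,\E X_2)$ is therefore a row of a fixed $3\times3$ matrix, namely the symmetric square of $\begin{pmatrix}p&q\\q&p\end{pmatrix}$. One checks directly that the coefficient vector $\tfrac{1}{4\eps^2}(p^2,-pq,q^2)$ gives $1$ on type $t=0$ and $0$ on types $1,2$. For $t=0$ the value is $(p^4-2p^2q^2+q^4)/(4\eps^2)=(p^2-q^2)^2/(4\eps^2)=1$. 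For $t=1$ it is $(p^3q - pq(p^2+q^2)+q^3p)/(4\eps^2)=0$, and $t=2$ is symmetric. Hence $\E\widetilde{k}_L=k_L$.

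Similarly, for $i\in H$ and its positive occurrences, let $a$ be the number of them whose $L$-literal is true under $\boldsymbol{\sigma}^*$ and $b$ the number whose $L$-literal is false. Then $\E X^{\pos}_{i,L(+)}=pa+qb$ and $\E X^{\pos}_{i,L(-)}=qa+pb$, and the given combination returns $a$. Now $a$ is exactly the number of constraints satisfied if $\sigma_i=+1$, and the analogous $\neg$ count is the number satisfied if $\sigma_i=-1$. So $\sum_i\max(a_i,a'_i)=M_{HL}$, and $k^*_{HL}$ is $\sum_i$ of the chosen one. This gives the sandwich above, provided each $\widetilde{k}$ is within $\eta\theta m/128$ of its mean (then summing over $|H|\le 2/\theta$ variables costs at most $\eta m/64$).

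\textbf{Concentration; this is the main obstacle.} For $\widetilde{k}_L$, I would bound $\Var X_j$ by summing covariances. Two constraint indicators are independent unless they share a variable (this uses only 4-wise independence). Each low variable has degree $<\theta m$, so $\sum_{C,C'}\mathrm{Cov}\le \sum_v \deg(v)^2\le 2m\cdot\theta m$. The coefficients are $O(\eps^{-2})$, so $\Var\widetilde{k}_L=O(\theta m^2/\eps^4)$. Chebyshev with failure probability $\delta/2$ gives deviation $O(\eps^{-2}m\sqrt{\theta/\delta})$, and this is $\le\eta m/64$ precisely because $\theta=\eps^4\delta\eta^2/768$; this explains the choice of $\theta$.

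For the $H$ counters, $X^{\pos}_{i,L(\pm)}$ is a sum over distinct low neighbours. Since the instance is simple, each low variable appears at most a constant number of times with $i$, so the counter is a sum of boundedly dependent Bernoullis with variance $O(\deg(i))\le O(m)$. The coefficients are $O(1/\eps)$. Chebyshev over the $4|H|=O(1/\theta)$ counters, with a union bound, requires deviation $\eta\theta m/128$ with failure probability $\delta\theta/16$. This needs $m\gtrsim \eps^{-2}\eta^{-2}\theta^{-3}\delta^{-1}$, which is of the stated order $\eps^{-26}\delta^{-10}\eta^{-14}$ after also absorbing the $k_{HH}$ requirement. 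Getting these polynomial exponents to line up is the delicate bookkeeping step. If plain Chebyshev is too weak, I would switch to Hoeffding for the $H$ counters, since conditioned on nothing they are sums of independent variables.
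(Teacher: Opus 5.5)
Your proposal is correct and follows essentially the same route as the paper. The shared steps are the decomposition $\opt=k_L+k_{L,H}+k_H$, unbiased decoding by inverting the $3\times 3$ and $2\times 2$ noise systems, and Chebyshev with the shared-variable variance bound $O(\theta m^2)$ on the low part. They also include per-variable Chebyshev plus a union bound over $H$, the feasible assignment $(\boldsymbol{\sigma}^*|_L,\text{greedy on }H)$ for the upper bound, and $|H|\le 2/\theta$ together with $\opt\ge m/4$ for the lower bound.

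The differences are only cosmetic: you apply Chebyshev to $\widetilde{k}_L$ directly rather than to each $X_i$. Your $H$-counter step actually needs only $m=\Omega(\eps^{-14}\delta^{-4}\eta^{-8})$, not $m$ "of the stated order", and the hypothesis $m=\Omega(\eps^{-26}\delta^{-10}\eta^{-14})$ already implies this. So plain Chebyshev suffices and no switch to Hoeffding is needed.
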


The proof of Theorem~\ref{thm:offlinemax2and} relies on the following key properties of the  counters maintained by \cref{alg:max-2and-offline}, which are formalized in the lemmas below. 
\subsubsection{Key properties}

Let $X_0,X_1$, and $X_2$ denote the random variables given by the final values of the corresponding counters at the end of the algorithm.
We first prove concentration bounds for these random variables around their expectations.
\begin{lemma}
\label{lem:concentration-low-max-2and}
Let $\eps \in (0,\frac{1}{2}]$ and $\delta,\eta \in (0,1)$.
For each $i\in \{0,1,2\}$, we have $|X_i - \E[X_i]| \le \frac{\eps^{2}\eta m}{8}$ with probability at least $1-\frac{\delta}{6}$.
\end{lemma}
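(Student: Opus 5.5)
We will use Chebyshev's inequality. For this we need a variance bound $\Var[X_i]\le O(\theta m^2)$, derived from the fact that every variable in $L$ has degree less than $\theta m$. Fix $i\in\{0,1,2\}$ and let $\Psi_L$ be the set of constraints whose two variables both lie in $L$. Then $X_i=\sum_{C\in\Psi_L} Z_C$, where $Z_C\in\{0,1\}$ indicates that constraint $C=((j_1,j_2),(b_1,b_2))$ falls into category $i$ under the predictions. Each $Z_C$ is a function only of the pair $(Y_{j_1},Y_{j_2})$. Expanding the variance gives
\[
\Var[X_i]=\sum_{C,C'\in\Psi_L}\mathrm{Cov}(Z_C,Z_{C'}).
\]

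The key step is to observe that $\mathrm{Cov}(Z_C,Z_{C'})=0$ whenever $C$ and $C'$ share no variable. In that case $Z_C$ and $Z_{C'}$ depend on disjoint sets of prediction bits, which are independent; only $4$-wise independence is needed here, consistent with the paper's footnote. Whenever $C$ and $C'$ do share a variable, we use the trivial bound $|\mathrm{Cov}(Z_C,Z_{C'})|\le \E[Z_CZ_{C'}]\le 1$.

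It then remains to count such pairs. For a fixed $C\in\Psi_L$ with variables $j_1,j_2\in L$, the number of constraints $C'$ sharing a variable with $C$ is at most $\deg(j_1)+\deg(j_2)<2\theta m$. Summing over the at most $m$ choices of $C$ gives
\[
\Var[X_i]\le 2\theta m^2 .
\]

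Chebyshev's inequality now gives
\[
\Pr\!\left[|X_i-\E[X_i]|>\tfrac{\eps^2\eta m}{8}\right]\le \frac{2\theta m^2\cdot 64}{\eps^4\eta^2m^2}=\frac{128\,\theta}{\eps^4\eta^2}.
\]
With $\theta=\eps^4\delta\eta^2/768$, the right-hand side is exactly $128\delta/768=\delta/6$, which is the claimed bound.

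We do not anticipate a genuine obstacle. The only point requiring care is the covariance bookkeeping: ensuring that the count of dependent pairs uses degrees measured in the whole instance. This is harmless, since degrees in $\Psi_L$ are at most degrees in $\Psi$, and the constant $768$ leaves exactly the factor needed.
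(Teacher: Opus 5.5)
Your proof is correct and is essentially the paper's argument: the paper bounds $\E[X_i^2]$ by factorizing the variable-disjoint pairs via independence and bounding each of the fewer than $2\theta m^2$ overlapping pairs by $1$, which gives the same $\Var[X_i]\le 2\theta m^2$ and the same Chebyshev computation. One minor slip: $|\mathrm{Cov}(Z_C,Z_{C'})|\le \E[Z_CZ_{C'}]$ is false when the covariance is negative (e.g.\ two constraints sharing $x_{j}$ with opposite signs, whose category-$0$ events are then mutually exclusive), but you only need the one-sided bound $\mathrm{Cov}(Z_C,Z_{C'})\le \E[Z_CZ_{C'}]\le 1$ (or simply $|\mathrm{Cov}(Z_C,Z_{C'})|\le 1/4$ for indicators), so your conclusion stands.
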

\begin{proof}
    Let $\Psi_L:=\{((j_1,j_2),(b_1,b_2))\in \Psi: j_1,j_2\in L\}$ denote the set of constraints whose associated variables are both in $L$.
    Let $m_L:=|\Psi_L|$.
    For each constraint $((j_1,j_2),(b_1,b_2))\in \Psi_L$, define the following indicator random variables:
    \begin{align*}
    X_{0,j_1 j_2} &=
    \begin{cases}
    1 & \text{if } b_1 Y_{j_1}=b_2 Y_{j_2}=1,\\
    0 & \text{otherwise},
    \end{cases}
    &
    X_{1,j_1 j_2} &=
    \begin{cases}
    1 & \text{if } b_1 Y_{j_1} \neq b_2 Y_{j_2},\\
    0 & \text{otherwise},
    \end{cases}
    &
    X_{2,j_1 j_2} &=
    \begin{cases}
    1 & \text{if } b_1 Y_{j_1}=b_2 Y_{j_2}=-1,\\
    0 & \text{otherwise}.
    \end{cases}
    \end{align*}
    
    Note that for $i\in \{0,1,2\}$, we have $\E[X_{i,j_1 j_2}] = \Pr[X_{i,j_1 j_2} = 1]$ and $X_i=\sum_{((j_1,j_2),(b_1,b_2))\in \Psi_L}X_{i,j_1 j_2}$. Therefore,
    \begin{align*}
        \E[X_i^2] &= \E[ (\sum_{((j_1,j_2),(b_1,b_2))\in \Psi_L} X_{i,j_1j_2})^2]\\
        &= \E[\sum_{((j_1,j_2),(b_1,b_2))\in \Psi_L}\sum_{((j'_1,j'_2),(b'_1,b'_2))\in \Psi_L}X_{i,j_1 j_2} X_{i,j'_1 j'_2}]\\
        &=\sum_{((j_1,j_2),(b_1,b_2))\in \Psi_L}\sum_{((j'_1,j'_2),(b'_1,b'_2))\in \Psi_L}\E[X_{i,j_1 j_2}X_{i,j'_1 j'_2}]\\ 
        &=\sum_{((j_1,j_2),(b_1,b_2))\in \Psi_L}\sum_{((j'_1,j'_2),(b'_1,b'_2))\in \Psi_L} \Pr[X_{i,j_1 j_2} = 1\wedge X_{i,j'_1 j'_2} = 1]\\
        &= \sum_{\substack{((j_1,j_2),(b_1,b_2))\in \Psi_L,\\((j'_1,j'_2),(b'_1,b'_2))\in \Psi_L,\\ \{j_1,j_2\}\cap \{j'_1,j'_2\} = \emptyset}} \Pr[X_{i,j_1 j_2} = 1\wedge X_{i,j'_1 j'_2} = 1] + \sum_{\substack{((j_1,j_2),(b_1,b_2))\in \Psi_L,\\((j'_1,j'_2),(b'_1,b'_2))\in \Psi_L,\\ \{j_1,j_2\}\cap \{j'_1,j'_2\} \neq \emptyset}} \Pr[X_{i,j_1 j_2} = 1\wedge X_{i,j'_1 j'_2} = 1]\\
        &\leq \sum_{\substack{((j_1,j_2),(b_1,b_2))\in \Psi_L,\\((j'_1,j'_2),(b'_1,b'_2))\in \Psi_L,\\ \{j_1,j_2\}\cap \{j'_1,j'_2\} = \emptyset}} \Pr[X_{i,j_1 j_2} = 1] \Pr[X_{i,j'_1 j'_2} = 1] + \sum_{\substack{((j_1,j_2),(b_1,b_2))\in \Psi_L,\\((j'_1,j'_2),(b'_1,b'_2))\in \Psi_L,\\ \{j_1,j_2\}\cap \{j'_1,j'_2\} \neq \emptyset}} 1\\
        &\leq \sum_{((j_1,j_2),(b_1,b_2))\in \Psi_L}\sum_{((j'_1,j'_2),(b'_1,b'_2))\in \Psi_L}\E[X_{i,j_1 j_2}]\E[X_{i,j'_1 j'_2}] + m_L\cdot 2\cdot \theta m \\
        &\leq (\E[\sum_{((j_1,j_2),(b_1,b_2))\in \Psi_L}X_{i,j_1 j_2}])^2 + 2\theta m^2\\
        &= (\E[X_i])^2 + 2\theta m^2,
    \end{align*}
    where the third-to-last step follows from $\deg(i) < \theta m$ for $i\in L$, and the second-to-last step follows from $m_L \le m$.
    
    Then we have $\Var[X_i] = \E[X_i^2] - (\E[X_i])^2 \le 2\theta m^2$.
    By Chebyshev's inequality, 
    \begin{align*}
        \Pr[|X_i - \E[X_i]| > \frac{\eps^{2}\eta  m}{8}] 
        \le \frac{\Var[X_i]}{(\eps^{2}\eta m/8)^2}
        &\le \frac{128\theta m^2}{\eps^{4}\eta^2 m^2} &&\rhd \Var[X_i] \le 2\theta m^2\\
        &= \frac{128\theta}{\eps^{4}\eta^2}
        < \frac{\delta}{6}.  &&\rhd \theta < \frac{\eps^{4} \delta\eta^2}{768}
    \end{align*}
\end{proof}

Recall that $\boldsymbol{\sigma}^*$ denotes an optimal assignment for the \textsc{Max-$2$AND} instance $\Psi$.
Define 
    $k_{L}:= |\{((j_1,j_2),(b_1,b_2))\in \Psi: j_1, j_2\in L,\ b_1 \sigma^*_{j_1} = b_2 \sigma^*_{j_2}=1\}|$
to be the number of constraints in $\Psi$ whose associated variables are both in $L$ (i.e., both low-degree) that are satisfied by $\boldsymbol{\sigma}^*$.
Recall that $\widetilde{k}_L$, computed at Line \ref{line:tildekL-2and} of \Cref{alg:max-2and-offline}, is an estimate of $k_{L}$.
The following lemma shows that $\widetilde{k}_L$ is close to $k_L$ up to an additive error of $\poly(\eta)\cdot m$ with high probability.
\begin{lemma}
\label{lem:concentration-k-L-max-2and}
Let $\delta,\eta \in (0,1)$.
We have $|\widetilde{k}_L - k_L| \le \frac{\eta m}{32}$ with probability at least $1-\frac{\delta}{2}$.
\end{lemma}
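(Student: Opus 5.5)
We will show that $\widetilde{k}_L$ is an exactly unbiased linear decoder of $k_L$. Then we will transfer the concentration of $X_0,X_1,X_2$ from \cref{lem:concentration-low-max-2and} to $\widetilde{k}_L$ through a union bound.

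\textbf{Step 1: the linear system.} For a constraint in $\Psi_L$, write $s_\ell:=b_\ell\sigma^*_{j_\ell}$ for the truth value of literal $\ell$ under $\boldsymbol{\sigma}^*$. The predicted literal $b_\ell Y_{j_\ell}$ equals $s_\ell$ with probability $p:=\frac12+\eps$ and $-s_\ell$ with probability $q:=\frac12-\eps$, independently for the two distinct variables. Let $k_0=k_L$, $k_1$ and $k_2$ count the constraints in $\Psi_L$ with two, one and zero literals satisfied by $\boldsymbol{\sigma}^*$. Linearity of expectation then gives
\begin{align*}
\E[X_0] &= p^2k_0 + pq\,k_1 + q^2k_2,\\
\E[X_1] &= 2pq\,k_0 + (p^2+q^2)k_1 + 2pq\,k_2,\\
\E[X_2] &= q^2k_0 + pq\,k_1 + p^2k_2.
\end{align*}

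\textbf{Step 2: unbiasedness.} We substitute these into $\frac{1}{4\eps^2}[p^2X_0-pqX_1+q^2X_2]$ and collect coefficients.
\begin{itemize}
\item The coefficient of $k_0$ is $\frac{1}{4\eps^2}[p^4-2p^2q^2+q^4]=\frac{(p^2-q^2)^2}{4\eps^2}=\frac{(2\eps)^2}{4\eps^2}=1$, using $p^2-q^2=(p-q)(p+q)=2\eps$.
\item The coefficient of $k_1$ is $\frac{1}{4\eps^2}[p^3q-pq(p^2+q^2)+pq^3]=0$.
\item The coefficient of $k_2$ is $\frac{1}{4\eps^2}[p^2q^2-2p^2q^2+p^2q^2]=0$.
\end{itemize}
Hence $\E[\widetilde{k}_L]=k_L$. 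This routine algebra is the only real content of the proof.

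\textbf{Step 3: concentration.} Apply \cref{lem:concentration-low-max-2and} to each $i\in\{0,1,2\}$ and take a union bound. With probability at least $1-3\cdot\frac{\delta}{6}=1-\frac{\delta}{2}$, every $|X_i-\E[X_i]|\le \frac{\eps^2\eta m}{8}$. On this event,
\[
|\widetilde{k}_L-k_L|\le \frac{1}{4\eps^2}\,(p^2+pq+q^2)\cdot\frac{\eps^2\eta m}{8}=\frac{(p+q)^2-pq}{32}\,\eta m\le\frac{\eta m}{32},
\]
since $p+q=1$ and $pq\ge 0$.

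The only potential obstacle is the coefficient bookkeeping in Step 2. The $1/(4\eps^2)$ blow-up is exactly what the $\eps^2$ factor in \cref{lem:concentration-low-max-2and} is designed to absorb.
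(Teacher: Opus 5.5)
Your proposal is correct and follows the paper's proof step for step. It uses the same $3\times 3$ linear system for $\E[X_0],\E[X_1],\E[X_2]$, the same decoder identity (you verify it by direct substitution where the paper cites Gaussian elimination), and the same union bound over \cref{lem:concentration-low-max-2and}, giving the factor $\frac{1}{32}\left(\frac34+\eps^2\right)\le\frac{1}{32}$; your single two-sided triangle-inequality bound simply combines the paper's separate upper and lower estimates.
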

\begin{proof}
    Let $k_0:=|\{((j_1,j_2),(b_1,b_2))\in \Psi: j_1, j_2\in L,\ b_1 \sigma^*_{j_1} = b_2 \sigma^*_{j_2}=1\}|$ denote the number of constraints in $\Psi$ whose associated variables are both in $L$ and for which both literals are satisfied by $\boldsymbol{\sigma}^*$. Note that $k_0 = k_L$.
    
    Let $k_1:=|\{((j_1,j_2),(b_1,b_2))\in \Psi: j_1, j_2\in L,\ b_1 \sigma^*_{j_1} \neq b_2 \sigma^*_{j_2}\}|$ denote the number of constraints in $\Psi$ whose associated variables are both in $L$ and for which exactly one literal is satisfied by $\boldsymbol{\sigma}^*$.

    Let $k_2:=|\{((j_1,j_2),(b_1,b_2))\in \Psi: j_1, j_2\in L,\ b_1 \sigma^*_{j_1} = b_2 \sigma^*_{j_2}=-1\}|$ denote the number of constraints in $\Psi$ whose associated variables are both in $L$ and for which neither literal is satisfied by $\boldsymbol{\sigma}^*$.
    
According to \cref{alg:max-2and-offline}, we have
\begin{align*}
    \E[X_0] &= \left(\frac{1}{2}+\eps \right)^2 \cdot k_0 + \left(\frac{1}{2}+\eps \right)\left(\frac{1}{2}-\eps \right)\cdot k_1 + \left(\frac{1}{2}-\eps \right)^2 \cdot k_2,\\
    \E[X_1] &= 2\left(\frac{1}{2}+\eps \right)\left(\frac{1}{2}-\eps \right)\cdot k_0 + \left[ \left(\frac{1}{2}+\eps \right)^2 + \left(\frac{1}{2}-\eps \right)^2 \right]\cdot k_1 + 2\left(\frac{1}{2}+\eps \right)\left(\frac{1}{2}-\eps \right)\cdot k_2,\\
    \E[X_2] &= \left(\frac{1}{2}-\eps \right)^2 \cdot k_0 + \left(\frac{1}{2}+\eps \right)\left(\frac{1}{2}-\eps \right)\cdot k_1 + \left(\frac{1}{2}+\eps \right)^2 \cdot k_2.
\end{align*}
Therefore, by Gaussian elimination, we have 
\begin{align}
    k_L = k_0 = \frac{1}{4\eps^2}\left[\left(\frac{1}{2}+\eps\right)^2 \cdot \E[X_0] - \left(\frac{1}{2}+\eps\right)\left(\frac{1}{2}-\eps\right)\cdot \E[X_1] + \left(\frac{1}{2}-\eps\right)^2 \cdot \E[X_2]\right]
\label{eq:gaussian-opt-max-2and}
\end{align}
By applying union bound on Lemma~\ref{lem:concentration-low-max-2and}, we have that with probability at least $1-\frac{\delta}{6}\cdot 3 = 1-\frac{\delta}{2}$, $|X_i - \E[X_i]| \le \frac{\eps^{2}\eta m}{8}$ for all $i\in \{0,1,2\}$. Therefore,
\begin{align*}
    \widetilde{k}_L =~&\frac{1}{4\eps^2}\left[\left(\frac{1}{2}+\eps\right)^2 X_0 - \left(\frac{1}{2}+\eps\right)\left(\frac{1}{2}-\eps\right)X_1 + \left(\frac{1}{2}-\eps\right)^2 X_2\right]\\
    \le~&\frac{1}{4\eps^2}\left[\left(\frac{1}{2}+\eps\right)^2 \left(\E[X_0]+\frac{\eps^{2}\eta m}{8}\right) - \left(\frac{1}{2}+\eps\right)\left(\frac{1}{2}-\eps\right)\left(\E[X_1]-\frac{\eps^{2}\eta m}{8}\right)  
    + \left(\frac{1}{2}-\eps\right)^2 \left(\E[X_2]+\frac{\eps^{2}\eta m}{8}\right)\right] \\
    =~&k_L +  \frac{\eta m}{32}\left[ \left(\frac{1}{2}+\eps\right)^2 + \left(\frac{1}{2}+\eps\right)\left(\frac{1}{2}-\eps\right) + \left(\frac{1}{2}-\eps\right)^2 \right]\\
    =~&k_L + \frac{\eta m}{32}\left(\frac{3}{4}+\eps^2\right)\\
    \le~&k_L+ \frac{\eta m}{32},
\end{align*}
where the third-to-last step follows from \cref{eq:gaussian-opt-max-2and}.
Similarly, $\widetilde{k}_L \ge k_L - \frac{\eta m}{32}$.
\end{proof}

For each $i\in H$, define 
\begin{enumerate}
    \item $k_{i,L(+)}^{\pos}:= |\{((i,j),(b_i,b_j))\in \Psi: j\in L, b_i=1, b_j\sigma_j^* = 1\}|$ to be the number of constraints involving $x_i$ and some low-degree variable $x_j$ such that the literal of $x_i$ appears positively (i.e., $b_i=1$) and the low-side literal is satisfied by $\boldsymbol{\sigma}^*$ (i.e., $b_j\sigma_j^*=1$). 
    We analogously define:
    \item $k_{i,L(-)}^{\pos}:= |\{((i,j),(b_i,b_j))\in \Psi: j\in L, b_i=1, b_j\sigma_j^* = -1\}|$.
    \item $k_{i,L(+)}^{\neg}:= |\{((i,j),(b_i,b_j))\in \Psi: j\in L, b_i=-1, b_j\sigma_j^* = 1\}|$.
    \item $k_{i,L(-)}^{\neg}:= |\{((i,j),(b_i,b_j))\in \Psi: j\in L, b_i=-1, b_j\sigma_j^* = -1\}|$.
\end{enumerate}

Recall that $\widetilde{k}_{i,L(+)}^{\pos}$ and $\widetilde{k}_{i,L(+)}^{\neg}$, computed at Lines \ref{line:num-true-constraints} and \ref{line:tilde-i-L-pos-neg} of \Cref{alg:max-2and-offline}, are estimates of $k_{i,L(+)}^{\pos}$ and $k_{i,L(+)}^{\neg}$, respectively.
The following lemma shows that $\widetilde{k}_{i,L(+)}^{\pos}$ and $\widetilde{k}_{i,L(+)}^{\neg}$ are close to $k_{i,L(+)}^{\pos}$ and $k_{i,L(+)}^{\neg}$, respectively, up to an additive error of $\poly(\eps,\delta,\eta)\cdot m$ with high probability.

\begin{lemma}
\label{lem:concentration-general-max-2and}
Let $\eps \in (0,\frac{1}{2}]$ and $\delta,\eta \in (0,1)$. For each  $i\in H$,
each of the following  holds with probability at least $1-\frac{\eps^{4} \delta^2\eta^2}{6144}$:
\begin{enumerate}
    \item[(1)] $|\widetilde{k}_{i,L(+)}^{\pos} - k_{i,L(+)}^{\pos}| \le \frac{\eps^{4}\delta\eta^3 m}{49152}$;
    \item[(2)] $|\widetilde{k}_{i,L(+)}^{\neg} - k_{i,L(+)}^{\neg}| \le\frac{\eps^{4}\delta\eta^3 m}{49152}$.
\end{enumerate}
\end{lemma}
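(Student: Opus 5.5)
Fix $i\in H$; I treat part (1), and part (2) is identical with ``$\neg$'' in place of ``$\pos$''. The plan is to show that $\widetilde{k}_{i,L(+)}^{\pos}$ is an unbiased estimator of $k_{i,L(+)}^{\pos}$, and then prove concentration via Chebyshev (or Hoeffding), using that the relevant summands are independent.

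\textbf{Step 1: unbiasedness.} Let $N:=k_{i,L(+)}^{\pos}+k_{i,L(-)}^{\pos}$ be the number of constraints containing $x_i$ positively together with a low-degree variable. For each such constraint with low-degree variable $j$, the indicator $[b_jY_j=1]$ equals $1$ with probability $\frac12+\eps$ if $b_j\sigma_j^*=1$, and with probability $\frac12-\eps$ otherwise. Hence
\[
\E[X_{i,L(+)}^{\pos}]=(\tfrac12+\eps)k_{i,L(+)}^{\pos}+(\tfrac12-\eps)k_{i,L(-)}^{\pos},
\qquad
X_{i,L(+)}^{\pos}+X_{i,L(-)}^{\pos}=N
\]
deterministically. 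Writing $\widetilde{k}_{i,L(+)}^{\pos}=\frac12 N+\frac1{4\eps}(X_{i,L(+)}^{\pos}-X_{i,L(-)}^{\pos})$ and solving the $2\times 2$ system gives $\E[\widetilde{k}_{i,L(+)}^{\pos}]=\frac N2+\frac{1}{4\eps}\cdot 2\eps\,(k^{\pos}_{i,L(+)}-k^{\pos}_{i,L(-)})=k_{i,L(+)}^{\pos}$.

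\textbf{Step 2: independence.} The problem is simple, so for fixed $i$ and fixed sign pattern the low-degree partners $j$ in distinct constraints are distinct. (Two constraints on the same pair $\{i,j\}$ with $b_i=1$ but different $b_j$ could share $Y_j$; this gives at most two constraints per $j$, and I handle it by grouping per $j$, with each group contributing a bounded summand.) Consequently $X_{i,L(+)}^{\pos}$ is a sum of independent bounded terms, one per variable $j\in L$ adjacent to $i$, and
\[
\Var[\widetilde{k}_{i,L(+)}^{\pos}]=\tfrac1{4\eps^2}\Var[X_{i,L(+)}^{\pos}]\le \tfrac{O(N)}{\eps^2}\le \tfrac{O(m)}{\eps^2},
\]
using the identity $X_{i,L(+)}^{\pos}-X_{i,L(-)}^{\pos}=2X_{i,L(+)}^{\pos}-N$.

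\textbf{Step 3: concentration.} By Chebyshev, the failure probability for deviation $t=\frac{\eps^4\delta\eta^3m}{49152}$ is at most
\[
O\!\left(\frac{m}{\eps^2 t^2}\right)=O\!\left(\eps^{-10}\delta^{-2}\eta^{-6}m^{-1}\right).
\]
This is below $\frac{\eps^4\delta^2\eta^2}{6144}$ as soon as $m=\Omega(\eps^{-14}\delta^{-4}\eta^{-8})$, which is implied by the assumption $m=\Omega(\eps^{-26}\delta^{-10}\eta^{-14})$ of \cref{thm:offlinemax2and}. Alternatively, Hoeffding gives an exponentially small failure probability, $\exp(-\Omega(\eps^2t^2/m))$. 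The lemma as stated omits the lower bound on $m$, so I would make explicit that it is inherited from the theorem's hypothesis.

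\textbf{Main obstacle.} The only delicate part is the bookkeeping: handling the possible repeated low-degree partner, and matching the exact constants. Neither affects the argument beyond constant factors.
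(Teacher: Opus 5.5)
Your proposal is correct and matches the paper's proof step for step. Both arguments use unbiasedness from inverting the $2\times 2$ system, then rewrite $\widetilde{k}_{i,L(+)}^{\pos}=\frac{1}{2\eps}X_{i,L(+)}^{\pos}+(\frac12-\frac{1}{4\eps})\,c_{i,L}^{\pos}$ to get $\mathrm{Var}\le m/(16\eps^2)$, and finish with Chebyshev using the theorem's lower bound on $m$, which the paper also invokes implicitly. Your per-$j$ grouping of a pair like $(x_i\wedge x_j),(x_i\wedge -x_j)$ handles the same shared-$Y_j$ issue that the paper covers by observing that the indicators are negatively correlated.
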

\begin{proof}
    We prove the first inequality below; the second inequality follows by the same argument.

    For each $i\in H$, let $X_{i,L(+)}^{\pos},X_{i,L(-)}^{\pos},X_{i,L(+)}^{\neg}$, and $X_{i,L(-)}^{\neg}$ denote the random variables given by the final values of the corresponding counters at the end of the algorithm. We have
\begin{align*}
    \E[X_{i,L(+)}^{\pos}] &= \left(\frac{1}{2}+\eps \right)\cdot k_{i,L(+)}^{\pos} + \left(\frac{1}{2}-\eps \right) \cdot k_{i,L(-)}^{\pos},\\
    \E[X_{i,L(-)}^{\pos}] &= \left(\frac{1}{2}-\eps \right)\cdot k_{i,L(+)}^{\pos} + \left(\frac{1}{2}+\eps \right) \cdot k_{i,L(-)}^{\pos},\\
    \E[X_{i,L(+)}^{\neg}] &= \left(\frac{1}{2}+\eps \right)\cdot k_{i,L(+)}^{\neg} + \left(\frac{1}{2}-\eps \right) \cdot k_{i,L(-)}^{\neg},\\
    \E[X_{i,L(-)}^{\neg}] &= \left(\frac{1}{2}-\eps \right)\cdot k_{i,L(+)}^{\neg} + \left(\frac{1}{2}+\eps \right) \cdot k_{i,L(-)}^{\neg}.
\end{align*}

By solving the above system of linear equations, we obtain
\begin{align*}
    k_{i,L(+)}^{\pos} &= \left( \frac{1}{2}+\frac{1}{4\eps}\right)\cdot \E[X_{i,L(+)}^{\pos}] + \left( \frac{1}{2}-\frac{1}{4\eps}\right)\cdot \E[X_{i,L(-)}^{\pos}], \\
    k_{i,L(+)}^{\neg} &=\left( \frac{1}{2}+\frac{1}{4\eps}\right)\cdot \E[X_{i,L(+)}^{\neg}] + \left( \frac{1}{2}-\frac{1}{4\eps}\right)\cdot \E[X_{i,L(-)}^{\neg}].
\end{align*}

According to Line~\ref{line:num-true-constraints} of \cref{alg:max-2and-offline} and by linearity of expectation, we have
    \begin{align*}
        \E[\widetilde{k}_{i,L(+)}^{\pos}] &= \E\left[\left(\frac{1}{2}+\frac{1}{4\eps}\right)X_{i,L(+)}^{\pos} + \left(\frac{1}{2}-\frac{1}{4\eps}\right)X_{i,L(-)}^{\pos}\right]\\
        &= \left(\frac{1}{2}+\frac{1}{4\eps}\right)\E[X_{i,L(+)}^{\pos}] + \left(\frac{1}{2}-\frac{1}{4\eps}\right)\E[X_{i,L(-)}^{\pos}]\\
        &= {k}_{i,L(+)}^{\pos}.
    \end{align*}

    Next, we calculate $\Var[\widetilde{k}_{i,L(+)}^{\pos}]$.
    Note that 
    \begin{align*}
        \widetilde{k}_{i,L(+)}^{\pos} &= \left(\frac{1}{2}+\frac{1}{4\eps}\right)X_{i,L(+)}^{\pos} + \left(\frac{1}{2}-\frac{1}{4\eps}\right)X_{i,L(-)}^{\pos} \\
        &= \left(\frac{1}{2}+\frac{1}{4\eps}\right)X_{i,L(+)}^{\pos} + \left(\frac{1}{2}-\frac{1}{4\eps}\right)(c_{i,L}^{\pos} - X_{i,L(+)}^{\pos}) \\
        &= \frac{1}{2\eps}X_{i,L(+)}^{\pos} + \left(\frac{1}{2}-\frac{1}{4\eps}\right)c_{i,L}^{\pos},
    \end{align*}
    where $c_{i,L}^{\pos}:= |\{((i,j),(b_i,b_j))\in \Psi: j\in L, b_i=1\}|$ denotes the number of constraints involving $x_i$ and some low-degree variable $x_j$ such that the literal of $x_i$ appears positively.
    Therefore, $\Var[\widetilde{k}_{i,L(+)}^{\pos}]=\frac{1}{4\eps^2}\Var[X_{i,L(+)}^{\pos}]$.

    Given $i\in H$, for each constraint $((i,j),(b_i,b_j))\in \Psi$ with $j\in L$ and $b_i=1$, define an indicator random variable $X_{ij}$  as follows:
    \begin{align*}
        X_{ij} =
    \begin{cases}
    1 & \text{if } b_j Y_j=1,\\
    0 & \text{otherwise}.
    \end{cases}
    \end{align*}
    Note that $X_{i,L(+)}^{\pos}=\sum_{((i,j),(b_i,b_j))\in \Psi: j\in L,b_i=1}X_{ij}$ and $\E[X_{ij}] = \Pr[X_{ij} = 1]$.
    Then $\Var[X_{ij}] = \E[X^2_{ij}] - (\E[X_{ij}])^2 = \E[X_{ij}]\cdot (1-\E[X_{ij}]) =(\frac{1}{2}+\eps)(\frac{1}{2}-\eps)=\frac{1}{4}-\eps^2 \le \frac{1}{4}$. 
    Since $X_{ij}$'s are negatively correlated, we have
    \begin{align*}
        \Var[X_{i,L(+)}^{\pos}] \le \sum_{\substack{((i,j),(b_i,b_j))\in \Psi:\\j\in L,b_i=1}}\Var[X_{ij}]\le \frac{m}{4}.
    \end{align*}

    So, $\Var[\widetilde{k}_{i,L(+)}^{\pos}]=\frac{1}{4\eps^2}\Var[X_{i,L(+)}^{\pos}] \le \frac{m}{16\eps^2}$.
    By Chebyshev's inequality,
    \begin{align*}
        \Pr[|\widetilde{k}_{i,L(+)}^{\pos}-k_{i,L(+)}^{\pos}|
        \ge \frac{\eps^{4}\delta\eta^3 m}{49152}] &= \Pr[|\widetilde{k}_{i,L(+)}^{\pos}-\E[\widetilde{k}_{i,L(+)}^{\pos}]|\ge \frac{\eps^{4}\delta\eta^3 m}{49152}] \\
        &\le \frac{\Var[\widetilde{k}_{i,L(+)}^{\pos}]}{(\eps^{4}\delta\eta^3 m/49152)^2}
        \le \frac{49152^2/16}{\eps^{10}\delta^2\eta^6 m} \le \frac{\eps^{4} \delta^2\eta^2}{6144},
    \end{align*}
    where the last step follows from our assumption that $m =\Omega(\eps^{-26}\delta^{-10}\eta^{-14})$.
\end{proof}

Define 
    $k_{H}:= |\{((j_1,j_2),(b_1,b_2))\in \Psi: j_1, j_2\in H,\ b_1 \sigma^*_{j_1} = b_2 \sigma^*_{j_2}=1\}|$
to be the number of constraints in $\Psi$ whose associated variables are both in $H$ (i.e., both high-degree) that are satisfied by $\boldsymbol{\sigma}^*$.
Note that $|H| \le \frac{2m}{\theta m}=\frac{2}{\theta} = \frac{1536}{\eps^{4} \delta\eta^2}$. Thus we have the following fact.
\begin{fact}
\label{fact:size-H-2and}
It holds that
\(
k_H < |H|^2
\le \frac{1536^2}{\eps^{8}\delta^2\eta^4}.
\)
\end{fact}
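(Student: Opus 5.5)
The fact has two parts: a bound on $k_H$ by $|H|^2$, and a bound on $|H|$ itself. I will prove them in that order and then combine.

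\textbf{Bounding $k_H$ by $|H|^2$.} Every constraint counted in $k_H$ has $j_1,j_2\in H$ with $j_1\neq j_2$. By the simplicity assumption of \cref{problem:kcsp}, no two constraints share the same function and the same set of literals. Hence a constraint with both variables in $H$ is determined by an unordered pair $\{j_1,j_2\}\subseteq H$ together with a sign pair $(b_1,b_2)\in\{-1,1\}^2$. (If the stream distinguishes the order of $(j_1,j_2)$, the AND predicate is symmetric, so the set of literals is still what matters.)

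This gives at most $4\binom{|H|}{2}=2|H|(|H|-1)$ such constraints. That is slightly larger than $|H|^2$, so the sign count alone does not yield the strict inequality. The key extra observation is that $k_H$ counts only constraints \emph{satisfied by} $\boldsymbol{\sigma}^*$. For a fixed pair $\{j_1,j_2\}$, AND is satisfied only when $b_1=\sigma^*_{j_1}$ and $b_2=\sigma^*_{j_2}$. So at most one signed constraint per pair is counted, and
\[
k_H\le \binom{|H|}{2}<|H|^2 \quad\text{whenever } |H|\ge 1 .
\]
If $|H|=0$, then $k_H=0$ and the strict inequality is degenerate. In that case I would read the statement as $k_H\le |H|^2$, or note that it is trivially fine for the purpose it serves.

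\textbf{Bounding $|H|$.} This is the handshake-style count already indicated before the fact. Each constraint contributes exactly $2$ to $\sum_i \deg(i)$, so $\sum_i\deg(i)=2m$. Every $i\in H$ has $\deg(i)\ge\theta m$. Therefore
\[
|H|\,\theta m\le 2m, \qquad\text{so}\qquad |H|\le \frac{2}{\theta}=\frac{1536}{\eps^4\delta\eta^2},
\]
using $\theta=\eps^4\delta\eta^2/768$ from \cref{alg:max-2and-offline}.

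\textbf{Combining.} Squaring the bound on $|H|$ gives $|H|^2\le 1536^2/(\eps^8\delta^2\eta^4)$, which completes the fact.

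The only step needing care is the strict inequality $k_H<|H|^2$. It relies on the satisfaction condition restricting each variable pair to one counted constraint, not merely on the number of possible sign patterns.
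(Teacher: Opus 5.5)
Your proposal is correct and follows the paper's route. The paper's only written justification is the handshake count $|H|\,\theta m\le 2m$, which gives $|H|\le 2/\theta=1536/(\eps^4\delta\eta^2)$, exactly as you do. It leaves $k_H<|H|^2$ implicit; your observation that $\boldsymbol{\sigma}^*$ satisfies at most one constraint per variable pair correctly supplies the step $k_H\le\binom{|H|}{2}<|H|^2$, which the naive count over all sign patterns would not give. Your caveat about the degenerate case $|H|=0$ is also right.
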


\subsubsection{Proof of \cref{thm:max-2and-offline}} 
\begin{proof}
Recall that $k_L$ (resp. $k_H$) is the number of constraints in $\Psi$ whose associated variables are both in $L$ (resp. $H$) that are satisfied by $\boldsymbol{\sigma}^*$.

Furthermore, define 
\begin{align*}
    k_{L,H}:=|\{((j_1,j_2),(b_1,b_2))\in \Psi: |\{j_1,j_2\}\cap L| =|\{j_1,j_2\}\cap H|=1, b_1\sigma^*_{j_1} = b_2\sigma^*_{j_2} = 1\}|
\end{align*}
to be the number of constraints in $\Psi$ whose associated variables are one in $L$ and the other in $H$ that are satisfied by $\boldsymbol{\sigma}^*$.

Therefore, we have $\opt = k_L+k_H+ k_{L,H}$. Let $\mathrm{ALG}$ denote the output of \cref{alg:max-2and-offline}.
By union bound, with probability at least $1-\frac{\delta}{2}-\frac{\eps^{4}\delta^2\eta^2}{6144}\cdot 2 \cdot |H| \ge 1-\delta$, we have
\begin{align*}
        \mathrm{ALG} &= \widetilde{k}_{L} + \sum_{i\in H}\max\{\widetilde{k}_{i,L(+)}^{\pos}, \widetilde{k}_{i,L(+)}^{\neg}\} - \frac{\eta m}{16}\\
        &\le k_L + \frac{\eta m}{32} + \sum_{i\in H}\max\{{k}_{i,L(+)}^{\pos}, {k}_{i,L(+)}^{\neg}\} + |H|\cdot \frac{\eps^{4}\delta\eta^3 m}{49152} - \frac{\eta m}{16} &&\rhd \text{Lemma~\ref{lem:concentration-k-L-max-2and} and Lemma~\ref{lem:concentration-general-max-2and}}\\
        &\le k_L + \sum_{i\in H}\max\{{k}_{i,L(+)}^{\pos}, {k}_{i,L(+)}^{\neg}\} && \rhd |H|\le \frac{1536}{\eps^{4}\delta\eta^2}\\
        &\le \opt,
    \end{align*}
where the last step follows since $k_L + \sum_{i\in H}\max\{{k}_{i,L(+)}^{\pos}, {k}_{i,L(+)}^{\neg}\}$ is a lower bound on the value of a feasible assignment.
Specifically, we fix the restriction of the assignment on $L$ to be $\boldsymbol{\sigma}^*|_{L}$, and then, for each $i\in H$, we set $\sigma_i\in\{-1,1\}$ to maximize the number of satisfied constraints involving variable $x_i$ and a low-degree variable.

Furthermore, 
    \begin{align*}
        \mathrm{ALG} &= \widetilde{k}_{L} + \sum_{i\in H}\max\{\widetilde{k}_{i,L(+)}^{\pos}, \widetilde{k}_{i,L(+)}^{\neg}\} - \frac{\eta m}{16}\\
        &\ge k_L - \frac{\eta m}{32} + \sum_{i\in H}\max\{{k}_{i,L(+)}^{\pos}, {k}_{i,L(+)}^{\neg}\} - |H|\cdot \frac{\eps^{4}\delta\eta^3 m}{49152} - \frac{\eta m}{16} &&\rhd \text{Lemma~\ref{lem:concentration-k-L-max-2and} and Lemma~\ref{lem:concentration-general-max-2and}}\\
        &\ge k_L -\frac{\eta m}{32}+ \sum_{i\in H}\max\{{k}_{i,L(+)}^{\pos}, {k}_{i,L(+)}^{\neg}\} -\frac{\eta m}{32} - \frac{\eta m}{16} &&\rhd |H|\le \frac{1536}{\eps^{4}\delta\eta^2}\\
        &\ge k_L + k_{L,H}  -\frac{\eta m}{8} &&\rhd \sum_{i\in H}\max\{{k}_{i,L(+)}^{\pos}, {k}_{i,L(+)}^{\neg}\} \ge k_{L,H}\\
        &= \opt - k_H  -\frac{\eta m}{8} &&\rhd \opt = k_L+k_H+ k_{L,H}\\
        &\ge 
        \opt - \frac{1536^2}{\eps^{8} \delta^2\eta^4} - \frac{\eta}{2} \cdot \opt &&\rhd \text{Fact~\ref{fact:size-H-2and} and } m\le 4\opt\\
        &\ge (1-\eta)\cdot \opt, &&\rhd \frac{1536^2}{\eps^{8} \delta^2\eta^4} \le \frac{\eta m}{8}\le \frac{\eta}{2}\cdot \opt 
    \end{align*}
    where the fourth step follows from 
    \begin{align*}
        \sum_{i\in H}\max\{{k}_{i,L(+)}^{\pos}, {k}_{i,L(+)}^{\neg}\} &= \sum_{i\in H,\sigma^*_i=1}\max\{{k}_{i,L(+)}^{\pos}, {k}_{i,L(+)}^{\neg}\}+\sum_{i\in H,\sigma^*_i=-1}\max\{{k}_{i,L(+)}^{\pos}, {k}_{i,L(+)}^{\neg}\} \\
        &\ge \sum_{i\in H,\sigma^*_i=1} {k}_{i,L(+)}^{\pos} + \sum_{i\in H,\sigma^*_i=-1} {k}_{i,L(+)}^{\neg}
        = k_{L,H}.
    \end{align*}
    
\end{proof}

\subsection{Our algorithm in insertion-only streams}\label{sec:insertion-2and}
In this section we show how to implement the offline estimator (Algorithm~\ref{alg:max-2and-offline}) for \textsc{Max-$2$AND} in the streaming model. Suppose the constraints arrive one by one in an insertion-only stream, in an arbitrary order.

Recall that Algorithm~\ref{alg:max-2and-offline} (i) partitions the variables into low-degree $L$ and high-degree $H$,
(ii) maintains constraint-type counts within $L$ induced by the predictions and approximately computes (via a small linear system) an estimate of the optimal contribution from $L$,
and (iii) for each $i\in H$, maintains incidence counts for constraints involving $i$ and variables in $L$, which are used to assign $i$ that maximizes $i$’s estimated contribution.
The main challenge in the streaming setting is that we do \emph{not} know $H$ and $L$ in advance, and we cannot afford to maintain degrees for all variables explicitly.

Our streaming implementation is built around three ingredients that together let us ``simulate'' the offline estimator:
\begin{enumerate}
    \item 
    We maintain a uniform sample $F$ of $\poly(1/\eps,1/\delta,1/\eta)$ constraints using reservoir sampling.
    At the end of the stream we set $\widetilde{H}$ to be the set of associated variables of constraints in $F$.
    With high probability, every truly high-degree variable appears in $F$ (hence $H\subseteq \widetilde{H}$),
    while $|\widetilde{H}|=\poly(1/\eps,1/\delta,1/\eta)$ remains small.
    We then define $\widetilde{L}=[n]\setminus \widetilde{H}$.

    \item 
    Since we cannot store per-variable incident counts, we maintain four CountMin sketches
    $\CMpos[+]$, $\CMneg[+]$, $\CMpos[-]$, and $\CMneg[-]$.
    Intuitively, they estimate (for each variable $i$) the numbers of constraints in which $i$ appears positively or negatively and the other literal is satisfied or not satisfied by the predicted assignment.
    When a constraint $((j_1,j_2),(b_1,b_2))$ arrives, we query both variables and update the appropriate sketches:
    we update $\CM^{\pos}[b_2\mathcal{O}(j_2)][j_1]$ if $b_1=1$ and $\CM^{\neg}[b_2\mathcal{O}(j_2)][j_1]$ if $b_1=-1$; symmetrically, we update $\CM^{\pos}[b_1\mathcal{O}(j_1)][j_2]$ if $b_2=1$ and $\CM^{\neg}[b_1\mathcal{O}(j_1)][j_2]$ if $b_2=-1$.
    This suffices to recover (up to a small additive error) the quantities $X_{i,\widetilde{L}(+)}^{\pos},X_{i,\widetilde{L}(+)}^{\neg},X_{i,\widetilde{L}(-)}^{\pos},X_{i,\widetilde{L}(-)}^{\neg}$ for each $i\in \widetilde{H}$ at the end of the stream.

    \item 
    In parallel, we maintain exact counters over the entire stream for (i) $X_{+,+}$: the number of such constraints for which both literals are satisfied by the predicted assignment, (ii) $X_{+,-}$: the number for which exactly one literal is satisfied, and (iii) $X_{-,-}$: the number for which neither literal is satisfied.
    These provide the global constraint-type statistics induced by the predictions.
\end{enumerate}

The algorithm is formally described in \cref{alg:max-2and-insertion}. We provide the following theoretical guarantee on its approximation ratio and space complexity. 

\begin{algorithm}[t!]
\begin{algorithmic}[1]
\Require $\textsc{Max-$2$AND}$ instance $\Psi$ as an insertion-only stream of constraints, an $\eps$-accurate oracle $\mathcal{O} \rightarrow \{-1, 1\}$, $\delta,\eta\in (0,1)$
\Ensure Estimate of the maximum value of $\Psi$
\LeftComment{\textbf{\textcolor{blue}{Pre-processing phase}}}
\State {\bf initialize} $X_{+,+}\leftarrow 0, X_{+,-}\leftarrow 0, X_{-,-}\leftarrow 0, m\leftarrow 0$
\State {\bf initialize} four CountMin sketches $\CMpos[+], \CMneg[+], \CMpos[-]$ and $\CMneg[-]$ with accuracy parameter $\frac{\eps^{10}\delta^5 \eta^6}{128\beta}$ and failure probability $\frac{\eps^{8}\delta^4\eta^4}{64\beta}$, where $\beta = 2^{28}$.

\LeftComment{\textbf{\textcolor{blue}{Streaming phase}}}
\ForEach{constraint $((j_1,j_2),(b_1,b_2))$ in the stream}
    \State $m \leftarrow m+1,\ Y_{j_1} = \mathcal{O}(j_1), Y_{j_2} = \mathcal{O}(j_2)$
    \IIf{$b_1 Y_{j_1} = b_2 Y_{j_2} = 1$} $X_{+,+} \leftarrow X_{+,+} + 1$
    \IIf{$b_1 Y_{j_1} \neq b_2 Y_{j_2}$} $X_{+,-} \leftarrow X_{+,-} + 1$
    \IIf{$b_1 Y_{j_1} = b_2 Y_{j_2} = -1$} $X_{-,-} \leftarrow X_{-,-} + 1$
    \IIf{$b_1 = 1$}
         $\textsc{Update} (\CMpos[b_2 Y_{j_2}][j_1])$
    \ElseIIf $\textsc{Update} (\CMneg[b_2 Y_{j_2}][j_1])$
    
    \IIf{$b_2 = 1$}
         $\textsc{Update} (\CMpos[b_1 Y_{j_1}][j_2])$
    \ElseIIf
        $\textsc{Update} (\CMneg[b_1 Y_{j_1}][j_2])$
    
\EndFor

\State In parallel, uniformly sample $\frac{\beta}{\eps^{8}\delta^3\eta^4}$ constraints $F$ in the stream via reservoir sampling.

\LeftComment{\textbf{\textcolor{blue}{Post-processing phase}}}
\ForEach{$i\in [n]$}
    \State $\widetilde{X}_{i,+}^{\pos} \leftarrow \textsc{Query}(\CMpos[+][i]), \quad \widetilde{X}_{i,+}^{\neg} \leftarrow \textsc{Query}(\CMneg[+][i])$
    \State $\widetilde{X}_{i,-}^{\pos} \leftarrow \textsc{Query}(\CMpos[-][i]),\quad \widetilde{X}_{i,-}^{\neg} \leftarrow \textsc{Query}(\CMneg[-][i])$
\EndFor
\State $\widetilde{H} \leftarrow \bigcup_{((j_1,j_2),(b_1,b_2))\in F} \{j_1,j_2\}, \widetilde{L} \leftarrow [n]\setminus \widetilde{H}$
\State $\widetilde{H}^+ \leftarrow \{i\in \widetilde{H}: Y_i = 1\}, \widetilde{H}^- \leftarrow \widetilde{H} \setminus \widetilde{H}^+$ 
\State $\overline{X}_0 \leftarrow X_{+,+} - \sum_{i\in \widetilde{H}^+} \widetilde{X}^{\pos}_{i,+}-\sum_{i\in \widetilde{H}^-} \widetilde{X}^{\neg}_{i,+}$
\State $\overline{X}_1 \leftarrow X_{+,-} - \sum_{i\in \widetilde{H}^+}(\widetilde{X}^{\pos}_{i,-}+\widetilde{X}^{\neg}_{i,+}) - \sum_{i\in \widetilde{H}^-}(\widetilde{X}^{\pos}_{i,+}+\widetilde{X}^{\neg}_{i,-})$ 
\State $\overline{X}_2 \leftarrow X_{-,-} - \sum_{i\in \widetilde{H}^+} \widetilde{X}^{\neg}_{i,-}-\sum_{i\in \widetilde{H}^-} \widetilde{X}^{\pos}_{i,-}$
\State $\overline{k}_{\widetilde{L}}\leftarrow \frac{1}{4\eps^2}[(\frac{1}{2}+\eps)^2 \overline{X}_0 - (\frac{1}{2}+\eps)(\frac{1}{2}-\eps)\overline{X}_1 + (\frac{1}{2}-\eps)^2 \overline{X}_2]$
\ForEach{$i\in \widetilde{H}$} 
\State $\overline{k}_{i,\widetilde{L}(+)}^{\pos} \leftarrow (\frac{1}{2}+\frac{1}{4\eps})\cdot \widetilde{X}_{i,+}^{\pos} + (\frac{1}{2}-\frac{1}{4\eps})\cdot \widetilde{X}_{i,-}^{\pos},\quad \overline{k}_{i,\widetilde{L}(+)}^{\neg} \leftarrow (\frac{1}{2}+\frac{1}{4\eps})\cdot \widetilde{X}_{i,+}^{\neg} + (\frac{1}{2}-\frac{1}{4\eps})\cdot \widetilde{X}_{i,-}^{\neg}$ 
\EndFor
\State \textbf{return} $\overline{k}_{\widetilde{L}} + \sum_{i\in \widetilde{H}}\max\{\overline{k}_{i,\widetilde{L}(+)}^{\pos}, \overline{k}_{i,\widetilde{L}(+)}^{\neg}\}  -  \frac{\eta m}{16}$ 
\end{algorithmic}
\caption{Estimating the \textsc{Max-$2$AND} value in insertion-only streams}
\label{alg:max-2and-insertion}
\end{algorithm}

\begin{theorem}
\label{thm:max-2and-insertion}
Let $\eps\in (0,\frac{1}{2}]$ and $\delta \in (0,1)$. For any $\eta \in (0,1)$,
given an $\eps$-accurate oracle $\mathcal{O}$ for a \textsc{Max-$2$AND} instance $\Psi$, 
there exists a single-pass $(1-\eta)$-approximation algorithm for estimating the maximum value of $\Psi$ in insertion-only streams. The algorithm uses $O(\frac{1}{\eps^{10}\delta^5\eta^6}\log\frac{1}{\eps \delta\eta})$ words of space. The approximation holds with probability at least $1-\delta$.
\end{theorem}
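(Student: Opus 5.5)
The plan is to reduce \cref{thm:max-2and-insertion} to the analysis of the offline estimator (\cref{thm:max-2and-offline}), run on the partition $(\widetilde{L},\widetilde{H})$ instead of $(L,H)$. Everything then follows if we can show that the streaming quantities $\overline{k}_{\widetilde{L}}$ and $\overline{k}^{\pos/\neg}_{i,\widetilde{L}(+)}$ are within $\poly(\eps,\delta,\eta)\cdot m$ of the offline-style estimators defined relative to $\widetilde{L}$. We may assume $m$ exceeds the threshold of \cref{thm:max-2and-offline}. If it does not, $m=\poly(1/\eps,1/\delta,1/\eta)$, and the whole stream can be stored within the stated space and solved exactly.

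\textbf{Step 1: properties of $\widetilde{H}$.} Reservoir sampling draws $s=\beta/(\eps^{8}\delta^3\eta^4)$ constraints. A variable $i$ with $\deg(i)\ge\theta m$ is missed with probability at most $(1-\theta)^s\le e^{-\theta s}$. Since $\theta s=\Theta(\beta/(\eps^4\delta^2\eta^2))$ is large and $|H|\le 2/\theta$, a union bound gives $H\subseteq\widetilde{H}$ with probability $1-\delta/10$. Trivially $|\widetilde{H}|\le 2s$.

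A subtlety is that $\widetilde{H}$ depends on the sample, while the noise $\mathbf{Y}$ is independent of the sample. I would condition on $F$, hence on $\widetilde{H}$. Then every $j\in\widetilde{L}$ has $\deg(j)<\theta m$, so the proofs of \cref{lem:concentration-low-max-2and} and \cref{lem:concentration-k-L-max-2and} go through verbatim with $\widetilde{L}$ in place of $L$. For \cref{lem:concentration-general-max-2and}, degree was never used, so it also holds for every $i\in\widetilde{H}$. The union bound over $|\widetilde{H}|=O(s)$ variables needs the per-variable failure probability to be $O(\delta/s)$ and the error to be $O(\eta m/s)$. The Chebyshev bound with variance $m/(16\eps^2)$ delivers both once $m$ is a large enough polynomial. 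The offline proof then gives
\[
\widetilde{k}_{\widetilde{L}}+\sum_{i\in\widetilde{H}}\max\{\cdot\}\in\bigl[\opt-k_{\widetilde{H}}-O(\eta m),\ \opt\bigr]+O(\eta m).
\]
Here $k_{\widetilde{H}}\le|\widetilde{H}|^2=\poly(1/\eps,1/\delta,1/\eta)\le\eta m/8$, using the large-$m$ assumption.

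\textbf{Step 2: exact identities for the subtraction.} For $i\in\widetilde{H}^+$, the constraints counted by $X_{+,+}$ that involve $i$ positively are exactly those recorded in $\CMpos[+][i]$, and similarly for the other cases. I would check that $X_{+,+}-\sum_{\widetilde{H}^+}X^{\pos}_{i,+}-\sum_{\widetilde{H}^-}X^{\neg}_{i,+}$ equals the $\widetilde{L}$-internal count $X_0$, and likewise for $X_1,X_2$. The counts $X^{\cdot}_{i,\pm}$ here are exact, and their "other endpoint" may lie in $\widetilde{H}$. This creates two discrepancies:
\begin{enumerate}
\item[(a)] constraints with both endpoints in $\widetilde{H}$ are subtracted twice, or miscategorised;
\item[(b)] the per-$i$ counters include neighbours in $\widetilde{H}$, not just in $\widetilde{L}$.
\end{enumerate}
Both effects involve at most $|\widetilde{H}|^2=\poly$ constraints. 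Each is therefore an additive $\poly(1/\eps,1/\delta,1/\eta)\le\eta m/\poly$ error, which the decoding coefficients (of size $O(1/\eps^2)$) inflate only by a $\poly$ factor that is absorbed.

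\textbf{Step 3: CountMin error.} With accuracy $\alpha=\eps^{10}\delta^5\eta^6/(128\beta)$, each query overestimates by at most $\alpha m$ with failure probability $\gamma=\eps^8\delta^4\eta^4/(64\beta)$. A union bound over $4|\widetilde{H}|$ queries fails with probability $O(\gamma s)=O(\delta)$. The summed error in each $\overline{X}_t$ is at most $O(|\widetilde{H}|\alpha m)=O(\eps^2\delta^2\eta^2 m)$. After multiplying by the coefficients, of size $O(1/\eps^2)$ for $\overline{k}_{\widetilde{L}}$ and $O(1/\eps)$ per $i$, summed over $|\widetilde{H}|$ terms, the total error is $O(\eta m)$ with small constants. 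This is consistent with the final $-\eta m/16$ shift, which preserves $\mathrm{ALG}\le\opt$, and with the lower bound $(1-\eta)\opt$ via $m\le4\opt$.

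\textbf{Space and main obstacle.} Space consists of four CountMin sketches of $O(\alpha^{-1}\log\gamma^{-1})=O(\eps^{-10}\delta^{-5}\eta^{-6}\log\frac1{\eps\delta\eta})$ words, plus $O(s)$ for the reservoir and $O(1)$ counters. Note that querying all $i\in[n]$ in post-processing is unnecessary; only $i\in\widetilde{H}$ need be queried. The main obstacle I expect is the constant bookkeeping in Steps 1 and 3. Concentration must hold uniformly over the random set $\widetilde{H}$, which is handled by conditioning on $F$, and the error budgets must be chosen so the combined deviation stays below $\eta m/16$ on both sides.
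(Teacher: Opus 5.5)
Your proposal follows the paper's proof essentially step for step. On the event $H\subseteq\widetilde H$ you rerun the offline analysis with $(\widetilde L,\widetilde H)$ in place of $(L,H)$, using only $\widetilde L\subseteq L$ and $|\widetilde H|\le 2\beta/(\eps^{8}\delta^3\eta^4)$. You recover the $\widetilde L$-internal counts by subtracting the per-variable counters, paying $O(|\widetilde H|^2)$ for constraints inside $\widetilde H$ and for $\widetilde H$-neighbours. You charge the CountMin overestimates, $|\widetilde H|\alpha m=O(\eps^2\delta^2\eta^2 m)$ in each $\overline X_t$, against the $\eta m/16$ slack.

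The differences are cosmetic. You bound $\Pr[i\notin\widetilde H]$ by $(1-\theta)^s\le e^{-\theta s}$ instead of by Chebyshev. You also make explicit the conditioning on $F$, which is independent of $\mathbf{Y}$ and of the hash functions; the paper leaves this implicit. Your remark that the union bound over $\widetilde H$ forces $m=\Omega(|\widetilde H|^3/(\eps^2\eta^2\delta))$ correctly explains the threshold $\eps^{-26}\delta^{-10}\eta^{-14}$.

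One step is false as written: the claim that, below this threshold, ``the whole stream can be stored within the stated space.'' You must buffer up to $\Theta(\eps^{-26}\delta^{-10}\eta^{-14})$ constraints, in parallel, since $m$ is not known in advance. That costs far more than $O(\eps^{-10}\delta^{-5}\eta^{-6}\log\frac{1}{\eps\delta\eta})$ words. So for $m$ in the intermediate range, your argument does not establish the space bound claimed in the theorem.

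The paper's proof makes exactly the same move and only asserts $\poly(1/\eps,1/\delta,1/\eta)$ space for the fallback, so this is inherited rather than a departure from the paper. To genuinely obtain the stated bound, the analysis would have to work for $m$ down to roughly the sketch size. One natural route is to keep in $\widetilde H$ only the sampled variables with large estimated degree, so that $|\widetilde H|=O(1/\theta)$, and to use a Hoeffding bound instead of Chebyshev for the per-variable estimators. Otherwise, the claimed space should be weakened to $\poly(1/\eps,1/\delta,1/\eta)$ words.
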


We prove Theorem~\ref{thm:max-2and-insertion} by establishing several lemmas that justify the correctness of our streaming simulation of Algorithm~\ref{alg:max-2and-offline} using the high-degree threshold $\theta m$ for $\theta:=\frac{\eps^4 \delta \eta^2}{4096}$, i.e., $H:=\{i\in [n]: \deg(i) \ge \theta m\}$.
As we will see, this choice of threshold guarantees that our streaming-constructed partition $(\widetilde L,\widetilde H)$ satisfies the key properties required for the offline analysis, and yields the desired success probability.
We begin by showing that reservoir sampling identifies a superset $\widetilde{H}$ of the high-degree variables $H$ with high probability.
Recall that $\widetilde{H}$ is the set of associated variables  of all constraints in the uniformly sampled constraint set $F$. 

\begin{lemma}
\label{lem:high-deg-reservoir-max-2and}
Let $\delta \in (0,1)$.
Then
 $\widetilde{H}$ contains all high-degree variables $H$ in $\Psi$ (i.e., $H \subseteq \widetilde{H}$) with probability at least $1-\frac{\delta}{8}$.
\end{lemma}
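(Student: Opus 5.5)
Fix a high-degree variable $i\in H$, so $\deg(i)\ge \theta m$ with $\theta=\frac{\eps^4\delta\eta^2}{4096}$. The set $F$ consists of $s:=\frac{\beta}{\eps^{8}\delta^3\eta^4}$ constraints sampled uniformly by reservoir sampling, and $i\in\widetilde H$ as soon as one sampled constraint contains $x_i$. Each sampled constraint contains $x_i$ with probability $\deg(i)/m\ge\theta$. If $s\ge m$, the reservoir holds the whole stream and $H\subseteq\widetilde H$ holds deterministically. So assume $s<m$.

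Reservoir sampling of $s$ items returns a uniformly random $s$-subset of the $m$ constraints, which is sampling without replacement. The probability that none of the $\deg(i)$ constraints involving $x_i$ is chosen is
\begin{align*}
\prod_{t=0}^{s-1}\frac{m-\deg(i)-t}{m-t}\le \left(1-\frac{\deg(i)}{m}\right)^{s}\le (1-\theta)^s\le e^{-\theta s}.
\end{align*}
Each factor is at most $1-\deg(i)/m$, so this is no worse than sampling with replacement. If one instead implements reservoir sampling as $s$ independent single-item reservoirs, the bound $(1-\theta)^s$ holds directly.

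Next I would union bound over $H$. Since $\sum_j\deg(j)=2m$, we have $|H|\le 2/\theta$. Hence
\begin{align*}
\Pr[H\not\subseteq\widetilde H]\le \frac{2}{\theta}e^{-\theta s}.
\end{align*}
Substituting the values gives
\begin{align*}
\theta s=\frac{\beta}{4096\,\eps^4\delta^2\eta^2}=\frac{2^{16}}{\eps^4\delta^2\eta^2}.
\end{align*}
So the bound becomes $\frac{8192}{\eps^4\delta\eta^2}\exp\!\left(-\frac{2^{16}}{\eps^4\delta^2\eta^2}\right)$. Writing $z=\frac{1}{\eps^4\delta\eta^2}\ge 16$ and using $1/\delta\ge 1$, this is at most $8192\,z\,e^{-2^{16}z}$. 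That quantity is astronomically small, well below $\delta/8$ because $\delta\ge 1/z$.

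The only step requiring care is justifying the without-replacement bound. The constant bookkeeping is loose enough that it poses no obstacle.
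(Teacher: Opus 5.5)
Your proof is correct, but it takes a different route from the paper's.

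The paper uses a second-moment argument. It lets $X$ count the sampled constraints containing $x_i$, with $\E[X]=\frac{\beta\deg(i)}{\eps^8\delta^3\eta^4 m}$. It bounds $\Var[X]\le\E[X]$ via negative correlation of the inclusion indicators. Chebyshev then gives $\Pr[X\le 0]\le 1/\E[X]\le \frac{4096\eps^4\delta^2\eta^2}{\beta}$. The union bound over $|H|\le 8192/(\eps^4\delta\eta^2)$ variables yields exactly $\delta/8$ when $\beta=2^{28}$.

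You instead compute the miss probability directly as the hypergeometric product and bound it by $e^{-\theta s}$.

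\begin{itemize}
\item \textbf{What your route buys.} The failure probability is exponentially small rather than polynomial. It shows a sample of size $O(\theta^{-1}\log\frac{1}{\theta\delta})$ would already suffice. Chebyshev only gives a per-variable miss probability of order $1/(\theta s)$, which forces $s=\Omega(1/(\theta^2\delta))$. That requirement is exactly where the paper's $\eps^{-8}\delta^{-3}\eta^{-4}$ sample size comes from.
\item \textbf{What the paper's route buys.} It needs only pairwise negative correlation, whereas yours uses the exact sampling distribution (a uniform $s$-subset, or i.i.d.\ draws). Both are available here. The paper itself switches to your style of bound, $(1-\deg(i)/m)^{|F|}\le e^{-|F|\deg(i)/m}$, in Lemma~\ref{lem:streaming-unbounded-high-set}.
\item \textbf{A small extra.} Your separate treatment of the case $s\ge m$ is slightly cleaner. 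The paper's formula $\E[X_j]=s/m$ tacitly assumes $s\le m$, which it only justifies later through the large-$m$ assumption in the theorem.
\end{itemize}
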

\begin{proof}
    Let $\theta:=\frac{\eps^{4}\delta\eta^2}{4096}$. Consider any $i \in H$. Then, by definition of $H$, we have $\deg(i) \geq \theta m$. 
    For each constraint $C_j$ involving variable $x_i$, define an indicator random variable 
    \begin{equation*}
            X_{j}= \begin{cases}1 & \text { if } C_j\in F, \\ 0 & \text { otherwise. }\end{cases}
    \end{equation*}
    So, $\E[X_j] = \Pr[X_j=1] = \frac{\beta}{\eps^{8}\delta^3\eta^4 m}$. We define $X:= \sum_{j\in [\deg(i)]} X_j$ to denote the number of constraints involving variable $x_i$ that are sampled by the end of the stream. Then, $\E[X] = \sum_{j\in [\deg(i)]} \E[X_j] = \frac{\beta\deg(i)}{\eps^{8}\delta^3 \eta^4m}$. 

    Next, we bound $\Var[X]$. For any $j\in [\deg(i)]$, $\Var[X_j] = \E[X^2_j] - (\E[X_j])^2 = \E[X_j] - (\E[X_j])^2 \le \frac{\beta}{\eps^{8} \delta^3\eta^4 m}$.
    Since $(X_j)_{j\in [\deg(i)]}$ are negatively correlated, $\Var[X] \le \sum_{j\in [\deg(i)]}\Var[X_j] \le \deg(i)\cdot \frac{\beta}{\eps^{8} \delta^3\eta^4 m} = \frac{\beta\deg(i)}{\eps^{8} \delta^3\eta^4 m}$.
    Therefore, by Chebyshev's inequality, 
    \begin{align*}
        \Pr[i\notin \widetilde{H}] = \Pr[X\leq 0] 
        &= \Pr\left[X \leq \E[X]- \frac{\beta\deg(i)}{\eps^{8} \delta^3\eta^4 m}\right]\\
        &\leq \frac{\Var[X]\cdot \eps^{16} \delta^6\eta^8 m^2}{\beta^2\deg^2(i)}\\
        &\leq \frac{\eps^{8} \delta^3\eta^4 m}{\beta \deg(i)} &&\rhd \Var[X] \leq \frac{\beta\deg(i)}{\eps^{8} \delta^3 \eta^4m}\\
        &\leq \frac{4096\eps^{4}\delta^2\eta^2}{\beta}. &&\rhd \deg(i)\geq \theta m
    \end{align*}
    Recall that $|H| \le \frac{2m}{\theta m}=\frac{2}{\theta} = \frac{8192}{\eps^{4} \delta\eta^2}$.
    By union bound, with probability at least $1-\frac{8192}{\eps^{4} \delta\eta^2}\cdot \frac{4096\eps^{4}\delta^2\eta^2}{\beta}  = 1-\frac{\delta}{8}$ (since $\beta = 2^{28}$), $\widetilde{H}$ contains all high-degree variables $H$ in $\Psi$.
\end{proof}

For each $i\in [n]$, define 
\begin{enumerate}
    \item $X_{i,+}^{\pos}:= |\{((i,j),(b_i,b_j))\in \Psi: j\in [n], b_i=1, b_j Y_j = 1\}|$ to be the number of constraints involving $x_i$ such that the literal of $x_i$ appears positively (i.e., $b_i=1$) and the other literal is satisfied by the prediction $\mathbf{Y}$ (i.e., $b_j Y_j=1$). 
    The other quantities are defined analogously:
    \item $X_{i,-}^{\pos}:= |\{((i,j),(b_i,b_j))\in \Psi: j\in [n], b_i=1, b_jY_j = -1\}|$.
    \item $X_{i,+}^{\neg}:= |\{((i,j),(b_i,b_j))\in \Psi: j\in [n], b_i=-1, b_jY_j = 1\}|$.
    \item $X_{i,-}^{\neg}:= |\{((i,j),(b_i,b_j))\in \Psi: j\in [n], b_i=-1, b_jY_j = -1\}|$.
\end{enumerate}

\begin{lemma}
\label{lem:error-bound-insertion-max-2and}
    Let $\eps \in (0,\frac{1}{2}]$ and $\delta,\eta \in (0,1)$.
    For all $i\in \widetilde{H}$, we have
    \begin{enumerate}
        \item[(i)] $\widetilde{X}_{i,+}^{\pos} \geq X_{i,+}^{\pos},\  \widetilde{X}_{i,-}^{\pos} \geq X_{i,-}^{\pos},\  \widetilde{X}_{i,+}^{\neg} \geq X_{i,+}^{\neg},\  \widetilde{X}_{i,-}^{\neg} \geq X_{i,-}^{\neg}$;
        \item[(ii)] With probability at least $1-\frac{\delta}{8}$, all of the following inequalities hold: $\widetilde{X}_{i,+}^{\pos} \leq X_{i,+}^{\pos} + \frac{\eps^{10}\delta^5\eta^6 }{128\beta}m,\  \widetilde{X}_{i,-}^{\pos} \leq X_{i,-}^{\pos}+ \frac{\eps^{10}\delta^5\eta^6 }{128\beta}m,\  \widetilde{X}_{i,+}^{\neg} \leq X_{i,+}^{\neg}+ \frac{\eps^{10}\delta^5\eta^6 }{128\beta}m,\  \widetilde{X}_{i,-}^{\neg} \leq X_{i,-}^{\neg}+\frac{\eps^{10}\delta^5\eta^6 }{128\beta}m $. 
    \end{enumerate} 
\end{lemma}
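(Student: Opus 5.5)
The plan is to use the standard CountMin guarantees directly, identifying each of the four sketches with the frequency vector it maintains. For $s\in\{+,-\}$, the sketch $\CMpos[s]$ receives, for every constraint $((j_1,j_2),(b_1,b_2))$ with $b_1=1$ and $b_2Y_{j_2}=s$, a unit increment at coordinate $j_1$. Symmetrically, it receives a unit increment at $j_2$ whenever $b_2=1$ and $b_1Y_{j_1}=s$. Hence the exact frequency of coordinate $i$ in $\CMpos[s]$ is precisely $X_{i,s}^{\pos}$ as defined above, and likewise $\CMneg[s]$ has frequency vector $(X_{i,s}^{\neg})_{i\in[n]}$. The predictions $\mathbf{Y}$ are fixed by the oracle, so these are deterministic frequency vectors once $\mathbf{Y}$ is fixed. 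The CountMin hash randomness is independent of $\mathbf{Y}$ and of the reservoir sample $F$.

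For part (i), all updates in an insertion-only stream are nonnegative. Every counter cell $\CM[r][h_r(i)]$ is therefore at least the true frequency of $i$, and so is their minimum. This gives $\widetilde{X}\ge X$ deterministically, for every $i\in[n]$ and in particular for all $i\in\widetilde{H}$.

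For part (ii), the first step is to bound the total mass of each sketch. Each constraint contributes at most two unit updates to a given sketch, so each frequency vector has $\ell_1$ norm at most $2m$. The standard CountMin bound says that with accuracy parameter $\alpha$ and failure probability $\rho$, a single query satisfies $\widetilde{X}_i\le X_i+\alpha\|X\|_1$ with probability at least $1-\rho$. I would apply this with $\alpha=\frac{\eps^{10}\delta^5\eta^6}{128\beta}$. The factor $2$ from the $\ell_1$ norm is a normalization issue: either the sketch width is taken to absorb it, or the stated parameter is read as already accounting for $\|X\|_1\le 2m$. I would note that the constant is chosen with slack and can be adjusted by halving $\alpha$ without changing the asymptotic space.

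The next step is a union bound over the $4|\widetilde{H}|$ queries. Since $|F|=\frac{\beta}{\eps^8\delta^3\eta^4}$, we have $|\widetilde{H}|\le 2|F|=\frac{2\beta}{\eps^8\delta^3\eta^4}$. The total failure probability is then at most
\[
4\cdot\frac{2\beta}{\eps^8\delta^3\eta^4}\cdot\frac{\eps^8\delta^4\eta^4}{64\beta}=\frac{\delta}{8},
\]
which is the claimed bound.

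The one subtle point is that $\widetilde{H}$ is random. The union bound is still legitimate because $F$, which determines $\widetilde{H}$, is independent of the CountMin hash functions. Conditioned on $F$, the set $\widetilde{H}$ is a fixed set of at most $2|F|$ indices, so we apply the bound conditionally and then average over $F$. I expect this conditioning step, together with the factor-of-two bookkeeping in the $\ell_1$ norm, to be the only points needing care. Everything else is the textbook CountMin analysis.
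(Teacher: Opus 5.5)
Your proposal is correct and follows the paper's proof. Part (i) uses CountMin's one-sided overestimation under nonnegative updates, and part (ii) applies the per-query CountMin tail bound with a union bound over the $4|\widetilde{H}|\le 8\beta/(\eps^{8}\delta^{3}\eta^{4})$ queries, giving failure probability $\delta/8$. Your two extra remarks address points the paper leaves implicit. First, a single sketch can receive mass up to $2m$, so the accuracy parameter must be normalized to the number of constraints (or halved). Second, the union bound over the random set $\widetilde{H}$ is valid because you condition on $F$ and $\mathbf{Y}$, which are independent of the hash functions.
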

\begin{proof}
    (i) By \cref{thm:CM-v2}, for all $i\in [n]$, we have $\widetilde{X}_{i,+}^{\pos} \geq X_{i,+}^{\pos},\  \widetilde{X}_{i,-}^{\pos} \geq X_{i,-}^{\pos},\  \widetilde{X}_{i,+}^{\neg} \geq X_{i,+}^{\neg}$, and $\widetilde{X}_{i,-}^{\neg} \geq X_{i,-}^{\neg}$.

    (ii) By \cref{thm:CM-v2}, for any $i\in [n]$, with probability at least $1-\frac{\eps^{8}\delta^4\eta^4}{64\beta}$, each of the following four inequalities holds: $\widetilde{X}_{i,+}^{\pos} \leq X_{i,+}^{\pos} + \frac{\eps^{10}\delta^5\eta^6 }{128\beta}m,\  \widetilde{X}_{i,-}^{\pos} \leq X_{i,-}^{\pos}+ \frac{\eps^{10}\delta^5\eta^6 }{128\beta}m,\  \widetilde{X}_{i,+}^{\neg} \leq X_{i,+}^{\neg}+ \frac{\eps^{10}\delta^5\eta^6 }{128\beta}m$, and $\widetilde{X}_{i,-}^{\neg} \leq X_{i,-}^{\neg}+\frac{\eps^{10}\delta^5\eta^6 }{128\beta}m$.
    Since $|\widetilde{H}|\le \frac{2\beta}{\eps^{8}\delta^3\eta^4}$, by applying the union bound we have that, with probability at least $1-\frac{\eps^{8}\delta^4\eta^4}{64\beta} \cdot \frac{2\beta}{\eps^{8}\delta^3\eta^4} \cdot 4 \ge 1-\frac{\delta}{8}$, for all $i\in \widetilde{H}$, all of the above four inequalities hold.
\end{proof}

\paragraph{Proof of \cref{thm:max-2and-insertion}.}
Given the above lemmas, we are ready to prove \cref{thm:max-2and-insertion}. 
\begin{proof}
We can assume that $m =\Omega(\eps^{-26}\delta^{-10}\eta^{-14})$, as otherwise we can store the whole instance in $O(\poly(1/\eps, 1/\delta, 1/\eta))$ space.

In \cref{alg:max-2and-insertion}, we maintain four CountMin sketches with accuracy parameter $\frac{\eps^{10}\delta^5\eta^6}{128\beta}$ and failure probability $\frac{\eps^{8}\delta^4\eta^4}{64\beta}$, where $\beta=2^{28}$. By \cref{thm:CM-v2}, this takes $O(\frac{1}{\eps^{10}\delta^5\eta^6}\log\frac{1}{\eps^{8} \delta^4\eta^4}) = O( \frac{1}{\eps^{10}\delta^5\eta^6}\log\frac{1}{\eps \delta\eta} )$ words of space. 
    Also, we use reservoir sampling to uniformly sample $O(\frac{1}{\eps^{8}\delta^3\eta^4})$ constraints, which takes $O(\frac{1}{\eps^{8}\delta^3\eta^4})$ words of space. 
    Therefore, the total space complexity of \cref{alg:max-2and-insertion} is $O( \frac{1}{\eps^{10}\delta^5\eta^6}\log\frac{1}{\eps \delta\eta} )$ words.

    Next, we analyze the approximation ratio.
Since $[n] = \widetilde{H}\cup \widetilde{L}$, we have
\begin{enumerate}
    \item $X_{i,+}^{\pos} = X_{i,\widetilde{H}(+)}^{\pos} + X_{i,\widetilde{L}(+)}^{\pos}$, where $X_{i,\widetilde{H}(+)}^{\pos} := |\{((i,j),(b_i,b_j))\in \Psi: j\in \widetilde{H}, b_i=1, b_j Y_j = 1\}|$ and $X_{i,\widetilde{L}(+)}^{\pos} := |\{((i,j),(b_i,b_j))\in \Psi: j\in \widetilde{L}, b_i=1, b_j Y_j = 1\}|$. 
    \item $X_{i,-}^{\pos} = X_{i,\widetilde{H}(-)}^{\pos} + X_{i,\widetilde{L}(-)}^{\pos}$, where $X_{i,\widetilde{H}(-)}^{\pos} := |\{((i,j),(b_i,b_j))\in \Psi: j\in \widetilde{H}, b_i=1, b_j Y_j = -1\}|$ and $X_{i,\widetilde{L}(-)}^{\pos} := |\{((i,j),(b_i,b_j))\in \Psi: j\in \widetilde{L}, b_i=1, b_j Y_j = -1\}|$.
    \item $X_{i,+}^{\neg} = X_{i,\widetilde{H}(+)}^{\neg} + X_{i,\widetilde{L}(+)}^{\neg}$, where $X_{i,\widetilde{H}(+)}^{\neg} := |\{((i,j),(b_i,b_j))\in \Psi: j\in \widetilde{H}, b_i=-1, b_j Y_j = 1\}|$ and $X_{i,\widetilde{L}(+)}^{\neg} := |\{((i,j),(b_i,b_j))\in \Psi: j\in \widetilde{L}, b_i=-1, b_j Y_j = 1\}|$.
    \item $X_{i,-}^{\neg} = X_{i,\widetilde{H}(-)}^{\neg} + X_{i,\widetilde{L}(-)}^{\neg}$, where $X_{i,\widetilde{H}(-)}^{\neg} := |\{((i,j),(b_i,b_j))\in \Psi: j\in \widetilde{H}, b_i=-1, b_j Y_j = -1\}|$ and $X_{i,\widetilde{L}(-)}^{\neg} := |\{((i,j),(b_i,b_j))\in \Psi: j\in \widetilde{L}, b_i=-1, b_j Y_j = -1\}|$.
    
\end{enumerate}
By Lemma~\ref{lem:error-bound-insertion-max-2and}, with probability at least $1-\frac{\delta}{8}$, for all $i\in \widetilde{H}$, all of the following inequalities hold: 
\begin{align*}
    X_{i,\widetilde{H}(+)}^{\pos} + X_{i,\widetilde{L}(+)}^{\pos} \leq \widetilde{X}_{i,+}^{\pos} \leq X_{i,\widetilde{H}(+)}^{\pos} + X_{i,\widetilde{L}(+)}^{\pos} + \frac{\eps^{10}\delta^5\eta^6}{128\beta} m,\\
    X_{i,\widetilde{H}(-)}^{\pos} + X_{i,\widetilde{L}(-)}^{\pos} \leq \widetilde{X}_{i,-}^{\pos} \leq X_{i,\widetilde{H}(-)}^{\pos} + X_{i,\widetilde{L}(-)}^{\pos} + \frac{\eps^{10}\delta^5\eta^6}{128\beta} m,\\
    X_{i,\widetilde{H}(+)}^{\neg} + X_{i,\widetilde{L}(+)}^{\neg} \leq \widetilde{X}_{i,+}^{\neg} \leq X_{i,\widetilde{H}(+)}^{\neg} + X_{i,\widetilde{L}(+)}^{\neg} + \frac{\eps^{10}\delta^5\eta^6}{128\beta} m,\\
    X_{i,\widetilde{H}(-)}^{\neg} + X_{i,\widetilde{L}(-)}^{\neg} \leq \widetilde{X}_{i,-}^{\neg} \leq X_{i,\widetilde{H}(-)}^{\neg} + X_{i,\widetilde{L}(-)}^{\neg} + \frac{\eps^{10}\delta^5\eta^6}{128\beta} m.
\end{align*}
Since $|F|= \frac{\beta}{\eps^{8}\delta^3\eta^4}$ and $\widetilde{H}$ is the set of associated variables of constraints in $F$, we have that $|\widetilde{H}|\le \frac{2\beta}{\eps^{8}\delta^3\eta^4}$. Thus, it holds that $0 \le X_{i,\widetilde{H}(+)}^{\pos}, X_{i,\widetilde{H}(-)}^{\pos}, X_{i,\widetilde{H}(+)}^{\neg}, X_{i,\widetilde{H}(-)}^{\neg}\le \frac{2\beta}{\eps^{8}\delta^3\eta^4}$. Therefore,
\begin{align}
      0\leq \widetilde{X}_{i,+}^{\pos} -  X_{i,\widetilde{L}(+)}^{\pos}\leq \frac{2\beta}{\eps^{8}\delta^3\eta^4}+  \frac{\eps^{10}\delta^5\eta^6}{128\beta} m,\label{ineq:cm_pos+}\\ 
      0 \leq \widetilde{X}_{i,-}^{\pos} - X_{i,\widetilde{L}(-)}^{\pos} \leq \frac{2\beta}{\eps^{8}\delta^3\eta^4} +  \frac{\eps^{10}\delta^5\eta^6}{128\beta} m,\label{ineq:cm_pos-}\\
      0 \leq \widetilde{X}_{i,+}^{\neg} - X_{i,\widetilde{L}(+)}^{\neg} \leq \frac{2\beta}{\eps^{8}\delta^3\eta^4}+  \frac{\eps^{10}\delta^5\eta^6}{128\beta} m,\label{ineq:cm_neg+}\\
      0 \leq \widetilde{X}_{i,-}^{\neg} - X_{i,\widetilde{L}(-)}^{\neg} \leq \frac{2\beta}{\eps^{8}\delta^3\eta^4} + \frac{\eps^{10}\delta^5\eta^6}{128\beta} m. \label{ineq:cm_neg-}
\end{align}

We define
\begin{enumerate}
    \item $\widetilde{X}_0 := |\{((j_1,j_2),(b_1,b_2))\in \Psi: j_1, j_2\in \widetilde{L},\ b_1 Y_{j_1} = b_2 Y_{j_2}=1\}|$ to be the number of constraints in $\Psi$ whose associated variables are both in $\widetilde{L}$ and for which both literals are satisfied by the predictions $\mathbf{Y}$.
    \item $\widetilde{X}_1 := |\{((j_1,j_2),(b_1,b_2))\in \Psi: j_1, j_2\in \widetilde{L},\ b_1 Y_{j_1} \neq b_2 Y_{j_2}\}|$ to be the number of constraints in $\Psi$ whose associated variables are both in $\widetilde{L}$ and for which exactly one literal is satisfied by $\mathbf{Y}$.
    \item $\widetilde{X}_2 := |\{((j_1,j_2),(b_1,b_2))\in \Psi: j_1, j_2\in \widetilde{L},\ b_1 Y_{j_1} = b_2 Y_{j_2}=-1\}|$ to be the number of constraints in $\Psi$ whose associated variables are both in $\widetilde{L}$ and for which neither literal is satisfied by  $\mathbf{Y}$.
\end{enumerate}
Similarly, we define
\begin{enumerate}
    \item $\widetilde{X}_{0,\widetilde{H}} := |\{((j_1,j_2),(b_1,b_2))\in \Psi: j_1, j_2\in \widetilde{H},\ b_1 Y_{j_1} = b_2 Y_{j_2}=1\}|$ to be the number of constraints in $\Psi$ whose associated variables are both in $\widetilde{H}$ and for which both literals are satisfied by $\mathbf{Y}$.
    \item $\widetilde{X}_{1,\widetilde{H}} := |\{((j_1,j_2),(b_1,b_2))\in \Psi: j_1, j_2\in \widetilde{H},\ b_1 Y_{j_1} \neq b_2 Y_{j_2}\}|$ to be the number of constraints in $\Psi$ whose associated variables are both in $\widetilde{H}$ and for which exactly one literal is satisfied by $\mathbf{Y}$.
    \item $\widetilde{X}_{2,\widetilde{H}} := |\{((j_1,j_2),(b_1,b_2))\in \Psi: j_1, j_2\in \widetilde{H},\ b_1 Y_{j_1} = b_2 Y_{j_2}=-1\}|$ to be the number of constraints in $\Psi$ whose associated variables are both in $\widetilde{H}$ and for which neither literal is satisfied by  $\mathbf{Y}$.
\end{enumerate}

We have

\begin{align*}
    \widetilde{X}_0 &=  X_{+,+} - \widetilde{X}_{0,\widetilde{H}} - \sum_{i\in \widetilde{H}^+}X_{i,\widetilde{L}(+)}^{\pos}-\sum_{i\in \widetilde{H}^-}X_{i,\widetilde{L}(+)}^{\neg},\\
    \widetilde{X}_1 &= 
     X_{+,-} - \widetilde{X}_{1,\widetilde{H}} - \sum_{i\in \widetilde{H}^+}(X_{i,\widetilde{L}(-)}^{\pos} + X_{i,\widetilde{L}(+)}^{\neg}) - \sum_{i\in \widetilde{H}^-}(X_{i,\widetilde{L}(+)}^{\pos} + X_{i,\widetilde{L}(-)}^{\neg}), \\
    \widetilde{X}_2 &=  X_{-,-} - \widetilde{X}_{2,\widetilde{H}} -\sum_{i\in \widetilde{H}^+}X_{i,\widetilde{L}(-)}^{\neg}-\sum_{i\in \widetilde{H}^-}X_{i,\widetilde{L}(-)}^{\pos}.
\end{align*}

Recall that
\begin{align*}
\overline{X}_0 &= X_{+,+} - \sum_{i\in \widetilde{H}^+} \widetilde{X}^{\pos}_{i,+}-\sum_{i\in \widetilde{H}^-} \widetilde{X}^{\neg}_{i,+},\\
\overline{X}_1 &=X_{+,-} - \sum_{i\in \widetilde{H}^+}(\widetilde{X}^{\pos}_{i,-}+\widetilde{X}^{\neg}_{i,+}) - \sum_{i\in \widetilde{H}^-}(\widetilde{X}^{\pos}_{i,+}+\widetilde{X}^{\neg}_{i,-}),\\
\overline{X}_2 &= X_{-,-} - \sum_{i\in \widetilde{H}^+} \widetilde{X}^{\neg}_{i,-}-\sum_{i\in \widetilde{H}^-} \widetilde{X}^{\pos}_{i,-}.
\end{align*}
Therefore,
\begin{align*}
    \overline{X}_0 - \widetilde{X}_0 &= \widetilde{X}_{0,\widetilde{H}} + \sum_{i\in \widetilde{H}^+}(X_{i,\widetilde{L}(+)}^{\pos} - \widetilde{X}^{\pos}_{i,+}) + \sum_{i\in \widetilde{H}^-} (X_{i,\widetilde{L}(+)}^{\neg} - \widetilde{X}^{\neg}_{i,+}),\\
    \overline{X}_1 - \widetilde{X}_1 &= \widetilde{X}_{1,\widetilde{H}} + \sum_{i\in \widetilde{H}^+} \left[(X_{i,\widetilde{L}(-)}^{\pos} - \widetilde{X}^{\pos}_{i,-}) +  (X_{i,\widetilde{L}(+)}^{\neg} - \widetilde{X}^{\neg}_{i,+})\right]
     + \sum_{i\in \widetilde{H}^-} \left[(X_{i,\widetilde{L}(+)}^{\pos} - \widetilde{X}^{\pos}_{i,+}) +  (X_{i,\widetilde{L}(-)}^{\neg} - \widetilde{X}^{\neg}_{i,-})\right],\\
    \overline{X}_2 - \widetilde{X}_2 &=\widetilde{X}_{2,\widetilde{H}} + \sum_{i\in \widetilde{H}^+}(X_{i,\widetilde{L}(-)}^{\neg} - \widetilde{X}^{\neg}_{i,-}) + \sum_{i\in \widetilde{H}^-} (X_{i,\widetilde{L}(-)}^{\pos} - \widetilde{X}^{\pos}_{i,-}).
\end{align*}
Since $|\widetilde{H}|\le \frac{2\beta}{\eps^{8}\delta^3\eta^4}$, we have $|\widetilde{H}^+|, |\widetilde{H}^-|\le \frac{2\beta}{\eps^{8}\delta^3\eta^4}$, and $\widetilde{X}_{0,\widetilde{H}},\widetilde{X}_{1,\widetilde{H}},\widetilde{X}_{2,\widetilde{H}}\le \binom{|\widetilde{H}|}{2}\cdot 2^2 \le \frac{8\beta^2}{\eps^{16}\delta^6\eta^8}$.
Moreover, combining inequalities (\ref{ineq:cm_pos+})--(\ref{ineq:cm_neg-}), we have
\begin{align*}
    \widetilde{X}_0-\left(\frac{16\beta^2}{\eps^{16}\delta^6\eta^8}+\frac{\eps^{2}\delta^2\eta^2 m}{32} \right)\le \overline{X}_0  \le \widetilde{X}_0+\frac{8\beta^2}{\eps^{16}\delta^6\eta^8},\\
    \widetilde{X}_1-\left(\frac{32\beta^2}{\eps^{16}\delta^6\eta^8}+\frac{\eps^{2}\delta^2\eta^2 m}{16} \right) \le \overline{X}_1  \le \widetilde{X}_1+\frac{8\beta^2}{\eps^{16}\delta^6\eta^8},\\
    \widetilde{X}_2-\left(\frac{16\beta^2}{\eps^{16}\delta^6\eta^8}+\frac{\eps^{2}\delta^2\eta^2 m}{32} \right) \le \overline{X}_2   \le \widetilde{X}_2+\frac{8\beta^2}{\eps^{16}\delta^6\eta^8}.
\end{align*}

Let $\widetilde{k}_{\widetilde{L}} := \frac{1}{4\eps^2}[(\frac{1}{2}+\eps)^2 \widetilde{X}_0 - (\frac{1}{2}+\eps)(\frac{1}{2}-\eps)\widetilde{X}_1 + (\frac{1}{2}-\eps)^2 \widetilde{X}_2]$. 
Now we are ready to show that $\overline{k}_{\widetilde{L}}$ is close to $\widetilde{k}_{\widetilde{L}}$ up to an additive error of $\poly(\eta)\cdot m$ with probability at least $1-\frac{\delta}{8}$.
We have
\begin{align*}
    \overline{k}_{\widetilde{L}} 
    =~&\frac{1}{4\eps^2}\left[\left(\frac{1}{2}+\eps\right)^2 \overline{X}_0 - \left(\frac{1}{2}+\eps\right)\left(\frac{1}{2}-\eps\right)\overline{X}_1 + \left(\frac{1}{2}-\eps\right)^2 \overline{X}_2\right]\\
    \ge~&\frac{1}{4\eps^2}\left[\left(\frac{1}{2}+\eps\right)^2 \left(\widetilde{X}_0-\left(\frac{16\beta^2}{\eps^{16}\delta^6\eta^8}+\frac{\eps^{2}\delta^2\eta^2 m}{32} \right)\right) - \left(\frac{1}{2}+\eps\right)\left(\frac{1}{2}-\eps\right)\left(\widetilde{X}_1+\frac{8\beta^2}{\eps^{16}\delta^6\eta^8}\right) \right.\\
    &+ \left.\left(\frac{1}{2}-\eps\right)^2 \left(\widetilde{X}_2-\left(\frac{16\beta^2}{\eps^{16}\delta^6\eta^8}+\frac{\eps^{2}\delta^2\eta^2 m}{32} \right)\right)\right]\\
    =~&\widetilde{k}_{\widetilde{L}} - \frac{1}{4\eps^2}\left[\left(\frac{1}{2}+\eps\right)^2 \left(\frac{16\beta^2}{\eps^{16}\delta^6\eta^8}+\frac{\eps^{2}\delta^2\eta^2 m}{32} \right) + \left(\frac{1}{2}+\eps\right)\left(\frac{1}{2}-\eps\right)\frac{8\beta^2}{\eps^{16}\delta^6\eta^8} \right.\\
    &+ \left.\left(\frac{1}{2}-\eps\right)^2 \left(\frac{16\beta^2}{\eps^{16}\delta^6\eta^8}+\frac{\eps^{2}\delta^2\eta^2 m}{32} \right)\right]\\
    =~&\widetilde{k}_{\widetilde{L}} - \left(\frac{5\beta^{2}}{2\eps^{18}\delta^{6}\eta^8} + \frac{6\beta^{2}}{\eps^{16}\delta^{6}\eta^8} + \frac{\delta^{2}\eta^2m}{256} + \frac{\eps^{2}\delta^2\eta^2 m}{64}\right)\\
    \ge~&\widetilde{k}_{\widetilde{L}} -  \frac{\eta m}{32},
\end{align*}
where the last step follows since $m = \Omega(\eps^{-26}\delta^{-10}\eta^{-14})$. On the other hand,
\begin{align*}
\overline{k}_{\widetilde{L}} =~&\frac{1}{4\eps^2}\left[\left(\frac{1}{2}+\eps\right)^2 \overline{X}_0 - \left(\frac{1}{2}+\eps\right)\left(\frac{1}{2}-\eps\right)\overline{X}_1 + \left(\frac{1}{2}-\eps\right)^2 \overline{X}_2\right]\\
\le~&\frac{1}{4\eps^2}\left[\left(\frac{1}{2}+\eps\right)^2 \left(\widetilde{X}_0+\frac{8\beta^2}{\eps^{16}\delta^6\eta^8}\right) - \left(\frac{1}{2}+\eps\right)\left(\frac{1}{2}-\eps\right)\left(\widetilde{X}_1-\left(\frac{32\beta^2}{\eps^{16}\delta^6\eta^8}+\frac{\eps^{2}\delta^2\eta^2 m}{16} \right)\right)\right.\\
&+ \left.\left(\frac{1}{2}-\eps\right)^2 \left(\widetilde{X}_2+\frac{8\beta^2}{\eps^{16}\delta^6\eta^8}\right)\right]\\
=~&\widetilde{k}_{\widetilde{L}} +\frac{1}{4\eps^2}\left[\left(\frac{1}{2}+\eps\right)^2 \frac{8\beta^2}{\eps^{16}\delta^6\eta^8} + \left(\frac{1}{2}+\eps\right)\left(\frac{1}{2}-\eps\right)\left(\frac{32\beta^2}{\eps^{16}\delta^6\eta^8}+\frac{\eps^{2}\delta^2\eta^2 m}{16} \right) + \left(\frac{1}{2}-\eps\right)^2 \frac{8\beta^2}{\eps^{16}\delta^6\eta^8}\right]\\
=~&\widetilde{k}_{\widetilde{L}} + \left(\frac{3\beta^{2}}{\eps^{18}\delta^{6}\eta^8} - \frac{4\beta^{2}}{\eps^{16}\delta^{6}\eta^8}+\frac{ \delta^{2}\eta^2m}{256}-\frac{\eps^{2}\delta^2\eta^2m}{64}\right)\\
\le~&\widetilde{k}_{\widetilde{L}} + \frac{\eta m}{64},
\end{align*}
where the last step follows since $m = \Omega(\eps^{-26}\delta^{-10}\eta^{-14})$.

For each $i\in \widetilde{H}$, let $\widetilde{k}_{i,\widetilde{L}(+)}^{\pos} := (\frac{1}{2}+\frac{1}{4\eps})\cdot X_{i,\widetilde{L}(+)}^{\pos} + (\frac{1}{2}-\frac{1}{4\eps})\cdot X_{i,\widetilde{L}(-)}^{\pos}$ and $\widetilde{k}_{i,\widetilde{L}(+)}^{\neg} := (\frac{1}{2}+\frac{1}{4\eps})\cdot X_{i,\widetilde{L}(+)}^{\neg} + (\frac{1}{2}-\frac{1}{4\eps})\cdot X_{i,\widetilde{L}(-)}^{\neg}$.
We have
\begin{align*}
    \overline{k}_{i,\widetilde{L}(+)}^{\pos} &= \left(\frac{1}{2}+\frac{1}{4\eps}\right)\cdot \widetilde{X}_{i,+}^{\pos} + \left(\frac{1}{2}-\frac{1}{4\eps}\right)\cdot \widetilde{X}_{i,-}^{\pos}\\
    &\ge \left(\frac{1}{2}+\frac{1}{4\eps}\right)\cdot X_{i,\widetilde{L}(+)}^{\pos} + \left(\frac{1}{2}-\frac{1}{4\eps}\right) \left( X_{i,\widetilde{L}(-)}^{\pos} + \frac{2\beta}{\eps^{8}\delta^3\eta^4} + \frac{\eps^{10}\delta^5 \eta^6m}{128\beta} \right)\\
    &= \widetilde{k}_{i,\widetilde{L}(+)}^{\pos} - \left(\frac{1}{4\eps}-\frac{1}{2}\right) \left(\frac{2\beta}{\eps^{8}\delta^3\eta^4} + \frac{\eps^{10}\delta^5 \eta^6m}{128\beta} \right)\\
    &= \widetilde{k}_{i,\widetilde{L}(+)}^{\pos} -\left(\frac{\beta}{2\eps^{9}\delta^3\eta^4} - \frac{\beta}{\eps^{8}\delta^3\eta^4} + \frac{\eps^{9}\delta^5\eta^6 m}{512\beta} - \frac{\eps^{10}\delta^5\eta^6 m}{256\beta}\right)\\
    &\ge \widetilde{k}_{i,\widetilde{L}(+)}^{\pos} - \frac{\eps^8 \delta^3 \eta^5 m}{128\beta},
\end{align*}
where the last step follows since $m = \Omega(\eps^{-26}\delta^{-10}\eta^{-14})$. On the other hand,
\begin{align*}
    \overline{k}_{i,\widetilde{L}(+)}^{\pos} &= \left(\frac{1}{2}+\frac{1}{4\eps}\right)\cdot \widetilde{X}_{i,+}^{\pos} + \left(\frac{1}{2}-\frac{1}{4\eps}\right)\cdot \widetilde{X}_{i,-}^{\pos}\\
    &\le \left(\frac{1}{2}+\frac{1}{4\eps}\right)\left( X_{i,\widetilde{L}(+)}^{\pos} + \frac{2\beta}{\eps^{8}\delta^3\eta^4} + \frac{\eps^{10}\delta^5\eta^6 m}{128\beta} \right) + \left(\frac{1}{2}-\frac{1}{4\eps}\right)\cdot X_{i,\widetilde{L}(-)}^{\pos}\\
    &= \widetilde{k}_{i,\widetilde{L}(+)}^{\pos} + \left(\frac{1}{2}+\frac{1}{4\eps}\right)\left(\frac{2\beta}{\eps^{8}\delta^3\eta^4} + \frac{\eps^{10}\delta^5\eta^6 m}{128\beta} \right)\\
    &= \widetilde{k}_{i,\widetilde{L}(+)}^{\pos} +\left(\frac{\beta}{2\eps^{9}\delta^3\eta^4} + \frac{\beta}{\eps^{8}\delta^3\eta^4} + \frac{\eps^{9}\delta^5\eta^6 m}{512\beta} + \frac{\eps^{10}\delta^5\eta^6 m}{256\beta}\right)\\
    &\le \widetilde{k}_{i,\widetilde{L}(+)}^{\pos} +\frac{\eps^8 \delta^3 \eta^5 m}{128\beta},
\end{align*}
where the last step follows since $m = \Omega(\eps^{-26}\delta^{-10}\eta^{-14})$. Therefore, $|\overline{k}_{i,\widetilde{L}(+)}^{\pos} - \widetilde{k}_{i,\widetilde{L}(+)}^{\pos}| \le \frac{\eps^8 \delta^3 \eta^5 m}{128\beta}$.
Similarly, $|\overline{k}_{i,\widetilde{L}(+)}^{\neg} - \widetilde{k}_{i,\widetilde{L}(+)}^{\neg}| \le \frac{\eps^8 \delta^3 \eta^5 m}{128\beta}$.

Let $\widetilde{\mathrm{ALG}}:= \widetilde{k}_{\widetilde{L}} + \sum_{i\in \widetilde{H}}\max\{\widetilde{k}_{i,\widetilde{L}(+)}^{\pos}, \widetilde{k}_{i,\widetilde{L}(+)}^{\neg}\}  - \frac{\eta m}{32}$ denote the output of the algorithm obtained by running \cref{alg:max-2and-offline} with $L$ and $H$ replaced by $\widetilde L$ and $\widetilde H$, respectively.
Analogous to the proof of \cref{thm:max-2and-offline}, we can show that $(1-\frac{\eta}{2})\cdot \opt\le \widetilde{\mathrm{ALG}} \le \opt$ with probability at least $1-\frac{3\delta}{4}$. The argument relies primarily on two properties of the partition $(\widetilde{L},\widetilde{H})$: (i) every variable in $\widetilde{L}$ has degree at most $\theta m$ for $\theta:=\frac{\eps^4 \delta \eta^2}{4096}$ (since $\widetilde{L} \subseteq L$), and (ii) $|\widetilde{H}|\leq \frac{2\beta}{\eps^8 \delta^3 \eta^4}$.

Let $\mathrm{ALG}$ denote the output of \cref{alg:max-2and-insertion}. By applying the union bound on Lemma~\ref{lem:high-deg-reservoir-max-2and} and Lemma~\ref{lem:error-bound-insertion-max-2and}, with probability at least $1-\frac{\delta}{8}-\frac{\delta}{8}-\frac{3\delta}{4} = 1-\delta$, we have
\begin{align*}
    \mathrm{ALG} &= \overline{k}_{\widetilde{L}} + \sum_{i\in \widetilde{H}}\max\{\overline{k}_{i,\widetilde{L}(+)}^{\pos}, \overline{k}_{i,\widetilde{L}(+)}^{\neg}\} - \frac{\eta m}{16}\\
    &\le \widetilde{k}_{\widetilde{L}} + \frac{\eta m}{64} + \sum_{i\in \widetilde{H}}\max\{\widetilde{k}_{i,\widetilde{L}(+)}^{\pos}, \widetilde{k}_{i,\widetilde{L}(+)}^{\neg}\} + |\widetilde{H}|\cdot \frac{\eps^8 \delta^3 \eta^5 m}{128\beta} - \frac{\eta m}{16}\\
    &\le \widetilde{\mathrm{ALG}} + \frac{\eta m}{32}+\frac{\eta m}{64}+\frac{\eta m}{64} - \frac{\eta m}{16} &&\rhd |\widetilde{H}|\le \frac{2\beta}{\eps^8 \delta^3 \eta^4}\\
    &\le \opt,
\end{align*}
and
\begin{align*}
    \mathrm{ALG} &= \overline{k}_{\widetilde{L}} + \sum_{i\in \widetilde{H}}\max\{\overline{k}_{i,\widetilde{L}(+)}^{\pos}, \overline{k}_{i,\widetilde{L}(+)}^{\neg}\} - \frac{\eta m}{16}\\
    &\ge \widetilde{k}_{\widetilde{L}} - \frac{\eta m}{32} + \sum_{i\in \widetilde{H}}\max\{\widetilde{k}_{i,\widetilde{L}(+)}^{\pos}, \widetilde{k}_{i,\widetilde{L}(+)}^{\neg}\} - |\widetilde{H}|\cdot \frac{\eps^8 \delta^3 \eta^5 m}{128\beta} - \frac{\eta m}{16}\\
    &\ge \widetilde{\mathrm{ALG}} -\frac{\eta m}{64} - \frac{\eta m}{16} &&\rhd |\widetilde{H}|\le \frac{2\beta}{\eps^8\delta^3\eta^4}\\
    &\ge \widetilde{\mathrm{ALG}} - \frac{\eta m}{8} \\
    &\ge \qty(1-\frac{\eta}{2})\cdot \opt - \frac{\eta}{2} \cdot\opt &&\rhd m\le 4\opt \\
    &= (1-\eta)\cdot \opt.
\end{align*}
\end{proof}

\subsection{Extension to dynamic streams}
\label{sec:dynamic-2and}
Our algorithm for \textsc{Max-$2$AND} in dynamic streams can be easily obtained by modifying our algorithm in insertion-only streams (i.e., \cref{alg:max-2and-insertion}). Specifically, we use $\ell_0$-sampling~\citep{jowhari2011tight} 
to compute a small set $\widetilde{H}$ at the end of the stream such that $\widetilde{H} \supseteq H$.

The algorithm is formally given in Algorithm~\ref{alg:max-2and-dynamic}. 
We have the following theoretical guarantee stating its approximation ratio and space complexity. We defer the proof to Appendix~\ref{sec:proof-thm:max-2and-dynamic}.

\begin{algorithm}[t!]
\begin{algorithmic}[1]
\Require $\textsc{Max-$2$AND}$ instance $\Psi$ as a dynamic stream of constraints, an $\eps$-accurate oracle $\mathcal{O} \rightarrow \{-1, 1\}$, $\delta,\eta\in (0,1)$
\Ensure Estimate of the maximum value of $\Psi$
\LeftComment{\textbf{\textcolor{blue}{Pre-processing phase}}}
\State {\bf initialize} $X_{+,+}\leftarrow 0, X_{+,-}\leftarrow 0, X_{-,-}\leftarrow 0, m\leftarrow 0$
\State {\bf initialize} four CountMin sketches $\CMpos[+], \CMneg[+], \CMpos[-]$ and $\CMneg[-]$ with accuracy parameter $\frac{\eps^{10}\delta^5 \eta^6}{128\beta}$ and failure probability $\frac{\eps^{8}\delta^4\eta^4}{64\beta}$, where $\beta = 2^{29}\cdot 25$.

\LeftComment{\textbf{\textcolor{blue}{Streaming phase}}}
\ForEach{item $((\mathbf{j}=(j_1,j_2),\mathbf{b}=(b_1,b_2)), \Delta_{(\mathbf{j},\mathbf{b})})$ in the dynamic stream}
    \State $Y_{j_1} = \mathcal{O}(j_1), Y_{j_2} = \mathcal{O}(j_2)$
    \If{$\Delta_{(\mathbf{j},\mathbf{b})}=1$}
        \State $m \leftarrow m+1$
        \IIf{$b_1 Y_{j_1} = b_2 Y_{j_2} = 1$} $X_{+,+} \leftarrow X_{+,+} + 1$
        \IIf{$b_1 Y_{j_1} \neq b_2 Y_{j_2}$} $X_{+,-} \leftarrow X_{+,-} + 1$
        \IIf{$b_1 Y_{j_1} = b_2 Y_{j_2} = -1$} $X_{-,-} \leftarrow X_{-,-} + 1$
    \Else{}
        \State $m \leftarrow m-1$
        \IIf{$b_1 Y_{j_1} = b_2 Y_{j_2} = 1$} $X_{+,+} \leftarrow X_{+,+} - 1$
        \IIf{$b_1 Y_{j_1} \neq b_2 Y_{j_2}$} $X_{+,-} \leftarrow X_{+,-} - 1$
        \IIf{$b_1 Y_{j_1} = b_2 Y_{j_2} = -1$} $X_{-,-} \leftarrow X_{-,-} - 1$
    \EndIf
    \IIf{$b_1 = 1$}
          $\textsc{Update} (\CMpos[b_2 Y_{j_2}][j_1],\Delta_{(\mathbf{j},\mathbf{b})})$
    \ElseIIf
        $\textsc{Update} (\CMneg[b_2 Y_{j_2}][j_1],\Delta_{(\mathbf{j},\mathbf{b})})$

    \IIf{$b_2 = 1$}
          $\textsc{Update} (\CMpos[b_1 Y_{j_1}][j_2],\Delta_{(\mathbf{j},\mathbf{b})})$
    \ElseIIf
         $\textsc{Update} (\CMneg[b_1 Y_{j_1}][j_2],\Delta_{(\mathbf{j},\mathbf{b})})$
\EndFor

\State In parallel, maintain $\frac{\beta}{\eps^{8}\delta^3\eta^4}$ $\ell_0$-samplers (with failure probability $\frac{\eps^8 \delta^4 \eta^4}{32\beta}$) to obtain $\frac{\beta}{\eps^{8}\delta^3\eta^4}$ different constraints $F$.\phantomsection \label{line:ell-0-sampling-2and}

\LeftComment{\textbf{\textcolor{blue}{Post-processing phase}}}
\ForEach{$i\in [n]$}
    \State $\widetilde{X}_{i,+}^{\pos} \leftarrow \textsc{Query}(\CMpos[+][i]), \quad \widetilde{X}_{i,+}^{\neg} \leftarrow \textsc{Query}(\CMneg[+][i])$
    \State $\widetilde{X}_{i,-}^{\pos} \leftarrow \textsc{Query}(\CMpos[-][i]),\quad \widetilde{X}_{i,-}^{\neg} \leftarrow \textsc{Query}(\CMneg[-][i])$
\EndFor
\State $\widetilde{H} \leftarrow \bigcup_{((j_1,j_2),(b_1,b_2))\in F} \{j_1,j_2\}, \widetilde{L} \leftarrow [n]\setminus \widetilde{H}$
\State $\widetilde{H}^+ \leftarrow \{i\in \widetilde{H}: Y_i = 1\}, \widetilde{H}^- \leftarrow \widetilde{H} \setminus \widetilde{H}^+$ 
\State $\overline{X}_0 \leftarrow X_{+,+} - \sum_{i\in \widetilde{H}^+} \widetilde{X}^{\pos}_{i,+}-\sum_{i\in \widetilde{H}^-} \widetilde{X}^{\neg}_{i,+}$
\State $\overline{X}_1 \leftarrow X_{+,-} - \sum_{i\in \widetilde{H}^+}(\widetilde{X}^{\pos}_{i,-}+\widetilde{X}^{\neg}_{i,+}) - \sum_{i\in \widetilde{H}^-}(\widetilde{X}^{\pos}_{i,+}+\widetilde{X}^{\neg}_{i,-})$ 
\State $\overline{X}_2 \leftarrow X_{-,-} - \sum_{i\in \widetilde{H}^+} \widetilde{X}^{\neg}_{i,-}-\sum_{i\in \widetilde{H}^-} \widetilde{X}^{\pos}_{i,-}$
\State $\overline{k}_{\widetilde{L}}\leftarrow \frac{1}{4\eps^2}[(\frac{1}{2}+\eps)^2 \overline{X}_0 - (\frac{1}{2}+\eps)(\frac{1}{2}-\eps)\overline{X}_1 + (\frac{1}{2}-\eps)^2 \overline{X}_2]$
\ForEach{$i\in \widetilde{H}$} 
\State $\overline{k}_{i,\widetilde{L}(+)}^{\pos} \leftarrow (\frac{1}{2}+\frac{1}{4\eps})\cdot \widetilde{X}_{i,+}^{\pos} + (\frac{1}{2}-\frac{1}{4\eps})\cdot \widetilde{X}_{i,-}^{\pos},\quad \overline{k}_{i,\widetilde{L}(+)}^{\neg} \leftarrow (\frac{1}{2}+\frac{1}{4\eps})\cdot \widetilde{X}_{i,+}^{\neg} + (\frac{1}{2}-\frac{1}{4\eps})\cdot \widetilde{X}_{i,-}^{\neg}$ 
\EndFor
\State \textbf{return} $\overline{k}_{\widetilde{L}} + \sum_{i\in \widetilde{H}}\max\{\overline{k}_{i,\widetilde{L}(+)}^{\pos}, \overline{k}_{i,\widetilde{L}(+)}^{\neg}\}  -  \frac{\eta m}{16}$ 
\end{algorithmic}
\caption{Estimating the \textsc{Max-$2$AND} value in dynamic streams}
\label{alg:max-2and-dynamic}
\end{algorithm}

\begin{theorem}
\label{thm:max-2and-dynamic}
Let $\eps\in (0,\frac{1}{2}]$ and $\delta \in (0,1)$. For any $\eta \in (0,1)$,
given an $\eps$-accurate oracle $\mathcal{O}$ for a \textsc{Max-$2$AND} instance $\Psi$, 
there exists a single-pass $(1-\eta)$-approximation algorithm for estimating the maximum value of $\Psi$ in dynamic streams. The algorithm uses $O(\frac{\log^2 n}{\eps^{10}\delta^5\eta^6}\log\frac{1}{\eps \delta\eta})$ bits of space. The approximation holds with probability at least $1-\delta$.
\end{theorem}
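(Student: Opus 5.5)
The proof follows the insertion-only proof of \cref{thm:max-2and-insertion} almost line by line. Only two ingredients change: how the candidate set $\widetilde H$ is built, and the space accounting. The estimator computed in post-processing is identical. The exact counters $X_{+,+},X_{+,-},X_{-,-}$ and $m$ are signed counters, so at the end of the stream they equal their values on the final instance $\Psi$. The oracle labels are fixed, so a deletion undoes precisely the increment made by the matching insertion. CountMin sketches are linear, so \cref{thm:CM-v2} still applies to the final frequency vector. In the strict turnstile setting every final count is nonnegative, so the one-sided guarantee $\widetilde X\ge X$ and the upper bound $X+\text{(accuracy)}\cdot m$ continue to hold. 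Hence the dynamic analogue of \cref{lem:error-bound-insertion-max-2and} holds verbatim with probability $1-\delta/8$, because $|\widetilde H|\le 2\beta/(\eps^8\delta^3\eta^4)$ still holds deterministically.

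The main new step replaces \cref{lem:high-deg-reservoir-max-2and}. I will show that $H\subseteq\widetilde H$ with probability at least $1-\delta/8$ when $F$ comes from $s=\beta/(\eps^8\delta^3\eta^4)$ independent $\ell_0$-samplers on the constraint-indicator vector. Each sampler, conditioned on not failing, returns a uniformly random surviving constraint (up to negligible $1/\poly(n)$ bias). Fix $i\in H$ with $\deg(i)\ge\theta m$. A single successful sample misses $i$ with probability at most $1-\theta$, so
\[
\Pr[i\notin \widetilde H]\le s\cdot\frac{\eps^8\delta^4\eta^4}{32\beta}\;+\;(1-\theta)^{s/2}.
\]
To justify the second term, I first bound the probability that more than $s/2$ samplers fail, using the per-sampler failure probability and Markov's inequality. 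The exponent $s\theta/2$ is $\Omega(\beta/(\eps^4\delta^2\eta^2))$, which makes the term tiny. I then union bound over the at most $2/\theta$ elements of $H$, and the larger $\beta=2^{29}\cdot25$ absorbs the constants. A simpler alternative is to reuse the Chebyshev argument of \cref{lem:high-deg-reservoir-max-2and} on the count of sampled constraints touching $i$, since independent samples are uncorrelated. Either way I get failure probability $\le\delta/8$.

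With $H\subseteq\widetilde H$, every variable in $\widetilde L$ has degree below $\theta m$. The remaining argument is then exactly the proof of \cref{thm:max-2and-insertion}. I bound $\overline X_t-\widetilde X_t$ and $\overline k-\widetilde k$ through \eqref{ineq:cm_pos+}--\eqref{ineq:cm_neg-}. The offline analysis (\cref{lem:concentration-low-max-2and,lem:concentration-k-L-max-2and,lem:concentration-general-max-2and}) applied to $(\widetilde L,\widetilde H)$ gives $(1-\eta/2)\opt\le\widetilde{\mathrm{ALG}}\le\opt$ with probability $1-3\delta/4$. A final union bound gives total failure probability $\delta$. As before, we may assume $m=\Omega(\eps^{-26}\delta^{-10}\eta^{-14})$. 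If $m$ is smaller, we instead recover the whole final instance with an $s$-sparse recovery sketch in $\poly(1/\eps,1/\delta,1/\eta)\log n$ space and solve it exactly.

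\textbf{Space.} Each of the four CountMin sketches has $O(\frac{1}{\eps^{10}\delta^5\eta^6})$ columns and $O(\log\frac1{\eps\delta\eta})$ rows of $O(\log n)$-bit counters. Each $\ell_0$-sampler with failure probability $p$ uses $O(\log^2 n\log(1/p))$ bits, so the $s$ samplers together use $O(\frac{\log^2 n}{\eps^8\delta^3\eta^4}\log\frac1{\eps\delta\eta})$ bits. The oracle is queried on demand and needs no storage. Summing gives $O(\frac{\log^2 n}{\eps^{10}\delta^5\eta^6}\log\frac1{\eps\delta\eta})$ bits. The main technical care goes into the sampler step: the $\ell_0$ samplers may fail or return duplicates, and their outputs are only near-uniform. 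These effects must be charged to the failure budget and must not break the bound on $|\widetilde H|$. I handle them by allowing $F$ to contain fewer than $s$ distinct constraints, which only shrinks $\widetilde H$.
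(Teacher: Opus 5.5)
Your proof follows the paper's proof. You keep signed exact counters and linear CountMin sketches, whose final tables are those of the nonnegative final frequency vectors. You replace the reservoir by $s=\beta/(\eps^8\delta^3\eta^4)$ independent $\ell_0$-samplers, show $H\subseteq\widetilde H$, and then rerun the insertion-only analysis on $(\widetilde L,\widetilde H)$ with the same space count.

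One step needs repair as written. The first term $s\cdot\frac{\eps^8\delta^4\eta^4}{32\beta}=\frac{\delta}{32}$ in your per-variable display does not survive the union bound over $|H|\le 2/\theta$. It becomes $\frac{256}{\eps^4\eta^2}$, and since $\beta$ cancels from it, enlarging $\beta$ does not help. There are two easy fixes:
\begin{itemize}
\item charge the sampler-failure event once globally, as the paper does by conditioning on all samplers succeeding, and union-bound only the $(1-\theta)^{s/2}$ term over $H$; or
\item use the Markov bound you actually describe: more than $s/2$ samplers fail with probability at most $2\cdot\frac{\eps^8\delta^4\eta^4}{32\beta}$. Multiplied by $2/\theta$, this is still $O(\eps^4\delta^3\eta^2/\beta)$.
\end{itemize}

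Your remark that duplicate or missing sampler outputs only shrink $\widetilde H$ is correct. It makes the paper's distinctness lemma (\cref{lem:ell-0-sampling-2and}) unnecessary, since only $|\widetilde H|\le 2s$ and the hitting argument are used.

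Finally, $m$ is unknown until the end of the stream, so your sparse-recovery fallback for small $m$ must run in parallel. Its cost scales with the threshold $\eps^{-26}\delta^{-10}\eta^{-14}$ and is not absorbed into the stated $\eps^{-10}\delta^{-5}\eta^{-6}$ bound. The paper's ``store the whole instance'' remark has the same looseness, so you inherited this rather than introduced it. Sparse recovery is in fact the right tool here, since a dynamic stream cannot simply be buffered.
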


\section{Our algorithms for \textsc{Max-$k$AND} with $k\ge 3$}\label{sec:max-dicut-kand}
In this section, we extend our algorithms for $\textsc{Max-$2$AND}$ to $\textsc{Max-$k$AND}$ with $k\ge 3$.
\cref{sec:bounded-max-kand} gives a bounded-degree warm-up, and \cref{sec:unbounded-max-kand} gives the unbounded-degree algorithm used in the main theorem.
Here, the notion of variable degree is analogous to that of vertex degree in graphs: 
for each $i\in [n]$, we (slightly abusing graph terminology) let $\deg(i)$ denote the number of constraints in which $x_i$ appears, either positively as $x_i$ or negatively as $-x_i$.

\subsection{Bounded-degree instances}
\label{sec:bounded-max-kand}
In this section, we present the streaming algorithm for \textsc{Max-$k$AND} with $k\ge 3$ when all variables have bounded degree.
Specifically, we maintain $k+1$ constraint-type counters $X_0,X_1,\dots,X_k$ during the stream. For each $i\in \{0,\dots,k\}$, $X_i$ records the number of constraints for which exactly $i$ literals are \emph{not} satisfied by the predicted assignment.
At the end of the stream, we use these counters to decode an estimate of the optimal \textsc{Max-$k$AND} value by solving a linear system.
Refer to~\cref{alg:bounded-max-kand} for a formal description. 

Note that although the algorithm is presented for insertion-only streams, it extends directly to dynamic streams: when a constraint deletion arrives, we simply decrement the corresponding counters.
In addition, we only require the bounded-degree assumption to hold at the end of the stream (i.e., for the final instance).
\begin{algorithm}[t!]
\begin{algorithmic}[1]
\Require $\textsc{Max-$k$AND}$ instance $\Psi$ with maximum degree $\Delta < \frac{\eps^{2k} \delta\eta^2 m}{4k(k+1)}$ as a stream of constraints, an $\eps$-accurate oracle $\mathcal{O} \rightarrow \{-1, 1\}$
\Ensure Estimate of the maximum value of $\Psi$
\LeftComment{\textbf{\textcolor{blue}{Pre-processing phase}}}
\State {\bf initialize} $X_0,X_1,\dots,X_k \leftarrow 0, m\leftarrow 0$.
\LeftComment{\textbf{\textcolor{blue}{Streaming phase}}}
\ForEach{constraint $((j_1,\dots,j_k),(b_1,\dots,b_k))$ in the stream}
    \State $m \leftarrow m +1$
    \State $Y_{j_1} = \mathcal{O}(j_1), \dots, Y_{j_k} = \mathcal{O}(j_k)$
    \State $i \leftarrow |\{t\in [k]: b_t Y_{j_t} \neq 1\}|$
    \State $X_i \leftarrow X_i + 1$
    
\EndFor
\LeftComment{\textbf{\textcolor{blue}{Post-processing phase}}}
\State \textbf{return} $\frac{1}{(2\eps)^k}\sum_{i=0}^k (\frac{1}{2}+\eps)^{k-i} (-\frac{1}{2}+\eps)^i  X_i - \frac{\eta m}{2^{k+1}}$
\end{algorithmic}
\caption{A learning-augmented streaming algorithm for estimating the \textsc{Max-$k$AND} value for $k\ge 3$ in bounded-degree instances}
\label{alg:bounded-max-kand}
\end{algorithm}

We have the following performance guarantee regarding our streaming algorithms for bounded-degree \textsc{Max-$k$AND} instances with $k\geq 3$.

\begin{theorem}
\label{thm:bounded-max-kand}
Let $\eps \in (0, \frac{1}{2}], \delta \in (0,1)$, and $k\ge 3$. For any $\eta\in (0,1)$, given an $\eps$-accurate oracle $\mathcal{O}$ for a \textsc{Max-$k$AND} instance $\Psi$, if the maximum variable degree $\Delta$ satisfies $\Delta < \frac{\eps^{2k} \delta\eta^2 m}{4k(k+1)}$, then there exists a single-pass algorithm that, with probability at least $1-\delta$, outputs a $(1-\eta)$-approximation to the maximum value of $\Psi$ in both the insertion-only and dynamic streaming models.  
The algorithm uses $O(k)$ words of space. 
\end{theorem}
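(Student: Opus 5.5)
The plan mirrors the low-degree part of the proof of \cref{thm:max-2and-offline}, now with $k+1$ counters and no high-degree part. Fix the optimal assignment $\boldsymbol{\sigma}^*$. For $j\in\{0,\dots,k\}$ let $k_j$ be the number of constraints in which exactly $j$ literals are unsatisfied by $\boldsymbol{\sigma}^*$; the target is $\opt=k_0$.

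\textbf{Step 1: unbiasedness.} Consider a constraint with $j$ literals false under $\boldsymbol{\sigma}^*$. Each literal $b_tY_{j_t}$ flips independently with probability $\frac12-\eps$. Hence the number of false predicted literals is distributed as the sum of $\mathrm{Bin}(j,\frac12+\eps)$ and $\mathrm{Bin}(k-j,\frac12-\eps)$, and $\E[X_i]=\sum_j M_{ij}k_j$ for a fixed matrix $M$ depending only on $\eps$. Rather than inverting $M$ by Gaussian elimination, I would verify the estimator directly with a generating function. Set $w_i=(\frac12+\eps)^{k-i}(-\frac12+\eps)^i$. A single literal that is true under $\boldsymbol{\sigma}^*$ contributes expected weight
\[
(\tfrac12+\eps)(\tfrac12+\eps)+(\tfrac12-\eps)(-\tfrac12+\eps)=2\eps .
\]
A single literal that is false under $\boldsymbol{\sigma}^*$ contributes
\[
(\tfrac12-\eps)(\tfrac12+\eps)+(\tfrac12+\eps)(-\tfrac12+\eps)=0 .
\]
Since $w_i$ factorizes as a product of per-literal weights, independence across literals gives $\E[\sum_i w_iX_i]=(2\eps)^k k_0$. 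Therefore $Z:=(2\eps)^{-k}\sum_i w_iX_i$ satisfies $\E[Z]=\opt$.

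\textbf{Step 2: variance.} Write $Z=\sum_C Z_C$, where $Z_C=(2\eps)^{-k}w_{i(C)}$ and $|Z_C|\le(2\eps)^{-k}$ because $|w_i|\le 1$. Two constraints on disjoint variable sets are independent, so only intersecting pairs contribute to $\Var[Z]$. Each constraint shares a variable with at most $k\Delta$ others, so
\[
\Var[Z]\le m\cdot k\Delta\cdot(2\eps)^{-2k}\le \frac{\eta^2 m^2\delta}{4^{k+1}(k+1)} .
\]
Chebyshev's inequality then gives $|Z-\opt|\le \eta m/2^{k+1}$ with probability at least $1-\delta$. The hypothesis on $\Delta$ is tuned so that the constants should work out, possibly up to the factor $(k+1)$, which is slack. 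The only genuine obstacle is this covariance bookkeeping, and it mirrors \cref{lem:concentration-low-max-2and}.

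\textbf{Step 3: approximation.} The output is $Z-\eta m/2^{k+1}$, so on the good event $\opt-\eta m/2^k\le \mathrm{ALG}\le\opt$. A uniformly random assignment satisfies each AND constraint with probability $2^{-k}$, so $\opt\ge m/2^k$. This gives $\mathrm{ALG}\ge(1-\eta)\opt$.

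\textbf{Space and dynamic streams.} The algorithm keeps $k+2$ counters, i.e.\ $O(k)$ words. For dynamic streams, deletions simply decrement the counters; the final counter values depend only on the final instance, to which the degree hypothesis applies.
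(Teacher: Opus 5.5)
Your proof is correct, but it handles concentration differently from the paper. The paper applies Chebyshev separately to each of the $k+1$ counters (\cref{lem:concentration-max-kand}: $\Var[X_i]\le k\Delta m$, deviation $\eps^k\eta m/2$, failure probability $\delta/(k+1)$ each) and takes a union bound. It then asserts the decoding identity \cref{eq:gaussian-opt-kand} ``by Gaussian elimination'' without displaying the inversion. Finally, it pushes the per-counter errors through the linear combination using $\sum_i (\frac12+\eps)^{k-i}(\frac12-\eps)^i\le 1$.

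You instead write the decoded value $Z$ as a sum of per-constraint terms $Z_C$ with $|Z_C|\le(2\eps)^{-k}$. You prove $\E[Z]=\opt$ by the per-literal factorization (expected weights $2\eps$ and $0$), and apply Chebyshev once to $Z$. The covariance count is the same: each constraint meets at most $k\Delta$ constraints, counting itself, which also covers the diagonal terms.

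Your route buys three things. First, the unbiasedness identity is actually proved rather than asserted. It is the one-literal identity \cref{eq:unbounded-max-kand-one-literal}, which the paper uses only for its unbounded-degree decoder; it also notes in the proof of \cref{cor:estimated-eps-max-kand} that the counter decoder is a product of one-literal decoders. Second, no union bound is needed. Third, your failure probability is $\delta/(k+1)$, so the factor $k+1$ in the degree hypothesis is slack for your argument.

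The paper's route buys per-counter concentration statements that stand on their own. These are reused verbatim with $\eps'$ in the estimated-$\eps$ corollary, and they keep the analysis parallel to the \textsc{Max-$2$AND} case. Your single-estimator treatment is closer in spirit to how the paper handles the unbounded-degree case (\cref{lem:unbounded-max-kand-concentration}), where the concentration inequality is applied directly to the decoded value rather than to counters.
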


For each $i\in \{0,\dots,k\}$, let $X_i$ denote the random variable given by the final value of the corresponding counter at the end of the algorithm.
We first prove concentration bounds for $X_i$'s around their expectations.
\begin{lemma}
\label{lem:concentration-max-kand}
Let $\eps \in (0,\frac{1}{2}]$ and $\delta \in (0,1)$.
For each $i\in \{0,\dots,k\}$, we have $|X_i - \E[X_i]| \le \frac{\eps^k \eta m}{2}$ with probability at least $1-\frac{\delta}{k+1}$.
\end{lemma}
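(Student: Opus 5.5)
We mirror the proof of \cref{lem:concentration-low-max-2and}: a second-moment bound followed by Chebyshev's inequality. Fix $i\in\{0,\dots,k\}$. For each constraint $C=(\mathbf{j},\mathbf{b})\in\Psi$ let $X_{i,C}$ be the indicator that exactly $i$ of the literals $b_1Y_{j_1},\dots,b_kY_{j_k}$ equal $-1$, so that $X_i=\sum_{C\in\Psi}X_{i,C}$. Since $X_{i,C}$ is a function only of $\{Y_j:j\in\mathbf{j}\}$ and the predictions are independent, two indicators $X_{i,C}$ and $X_{i,C'}$ are independent whenever the variable sets of $C$ and $C'$ are disjoint.

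Expanding the variance,
\[
\Var[X_i]=\sum_{C,C'\in\Psi}\bigl(\E[X_{i,C}X_{i,C'}]-\E[X_{i,C}]\E[X_{i,C'}]\bigr).
\]
Every pair with disjoint variable sets contributes $0$. For each remaining pair the covariance is at most $1$, so we only need to count such pairs. Fix $C$. Each of its $k$ variables lies in at most $\Delta$ constraints, so $C$ shares a variable with at most $k\Delta$ constraints $C'$ (counting $C$ itself). This gives
\[
\Var[X_i]\le m\cdot k\Delta<\frac{\eps^{2k}\delta\eta^2m^2}{4(k+1)},
\]
using the hypothesis $\Delta<\frac{\eps^{2k}\delta\eta^2m}{4k(k+1)}$ of \cref{thm:bounded-max-kand}.

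Chebyshev's inequality then yields
\[
\Pr\Bigl[|X_i-\E[X_i]|>\tfrac{\eps^k\eta m}{2}\Bigr]\le\frac{4\Var[X_i]}{\eps^{2k}\eta^2m^2}<\frac{\delta}{k+1},
\]
which is exactly the claimed bound. The constants match the degree hypothesis precisely, which suggests this is the intended argument.

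The only delicate step is the pair counting. It relies on the convention that $m$ and $\Delta$ refer to the final instance, which is fine because the lemma concerns the end-of-stream counters. It also requires being careful that overlapping pairs are bounded by $k\Delta$ per constraint and not $k(\Delta-1)+1$; the cruder bound $k\Delta$ suffices.
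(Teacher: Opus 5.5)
Your proof is correct and is essentially the paper's argument: the paper also splits pairs of constraints into variable-disjoint pairs, handled by independence, and overlapping pairs, each bounded by $1$ with at most $k\Delta$ per constraint, giving $\Var[X_i]\le k\Delta m$ and then Chebyshev under the same degree hypothesis. The only cosmetic difference is that the paper bounds $\E[X_i^2]\le(\E[X_i])^2+k\Delta m$ rather than summing covariances directly.
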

\begin{proof}
    For each constraint $(\mathbf{j}=(j_1,\dots,j_k),\mathbf{b}=(b_1,\dots,b_k))\in \Psi$, define the following indicator random variables:
    \begin{align*}
    X_{i,\mathbf{j}} =
    \begin{cases}
    1 & \text{if } |\{t\in [k]: b_t Y_{j_t} \neq 1\}|=i,\\
    0 & \text{otherwise},
    \end{cases}
    \end{align*}
    for $i\in \{0,\dots,k\}$.
    Note that we have $\E[X_{i,\mathbf{j}}] = \Pr[X_{i,\mathbf{j}} = 1]$ and $X_i=\sum_{(\mathbf{j},\mathbf{b})\in \Psi}X_{i,\mathbf{j}}$. Therefore,
    \begin{align*}
        \E[X_i^2] &= \E\qty[ \qty(\sum_{(\mathbf{j},\mathbf{b})\in \Psi} X_{i,\mathbf{j}})^2]
        = \E\qty[\sum_{(\mathbf{j},\mathbf{b})\in \Psi}\sum_{(\mathbf{j}',\mathbf{b}')\in \Psi}X_{i,\mathbf{j}} X_{i,\mathbf{j}'}]\\
        &=\sum_{(\mathbf{j},\mathbf{b})\in \Psi}\sum_{(\mathbf{j}',\mathbf{b}')\in \Psi}\E[X_{i,\mathbf{j}}X_{i,\mathbf{j}'}]\\ 
        &=\sum_{(\mathbf{j},\mathbf{b})\in \Psi}\sum_{(\mathbf{j}',\mathbf{b}')\in \Psi} \Pr[X_{i,\mathbf{j}} = 1\wedge X_{i,\mathbf{j}'} = 1]\\
        &= \sum_{\substack{(\mathbf{j},\mathbf{b}),(\mathbf{j}',\mathbf{b}')\in \Psi:\\ \{j_1,\dots,j_k\}\cap \{j'_1,\dots,j'_k\} = \emptyset}} \Pr[X_{i,\mathbf{j}} = 1\wedge X_{i,\mathbf{j}'} = 1] + \sum_{\substack{(\mathbf{j},\mathbf{b}),(\mathbf{j}',\mathbf{b}')\in \Psi:\\ \{j_1,\dots,j_k\}\cap \{j'_1,\dots,j'_k\} \neq \emptyset}} \Pr[X_{i,\mathbf{j}} = 1\wedge X_{i,\mathbf{j}'} = 1]\\
        &\leq \sum_{\substack{(\mathbf{j},\mathbf{b}),(\mathbf{j}',\mathbf{b}')\in \Psi:\\ \{j_1,\dots,j_k\}\cap \{j'_1,\dots,j'_k\} = \emptyset}} \Pr[X_{i,\mathbf{j}} = 1] \Pr[X_{i,\mathbf{j}'} = 1] + \sum_{\substack{(\mathbf{j},\mathbf{b}),(\mathbf{j}',\mathbf{b}')\in \Psi:\\ \{j_1,\dots,j_k\}\cap \{j'_1,\dots,j'_k\} \neq \emptyset}} 1\\
        &\leq \sum_{(\mathbf{j},\mathbf{b})\in \Psi}\sum_{(\mathbf{j}',\mathbf{b}')\in \Psi}\E[X_{i,\mathbf{j}}]\E[X_{i,\mathbf{j}'}] + m\cdot k\cdot \Delta \\
        &\leq \qty(\E\qty[\sum_{(\mathbf{j},\mathbf{b})\in \Psi}X_{i,\mathbf{j}}])^2 + k\Delta m\\
        &= (\E[X_i])^2 + k\Delta m.
    \end{align*}
    
    Then we have $\Var[X_i] = \E[X_i^2] - (\E[X_i])^2 \le k\Delta m$.
    By Chebyshev's inequality, 
    \begin{align*}
        \Pr[|X_i - \E[X_i]| > \frac{\eps^k \eta m}{2}] 
        \le \frac{\Var[X_i]}{(\eps^{k}\eta m/2)^2}
        &\le \frac{4k\Delta m}{\eps^{2k}\eta^2 m^2} &&\rhd \Var[X_i] \le k\Delta m\\
        &= \frac{4k\Delta}{\eps^{2k}\eta^2 m}
        < \frac{\delta}{k+1}.  &&\rhd \Delta < \frac{\eps^{2k} \delta\eta^2 m}{4k(k+1)}
    \end{align*}
\end{proof}
\paragraph{Proof of \cref{thm:bounded-max-kand}.}
Now we are ready to prove \cref{thm:bounded-max-kand}. 

\begin{proof}
    We first analyze the space complexity.
    In \cref{alg:bounded-max-kand}, we mainly maintain $k+1$ counters: $X_0, \dots, X_{k}$, corresponding to the constraints categorized by the predicted signs of their associated variables. Each counter requires $O(\log n)$ bits of space, since $0\le X_i \le m < 2^k n^k$ for $i\in \{0,\dots,k\}$.
    Therefore, the total space complexity of the algorithm is $O(k)$ words. 

    Next, we show that \cref{alg:bounded-max-kand} with high probability outputs a $(1-\eta)$-approximation of the \textsc{Max-$k$AND} value when all variables have bounded degrees.
    
Let $\boldsymbol{\sigma}^*\in \{\pm 1\}^n$ denote an optimal assignment for the \textsc{Max-$k$AND} instance $\Psi$.
For $i\in \{0,\dots,k\}$,
let $k_i:= |\{(\mathbf{j}=(j_1,\dots,j_k),\mathbf{b}=(b_1,\dots,b_k))\in \Psi:  |\{t\in [k]: b_t \sigma^*_{j_t} \neq 1\}|=i\}|$ denote the number of constraints in $\Psi$ for which exactly $i$ of the $k$ literals are not satisfied by $\boldsymbol{\sigma}^*$. Note that $k_0 = \opt$.
According to \cref{alg:bounded-max-kand}, for each $i\in \{0,\dots,k\}$, we have
\begin{align*}
    \E[X_i] = \sum_{j=0}^k k_j \sum_{s=\max\{0,i-(k-j)\}}^{\min\{i,j\}}\binom{j}{s}\binom{k-j}{i-s}\left(\frac{1}{2}+\eps\right)^{k-(i+j-2s)} \left(\frac{1}{2}-\eps\right)^{i+j-2s}.
\end{align*}
Therefore, by Gaussian elimination, we have 
\begin{align}
    \opt = k_0 = \frac{1}{(2\eps)^k}\sum_{i=0}^k \left(\frac{1}{2}+\eps\right)^{k-i} \left(-\frac{1}{2}+\eps\right)^i  \E[X_i].
\label{eq:gaussian-opt-kand}
\end{align}

Let $\mathrm{ALG}$ denote the output of \cref{alg:bounded-max-kand}.
By applying union bound on Lemma~\ref{lem:concentration-max-kand}, we have that with probability at least $1-\frac{\delta}{k+1}\cdot (k+1)= 1-\delta$, $|X_i - \E[X_i]| \le \frac{\eps^k \eta m}{2}$ for all $i\in \{0,\dots,k\}$. Therefore, 
\begin{align*}
    \mathrm{ALG} &= \frac{1}{(2\eps)^k}\sum_{i=0}^k \left(\frac{1}{2}+\eps\right)^{k-i} \left(-\frac{1}{2}+\eps\right)^i  X_i - \frac{\eta m}{2^{k+1}}\\
    &= \frac{1}{(2\eps)^k}\sum_{i=0}^k \left(\frac{1}{2}+\eps\right)^{k-i} \left(-\frac{1}{2}+\eps\right)^i  \left(\E[X_i]\pm\frac{\eps^k \eta m}{2}\right) - \frac{\eta m}{2^{k+1}}\\
    &\le \opt + \frac{\eps^{k}\eta m}{2\cdot (2\eps)^k}\sum_{i=0}^k \left|\left(\frac{1}{2}+\eps\right)^{k-i} \left(-\frac{1}{2}+\eps\right)^i\right| - \frac{\eta m}{2^{k+1}} && \rhd \text{\cref{eq:gaussian-opt-kand}}\\
    &\le \opt + \frac{\eta m}{2^{k+1}} - \frac{\eta m}{2^{k+1}} && \rhd \sum_{i=0}^k \left|\left(\frac{1}{2}+\eps\right)^{k-i} \left(-\frac{1}{2}+\eps\right)^i\right| \le 1\\
    &= \opt.
\end{align*}

On the other hand,

\begin{align*}
    \mathrm{ALG} &= \frac{1}{(2\eps)^k}\sum_{i=0}^k \left(\frac{1}{2}+\eps\right)^{k-i} \left(-\frac{1}{2}+\eps\right)^i  X_i - \frac{\eta m}{2^{k+1}}\\
    &= \frac{1}{(2\eps)^k}\sum_{i=0}^k \left(\frac{1}{2}+\eps\right)^{k-i} \left(-\frac{1}{2}+\eps\right)^i  \left(\E[X_i]\pm\frac{\eps^k \eta m}{2}\right) - \frac{\eta m}{2^{k+1}}\\
    &\ge \opt - \frac{\eps^{k}\eta m}{2\cdot (2\eps)^k}\sum_{i=0}^k \left|\left(\frac{1}{2}+\eps\right)^{k-i} \left(-\frac{1}{2}+\eps\right)^i\right| - \frac{\eta m}{2^{k+1}} && \rhd \text{\cref{eq:gaussian-opt-kand}}\\
    &\ge \opt - \frac{\eta m}{2^{k+1}} - \frac{\eta m}{2^{k+1}} && \rhd \sum_{i=0}^k \left|\left(\frac{1}{2}+\eps\right)^{k-i} \left(-\frac{1}{2}+\eps\right)^i\right| \le 1\\
    &= \opt - \frac{\eta m}{2^k}\\
    &\ge (1-\eta)\cdot \opt. &&\rhd \opt \ge \frac{m}{2^k}
\end{align*}

So, with probability at least $1-\delta$, \cref{alg:bounded-max-kand} is a $(1-\eta)$-approximation.
\end{proof}

The bounded-degree algorithm above is a warm-up. The following subsection gives the arbitrary-degree \textsc{Max-$k$AND} algorithm, still under the simple, unweighted convention from \cref{problem:kcsp}.

\subsection{Unbounded-degree instances}
\label{sec:unbounded-max-kand}
We now remove the bounded-degree assumption for \textsc{Max-$k$AND}.  
The resulting space dependence on $k$, $1/\eps$, $1/\eta$, and $\log(1/\delta)$ is
exponential.
However, the space is independent of $n$ in insertion-only streams and
polylogarithmic in $n$ in dynamic streams.

We use the same high-degree and low-degree notation as the preceding
sections.  Whenever $H\subseteq[n]$ is fixed, let $L=[n]\setminus H$.  For an
assignment $\boldsymbol{\sigma}\in\{\pm1\}^n$, consider the decoded value
\begin{align}
\sum_{(\mathbf{j},\mathbf{b})\in\Psi}
\prod_{\substack{t\in[k]\\j_t\in H}}
\frac{1+b_t\sigma_{j_t}}{2}
\prod_{\substack{t\in[k]\\j_t\in L}}
\left(\frac12+\frac{b_tY_{j_t}}{4\eps}\right).
\label{eq:unbounded-max-kand-decoder}
\end{align}
The first product evaluates the literals indexed by $H$ exactly under
$\boldsymbol{\sigma}$.  Each factor in the second product is an unbiased
estimate of the corresponding literal under the fixed optimal assignment,
because
\begin{align}
\E\left[\frac12+\frac{b_tY_{j_t}}{4\eps}\right]
=
\frac12+\frac{b_t\sigma_{j_t}^*}{2}
=
\frac{1+b_t\sigma_{j_t}^*}{2}.
\label{eq:unbounded-max-kand-one-literal}
\end{align}
The displayed value depends on $\boldsymbol{\sigma}$ only through its
coordinates indexed by $H$.

\subsubsection{Offline algorithm}
\label{sec:offline-unbounded-max-kand}

The high-degree set is grown adaptively.  When it currently contains $|H|$
variables, another variable is added only if its degree exceeds
$\theta m/(|H|+1)$.  The harmonic-series bound keeps $H$ small, while the
factor $(|H|+1)^{-1}$ makes the decoded values concentrate simultaneously over
all assignments to the variables indexed by $H$.

\begin{algorithm}[t!]
\caption{Offline learning-augmented estimator for unbounded-degree
\textsc{Max-$k$AND}}
\label{alg:offline-unbounded-max-kand}
\begin{algorithmic}[1]
\Require \textsc{Max-$k$AND} instance $\Psi=(C_1,\dots,C_m)$, an
$\eps$-accurate oracle $\mathcal{O}\rightarrow\{-1,1\}$, and
$\delta,\eta\in(0,1)$
\Ensure Estimate of the maximum value of $\Psi$
\If{$m=0$}
    \State \textbf{return} $0$
\EndIf
\State $\theta\leftarrow \frac{\eps^{2k}\eta^2}{4k\log(4/\delta)}$
\State Compute $\deg(i)$ for every variable appearing in $\Psi$.
\State $H\leftarrow\emptyset$.
\State Repeatedly add any $i\notin H$ satisfying
$\deg(i)>\frac{\theta m}{|H|+1}$, until no such variable remains.
\State $L\leftarrow[n]\setminus H$.
\State Query $Y_i=\mathcal{O}(i)$ for every $i\in L$ with $\deg(i)>0$.
\State \textbf{return}
\[
\max_{\boldsymbol{\sigma}\in\{\pm1\}^n}
\sum_{(\mathbf{j},\mathbf{b})\in\Psi}
\prod_{\substack{t\in[k]\\j_t\in H}}
\frac{1+b_t\sigma_{j_t}}{2}
\prod_{\substack{t\in[k]\\j_t\in L}}
\left(\frac12+\frac{b_tY_{j_t}}{4\eps}\right)
-
\frac{\eta m}{2^{k+1}}.
\]
\Statex The maximized expression depends only on the coordinates indexed by
$H$, so the maximum is evaluated over $2^{|H|}$ assignments.
\end{algorithmic}
\end{algorithm}

\begin{theorem}
\label{thm:offline-unbounded-max-kand}
Let $\eps\in(0,\frac12]$, $\delta,\eta\in(0,1)$, and $k\ge3$.  Consider an
instance $\Psi$ of \textsc{Max-$k$AND} and an $\eps$-accurate oracle
$\mathcal{O}$ for $\Psi$.  Then \cref{alg:offline-unbounded-max-kand} outputs
a $(1-\eta)$-approximation to the maximum value of $\Psi$ with probability at
least $1-\delta$.  Its running time is
\(
km\cdot
\exp(\exp(O(k^2\eps^{-2k}\eta^{-2}\log(4/\delta)))).
\)
\end{theorem}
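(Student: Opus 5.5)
We first bound $|H|$ and then prove uniform concentration of the decoded value over all $2^{|H|}$ restrictions to $H$.

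\textbf{Size of $H$.} Write $H=\{i_1,\dots,i_h\}$ in insertion order. The $r$-th added variable has degree $>\theta m/r$. Since $\sum_i \deg(i)=km$, we get $\theta m\sum_{r=1}^h 1/r < km$, so $H_h=\sum_{r\le h}1/r<k/\theta$. Hence
\[
h\le \exp(O(k/\theta))=\exp(O(k^2\eps^{-2k}\eta^{-2}\log(4/\delta))).
\]
Maximizing over $2^h$ assignments, each evaluated in $O(km)$ time, gives the stated running time $km\cdot\exp(\exp(\cdot))$. At termination every $i\in L$ satisfies $\deg(i)\le \theta m/(h+1)$. This margin is what drives the union bound.

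\textbf{Unbiasedness and concentration for a fixed $\boldsymbol{\tau}\in\{\pm1\}^H$.} Let $D(\boldsymbol{\tau})$ denote the decoder \eqref{eq:unbounded-max-kand-decoder}. The factors in a constraint involve distinct variables with independent $Y$'s, so by \eqref{eq:unbounded-max-kand-one-literal},
\[
\E[D(\boldsymbol{\tau})]=V(\boldsymbol{\tau}):=\mathrm{val}_\Psi(\boldsymbol{\tau},\boldsymbol{\sigma}^*|_L).
\]
We view $D(\boldsymbol{\tau})$ as a function of the independent variables $(Y_j)_{j\in L}$ and apply McDiarmid's bounded-differences inequality. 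Each factor $\frac12+\frac{b_tY}{4\eps}$ has absolute value at most $\frac{1}{2\eps}$ (since $\eps\le1/2$), and the $H$-factors lie in $[0,1]$. Flipping $Y_j$ changes a single constraint term by at most $2\cdot\frac{1}{4\eps}\cdot(2\eps)^{-(k-1)}\le (2\eps)^{-k}$. Summing over the $\deg(j)$ constraints containing $j$, the term changes by $c_j\le \deg(j)(2\eps)^{-k}$. Then
\[
\sum_{j\in L}c_j^2\le (2\eps)^{-2k}\max_{j\in L}\deg(j)\cdot km\le (2\eps)^{-2k}\frac{\theta k m^2}{h+1}.
\]
McDiarmid gives
\[
\Pr\!\left[|D(\boldsymbol{\tau})-V(\boldsymbol{\tau})|>\tfrac{\eta m}{2^{k+1}}\right]\le 2\exp\!\left(-\frac{2\eta^2m^2 2^{-2k-2}}{\sum_j c_j^2}\right)\le 2\exp\!\left(-\frac{\eta^2\eps^{2k}(h+1)}{2k\theta}\right).
\]
With $\theta=\eps^{2k}\eta^2/(4k\log(4/\delta))$, the exponent equals $-2(h+1)\log(4/\delta)$. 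The failure probability is therefore at most $2(\delta/4)^{2(h+1)}$. A union bound over the $2^h$ choices of $\boldsymbol{\tau}$ keeps the total at most $\delta$, since $2^h(\delta/4)^{2h}\le 1$. The main obstacle is this step: making the Lipschitz constant for the $(2\eps)^{-k}$-scaled decoder small enough that the $(h+1)^{-1}$ degree margin beats the $2^h$ union bound. If the constants are off, we shrink $\theta$ by a constant factor, which changes $h$ only within the stated $\exp(O(\cdot))$.

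\textbf{Conclusion.} On the good event, every $\boldsymbol{\tau}$ satisfies $|D(\boldsymbol{\tau})-V(\boldsymbol{\tau})|\le \eta m/2^{k+1}$. We have $\max_{\boldsymbol{\tau}}V(\boldsymbol{\tau})=V(\boldsymbol{\sigma}^*|_H)=\opt$, because $V(\boldsymbol{\tau})\le\opt$ for all $\boldsymbol{\tau}$ and equality holds at $\boldsymbol{\tau}=\boldsymbol{\sigma}^*|_H$. Let ALG denote the algorithm's output, $\max_{\boldsymbol{\tau}} D(\boldsymbol{\tau})-\eta m/2^{k+1}$. Then
\[
\opt-\frac{\eta m}{2^k}\le \mathrm{ALG}\le \opt.
\]
A uniformly random assignment satisfies each AND constraint with probability $2^{-k}$, so $\opt\ge m/2^k$. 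Hence $\mathrm{ALG}\ge(1-\eta)\opt$. The case $m=0$ is trivial.
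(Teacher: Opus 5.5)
Your proposal is correct and follows the paper's proof essentially step for step: the harmonic-series bound on $|H|$, unbiasedness from independence of the per-literal decoders, bounded differences with $c_j=(2\eps)^{-k}\deg(j)$ combined with the residual-degree margin $\theta m/(|H|+1)$, and a union bound over the $2^{|H|}$ high assignments giving total failure probability $(\delta^2/8)^{|H|+1}\le\delta$. The constants work out exactly with the algorithm's $\theta$. Your fallback of shrinking $\theta$ is therefore unnecessary, and it would not be available anyway, since the theorem concerns the algorithm with $\theta$ fixed.
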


\begin{lemma}
\label{lem:offline-unbounded-max-kand-size-H}
Assume $m>0$, and let $H$ be the set computed by
\cref{alg:offline-unbounded-max-kand}.  Then
\[
|H|\le
\exp({4k^2\eps^{-2k}\eta^{-2}\log(4/\delta)})-1.
\]
\end{lemma}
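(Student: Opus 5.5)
Since every constraint contains exactly $k$ variables, the degrees satisfy $\sum_{i\in[n]}\deg(i)=km$. The plan is a harmonic-series (degree-budget) argument on the order in which variables enter $H$. Let $h=|H|$ and list the variables of $H$ as $i_1,\dots,i_h$ in the order the algorithm added them. When $i_r$ was added, the current set had size $r-1$, so the greedy rule forces $\deg(i_r)>\frac{\theta m}{r}$.

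Summing over $r=1,\dots,h$ and using that the $i_r$ are distinct gives
\[
km\;\ge\;\sum_{r=1}^{h}\deg(i_r)\;>\;\theta m\sum_{r=1}^{h}\frac1r\;\ge\;\theta m\ln(h+1),
\]
where the last step is the standard bound $H_h\ge\ln(h+1)$ (compare with $\int_1^{h+1}dx/x$). Since $m>0$, dividing by $\theta m$ yields $\ln(h+1)<k/\theta$, i.e. $h<\exp(k/\theta)-1$.

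It remains to substitute $\theta=\frac{\eps^{2k}\eta^2}{4k\log(4/\delta)}$, which gives $k/\theta=4k^2\eps^{-2k}\eta^{-2}\log(4/\delta)$. This is exactly the exponent in the claimed bound. If $\log$ denotes a base other than $e$, the bound is still fine up to a constant that can be absorbed, and for base $2$ it is even slightly stronger.

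One detail to state explicitly is that the process terminates and $H$ is well defined. Any variable with $\deg(i)=0$ is never added, and the inequality above shows no set of size $h$ violating the bound can arise, so termination follows. Beyond that, the argument is routine and I expect no real obstacle.
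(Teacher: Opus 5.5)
Your proof is correct and uses the same harmonic-series degree-budget argument as the paper: the bound $\deg(i_r)>\theta m/r$, the sum against $km$, and $\ln(h+1)\le\sum_{r\le h}1/r<k/\theta$. Two small points. The strict inequality in your chain fails vacuously when $h=0$ (both sums are empty), so dispatch that case separately, as the paper does. Your log-base remark is unnecessary, because $\log(4/\delta)$ enters $\theta$ and the target bound identically, and only the harmonic estimate needs the natural log to match $\exp$.
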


\begin{proof}
The claim is immediate if $H=\emptyset$.  Otherwise, let $i_s$ be the $s$-th
variable added to $H$, for $s\in[|H|]$.  Immediately before $i_s$ is added,
the set contains $s-1$ variables, and therefore
\(
\deg(i_s)>\frac{\theta m}{s}
\).
Every constraint contains exactly $k$ distinct variables, so
\[
km=\sum_{i\in[n]}\deg(i)
\ge
\sum_{s=1}^{|H|}\deg(i_s)
>
\theta m\sum_{s=1}^{|H|}\frac1s.
\]
Since $m>0$, the standard harmonic bound gives
\(
\log(|H|+1)
\le
\sum_{s=1}^{|H|}\frac1s
<
\frac{k}{\theta}
\).
Substituting the value of $\theta$ and exponentiating proves the claim.
\end{proof}

\begin{lemma}
\label{lem:unbounded-max-kand-expectation}
Fix $H\subseteq[n]$, let $L=[n]\setminus H$, and fix an assignment
$\boldsymbol{\sigma}\in\{\pm1\}^n$.  Then the expectation of
(\ref{eq:unbounded-max-kand-decoder}) is
\[
\sum_{(\mathbf{j},\mathbf{b})\in\Psi}
\prod_{\substack{t\in[k]\\j_t\in H}}
\frac{1+b_t\sigma_{j_t}}{2}
\prod_{\substack{t\in[k]\\j_t\in L}}
\frac{1+b_t\sigma_{j_t}^*}{2},
\]
which is at most $\opt$.  Equality holds when
$\boldsymbol{\sigma}=\boldsymbol{\sigma}^*$.  Consequently, the maximum over
$\boldsymbol{\sigma}$ of the expected decoded value is $\opt$.
\end{lemma}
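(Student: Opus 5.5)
The proof is a direct computation using linearity of expectation and the independence of the predictions. Since $H$ and $\boldsymbol{\sigma}$ are fixed (deterministic), the first product in (\ref{eq:unbounded-max-kand-decoder}) is a constant for each constraint, and only the second product is random. For a fixed constraint $(\mathbf{j},\mathbf{b})$, the coordinates $j_1,\dots,j_k$ are distinct. The factors $\frac12+\frac{b_tY_{j_t}}{4\eps}$ with $j_t\in L$ therefore depend on distinct, independent predictions $Y_{j_t}$. Hence the expectation of their product equals the product of their expectations, and by (\ref{eq:unbounded-max-kand-one-literal}) each of these equals $\frac{1+b_t\sigma^*_{j_t}}{2}$. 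Summing over constraints by linearity of expectation gives the displayed formula. (Only $k$-wise independence of the predictions is used here.)

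For the upper bound, I will define the hybrid assignment $\boldsymbol{\tau}\in\{\pm1\}^n$ by $\tau_i=\sigma_i$ for $i\in H$ and $\tau_i=\sigma^*_i$ for $i\in L$. For each literal, $\frac{1+b_t\tau_{j_t}}{2}\in\{0,1\}$ is the indicator that the literal is satisfied by $\boldsymbol{\tau}$. The product over all $t\in[k]$ is therefore exactly the indicator that the AND constraint is satisfied by $\boldsymbol{\tau}$. The displayed expectation thus equals $\mathrm{val}_\Psi(\boldsymbol{\tau})\le\opt$.

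When $\boldsymbol{\sigma}=\boldsymbol{\sigma}^*$, we have $\boldsymbol{\tau}=\boldsymbol{\sigma}^*$, so the expectation equals $\mathrm{val}_\Psi(\boldsymbol{\sigma}^*)=\opt$. Combining the upper bound over all $\boldsymbol{\sigma}$ with attainment at $\boldsymbol{\sigma}^*$ shows that the maximum over $\boldsymbol{\sigma}$ of the expected decoded value is $\opt$.

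There is no real obstacle. The only points needing care are that distinctness of coordinates within a constraint justifies factoring the expectation, and that the product of $\{0,1\}$ literal indicators encodes AND.
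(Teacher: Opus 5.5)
Your proposal is correct and follows the paper's proof essentially step for step. You factor the expectation using distinct coordinates, independent labels and \cref{eq:unbounded-max-kand-one-literal}, then identify the result as $\mathrm{val}_\Psi$ of the hybrid assignment (equal to $\boldsymbol{\sigma}$ on $H$ and $\boldsymbol{\sigma}^*$ on $L$) to get the bound by $\opt$, with equality at $\boldsymbol{\sigma}=\boldsymbol{\sigma}^*$. Your remark that $k$-wise independence of the labels suffices is a valid small extra observation.
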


\begin{proof}
The coordinates of every constraint are distinct, and the oracle labels are
independent.  Hence \cref{eq:unbounded-max-kand-one-literal} allows the
expectation to pass through the product over the variables indexed by $L$,
which gives the displayed identity.  The resulting sum is the value of the
assignment that uses $\boldsymbol{\sigma}$ on the variables indexed by $H$ and
$\boldsymbol{\sigma}^*$ on the variables indexed by $L$, so it is at most
$\opt$.  Taking $\boldsymbol{\sigma}=\boldsymbol{\sigma}^*$ gives equality.
\end{proof}

\begin{lemma}
\label{lem:unbounded-max-kand-concentration}
Let $\eps\in(0,\frac12]$, $\eta>0$, and $\theta>0$.  Let
$H\subseteq[n]$ and $L=[n]\setminus H$.  Suppose that
\(
\deg(i)\le\theta m/(|H|+1)
\) for every $i\in L$.
Then the probability that, for some assignment
$\boldsymbol{\sigma}\in\{\pm1\}^n$, the decoded value in
(\ref{eq:unbounded-max-kand-decoder}) differs from its expectation by at least
$\eta m/2^{k+1}$ is at most
\[
2^{|H|+1}
\exp\left(
-\eta^2\eps^{2k}(|H|+1)/(2k\theta)
\right).
\]
\end{lemma}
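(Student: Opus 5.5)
Fix an assignment $\boldsymbol{\sigma}$ and view the decoded value \eqref{eq:unbounded-max-kand-decoder} as a function $g_{\boldsymbol{\sigma}}(\mathbf{Y}_L)$ of the independent labels $(Y_i)_{i\in L}$; the labels indexed by $H$ do not appear. The plan is to apply McDiarmid's bounded-differences inequality to $g_{\boldsymbol{\sigma}}$. Then, since $g_{\boldsymbol{\sigma}}$ depends on $\boldsymbol{\sigma}$ only through $\boldsymbol{\sigma}|_H$, a union bound over the $2^{|H|}$ distinct restrictions finishes the argument. The factor $2$ in $2^{|H|+1}$ comes from the two-sided tail.

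\textbf{Bounded differences.} Fix $i\in L$ and flip $Y_i$. Only constraints containing $x_i$ change, and there are at most $\deg(i)\le \theta m/(|H|+1)$ of them. In such a constraint, every $H$-factor $\frac{1+b_t\sigma_{j_t}}{2}$ lies in $\{0,1\}$. Every $L$-factor other than the one for $x_i$ has absolute value at most $\frac12+\frac1{4\eps}\le \frac{1}{2\eps}$, using $\eps\le\frac12$. The factor for $x_i$ itself changes by exactly $\frac{2}{4\eps}=\frac{1}{2\eps}$. The change per constraint is therefore at most $(2\eps)^{-k}$ in the worst case, where all $k$ variables lie in $L$. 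Hence
\[
c_i := (2\eps)^{-k}\deg(i), \qquad \sum_{i\in L} c_i^2 \le (2\eps)^{-2k}\Big(\max_{i\in L}\deg(i)\Big)\sum_{i\in L}\deg(i)\le (2\eps)^{-2k}\cdot\frac{\theta m}{|H|+1}\cdot km.
\]
Here we used $\sum_i \deg(i)=km$, which holds since every constraint has exactly $k$ distinct variables.

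\textbf{Tail bound and union bound.} McDiarmid's inequality gives
\[
\Pr\big[|g_{\boldsymbol{\sigma}}-\E g_{\boldsymbol{\sigma}}|\ge t\big]\le 2\exp\!\big(-2t^2/\textstyle\sum_i c_i^2\big).
\]
Take $t=\eta m/2^{k+1}$, so that $t^2=\eta^2m^2/2^{2k+2}$. The exponent becomes
\[
\frac{2\eta^2 m^2 2^{-2k-2}(2\eps)^{2k}(|H|+1)}{k\theta m^2}=\frac{\eta^2\eps^{2k}(|H|+1)}{2k\theta},
\]
exactly as claimed. Since $g_{\boldsymbol{\sigma}}$ depends on $\boldsymbol{\sigma}$ only through its $H$-coordinates, there are at most $2^{|H|}$ distinct random variables. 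A union bound over them yields the total failure probability $2^{|H|+1}\exp(-\eta^2\eps^{2k}(|H|+1)/(2k\theta))$.

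The only delicate step is the bounded-difference constant. One must not bound the other $L$-factors more loosely than $(2\eps)^{-1}$, since that would spoil the exact constant. One must also use $\max\deg\cdot\sum\deg$ rather than $\sum\deg^2$ crudely, so that the $(|H|+1)^{-1}$ margin carries through to the exponent. Independence of the $Y_i$ across $i\in L$ is what licenses McDiarmid's inequality. No assumption on $H$ is needed beyond the degree hypothesis on $L$, so the lemma applies to any fixed $H$; the adaptively chosen $H$ is deterministic given $\Psi$.
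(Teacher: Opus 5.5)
Your proposal is correct and matches the paper's proof essentially step for step. Both apply the bounded-differences inequality with $c_i=(2\eps)^{-k}\deg(i)$, bound $\sum_{i\in L}c_i^2\le (2\eps)^{-2k}\frac{\theta m}{|H|+1}\cdot km$ using the residual degree condition, take deviation $\eta m/2^{k+1}$, and finish with a union bound over the $2^{|H|}$ restrictions $\boldsymbol{\sigma}|_H$; your explicit check that $\frac12+\frac1{4\eps}\le\frac1{2\eps}$ for $\eps\le\frac12$ fills in a step the paper states without justification.
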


\begin{proof}
Fix $\boldsymbol{\sigma}$.  The decoded value is a function of the independent
oracle labels $(Y_i)_{i\in L}$.  Changing $Y_i$ affects only constraints
containing $x_i$.  In each such constraint, the factor containing $Y_i$
changes by $1/(2\eps)$, while every other decoded factor has absolute value at
most $(2\eps)^{-1}$ and every exact factor has absolute value at most $1$.
Thus, in the notation of \cref{thm:bounded-differences}, one may take
\(
c_i=(2\eps)^{-k}\deg(i)
\).
The residual degree bound and $\sum_{i\in L}\deg(i)\le km$ imply
\[
\sum_{i\in L}c_i^2
\le
(2\eps)^{-2k}
\frac{\theta m}{|H|+1}
\sum_{i\in L}\deg(i)
\le
\frac{k\theta m^2}{(2\eps)^{2k}(|H|+1)}.
\]
Applying \cref{thm:bounded-differences} with deviation
$\eta m/2^{k+1}$ gives, for the fixed assignment, failure probability at most
\(
2\exp\left(
-\frac{\eta^2\eps^{2k}(|H|+1)}{2k\theta}
\right)
\).
The decoded value depends only on the coordinates of
$\boldsymbol{\sigma}$ indexed by $H$.  A union bound over the $2^{|H|}$
possible choices of those coordinates proves the lemma.
\end{proof}

\paragraph{Proof of \cref{thm:offline-unbounded-max-kand}.}
Now we are ready to prove \cref{thm:offline-unbounded-max-kand}.
\begin{proof}
If $m=0$, the algorithm returns the exact value.  Assume $m>0$.
\Cref{lem:offline-unbounded-max-kand-size-H} gives the stated number of
assignments to enumerate.  Computing all degrees takes $O(km)$ time, and each
assignment is evaluated by one scan of the $m$ constraints, giving the stated
running time.

At termination, every $i\in L$ satisfies
$\deg(i)\le\theta m/(|H|+1)$.  By
\cref{lem:unbounded-max-kand-concentration} and the chosen value of $\theta$,
the probability that any decoded value differs from its expectation by more
than $\eta m/2^{k+1}$ is at most
\begin{align*}
2^{|H|+1}
\exp\left(-2(|H|+1)\log(4/\delta)\right)
&=
\left(\frac{\delta^2}{8}\right)^{|H|+1}
\le\delta.
\end{align*}
Condition on the complementary event, and let $\mathrm{ALG}$ denote the value
returned by the algorithm.  By
\cref{lem:unbounded-max-kand-expectation},
\[
\mathrm{ALG}
\le
\opt+\frac{\eta m}{2^{k+1}}-\frac{\eta m}{2^{k+1}}
=
\opt.
\]
Taking $\boldsymbol{\sigma}=\boldsymbol{\sigma}^*$ in the maximum gives
\[
\mathrm{ALG}
\ge
\opt-\frac{\eta m}{2^{k+1}}-\frac{\eta m}{2^{k+1}}
=
\opt-\frac{\eta m}{2^k}.
\]
A uniformly random assignment satisfies each \textsc{Max-$k$AND} constraint
with probability $2^{-k}$, so $\opt\ge m/2^k$.  Therefore
$\mathrm{ALG}\ge(1-\eta)\opt$.
\end{proof}

\subsubsection{Our algorithm in insertion-only streams}
\label{sec:streaming-unbounded-max-kand}

We next implement the offline estimator in one pass.  A reservoir sample $F$
provides candidate high-degree variables.  A CountMin sketch of the degree
vector, written using the frequency notation $f_i=\deg(i)$ from the CountMin
preliminaries, supplies estimates $\widetilde f_i$ for variables appearing in
$F$.  The set $\widetilde H$ is then the longest degree-ordered prefix
satisfying the adaptive threshold, and
$\widetilde L=[n]\setminus\widetilde H$.

A second CountMin sketch reconstructs the decoded value after $\widetilde H$
is known.  For every subset of positions in an arriving constraint, its key
records the number $s$ of selected positions, the number $i$ of unselected
literals not satisfied by the prediction, and the selected signed variables
in increasing variable-index order.  Thus all reconstruction counts are
coordinates of a single sketch.

For a fixed assignment $\boldsymbol{\sigma}$, the value reconstructed from
this sketch is
\begin{equation}
\begin{aligned}
&\sum_{s=0}^k
\sum_{\substack{j_1,\dots,j_s\in\widetilde H\\j_1<\cdots<j_s}}
\sum_{(b_1,\dots,b_s)\in\{\pm1\}^s}
\left[
\prod_{\ell=1}^s
b_\ell\left(
\frac{\sigma_{j_\ell}}2-\frac{Y_{j_\ell}}{4\eps}
\right)
\right]\\
&\qquad\cdot
\sum_{i=0}^{k-s}
\left(\frac12+\frac1{4\eps}\right)^{k-s-i}
\left(\frac12-\frac1{4\eps}\right)^i\\
&\qquad\cdot
\textsc{Query}\left(
\CM\left[(s,i,((j_1,b_1),\dots,(j_s,b_s)))\right]
\right).
\end{aligned}
\label{eq:streaming-unbounded-max-kand-reconstruction}
\end{equation}
For $s=0$, the signed-variable tuple and the product are empty, and the
product equals $1$.

The reservoir capacity is at least
\(
\lceil(8/\theta)\exp(8k/\theta)
\log((64/\delta)(k/\theta)\exp(8k/\theta))\rceil
\).  The degree sketch has accuracy parameter
\(
1/((8k/\theta)\exp(8k/\theta))
\)
and failure probability equal to $\delta/(16k)$ divided by the reservoir
capacity.  A union bound over all degree queries for variables appearing in
$F$ therefore contributes at most $\delta/16$.
The reconstruction sketch $\CM$ has accuracy parameter
\(
1/((8/\eta)(2/\eps)^k(k+1)(1+2\exp(8k/\theta))^k)
\)
and failure probability
\(
1/((16/\delta)(k+1)(1+2\exp(8k/\theta))^k)
\).
The reservoir, the two sketches, and the oracle labels use independent
randomness.

\begin{algorithm}[t!]
\caption{Insertion-only streaming estimator for unbounded-degree
\textsc{Max-$k$AND}}
\label{alg:streaming-unbounded-max-kand}
\begin{algorithmic}[1]
\Require \textsc{Max-$k$AND} instance $\Psi$ as an insertion-only stream of
constraints, an $\eps$-accurate oracle $\mathcal{O}\rightarrow\{-1,1\}$, and
$\delta,\eta\in(0,1)$
\Ensure Estimate of the maximum value of $\Psi$
\LeftComment{\textbf{\textcolor{blue}{Pre-processing phase}}}
\State $m\leftarrow0$ and
$\theta\leftarrow\frac{\eps^{2k}\eta^2}{64k\log(16/\delta)}$.
\State Maintain a reservoir sample $F$ with the capacity specified above; if
fewer constraints arrive, store all of them.
\State \textbf{initialize} the degree sketch and the reconstruction sketch
$\CM$ with the parameters specified above.

\LeftComment{\textbf{\textcolor{blue}{Streaming phase}}}
\ForEach{constraint $((j_1,\dots,j_k),(b_1,\dots,b_k))$ in the stream}
    \State $m\leftarrow m+1$ and query
    $Y_{j_t}=\mathcal{O}(j_t)$ for every $t\in[k]$.
    \State Update the degree sketch once at every coordinate $j_t$.
    \ForEach{$s\in\{0,\dots,k\}$ and every
    $1\le t_1<\cdots<t_s\le k$}
        \State $i\leftarrow
        \left|\left\{t\in[k]\setminus\{t_1,\dots,t_s\}:
        b_tY_{j_t}=-1\right\}\right|$.
        \State Sort $(j_{t_1},b_{t_1}),\dots,(j_{t_s},b_{t_s})$ by variable
        index and update $\CM$ at the key consisting of $s$, $i$, and this
        ordered tuple.
    \EndFor
    \State Update the reservoir sample $F$.
\EndFor

\LeftComment{\textbf{\textcolor{blue}{Post-processing phase}}}
\If{$m=0$}
    \State \textbf{return} $0$
\EndIf
\State For every variable $i$ appearing in a constraint in $F$, let
$\widetilde f_i$ be the estimate returned by the degree sketch.
\State Sort these variables in non-increasing order of $\widetilde f_i$.
\State Let $\widetilde H$ be the longest prefix such that the $s$-th variable
in the ordering has estimate greater than $\theta m/(2s)$ for every
$s\in[|\widetilde H|]$; set
$\widetilde L\leftarrow[n]\setminus\widetilde H$.
\State Query $Y_i=\mathcal{O}(i)$ for every $i\in\widetilde H$.
\State Evaluate (\ref{eq:streaming-unbounded-max-kand-reconstruction}) for all
$2^{|\widetilde H|}$ assignments to the variables indexed by
$\widetilde H$.
\State \textbf{return} the maximum reconstructed value minus
$\frac{\eta m}{2^{k+1}}$.
\end{algorithmic}
\end{algorithm}

\begin{theorem}
\label{thm:streaming-unbounded-max-kand}
Let $\eps\in(0,\frac12]$, $\delta\in(0,1)$, and $k\ge3$.  For any
$\eta\in(0,1)$, given an $\eps$-accurate oracle $\mathcal{O}$ for a
\textsc{Max-$k$AND} instance $\Psi$,
\cref{alg:streaming-unbounded-max-kand} outputs a $(1-\eta)$-approximation to
the maximum value of $\Psi$ in insertion-only streams with probability at
least $1-\delta$.  The algorithm uses
\(
\exp(O(k^3\eps^{-2k}\eta^{-2}\log(16/\delta)))
\)
words of space.
\end{theorem}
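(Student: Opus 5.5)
The plan is to reduce the streaming estimator to the offline analysis (\cref{lem:unbounded-max-kand-expectation} and \cref{lem:unbounded-max-kand-concentration}) in three steps. First, the computed $\widetilde H$ satisfies a residual degree condition of the form $\deg(i)\le \theta' m/(|\widetilde H|+1)$ for every $i\in\widetilde L$, with $\theta'=O(\theta)$, and $|\widetilde H|\le \exp(O(k/\theta))$. Second, the reconstruction (\ref{eq:streaming-unbounded-max-kand-reconstruction}) coincides exactly with the decoded value (\ref{eq:unbounded-max-kand-decoder}) for $H=\widetilde H$, up to CountMin error. Third, the CountMin error, summed over all keys used, is at most $\eta m/2^{k+3}$ simultaneously for every assignment. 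The final bounds $\opt\ge \mathrm{ALG}\ge(1-\eta)\opt$ then follow verbatim from the proof of \cref{thm:offline-unbounded-max-kand}, with the slack $\eta m/2^{k+1}$ split between concentration and sketch error. Space is the reservoir capacity $\exp(O(k/\theta))$ plus CountMin width of order $(1/\eps)^k 2^{O(k^2/\theta)}/\eta$ with logarithmic depth. Since $k/\theta=O(k^2\eps^{-2k}\eta^{-2}\log(16/\delta))$, the total is $\exp(O(k^3\eps^{-2k}\eta^{-2}\log(16/\delta)))$.

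\textbf{Step 2 (algebraic identity).} For a constraint, expand each literal indexed by $\widetilde H$ as
\[
\tfrac{1+b\sigma}{2} = \Big(\tfrac12+\tfrac{bY}{4\eps}\Big) + b\Big(\tfrac{\sigma}{2}-\tfrac{Y}{4\eps}\Big).
\]
Multiplying out over the $\widetilde H$-positions gives a sum over subsets $S$ of those positions: the $S$-factors are the second terms, and every remaining position (in $\widetilde H$ or $\widetilde L$) carries $\tfrac12+\tfrac{bY}{4\eps}$. Each such factor equals $\tfrac12+\tfrac1{4\eps}$ or $\tfrac12-\tfrac1{4\eps}$ according to whether the predicted literal is satisfied. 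So the inner weight depends only on $s=|S|$ and the count $i$ of unselected literals unsatisfied by $\mathbf Y$. Exchanging sums, with $S$ ranging over subsets of positions whose variables lie in $\widetilde H$, yields exactly the coefficient of the key $(s,i,\text{tuple})$ in (\ref{eq:streaming-unbounded-max-kand-reconstruction}). Keys whose tuple contains an $\widetilde L$ variable are never queried, and the selection-independent key records the constraint regardless. Thus, with exact counts, the reconstruction equals the decoder.

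\textbf{Step 3 (sketch error).} The number of queried keys is at most $(k+1)\sum_s \binom{|\widetilde H|}{s}2^s\le (k+1)(1+2|\widetilde H|)^k$. Each coefficient is at most $(1/(2\eps))^{k}$ in absolute value after bounding $|\sigma/2-Y/(4\eps)|\le 1/(2\eps)$ and the inner sum. The total count mass of the sketch is at most $2^k m$. With the stated accuracy and failure parameters and a union bound over keys, the error is at most $\eta m/2^{k+3}$ with probability $1-\delta/16$. The error is assignment-independent because it is bounded termwise.

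\textbf{Step 1 (main obstacle).} The set $\widetilde H$ must contain every variable that the adaptive threshold needs, yet remain small. I would first show that $\widetilde H$ is small: the $s$-th variable has estimate $>\theta m/(2s)$, and CountMin only overestimates, while true degree plus additive error $m/((8k/\theta)e^{8k/\theta})$ must then exceed $\theta m/(4s)$ as long as $s\le e^{8k/\theta}$. The harmonic argument of \cref{lem:offline-unbounded-max-kand-size-H} then gives $\log(|\widetilde H|+1)\le 4k/\theta$. For the residual condition, I would argue by contradiction. Suppose some $i\in\widetilde L$ has $\deg(i)>\theta m/(|\widetilde H|+1)\ge \theta m e^{-4k/\theta}$. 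Reservoir sampling with the stated capacity captures every variable of degree at least $\theta m e^{-8k/\theta}$ with probability $1-\delta/16$; there are at most $(k/\theta)e^{8k/\theta}$ such variables, and a Chernoff/Chebyshev argument plus a union bound handles them as in \cref{lem:high-deg-reservoir-max-2and}. So $i$ appears in $F$, and $\widetilde f_i\ge\deg(i)$. In the non-increasing ordering, $i$, together with all higher-estimate variables, would extend the prefix, because the thresholds $\theta m/(2s)$ are half the needed ones. This contradicts maximality. The delicate point is getting the factor-of-$2$ bookkeeping and the prefix-order argument right, and I expect this to need care.

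Finally, apply \cref{lem:unbounded-max-kand-concentration} with $\theta$ replaced by this slightly larger residual constant and deviation $\eta m/2^{k+2}$. The choice $\theta=\eps^{2k}\eta^2/(64k\log(16/\delta))$ makes the failure probability at most $\delta/2$. Note that the labels are independent of $\widetilde H$, since the reservoir and the sketch use separate randomness, so the lemma applies conditionally.
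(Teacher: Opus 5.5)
Your proposal is correct and follows essentially the same route as the paper's proof. The steps match: the high-set lemma, the subset expansion with a union bound over at most $(k+1)(1+2|\widetilde H|)^k$ keys each weighted by at most $(2\eps)^{-k}$, and finally \cref{lem:unbounded-max-kand-concentration} with $\eta/2$ applied conditionally on $\widetilde H$. In the high-set lemma, the harmonic bound first needs $|\widetilde H|\ge e^{8k/\theta}$ ruled out by truncating the sum, as the paper does. You derive the residual bound by contradiction rather than the paper's two-case split, and it actually gives the constant exactly $\theta$, not just $O(\theta)$.

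Two small corrections. First, with the stated reservoir capacity, which is only logarithmically padded, the capture step needs the exponential bound $(1-\deg(i)/m)^{|F|}\le e^{-|F|\deg(i)/m}$ (or Chernoff). The Chebyshev argument of \cref{lem:high-deg-reservoir-max-2and} would not do, because its $1/\E[X]$ failure bound does not survive a union bound over $(k/\theta)e^{8k/\theta}$ variables. Second, the degree sketch has total mass $km$, so its additive error is $\theta m/(8e^{8k/\theta})$ rather than your value; your argument still tolerates this.
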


\begin{lemma}
\label{lem:streaming-unbounded-high-set}
Assume the final stream contains $m>0$ constraints, and let $\theta$ be the
value used in \cref{alg:streaming-unbounded-max-kand}.  With probability at
least $1-\delta/4$, the set $\widetilde H$ computed by the algorithm satisfies
\(
|\widetilde H|<\exp(8k/\theta)
\)
and
\(
\deg(i)\le\theta m/(|\widetilde H|+1)
\)
for every \(i\notin\widetilde H\).
\end{lemma}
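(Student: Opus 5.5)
We first set up two good events and then argue deterministically on their intersection. Write $R$ for the reservoir capacity and $\gamma:=1/((8k/\theta)\exp(8k/\theta))$ for the accuracy of the degree sketch. Event $\mathcal{E}_1$: every variable $i$ with $\deg(i)>\theta m\,\gamma'$ appears in some constraint of $F$, for a threshold $\gamma'$ chosen of order $\exp(-8k/\theta)/8$. Event $\mathcal{E}_2$: for every variable appearing in $F$, the estimate satisfies $f_i\le\widetilde f_i\le f_i+\gamma\cdot km$. The CountMin guarantee (\cref{thm:CM-v2}) applies to the degree vector, whose $\ell_1$ norm is $km$. A union bound over the at most $kR$ queried variables, each failing with probability $\delta/(16kR)$, gives $\Pr[\neg\mathcal{E}_2]\le\delta/16$.

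For $\mathcal{E}_1$ we mimic \cref{lem:high-deg-reservoir-max-2and}, using a sharper bound than Chebyshev. A variable of degree $d$ is missed by a uniform sample of $R$ constraints without replacement with probability at most $(1-d/m)^R\le\exp(-Rd/m)$. There are at most $km/d$ such variables. With $d\ge \theta m/(8\exp(8k/\theta))$ and the stated capacity $R\ge(8/\theta)\exp(8k/\theta)\log((64/\delta)(k/\theta)\exp(8k/\theta))$, the union bound gives failure probability at most $\delta/64$. If $m\le R$, then $F$ is the whole stream and $\mathcal{E}_1$ holds trivially. Hence $\Pr[\mathcal{E}_1\wedge\mathcal{E}_2]\ge1-\delta/4$.

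On $\mathcal{E}_1\wedge\mathcal{E}_2$ we first bound the size of the prefix. If the $s$-th variable $i_s$ of $\widetilde H$ has $\widetilde f_{i_s}>\theta m/(2s)$, then $\deg(i_s)>\theta m/(2s)-\gamma km$. Summing over $s\le|\widetilde H|$ gives
\[
km\ \ge\ \frac{\theta m}{2}H_{|\widetilde H|}-|\widetilde H|\gamma km .
\]
This does not close directly, because the error term grows with $|\widetilde H|$. So we argue by truncation. Suppose $|\widetilde H|\ge T:=\lceil\exp(8k/\theta)\rceil-1$. Restrict the sum to the first $T$ prefix variables; note that $T\gamma k\le\theta/8$. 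This yields $km\ge(\theta m/2)\log(T+1)-\theta m/8\ge 4km-\theta m/8$, which is a contradiction. Hence $|\widetilde H|<\exp(8k/\theta)$.

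Next we check the residual degree condition. Take $i\notin\widetilde H$ and suppose $\deg(i)>\theta m/(|\widetilde H|+1)$. Since $|\widetilde H|<\exp(8k/\theta)$, this degree exceeds the $\mathcal{E}_1$ threshold, so $i$ appears in $F$. By $\mathcal{E}_2$, $\widetilde f_i\ge\deg(i)>\theta m/(|\widetilde H|+1)$. The sorted order is non-increasing. Let $s_0=|\widetilde H|+1$ be the first position where the prefix condition fails. Every variable at positions $1,\dots,s_0$ that is not in $\widetilde H$ comes at or after position $s_0$. So the $s_0$-th variable, which is either $i$ or one ranked ahead of it, has estimate at least $\widetilde f_i>\theta m/s_0>\theta m/(2s_0)$. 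Then the prefix could be extended, contradicting maximality. Therefore $\deg(i)\le\theta m/(|\widetilde H|+1)$, and the factor-2 slack in the threshold $\theta m/(2s)$ is only needed to absorb the sketch error in the size bound.

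The main obstacle is the size bound: sketch overestimates could inflate $\widetilde H$. The truncation-at-$T$ argument, together with the choice $\gamma\le\theta/(8kT)$, is exactly what the stated accuracy parameter is designed to supply.
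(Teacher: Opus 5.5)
Your proof is correct and follows essentially the same route as the paper. It uses the same two good events (the reservoir hits every variable of degree above $\theta m/(8\exp(8k/\theta))$, and the degree sketch is accurate on all at most $kR$ queried variables), the same harmonic-sum contradiction over the first $\lceil\exp(8k/\theta)\rceil-1$ prefix variables, and the same ordering/maximality argument for the residual degree bound.

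There are two small slips, neither affecting validity. First, the reservoir union bound gives $\delta/8$, not $\delta/64$; the total is still $3\delta/16\le\delta/4$. Second, the factor $2$ in the threshold $\theta m/(2s)$ is not needed anywhere in this lemma. Halving the threshold can only enlarge $\widetilde H$, so it works against the size bound rather than absorbing sketch error.
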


\begin{proof}
Consider a variable $i$ with
\(
\deg(i)>m/((8/\theta)\exp(8k/\theta))
\).
If the stream is shorter than the reservoir capacity, every constraint is
stored.  Otherwise, the probability that no sampled constraint contains $i$
is at most
\begin{align*}
\left(1-\frac{\deg(i)}m\right)^{|F|}
\le
\exp\left(-|F|\frac{\deg(i)}m\right)
\le
\frac{\theta\delta}{64k\exp(8k/\theta)}.
\end{align*}
Because the total variable degree is $km$, at most
$(8k/\theta)\exp(8k/\theta)$ variables exceed the displayed degree threshold.
A union bound therefore shows that, with probability at least $1-\delta/8$,
every such variable appears in a sampled constraint.

The reservoir contains at most its stated capacity, so the algorithm queries
degree estimates for at most $k$ times that many variables.  By
\cref{thm:CM-v2} and the chosen failure probability, with probability at
least $1-\delta/16$ all queried variables simultaneously satisfy
\[
\deg(i)\le\widetilde f_i
\le
\deg(i)+m/((8/\theta)\exp(8k/\theta)).
\]
Condition on both events.  Their joint probability is at least
$1-3\delta/16\ge1-\delta/4$.

Let $i_s$ be the $s$-th variable in the non-increasing ordering used to form
$\widetilde H$.  Suppose for contradiction that
$|\widetilde H|\ge\exp(8k/\theta)$.  For every positive integer $s$
satisfying $s<\exp(8k/\theta)$, the prefix rule and the sketch error give
\[
\deg(i_s)
>
\frac{\theta m}{2s}
-
\frac{\theta m}{8\exp(8k/\theta)}
\ge
\frac{3\theta m}{8s}.
\]
Consequently,
\begin{align*}
km
\ge
\sum_{1\le s<\exp(8k/\theta)}\deg(i_s)
>
\frac{3\theta m}{8}
\sum_{1\le s<\exp(8k/\theta)}\frac1s
\ge
\frac{3\theta m}{8}
\log\left(\exp(8k/\theta)\right)
\ge3km,
\end{align*}
which is impossible.  Hence $|\widetilde H|<\exp(8k/\theta)$.

It remains to prove the residual degree bound.  If
$i\notin\widetilde H$ appears in a sampled constraint, maximality of the
prefix and the ordering by $\widetilde f_i$ imply
\[
\deg(i)\le\widetilde f_i
\le
\frac{\theta m}{2(|\widetilde H|+1)}
\le
\frac{\theta m}{|\widetilde H|+1}.
\]
If $i$ does not appear in a sampled constraint, the first conditioned event
gives
\(
\deg(i)\le\theta m/(8\exp(8k/\theta))
\).
Since $|\widetilde H|<\exp(8k/\theta)$, we have
$|\widetilde H|+1<\exp(8k/\theta)+1\le2\exp(8k/\theta)$, and the desired
bound follows.
\end{proof}

\begin{lemma}
\label{lem:streaming-unbounded-reconstruction}
Fix a set $\widetilde H$ satisfying
$|\widetilde H|<\exp(8k/\theta)$, and fix all oracle labels.  With probability
at least $1-\delta/4$ over the hash functions of $\CM$, simultaneously for
every assignment to the variables indexed by $\widetilde H$, the reconstructed
value in (\ref{eq:streaming-unbounded-max-kand-reconstruction}) differs from
(\ref{eq:unbounded-max-kand-decoder}), with $H=\widetilde H$ and
$L=\widetilde L$, by at most
\(
\frac{\eta m}{2^{k+2}}.
\)
\end{lemma}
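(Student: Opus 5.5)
The plan is to prove two things. First, that the reconstruction (\ref{eq:streaming-unbounded-max-kand-reconstruction}), with exact counts in place of the $\textsc{Query}$ values, equals the decoder (\ref{eq:unbounded-max-kand-decoder}) identically. Second, that replacing exact counts by CountMin estimates perturbs it by at most $\eta m/2^{k+2}$, uniformly over all assignments.

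\textbf{Step 1: exact identity.} For a constraint $(\mathbf{j},\mathbf{b})$, let $T_H=\{t: j_t\in\widetilde H\}$. For $t\in T_H$, write
\[
\frac{1+b_t\sigma_{j_t}}{2}=\left(\frac12+\frac{b_tY_{j_t}}{4\eps}\right)+b_t\left(\frac{\sigma_{j_t}}{2}-\frac{Y_{j_t}}{4\eps}\right).
\]
Expanding the product over $T_H$ then writes each constraint's decoded term as a sum over subsets $S\subseteq T_H$. The subset $S$ contributes
\[
\prod_{t\in S}b_t\left(\frac{\sigma_{j_t}}2-\frac{Y_{j_t}}{4\eps}\right)\cdot\prod_{t\notin S}\left(\frac12+\frac{b_tY_{j_t}}{4\eps}\right).
\]
The last product depends only on $i$, the number of unselected literals with $b_tY_{j_t}=-1$; it equals $(\frac12+\frac1{4\eps})^{k-s-i}(\frac12-\frac1{4\eps})^i$. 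Grouping constraints by the key $(s,i,\text{sorted signed tuple})$ shows that (\ref{eq:streaming-unbounded-max-kand-reconstruction}) with true frequencies $N(s,i,\tau)$ equals (\ref{eq:unbounded-max-kand-decoder}). One detail needs checking: keys whose selected variables all lie in $\widetilde H$ but that come from subsets $S$ not containing all of $T_H$ must not be double counted. They are not, because the sum over $S\subseteq T_H$ is exactly a sum over subsets of positions whose variables lie in $\widetilde H$, which are precisely the keys with all $j_\ell\in\widetilde H$. Here we use that the oracle labels of $\widetilde H$ variables are fixed and that the algorithm queries exactly these labels.

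\textbf{Step 2: error bound.} The sketch receives $m2^k$ updates in total, so its $\ell_1$ mass is $2^km$. By \cref{thm:CM-v2}, each queried key has error at most $2^km$ times the accuracy parameter, with per-key failure probability as specified. The set of keys queried is independent of $\boldsymbol{\sigma}$. Its size is at most $(k+1)\sum_{s\le k}\binom{|\widetilde H|}{s}2^s\le(k+1)(1+2|\widetilde H|)^k\le (k+1)(1+2\exp(8k/\theta))^k$. A union bound with the chosen failure probability then gives simultaneous success with probability at least $1-\delta/16\ge 1-\delta/4$.

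On this event we bound the coefficient of each key. The $\sigma$-dependent factor has $|b_\ell(\sigma/2-Y/(4\eps))|\le \frac12+\frac1{4\eps}\le\frac1{2\eps}$. The $i$-coefficient is at most $(2\eps)^{-(k-s)}$ in absolute value. So the total weight per key is at most $(2\eps)^{-k}$, and this holds for every $\boldsymbol{\sigma}$. The total error is therefore at most
\[
(2/\eps)^k\cdot (k+1)(1+2\exp(8k/\theta))^k\cdot(\text{accuracy})\cdot m,
\]
up to the $2^k$ factors. With the stated accuracy parameter this is at most $\eta m/8\cdot 2^{-k}\cdot(\text{slack})$.

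\textbf{Main obstacle.} I expect the bookkeeping in Step 1 to be the main obstacle: matching keys to subsets and making sure every constraint's expansion is captured exactly once. The constant matching in Step 2 is secondary. If the factor counting comes out slightly off, the slack between $(2/\eps)^k$ in the accuracy parameter and the true coefficient bound $(2\eps)^{-k}$ should absorb the $2^k$ from the $\ell_1$ mass, giving the claimed $\eta m/2^{k+2}$.
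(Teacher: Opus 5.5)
Your proposal is correct and follows the paper's proof essentially step for step: the same expansion of each exact high-variable factor into a selected/unselected sum, the same bound $(k+1)(1+2\exp(8k/\theta))^k$ on the number of queried keys with a union bound, and the same uniform coefficient bound $(2\eps)^{-k}$. The final arithmetic can be stated exactly rather than with the hedge ``up to the $2^k$ factors''. The per-query error is $2^k m$ times the accuracy parameter, i.e.\ $\eta\eps^k m/(8(k+1)(1+2\exp(8k/\theta))^k)$. Multiplying by $(2\eps)^{-k}$ and by the number of keys gives exactly $\eta m/2^{k+3}\le \eta m/2^{k+2}$.
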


\begin{proof}
Fix an assignment $\boldsymbol{\sigma}$.  For every literal whose variable is
indexed by $\widetilde H$, the exact factor can be written as
\begin{align}
\frac{1+b_t\sigma_{j_t}}2
=
\left(\frac12+\frac{b_tY_{j_t}}{4\eps}\right)
+
b_t\left(\frac{\sigma_{j_t}}2-\frac{Y_{j_t}}{4\eps}\right).
\label{eq:unbounded-max-kand-expansion}
\end{align}
Expand the product by selecting any number $s$ of the high-variable positions
from which the second term in
\cref{eq:unbounded-max-kand-expansion} is taken.  The remaining $k-s$ factors
are determined by their predicted signs: if exactly $i$ of them have
$b_tY_{j_t}=-1$, their product is
\(
\left(1/2+1/(4\eps)\right)^{k-s-i}
\left(1/2-1/(4\eps)\right)^i
\).
During the stream, $\CM$ receives exactly one update at the corresponding key
for this selected tuple of signed variables.  Therefore, if every CountMin
query in (\ref{eq:streaming-unbounded-max-kand-reconstruction}) were replaced
by its exact frequency, the displayed expansion would be summed once for every
constraint and would equal (\ref{eq:unbounded-max-kand-decoder}).

Each constraint updates exactly $2^k$ reconstruction keys, so the frequency
vector represented by $\CM$ has total mass $2^k m$.  By
\cref{thm:CM-v2} and the chosen accuracy parameter, every fixed query has
additive error at most
\[
\frac{\eta\eps^k m}
{8(k+1)(1+2\exp(8k/\theta))^k}
\]
with its prescribed failure probability.

The queried keys are independent of the assignment.  Their total number is at
most
\begin{align*}
\sum_{s=0}^k
(k-s+1)\binom{|\widetilde H|}{s}2^s
\le
(k+1)\sum_{s=0}^k(2|\widetilde H|)^s
\le
(k+1)(1+2\exp(8k/\theta))^k.
\end{align*}
A union bound and the chosen failure probability show that every query has the
preceding error bound simultaneously with probability at least
$1-\delta/16$, which is at least $1-\delta/4$.

Every factor in the coefficient of a query has absolute value at most
$(2\eps)^{-1}$.  There are exactly $k$ factors after combining the $s$
selected terms with the $k-s$ unselected terms.  On the simultaneous good
event, the reconstruction error for every assignment is therefore at most
\begin{align*}
(k+1)(1+2\exp(8k/\theta))^k
\cdot
\frac{\eta\eps^k m}
{8(k+1)(1+2\exp(8k/\theta))^k}
\cdot
(2\eps)^{-k}
=
\frac{\eta m}{2^{k+3}}
\le
\frac{\eta m}{2^{k+2}}.
\end{align*}
\end{proof}

\paragraph{Proof of \cref{thm:streaming-unbounded-max-kand}.}
Now we are ready to prove \cref{thm:streaming-unbounded-max-kand}.
\begin{proof}
If $m=0$, the algorithm returns the exact value.  Assume $m>0$.  By
\cref{lem:streaming-unbounded-high-set}, with probability at least
$1-\delta/4$ the computed set $\widetilde H$ satisfies the stated cardinality
bound and
\(
\deg(i)\le\theta m/(|\widetilde H|+1)
\)
for every \(i\in\widetilde L\).
Condition on the sampling and degree-sketch randomness that determines such a
set.  This randomness is independent of the oracle, so the labels indexed by
$\widetilde L$ remain mutually independent and $\eps$-accurate.

Apply \cref{lem:unbounded-max-kand-concentration} with
$H=\widetilde H$, $L=\widetilde L$, and $\eta/2$ in place of $\eta$.  Since
$\theta=\eps^{2k}\eta^2/(64k\log(16/\delta))$, the conditional failure
probability is at most
\begin{align*}
2^{|\widetilde H|+1}
\exp\left(-8(|\widetilde H|+1)\log(16/\delta)\right)
\le
\frac{\delta}{2}.
\end{align*}
Thus, except with conditional probability at most $\delta/2$, every decoded
value differs from its expectation by at most $\eta m/2^{k+2}$.

After conditioning also on the oracle labels,
\cref{lem:streaming-unbounded-reconstruction} contributes failure probability
at most $\delta/4$.  Combining the events, with probability at least
$1-\delta$, the sketch-computed value for every assignment differs from the
corresponding expected decoded value by at most $\eta m/2^{k+1}$.  By
\cref{lem:unbounded-max-kand-expectation}, the maximum expected decoded value
is $\opt$.  Therefore the returned value $\mathrm{ALG}$ satisfies
\[
\mathrm{ALG}
\le
\opt+\frac{\eta m}{2^{k+1}}-\frac{\eta m}{2^{k+1}}
=
\opt.
\]
Taking $\boldsymbol{\sigma}=\boldsymbol{\sigma}^*$ gives
\[
\mathrm{ALG}
\ge
\opt-\frac{\eta m}{2^{k+1}}-\frac{\eta m}{2^{k+1}}
=
\opt-\frac{\eta m}{2^k}
\ge
(1-\eta)\opt.
\]

It remains to bound the space.  The reservoir capacity and the inverse
accuracy of the degree sketch are bounded by
\(
\exp\left(
O\left(k^2\eps^{-2k}\eta^{-2}\log(16/\delta)\right)
\right)
\).
The inverse accuracy of $\CM$ is
\(
(8/\eta)\cdot(2/\eps)^k(k+1)(1+2\exp(8k/\theta))^k
\).
Substituting the value of $\theta$, and absorbing
all polynomial and logarithmic factors into the displayed exponential, gives
the stated bound.
\end{proof}

\subsubsection{Our algorithm in dynamic streams}
\label{sec:dynamic-unbounded-max-kand}

The dynamic algorithm uses the same value of $\theta$, the same sample size,
and the same degree and reconstruction sketches as
\cref{alg:streaming-unbounded-max-kand}.  The reservoir is replaced by that
many independent $\ell_0$-samplers on the active constraint vector.  In
post-processing, every variable appearing among their outputs is treated as a
sampled variable; repeated outputs remain independent draws.  The active
constraint count $m$, the degree sketch, and $\CM$ are updated linearly using
the stream-update sign $\Delta_i$.  Post-processing is otherwise identical to
the insertion-only algorithm.

\begin{theorem}
\label{thm:dynamic-streaming-unbounded-max-kand}
Let $\eps\in(0,\frac12]$, $\delta\in(0,1)$, and $k\ge3$.  For any
$\eta\in(0,1)$, given an $\eps$-accurate oracle $\mathcal{O}$ for a
\textsc{Max-$k$AND} instance $\Psi$, there is a single-pass dynamic streaming
algorithm that outputs a $(1-\eta)$-approximation to the maximum value of
$\Psi$ with probability at least $1-\delta$.  The algorithm uses
\(
\exp(
O(k^3\eps^{-2k}\eta^{-2}\log(16/\delta))
)
\log n
\)
words of space.
\end{theorem}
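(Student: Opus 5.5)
The plan is to reuse the proof of \cref{thm:streaming-unbounded-max-kand} essentially verbatim, changing only the component that differs: reservoir sampling is replaced by independent $\ell_0$-samplers. The degree sketch and the reconstruction sketch $\CM$ are CountMin sketches, which are linear. Updating them with sign $\Delta_i$ therefore yields, at the end of the stream, exactly the frequency vectors of the final active instance. On deleted constraints the oracle labels are the same fixed labels, so insertions and deletions cancel exactly. CountMin guarantees (\cref{thm:CM-v2}) hold for nonnegative final frequency vectors under turnstile updates. Hence \cref{lem:streaming-unbounded-reconstruction} carries over unchanged, with $m$ the active count, also maintained linearly. \Cref{lem:unbounded-max-kand-expectation} and \cref{lem:unbounded-max-kand-concentration} concern only the final instance and the oracle, so they apply as is.

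The main step is re-proving \cref{lem:streaming-unbounded-high-set}. Its only use of the reservoir is the bound $(1-\deg(i)/m)^{|F|}$ on the probability that a heavy variable is missed. With $|F|$ independent $\ell_0$-samplers, each outputs a uniformly random active constraint, except with failure probability $\delta'$ per sampler. Conditioned on success, the draws are independent with replacement, so the same bound $\exp(-|F|\deg(i)/m)$ holds. I would set $\delta'$ so that the union bound over all $|F|$ samplers contributes at most $\delta/16$. This shifts the budget slightly; the $1-3\delta/16$ becomes $1-\delta/4$, which is still within the allotted $\delta/4$. The number of sampled variables is at most $k|F|$, as before, so the degree-sketch union bound is unchanged. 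The rest of that lemma is deterministic given these events.

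The probability accounting of \cref{thm:streaming-unbounded-max-kand} then goes through identically. The sampler and sketch randomness is independent of the oracle, so conditioning on $\widetilde H$ preserves the independence of the labels on $\widetilde L$. The failure budget is $\delta/4+\delta/2+\delta/4$, and the two-sided bound $\opt-\eta m/2^k\le\mathrm{ALG}\le\opt$ gives the $(1-\eta)$-approximation.

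For space, each $\ell_0$-sampler with failure probability $\delta'$ uses $O(\log^2 n\log(1/\delta'))$ bits, i.e.\ $O(\log n\log(1/\delta'))$ words, by \citep{jowhari2011tight}. Multiplying by $|F|=\exp(O(k^2\eps^{-2k}\eta^{-2}\log(16/\delta)))$, and noting that $\log(1/\delta')$ is polynomial in the same parameters, gives $\exp(O(k^3\eps^{-2k}\eta^{-2}\log(16/\delta)))\log n$ words. The CountMin sketches cost the same number of words as before, with counters of $O(\log n)$ bits since frequencies are at most $2^k m\le\poly(n)$ for constant $k$. The hash seeds add only lower-order terms.

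The step I expect to need the most care is the sampler subtlety: $\ell_0$-samplers return samples with replacement and may fail. I handle it by absorbing failures into the union bound and noting that the miss-probability argument requires only independence, not distinctness.
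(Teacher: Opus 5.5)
Your proposal is correct and follows the paper's proof essentially step for step. Independent $\ell_0$-samplers with total failure probability $\delta/16$ replace the reservoir, the high-set event costs $\delta/16+\delta/8+\delta/16=\delta/4$, the rest of the budget ($\delta/2+\delta/4$) and the linearity argument for the two CountMin sketches carry over unchanged, and the space bound is the sampler count times the per-sampler cost. The only harmless slip is that the samplers and counters act on a constraint universe of size up to $2^kn^k$, so you dropped $k$-factors such as the $k^2$ in the $O(k^2\log^2 n\log(1/\delta'))$ bits per sampler; these are absorbed into the exponential factor anyway.
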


\begin{proof}
Maintain the number of independent $\ell_0$-samplers specified above, with
individual failure probabilities chosen so that their total failure
probability is at most $\delta/16$.  Maintain the degree sketch and $\CM$ with
the parameters in \cref{alg:streaming-unbounded-max-kand}, applying every
update with the sign $\Delta_i$ of the corresponding dynamic-stream update.

If the final active instance is empty, returning $0$ is exact.  Assume $m>0$.
With probability at least $1-\delta/16$, every $\ell_0$-sampler succeeds.
Conditioned on this event, the samplers give independent uniform samples, with
replacement, from the final active constraint set.  The sampling argument in
\cref{lem:streaming-unbounded-high-set} then contributes failure probability
at most $\delta/8$, and the degree-sketch event contributes at most
$\delta/16$.  Hence, with probability at least $1-\delta/4$, the computed set
$\widetilde H$ satisfies
\(
|\widetilde H|<\exp(8k/\theta)
\)
and
\(
\deg(i)\le\theta m/(|\widetilde H|+1)
\) for every \(i\in\widetilde L\).

Condition on such a fixed $\widetilde H$.  The samplers and the degree sketch
are independent of the oracle labels, so the labels indexed by
$\widetilde L$ remain mutually independent and $\eps$-accurate.  Applying
\cref{lem:unbounded-max-kand-concentration} with $\eta/2$ in place of $\eta$
gives failure probability at most $\delta/2$ for simultaneous additive error
$\eta m/2^{k+2}$ over all assignments.

After conditioning on the oracle labels, the final frequency vector represented
by $\CM$ is nonnegative.  The sketch is linear in the stream updates, so its
final table is exactly the CountMin table of this final vector.  Therefore
\cref{lem:streaming-unbounded-reconstruction} applies and contributes failure
probability at most $\delta/4$.  With total probability at least $1-\delta$,
every sketch-computed value differs from the corresponding expected decoded
value by at most $\eta m/2^{k+1}$.  The same comparison as in the
insertion-only proof gives
\[
(1-\eta)\opt\le\mathrm{ALG}\le\opt.
\]

The active-constraint universe has size at most $2^k n^k$.  By
\cref{thm:l0-sampler}, each sampler uses $O(k^2\log^2 n)$ bits times the
logarithm of its inverse failure probability.  The number of samplers and all
dependence on $k$, $1/\eps$, $1/\eta$, and $1/\delta$ are absorbed by the
displayed exponential factor.  Dividing the resulting bit bound by the
$\Theta(\log n)$ word size gives the stated word bound.  The CountMin sketches
require no larger asymptotic space.
\end{proof}

\section{Extension to \textsc{Max-$k$CSP}s: Proof of \cref{thm:generalMax-CSPs}}
\label{sec:proof-thm-generalMax-CSPs}
\begin{proof}[Proof of \cref{thm:generalMax-CSPs}]
        The proof follows from a standard black-box reduction to the \textsc{Max-$k$AND} problem. In particular, we reduce any instance $\Psi$ of $\textsc{Max-$k$CSP}(f)$ to an instance $\widetilde{\Psi}$ of \textsc{Max-$k$AND} such that the maximum values of $\Psi$ and $\widetilde{\Psi}$ are equal, while increasing the number of constraints by at most a factor $2^k$. Given such a reduction, \cref{thm:generalMax-CSPs} follows from \cref{thm:max-2and-insertion}, \cref{thm:max-2and-dynamic}, \cref{thm:streaming-unbounded-max-kand}, and \cref{thm:dynamic-streaming-unbounded-max-kand}; for $k\ge 3$, the $2^k$ blow-up is absorbed into the displayed $k$-dependence, and the dynamic-space factor is written as $\poly(\log n)$.

        We now explain how the reduction works. 
The instance $\widetilde{\Psi}$ is over the same set of $n$ variables $x_1,\dots,x_n$. 
        Consider any constraint $(f,\mathbf{j},\mathbf{b})$ in $\Psi$. For every $\mathbf{a}\in \{-1,1\}^k$ such that $f(\mathbf{a})=1$, we add the constraint $(\textsc{AND},\mathbf{j},\mathbf{b}\circ \mathbf{a})$ to $\widetilde{\Psi}$, where $\mathbf{b}\circ \mathbf{a} = (b_1a_1,\dots,b_ka_k)$. 
        
We now claim that for every $\boldsymbol{\sigma}\in \{-1,1\}^n$, 
\[\mathrm{val}_{\Psi}(\boldsymbol{\sigma}) = \mathrm{val}_{\widetilde{\Psi}}(\boldsymbol{\sigma}).\]

        Suppose that $(f,\mathbf{j},\mathbf{b})$ is satisfied by $\boldsymbol{\sigma}$. Then, $f(b_1\sigma_{j_1},\dots,b_k\sigma_{j_k}) =1 $ which implies that there exists an $\mathbf{a}\in \{-1,1\}^k$ such that $f(\mathbf{a})=1$ and $b_i\sigma_{j_i}=a_i, \forall i\in [k]$. Equivalently $\sigma_{j_i}=b_ia_i,\forall i\in [k]$. Thus, $\textsc{AND}(\mathbf{j},\mathbf{b}\circ \mathbf{a})=1$ and by definition, no other constraint of the form $(\textsc{AND},\mathbf{j},\mathbf{b}\circ \mathbf{a'})$ where $\mathbf{a'} \ne \mathbf{a}$ can be satisfied by $\boldsymbol{\sigma}$.

        On the other hand, if $(f,\mathbf{j},\mathbf{b})$ is not satisfied by $\boldsymbol{\sigma}$, then there exists an $\mathbf{a}\in \{-1,1\}^k$ such that $f(\mathbf{a})=0$ and $b_i\sigma_{j_i}=a_i, \forall i\in [k]$. Equivalently $\sigma_{j_i}=b_ia_i,\forall i\in [k]$. Thus, $\textsc{AND}(\mathbf{j},\mathbf{b}\circ \mathbf{a})=1$ and by definition, no constraint in  $\{(\textsc{AND},\mathbf{j},\mathbf{b}\circ \mathbf{a'}):f(\mathbf{a'})=1\}$ can be satisfied by $\boldsymbol{\sigma}$.

        Thus, if $(f,\mathbf{j},\mathbf{b})$ is satisfied by $\boldsymbol{\sigma}$, then exactly one of the constraints in $\{(\textsc{AND},\mathbf{j},\mathbf{b}\circ \mathbf{a}):f(\mathbf{a})=1\}$ is satisfied by $\boldsymbol{\sigma}$. On the other hand, if $(f,\mathbf{j},\mathbf{b})$ is not satisfied by $\boldsymbol{\sigma}$, then $\boldsymbol{\sigma}$ also does not satisfy any constraint in $\{(\textsc{AND},\mathbf{j},\mathbf{b}\circ \mathbf{a}):f(\mathbf{a})=1\}$. It follows that the maximum values of $\Psi$ and $\widetilde{\Psi}$ are equal. Since $|\{\mathbf{a}:f(\mathbf{a})=1\}|\le 2^k$, the bounds on the number of constraints and the maximum variable degree in $\widetilde{\Psi}$ follow immediately.
    \end{proof}
    
\section{Conclusion}
We have exhibited that learning-augmented streaming algorithms with oracle access to $\eps$-accurate noisy predictions can obtain highly-accurate estimates for the values of Boolean \mcsp s. There are several interesting open questions that arise from our work which we state below.
\begin{enumerate}
    \item Can we relax the assumption that the streaming algorithm has to know $\eps$ \emph{with high accuracy}? For instance, could one obtain similar results assuming only a constant-factor approximation to $\eps$ is known to the algorithm? Or can we prove lower bounds showing that this is impossible?
    \item Are there suitable generalizations of the prediction models for \mcsp s defined on arbitrary $q$-ary alphabets? 
 
\end{enumerate}

\section*{Acknowledgments}
This work began as part of a discussion between Pan Peng, Ali Vakilian, and Santhoshini Velusamy when they were attending the Workshop on High-Dimensional and Complex Data Algorithms in Venice, Italy in May 2025. The work was conducted in part while Yinhao Dong and Xiaoyang Xu were visiting the Toyota Technological Institute at Chicago (TTIC) in summer 2025, and while Yinhao Dong, Pan Peng, and Ali Vakilian were attending the Sublinear Algorithms Reunion at the Simons Institute in summer 2025.

Y.D. is supported in part by the National Center for Interdisciplinary Studies and Talents Cultivation, USTC. P.P. is supported in part by NSFC Grant 62272431 and Quantum Science and Technology -- National Science and Technology Major Project (Grant No. 2021ZD0302901). A.V. is supported in part by NSF CAREER Award CCF-2542925. S.V. was supported in part by NSF award CCF-2348475 when she was affiliated with TTIC. S.V. is also supported in part by NSERC Discovery grant RGPIN-2026-04689 and ECR supplement DGECR-2026-00421. X.X. was supported by NSF award CCF-2348475 when visiting TTIC.

We thank Yury Makarychev and Daniel Lokshtanov for insightful discussions and valuable feedback.

\section*{AI Disclosure}
We used AI tools, including GPT-5.4 Pro and GPT-5.5 Pro, throughout this project. Specifically, GPT-5.5 Pro assisted with the generalization of the algorithms for unbounded-degree \textsc{Max-$k$AND} with $k\ge 3$ (\cref{sec:unbounded-max-kand}) based on the authors' ideas, as well as with the calculations in Appendix~\ref{sec:estimated-eps-max-kand}. GPT-5.4 Pro assisted with the calculations in Appendix~\ref{sec:estimated-eps-max-dicut}. All other research contributions were made solely by the authors.

AI models were also used to assist with drafting, proofreading, and language editing throughout the writing process. The authors substantially revised and rewrote the AI-generated content and carefully verified all mathematical statements and proofs, making corrections and improvements as needed. The authors verified the AI-assisted results and take full responsibility for the correctness of all claims in this paper.
\bibliographystyle{alpha}
\bibliography{main}

@inproceedings{aamand2025improved,
  author       = {Anders Aamand and
                  Justin Y. Chen and
                  Siddharth Gollapudi and
                  Sandeep Silwal and
                  Hao Wu},
  title        = {Improved Approximations for Hard Graph Problems using Predictions},
  booktitle    = {Forty-second International Conference on Machine Learning (ICML)},
  year         = {2025}
}

@inproceedings{braverman2024learning,
  author       = {Vladimir Braverman and
                  Prathamesh Dharangutte and
                  Vihan Shah and
                  Chen Wang},
  title        = {Learning-Augmented Maximum Independent Set},
  booktitle    = {Approximation, Randomization, and Combinatorial Optimization. Algorithms
                  and Techniques ({APPROX/RANDOM})},
  series       = {LIPIcs},
  volume       = {317},
  pages        = {24:1--24:18},
  
  year         = {2024},
  bibsource    = {dblp computer science bibliography, https://dblp.org}
}

@inproceedings{ghoshal2025constraint,
  author       = {Suprovat Ghoshal and
                  Konstantin Makarychev and
                  Yury Makarychev},
  title        = {Constraint Satisfaction Problems with Advice},
  booktitle    = {Proceedings of the 2025 Annual {ACM-SIAM} Symposium on Discrete Algorithms (SODA)},
  pages        = {1202--1221},
  publisher    = {{SIAM}},
  year         = {2025},
  bibsource    = {dblp computer science bibliography, https://dblp.org}
}

@article{chou2021approximability,
  title={Approximability of all Boolean CSPs with linear sketches},
  author={Chou, Chi-Ning and Golovnev, Alexander and Sudan, Madhu and Velusamy, Santhoshini},
    journal      = {CoRR},
  volume       = {abs/2102.12351},
  year         = {2021},
  
  eprinttype    = {arXiv},
  eprint       = {2102.12351}
}

@article{chou2024sketching,
  title={Sketching approximability of all finite csps},
  author={Chou, Chi-Ning and Golovnev, Alexander and Sudan, Madhu and Velusamy, Santhoshini},
  journal={J. {ACM}},
  volume={71},
  number={2},
  pages={15:1--15:74},
  year={2024}
}

@inproceedings{chou2022linear,
  title={Linear space streaming lower bounds for approximating CSPs},
  author={Chou, Chi-Ning and Golovnev, Alexander and Sudan, Madhu and Velingker, Ameya and Velusamy, Santhoshini},
  booktitle={Proceedings of the 54th Annual ACM SIGACT Symposium on Theory of Computing ({STOC})},
  pages={275--288},
  year={2022}
}

@inproceedings{hsu2019learning,
  title={Learning-Based Frequency Estimation Algorithms},
  author={Hsu, Chen-Yu and Indyk, Piotr and Katabi, Dina and Vakilian, Ali},
  booktitle    = {7th International Conference on Learning Representations ({ICLR})},
  year={2019}
}

@inproceedings{chen2022triangle,
  author       = {Justin Y. Chen and
                  Talya Eden and
                  Piotr Indyk and
                  Honghao Lin and
                  Shyam Narayanan and
                  Ronitt Rubinfeld and
                  Sandeep Silwal and
                  Tal Wagner and
                  David P. Woodruff and
                  Michael Zhang},
  title        = {Triangle and Four Cycle Counting with Predictions in Graph Streams},
  booktitle    = {the Tenth International Conference on Learning Representations ({ICLR})},
  year         = {2022}
}

@inproceedings{cohen2024max,
  title={Learning-Augmented Approximation Algorithms for Maximum Cut and Related Problems},
  author={Cohen{-}Addad, Vincent and d'Orsi, Tommaso and Gupta, Anupam and Lee, Euiwoong and Panigrahi, Debmalya},
  booktitle    = {Advances in Neural Information Processing Systems 37: Annual Conference
                  on Neural Information Processing Systems (NeurIPS)},
  year={2024}
}

@article{aamand2019frequency,
  title={({Learned}) Frequency estimation algorithms under Zipfian distribution},
  author={Aamand, Anders and Indyk, Piotr and Vakilian, Ali},
  journal      = {CoRR},
  volume       = {abs/1908.05198},
  year         = {2019},
  
  eprinttype    = {arXiv},
  eprint       = {1908.05198}
}

@inproceedings{aamand2023improved,
  title={Improved frequency estimation algorithms with and without predictions},
  author={Aamand, Anders and Chen, Justin and Nguyen, Huy and Silwal, Sandeep and Vakilian, Ali},
  booktitle    = {Advances in Neural Information Processing Systems 36: Annual Conference
                  on Neural Information Processing Systems (NeurIPS)},
  year         = {2023},
}

@inproceedings{jiang2020learning,
  title={Learning-augmented data stream algorithms},
  author={Jiang, Tanqiu and Li, Yi and Lin, Honghao and Ruan, Yisong and Woodruff, David P.},
  booktitle    = {8th International Conference on Learning Representations ({ICLR})},
  year={2020}
}

@inproceedings{indyk2019learning,
  title={Learning-based low-rank approximations},
  author={Indyk, Piotr and Vakilian, Ali and Yuan, Yang},
  booktitle    = {Advances in Neural Information Processing Systems 32: Annual Conference
                  on Neural Information Processing Systems (NeurIPS)},
  pages        = {7400--7410},
  year={2019}
}

@inproceedings{dinitz2021faster,
  title={Faster matchings via learned duals},
  author={Dinitz, Michael and Im, Sungjin and Lavastida, Thomas and Moseley, Benjamin and Vassilvitskii, Sergei},
  booktitle    = {Advances in Neural Information Processing Systems 34: Annual Conference
                  on Neural Information Processing Systems (NeurIPS)},
  pages        = {10393--10406},
  year         = {2021},
}

@inproceedings{mitzenmacher2018model,
  title={A model for learned bloom filters and optimizing by sandwiching},
  author={Mitzenmacher, Michael},
  booktitle    = {Advances in Neural Information Processing Systems 31: Annual Conference
                  on Neural Information Processing Systems (NeurIPS)},
  pages        = {462--471},
  year={2018}
}

@inproceedings{mahdian2007allocating,
  title={Allocating online advertisement space with unreliable estimates},
  author={Mahdian, Mohammad and Nazerzadeh, Hamid and Saberi, Amin},
  booktitle={Proceedings of the 8th ACM Conference on Electronic Commerce (EC)},
  pages={288--294},
  year={2007}
}

@inproceedings{singer2023oblivious,
  title={Oblivious Algorithms for the {Max-$k$AND} Problem},
  author={Singer, Noah G.},
  booktitle    = {Approximation, Randomization, and Combinatorial Optimization. Algorithms
                  and Techniques ({APPROX/RANDOM})},
  series       = {LIPIcs},
  volume       = {275},
  pages        = {15:1--15:19},
  
  year         = {2023}
}

@inproceedings{chou2022sketching,
  title={Sketching Approximability of (Weak) Monarchy Predicates},
  author={Chou, Chi-Ning and Golovnev, Alexander and Shahrasbi, Amirbehshad and Sudan, Madhu and Velusamy, Santhoshini},
  booktitle    = {Approximation, Randomization, and Combinatorial Optimization. Algorithms
                  and Techniques ({APPROX/RANDOM})},
  series       = {LIPIcs},
  volume       = {245},
  pages        = {35:1--35:17},
  
  year         = {2022},
}

@inproceedings{guruswami2019streaming,
  title={Streaming Hardness of Unique Games},
  author={Guruswami, Venkatesan and Tao, Runzhou},
  booktitle    = {Approximation, Randomization, and Combinatorial Optimization. Algorithms
                  and Techniques ({APPROX/RANDOM})},
  series       = {LIPIcs},
  volume       = {145},
  pages        = {5:1--5:12},
  
  year         = {2019}
}

@inproceedings{kallaugher2025design,
  title={How to Design a Quantum Streaming Algorithm Without Knowing Anything About Quantum Computing},
  author={Kallaugher, John and Parekh, Ojas and Voronova, Nadezhda},
  booktitle={2025 Symposium on Simplicity in Algorithms (SOSA)},
  pages={9--45},
  year={2025}
}

@inproceedings{kallaugher2024exponential,
  title={Exponential quantum space advantage for approximating maximum directed cut in the streaming model},
  author={Kallaugher, John and Parekh, Ojas and Voronova, Nadezhda},
  booktitle={Proceedings of the 56th Annual ACM Symposium on Theory of Computing ({STOC})},
  pages={1805--1815},
  year={2024}
}

@inproceedings{kallaugher2022quantum,
  title={The quantum and classical streaming complexity of quantum and classical Max-Cut},
  author={Kallaugher, John and Parekh, Ojas},
  booktitle={63rd IEEE Annual Symposium on Foundations of Computer Science (FOCS)},
  pages={498--506},
  year={2022}
}

@inproceedings{saxena2025streaming,
  title={Streaming Algorithms via Local Algorithms for Maximum Directed Cut},
  author={Saxena, Raghuvansh R. and Singer, Noah G. and Sudan, Madhu and Velusamy, Santhoshini},
  booktitle={Proceedings of the 2025 Annual ACM-SIAM Symposium on Discrete Algorithms (SODA)},
  pages={3392--3408},
  year={2025}
}

@inproceedings{saxena2023improved,
  title={Improved streaming algorithms for Maximum Directed Cut via smoothed snapshots},
  author={Saxena, Raghuvansh R. and Singer, Noah G. and Sudan, Madhu and Velusamy, Santhoshini},
  booktitle={64th IEEE Annual Symposium on Foundations of Computer Science (FOCS)},
  pages={855--870},
  year={2023}
}

@inproceedings{saxena2023streaming,
  title={Streaming complexity of CSPs with randomly ordered constraints},
  author={Saxena, Raghuvansh R. and Singer, Noah and Sudan, Madhu and Velusamy, Santhoshini},
  booktitle={Proceedings of the 2023 Annual ACM-SIAM Symposium on Discrete Algorithms (SODA)},
  pages={4083--4103},
  year={2023}
}

@inproceedings{chou2020optimal,
  title={Optimal streaming approximations for all boolean max-2csps and max-ksat},
  author={Chou, Chi-Ning and Golovnev, Alexander and Velusamy, Santhoshini},
  booktitle={61st {IEEE} Annual Symposium on Foundations of Computer Science (FOCS)},
  pages={330--341},
  year={2020}
}

@inproceedings{guruswami2017streaming,
  title={Streaming complexity of approximating max 2csp and max acyclic subgraph},
  author={Guruswami, Venkatesan and Velingker, Ameya and Velusamy, Santhoshini},
  booktitle    = {Approximation, Randomization, and Combinatorial Optimization. Algorithms
                  and Techniques ({APPROX/RANDOM})},
  series       = {LIPIcs},
  volume       = {81},
  pages        = {8:1--8:19},
  
  year         = {2017}
}

@inproceedings{fei2025dichotomy,
  title={A dichotomy theorem for multi-pass streaming CSPs},
  author={Fei, Yumou and Minzer, Dor and Wang, Shuo},
  booktitle    = {Proceedings of the 58th Annual {ACM} Symposium on Theory of Computing ({STOC})},
  year         = {2026}
}

@inproceedings{fei2025multi,
  title={Multi-Pass Streaming Lower Bounds for Approximating Max-Cut},
  author={Fei, Yumou and Minzer, Dor and Wang, Shuo},
  booktitle    = {66th {IEEE} Annual Symposium on Foundations of Computer Science ({FOCS})},
  year={2025}
}

@inproceedings{chen2023towards,
  title={Towards multi-pass streaming lower bounds for optimal approximation of max-cut},
  author={Chen, Lijie and Kol, Gillat and Paramonov, Dmitry and Saxena, Raghuvansh R. and Song, Zhao and Yu, Huacheng},
  booktitle={Proceedings of the 2023 Annual ACM-SIAM Symposium on Discrete Algorithms (SODA)},
  pages={878--924},
  year={2023}
}

@inproceedings{assadi2021graph,
  title={Graph streaming lower bounds for parameter estimation and property testing via a streaming XOR lemma},
  author={Assadi, Sepehr and N, Vishvajeet},
  booktitle={Proceedings of the 53rd Annual ACM  Symposium on Theory of Computing ({STOC})},
  pages={612--625},
  year={2021}
}

@inproceedings{assadi2020multi,
  title={Multi-pass graph streaming lower bounds for cycle counting, max-cut, matching size, and other problems},
  author={Assadi, Sepehr and Kol, Gillat and Saxena, Raghuvansh R. and Yu, Huacheng},
  booktitle={ 61st {IEEE} Annual Symposium on Foundations of Computer Science (FOCS)},
  pages={354--364},
  year={2020}
}

@inproceedings{kapralov2014streaming,
  title={Streaming lower bounds for approximating MAX-CUT},
  author={Kapralov, Michael and Khanna, Sanjeev and Sudan, Madhu},
  booktitle={Proceedings of the 2015 Annual ACM-SIAM Symposium on Discrete Algorithms (SODA)},
  pages={1263--1282},
  year={2015}
}

@inproceedings{kapralov2017,
  title={{$(1+ \Omega(1))$}-Approximation to MAX-CUT Requires Linear Space},
  author={Kapralov, Michael and Khanna, Sanjeev and Sudan, Madhu and Velingker, Ameya},
  booktitle={Proceedings of the 2017 Annual ACM-SIAM Symposium on Discrete Algorithms (SODA)},
  pages={1703--1722},
  year={2017}
}

@inproceedings{sudan2022streaming,
  title={Streaming and Sketching Complexity of CSPs: A Survey (Invited Talk)},
  author={Sudan, Madhu},
  booktitle={49th International Colloquium on Automata, Languages, and Programming (ICALP)},
  series       = {LIPIcs},
  volume       = {229},
  pages        = {5:1--5:20},
  
  year         = {2022},
  
}

@inproceedings{kogan2015sketching,
  title={Sketching cuts in graphs and hypergraphs},
  author={Kogan, Dmitry and Krauthgamer, Robert},
  booktitle={Proceedings of the 2015 Conference on Innovations in Theoretical Computer Science (ITCS)},
  pages={367--376},
  year={2015}
}

@article{singer2025nine,
  title={Nine lower bound conjectures on streaming approximation algorithms for CSPs},
  author={Singer, Noah G.},
  journal={ACM SIGACT News},
  volume={56},
  number={4},
  pages={25--36},
  year={2025},
  publisher={ACM New York, NY, USA}
}

@inproceedings{dong2025learning,
  author       = {Yinhao Dong and
                  Pan Peng and
                  Ali Vakilian},
  title        = {Learning-Augmented Streaming Algorithms for Approximating {MAX-CUT}},
  booktitle    = {16th Innovations in Theoretical Computer Science Conference ({ITCS})},
  series       = {LIPIcs},
  volume       = {325},
  pages        = {44:1--44:24},
  year         = {2025}
}

@article{cormode2005improved,
  title={An improved data stream summary: the count-min sketch and its applications},
  author={Cormode, Graham and Muthukrishnan, Shan},
  journal      = {J. Algorithms},
  volume       = {55},
  number       = {1},
  pages        = {58--75},
  year         = {2005}
}

@article{vitter1985random,
  author       = {Jeffrey Scott Vitter},
  title        = {Random Sampling with a Reservoir},
  journal      = {{ACM} Trans. Math. Softw.},
  volume       = {11},
  number       = {1},
  pages        = {37--57},
  year         = {1985},
}

@inproceedings{jowhari2011tight,
  title={Tight bounds for lp samplers, finding duplicates in streams, and related problems},
  author={Jowhari, Hossein and Sa{\u{g}}lam, Mert and Tardos, G{\'a}bor},
  booktitle={Proceedings of the 30th ACM SIGMOD-SIGACT-SIGART symposium on Principles of database systems (PODS)},
  pages={49--58},
  year={2011}
}

@inproceedings{kapralov2019optimal,
  author       = {Michael Kapralov and
                  Dmitry Krachun},
  
  title        = {An optimal space lower bound for approximating {MAX-CUT}},
  booktitle    = {Proceedings of the 51st Annual {ACM} {SIGACT} Symposium on Theory
                  of Computing ({STOC})},
  pages        = {277--288},
  
  year         = {2019}
}

@inproceedings{azarmehr2025half,
  title={Half-Approximating Maximum Dicut in the Streaming Setting},
  author={Azarmehr, Amir and Behnezhad, Soheil and Ferrante, Shane and Saneian, Mohammad},
  booktitle    = {Proceedings of the 58th Annual {ACM} Symposium on Theory of Computing ({STOC})},
  year         = {2026}
}

@inproceedings{velusamy2025,
      title={Near-optimal streaming approximation for Max-DICUT in sublinear space using two passes}, 
      author={Santhoshini Velusamy},
      booktitle    = {Approximation, Randomization, and Combinatorial Optimization. Algorithms
                  and Techniques ({APPROX/RANDOM})},
  
  year         = {2026}
}

@article{goemans1995improved,
  title={Improved approximation algorithms for maximum cut and satisfiability problems using semidefinite programming},
  author={Goemans, Michel X. and Williamson, David P.},
  journal={J. {ACM}},
  volume={42},
  number={6},
  pages={1115--1145},
  year={1995}
}

@article{khot2007optimal,
  title={Optimal inapproximability results for MAX-CUT and other 2-variable CSPs?},
  author={Khot, Subhash and Kindler, Guy and Mossel, Elchanan and O’Donnell, Ryan},
  journal={{SIAM} J. Comput.},
  volume={37},
  number={1},
  pages={319--357},
  year={2007},
  publisher={SIAM}
}

@article{lykouris2021competitive,
  author       = {Thodoris Lykouris and
                  Sergei Vassilvitskii},
  title        = {Competitive Caching with Machine Learned Advice},
  journal      = {J. {ACM}},
  volume       = {68},
  number       = {4},
  pages        = {24:1--24:25},
  year         = {2021}
}

@inproceedings{kumar2018improving,
  title={Improving online algorithms via ML predictions},
  author={Kumar, Ravi and Purohit, Manish and Svitkina, Zoya},
  booktitle    = {Advances in Neural Information Processing Systems 31: Annual Conference
                  on Neural Information Processing Systems (NeurIPS)},
  pages        = {9684--9693},
  year         = {2018},
}

@inproceedings{VaidyaKMK21,
  author       = {Kapil Vaidya and
                  Eric Knorr and
                  Michael Mitzenmacher and
                  Tim Kraska},
  title        = {Partitioned Learned Bloom Filters},
  booktitle    = {9th International Conference on Learning Representations (ICLR)},
  year         = {2021}
}

@inproceedings{KraskaBCDP18,
  author       = {Tim Kraska and
  Alex Beutel and
  Ed H. Chi and
  Jeffrey Dean and
  Neoklis Polyzotis},
  title        = {The Case for Learned Index Structures},
  booktitle = {Proceedings of the 2018 International Conference on Management of Data (SIGMOD)},
  pages        = {489--504},
  year         = {2018}
}

@article{LindermayrMS25,
  author       = {Alexander Lindermayr and
                  Nicole Megow and
                  Bertrand Simon},
  title        = {Boosting Double Coverage for k-Server via Imperfect Predictions},
  journal      = {Algorithmica},
  volume       = {87},
  number       = {11},
  pages        = {1477--1517},
  year         = {2025}
}

@inproceedings{EdenINRSW21,
  author       = {Talya Eden and
                  Piotr Indyk and
                  Shyam Narayanan and
                  Ronitt Rubinfeld and
                  Sandeep Silwal and
                  Tal Wagner},
  title        = {Learning-Based Support Estimation in Sublinear Time},
  booktitle    = {9th International Conference on Learning Representations ({ICLR})},
  
  year         = {2021},
}

@inproceedings{LiLLVW23,
  author       = {Yi Li and
                  Honghao Lin and
                  Simin Liu and
                  Ali Vakilian and
                  David P. Woodruff},
  title        = {Learning the Positions in CountSketch},
  booktitle    = {11th International Conference on Learning Representations
                  ({ICLR})},
  
  year         = {2023}
}

@inproceedings{AamandCGSW25,
  author       = {Anders Aamand and
  Justin Y. Chen and
  Siddharth Gollapudi and
  Sandeep Silwal and
  Hao Wu},
  title        = {Learning-Augmented Frequent Directions},
  booktitle    = {the Thirteenth International Conference on Learning Representations ({ICLR})},
  
  year         = {2025},
}

@article{attias2025learning,
  title={Learning-Augmented Algorithms for Boolean Satisfiability},
  author={Attias, Idan and Gao, Xing and Reyzin, Lev},
  journal      = {CoRR},
      volume       = {abs/2505.06146},
      year         = {2025},
      
      eprinttype    = {arXiv},
      eprint       = {2505.06146}
}

@inproceedings{lafferty2001conditional,
  author       = {John D. Lafferty and
                  Andrew McCallum and
                  Fernando C. N. Pereira},
  
  title        = {Conditional Random Fields: Probabilistic Models for Segmenting and
                  Labeling Sequence Data},
  booktitle    = {Proceedings of the Eighteenth International Conference on Machine
                  Learning (ICML)},
  pages        = {282--289},
  
  year         = {2001}
}

@inproceedings{taskar2003max,
  author       = {Benjamin Taskar and
                  Carlos Guestrin and
                  Daphne Koller},
  title        = {Max-Margin Markov Networks},
  booktitle    = {Advances in Neural Information Processing Systems 16: Neural Information
                  Processing Systems ({NIPS})},
  pages        = {25--32},
  
  year         = {2003}
}

@inproceedings{tsochantaridis2004support,
author       = {Ioannis Tsochantaridis and
                  Thomas Hofmann and
                  Thorsten Joachims and
                  Yasemin Altun},
  
  title        = {Support vector machine learning for interdependent and structured
                  output spaces},
  booktitle    = {Machine Learning, Proceedings of the Twenty-first International Conference (ICML)},
  series       = {{ACM} International Conference Proceeding Series},
  year         = {2004}
}

@article{boykov2001fast,
  author       = {Yuri Boykov and
                  Olga Veksler and
                  Ramin Zabih},
  title        = {Fast Approximate Energy Minimization via Graph Cuts},
  journal      = {{IEEE} Trans. Pattern Anal. Mach. Intell.},
  volume       = {23},
  number       = {11},
  pages        = {1222--1239},
  year         = {2001}
}

@inproceedings{xu2018towards,
  author       = {Hong Xu and
                  Sven Koenig and
                  T. K. Satish Kumar},
  title        = {Towards Effective Deep Learning for Constraint Satisfaction Problems},
  booktitle    = {Principles and Practice of Constraint Programming - 24th International
                  Conference ({CP})},
  series       = {Lecture Notes in Computer Science},
  pages        = {588--597},
  year         = {2018}
}

@inproceedings{selsam2019learning,
  author       = {Daniel Selsam and
                  Matthew Lamm and
                  Benedikt B{\"{u}}nz and
                  Percy Liang and
                  Leonardo de Moura and
                  David L. Dill},
  title        = {Learning a {SAT} Solver from Single-Bit Supervision},
  booktitle    = {7th International Conference on Learning Representations ({ICLR})},
  year         = {2019}
}

@inproceedings{Dong25CC,
  author       = {Yinhao Dong and
                  Shan Jiang and
                  Shi Li and
                  Pan Peng},
  title        = {Learning-Augmented Streaming Algorithms for Correlation Clustering},
  booktitle    = {Advances in Neural Information Processing Systems 38: Annual Conference
                  on Neural Information Processing Systems (NeurIPS)},
  year         = {2025},
  bibsource    = {dblp computer science bibliography, https://dblp.org}
}

@inproceedings{Nagawanshi26moment,
  author       = {Soham Nagawanshi and
                  Shalini Panthangi and
                  Chen Wang and
                  David P. Woodruff and
                  Samson Zhou},
  title        = {Learning-Augmented Moment Estimation on Time-Decay Models},
  booktitle    = {14th International Conference on Learning Representations (ICLR)},
  year         = {2026},
  bibsource    = {dblp computer science bibliography, https://dblp.org}
}

\appendix
\section{Additional technical preliminaries}
In this paper, we employ several well-known techniques in the streaming model, as described below.
\paragraph{CountMin sketch.}
Given a stream of $m$ integers from $[n]$, let $f_i$ denote the number of occurrences of $i$ in the stream, for any $i\in [n]$.
In CountMin sketch, there are $k$ uniform and independent random hash functions $h_1,h_2,\dots,h_k:[n] \rightarrow [w]$ and an array $C$ of size $k \times w$. The algorithm maintains $C$, such that $C[\ell,s] = \sum_{j:h_\ell(j)=s}f_j$ at the end of the stream. Upon each query of an arbitrary $i\in [n]$, the algorithm returns $\widetilde{f}_i = \min_{\ell \in [k]} C[\ell,h_\ell(i)]$.

\begin{theorem}[CountMin sketch~\citep{cormode2005improved}]
\label{thm:CM-v2}
Let $\mathbf{f}\in \mathbb{R}_{\ge 0}^n$ be a frequency vector with total mass $m=\|\mathbf{f}\|_1$ maintained under streaming updates, and let $\widetilde f_i$ denote the CountMin estimate for coordinate $i\in[n]$.
For any parameters $\eps,\delta\in(0,1)$, there is a CountMin sketch that uses
$O\!\left(\frac{1}{\eps}\log\frac{1}{\delta}\right)$ words of space and supports point queries such that for every $i\in[n]$,
\[
\widetilde f_i \ge f_i
\qquad\text{and}\qquad
\Pr\!\left[\widetilde f_i \le f_i + \eps m\right]\ge 1-\delta .
\]
\end{theorem}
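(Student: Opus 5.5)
The plan is the standard Cormode--Muthukrishnan analysis, applied to the structure described just before \cref{thm:CM-v2}. I fix the width $w=\lceil e/\eps\rceil$ and the depth $d=\lceil \ln(1/\delta)\rceil$, where the depth is the number of hash functions (called $k$ in the description above). I draw $d$ hash functions $h_1,\dots,h_d:[n]\to[w]$ independently. Pairwise independence within each function suffices, and so do the fully uniform functions of the description.

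The table satisfies $C[\ell,s]=\sum_{j:h_\ell(j)=s}f_j$. Each stream update adds to exactly one cell per row, so the table is linear in the updates. Consequently, even under dynamic updates, the final table depends only on the final vector $\mathbf{f}$.

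\textbf{Space.} The sketch stores $d\cdot w=O(\frac1\eps\log\frac1\delta)$ counters, each one word. It also stores the hash descriptions. A pairwise-independent family such as $h(x)=((ax+b)\bmod p)\bmod w$ costs $O(1)$ words per function, so $O(\log\frac1\delta)$ words in total.

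\textbf{Lower bound.} Fix $i$ and a row $\ell$. Since every $f_j\ge 0$,
\[
C[\ell,h_\ell(i)]=f_i+Z_\ell, \qquad Z_\ell:=\sum_{j\neq i:\,h_\ell(j)=h_\ell(i)}f_j\ \ge 0 .
\]
Taking the minimum over rows gives $\widetilde f_i\ge f_i$ deterministically. This step is where nonnegativity of the final frequency vector is used.

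\textbf{Upper bound.} By pairwise independence, $\Pr[h_\ell(j)=h_\ell(i)]\le 1/w$ for each $j\neq i$. Hence
\[
\E[Z_\ell]\le \frac{\|\mathbf f\|_1}{w}\le \frac{\eps m}{e}.
\]
Since $Z_\ell\ge 0$, Markov's inequality gives $\Pr[Z_\ell>\eps m]\le 1/e$. The rows use independent hash functions, so the events $\{Z_\ell>\eps m\}$ are independent across $\ell$. The estimate fails only if every row overshoots, so
\[
\Pr[\widetilde f_i> f_i+\eps m]=\Pr\Big[\min_\ell Z_\ell>\eps m\Big]\le e^{-d}\le \delta .
\]

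\textbf{Main obstacle.} The only delicate point is justifying the lower bound in the dynamic setting, where intermediate counts may be negative. I would handle it by noting that, by linearity, the final table equals the sketch of the final nonnegative vector. The guarantee is stated only at query time for that vector, so intermediate negativity does not matter. Everything else is the routine Markov-plus-independence calculation above.
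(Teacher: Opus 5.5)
Your proof is correct. It is the standard Cormode--Muthukrishnan analysis (width $\lceil e/\eps\rceil$, depth $\lceil\ln(1/\delta)\rceil$, Markov per row, then independence across rows), which is exactly what the paper relies on: it states \cref{thm:CM-v2} as a cited result and gives no proof of its own. Your linearity remark, that the final table depends only on the final nonnegative vector, is also the justification the paper needs when it uses the sketch in dynamic streams.
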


\paragraph{Reservoir sampling.}
The algorithm samples $k$ elements from a stream, such that at any time $m \geq k$, the sample consists of a uniformly random subset of size $k$ of the elements seen so far.
The space complexity of the reservoir sampling algorithm is $O(k(\log n + \log m))$ bits~\citep{vitter1985random}.

\begin{definition}[$\ell_0$-sampler]
\label{def:l0-sampler}
Let $\mathbf{x}\in \sR^n$ be a non-zero vector and $\delta\in (0,1)$. 
An $\ell_0$-sampler for $\mathbf{x}$ returns FAIL with probability at most $\delta$ and otherwise returns some index $i$ such that $x_i \neq 0$ with probability $\frac{1}{|\operatorname{supp}(\mathbf{x})|}$ where $\operatorname{supp}(\mathbf{x})=\{i\in [n]: x_i\neq 0\}$ is the support of $\mathbf{x}$.
\end{definition}

The following theorem states that $\ell_0$-samplers can be maintained using a single pass in dynamic streams.
\begin{theorem}[Theorem~2~in~\citep{jowhari2011tight}]
\label{thm:l0-sampler}
Let $\delta \in (0,1)$.
There exists a single-pass streaming algorithm for maintaining an $\ell_0$-sampler for a non-zero vector $\mathbf{x}\in \sR^n$ (with failure probability $\delta$) in the dynamic model using $O(\log^2 n \log \delta^{-1})$ bits of space.
\end{theorem}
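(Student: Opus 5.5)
The statement to prove is \cref{thm:l0-sampler}, the $\ell_0$-sampler guarantee of \citep{jowhari2011tight}. I would follow the standard subsampling-plus-sparse-recovery construction. The first step is to choose a pairwise (or $O(\log n)$-wise) independent hash $h:[n]\to[n^3]$ and define nested levels $S_j=\{i: h(i)\le n^3 2^{-j}\}$ for $j=0,\dots,\log n$. For each level we maintain a linear sketch of $\mathbf{x}$ restricted to $S_j$. Because the sketch is linear, updates $(i,\Delta)$ in a dynamic stream are applied by adding $\Delta$ times the sketch column of $i$ to every level with $i\in S_j$. This gives a single-pass algorithm regardless of deletions.

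The second step is to make each level an exact $s$-sparse recovery structure with $s=O(1)$. I would use, for example, the syndrome of a Reed--Solomon code over a field of size $\poly(n)$, which has $O(s)$ counters of $O(\log n)$ bits each. I would also add a random fingerprint $\sum_i x_i r^i \bmod p$ to detect when the restricted vector is not $s$-sparse. At query time, we find a level $j$ where $|\mathrm{supp}(\mathbf{x})\cap S_j|$ lies in $[1,s]$, decode that support, and output the index in it with the minimum hash value. Symmetry of the hash over supported coordinates gives a uniform output. Near-uniformity (say $1/|\mathrm{supp}|\pm n^{-c}$) would follow with limited independence, and exact uniformity can be obtained with fully random hashing, which is then derandomized by Nisan's PRG. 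The PRG step is what costs the extra $\log n$ factor.

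The third step is the probability analysis. For $j^\ast$ with $2^{j^\ast}\approx|\mathrm{supp}(\mathbf{x})|$, the expected intersection size is $\Theta(1)$, and Chebyshev under pairwise independence shows it lies in $[1,s]$ with constant probability. So a single instance fails with probability at most a constant. Running $O(\log(1/\delta))$ independent instances and taking the first success drives the failure probability to $\delta$. The space is $\log n$ levels, times $O(\log n)$ bits per level, times $O(\log(1/\delta))$ repetitions. That totals $O(\log^2 n\log(1/\delta))$ bits, matching the stated bound.

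I expect the main obstacle to be obtaining \emph{exact} uniformity, as required by \cref{def:l0-sampler}, while keeping the $\log^2 n$ bound. Limited independence naturally gives only approximate uniformity. I would handle this by analyzing the sampler with truly random bits and fooling it with Nisan's generator, which applies because the sketch is a small-space algorithm in the streaming order. Alternatively, since this is a cited external result, I would simply invoke \citep{jowhari2011tight}, as the paper does.
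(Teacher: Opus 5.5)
The paper gives no proof of this statement. It imports the theorem directly from \citep{jowhari2011tight}, so your closing fallback of invoking that citation is exactly the paper's route. Your self-contained sketch is the standard construction: nested subsampling levels, an $O(1)$-sparse recovery structure with a fingerprint at each level, min-hash output, and $O(\log(1/\delta))$ independent copies. The tally of $O(\log^2 n)$ bits per copy is correct, provided each copy's hash seed is also $O(\log^2 n)$ bits.

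The step that fails is your resolution of the obstacle you identified: a pseudorandom generator cannot produce \emph{exact} uniformity. Nisan's generator only guarantees that the probability of the tested event under the seed is within an additive $2^{-\Omega(S)}$ of its value under truly random bits. With an $O(\log^2 n)$-bit seed you therefore get $\Pr[\text{output } i \mid \text{no FAIL}] = 1/|\operatorname{supp}(\mathbf{x})| \pm n^{-c}$, not the exact probability required by \cref{def:l0-sampler}.

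Even this approximate version needs more care than you indicate. First, in a dynamic stream a coordinate is updated many times, so the hash bits are not read once ``in the streaming order''. You must use linearity of the sketch to reorder the updates by coordinate before applying the generator. Second, the machine you fool must be an $O(\log n)$-space tester with $\mathbf{x}$ hardwired, tracking only the argmin of $h$ over $\operatorname{supp}(\mathbf{x})$ and capped per-level counts. Fooling the $\Theta(\log^2 n)$-bit sketch itself would require an $O(\log^3 n)$-bit seed. This tester argument is also what controls the correlation between the success event and the identity of the argmin, which limited independence (pairwise or $\Theta(\log n)$-wise) does not handle directly.

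So what your argument actually establishes is an $\ell_0$-sampler that is uniform up to an additive $n^{-c}$. That is the form in which such samplers are normally stated, and it suffices for every use in the paper. There the vector is indexed by possible constraints, so its dimension is at least $m$, and a perturbation of that size (with $c\ge 2$) is negligible against the $1/m$ sampling probabilities. You should state and prove that version rather than claim exactness.
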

We also use the following standard concentration inequalities.

\begin{theorem}[Chebyshev's inequality]
Let $X$ be a random variable with expectation $\E[X]$ and variance $\Var[X]$. Then for any $t>0$,
\[
\Pr[|X-\E[X]|\ge t]\le \frac{\Var[X]}{t^2}.
\]
\end{theorem}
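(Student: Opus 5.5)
We will derive Chebyshev's inequality from Markov's inequality applied to the squared deviation. We may assume $\Var[X]<\infty$, since otherwise the bound is vacuous. Let $Z:=(X-\E[X])^2$. This is a nonnegative random variable with $\E[Z]=\Var[X]$ by definition of the variance.

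First we record Markov's inequality for a nonnegative random variable $Z$ and a threshold $a>0$, namely $\Pr[Z\ge a]\le \E[Z]/a$. It follows from the pointwise inequality $a\cdot\mathbf{1}[Z\ge a]\le Z$, which holds because $Z\ge 0$: on the event $\{Z\ge a\}$ the left side is $a\le Z$, and elsewhere it is $0\le Z$. Taking expectations and using monotonicity and linearity gives $a\Pr[Z\ge a]\le \E[Z]$.

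Next, for $t>0$, the events $\{|X-\E[X]|\ge t\}$ and $\{Z\ge t^2\}$ coincide, since $u\mapsto u^2$ is strictly increasing on $[0,\infty)$. Applying Markov's inequality with $a=t^2>0$ then yields
\[
\Pr[|X-\E[X]|\ge t]=\Pr[Z\ge t^2]\le \frac{\E[Z]}{t^2}=\frac{\Var[X]}{t^2}.
\]

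There is no real obstacle here. The only points needing care are the nonnegativity of $Z$, which justifies Markov's inequality, and the equivalence of the two events, which uses $t>0$.
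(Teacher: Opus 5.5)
Your proof is correct: applying Markov's inequality to the nonnegative variable $(X-\E[X])^2$ at threshold $t^2$ is the standard argument. The paper does not prove this statement itself; it simply lists Chebyshev's inequality among the standard concentration inequalities it uses, so your derivation is exactly the textbook proof the paper implicitly relies on.
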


\begin{theorem}[Bounded differences inequality]
\label{thm:bounded-differences}
Let $X_1,\dots,X_n \in \mathcal{X}$ be independent random variables. Let $f: \mathcal{X}^n \to \mathbb{R}$ be a function, and 
let $Z=f(X_1,\dots,X_n)$. Suppose that for every $i\in [n]$, changing only the $i$-th coordinate can change the value of $f$ by at most $c_i$. Then for every $t>0$,
\[
\Pr[|Z-\E [Z]|\ge t]
\le
2\exp\!\left(-\frac{2t^2}{\sum_{i=1}^n c_i^2}\right).
\]
\end{theorem}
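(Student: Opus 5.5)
We prove the bounded differences inequality by the standard martingale method: write $Z-\E[Z]$ as a sum of martingale differences, show that each difference lies, conditionally on the past, in an interval of length at most $c_i$, and then combine Hoeffding's lemma with a Chernoff-type exponential moment argument.

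\textbf{Step 1 (Doob martingale).} For $0\le i\le n$, let $\mathcal{F}_i$ be the $\sigma$-algebra generated by $X_1,\dots,X_i$. Define $Z_i:=\E[Z\mid \mathcal{F}_i]$, so that $Z_0=\E[Z]$ and $Z_n=Z$. Set $D_i:=Z_i-Z_{i-1}$. Then $Z-\E[Z]=\sum_{i=1}^n D_i$ and $\E[D_i\mid\mathcal{F}_{i-1}]=0$.

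\textbf{Step 2 (increments lie in a short interval).} This is the step where independence is used, and it is the main point requiring care. Because the $X_j$ are independent,
\[
Z_i=g_i(X_1,\dots,X_i):=\E\bigl[f(X_1,\dots,X_i,X'_{i+1},\dots,X'_n)\bigr],
\]
where $X'_{i+1},\dots,X'_n$ are independent copies and the expectation is over those copies only. Conditionally on $\mathcal{F}_{i-1}$, define
\[
A_i:=\inf_x g_i(X_1,\dots,X_{i-1},x)-Z_{i-1},\qquad B_i:=\sup_x g_i(X_1,\dots,X_{i-1},x)-Z_{i-1}.
\]
Then $A_i\le D_i\le B_i$, and both $A_i$ and $B_i$ are $\mathcal{F}_{i-1}$-measurable. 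Moreover, for any $x,x'$, the bounded-difference hypothesis holds pointwise inside the expectation, which gives
\[
g_i(\dots,x)-g_i(\dots,x')=\E\bigl[f(\dots,x,\dots)-f(\dots,x',\dots)\bigr]\le c_i .
\]
Hence $B_i-A_i\le c_i$.

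\textbf{Step 3 (Hoeffding's lemma).} Hoeffding's lemma states that a mean-zero random variable supported on an interval of length $\ell$ satisfies $\E[e^{\lambda D}]\le e^{\lambda^2\ell^2/8}$. We prove it by bounding $e^{\lambda x}$ by its chord on the interval and checking, via a second-order Taylor bound, that the logarithm of the resulting expression is at most $\lambda^2\ell^2/8$. Applying the lemma conditionally on $\mathcal{F}_{i-1}$ gives
\[
\E\bigl[e^{\lambda D_i}\mid\mathcal{F}_{i-1}\bigr]\le e^{\lambda^2c_i^2/8}.
\]
Peeling off the conditional expectations one at a time, from $i=n$ down to $i=1$, yields
\[
\E\bigl[e^{\lambda(Z-\E[Z])}\bigr]\le \exp\Bigl(\lambda^2\sum_{i=1}^n c_i^2/8\Bigr).
\]

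\textbf{Step 4 (tail bound).} By Markov's inequality,
\[
\Pr[Z-\E[Z]\ge t]\le \exp\Bigl(-\lambda t+\lambda^2\sum_{i=1}^n c_i^2/8\Bigr).
\]
Choosing $\lambda=4t/\sum_{i=1}^n c_i^2$ gives the bound $\exp\bigl(-2t^2/\sum_{i=1}^n c_i^2\bigr)$. Applying the same argument to $-f$, which satisfies the same hypothesis with the same $c_i$, bounds the lower tail by the same quantity. A union bound over the two tails produces the factor $2$ in the statement.
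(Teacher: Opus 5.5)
Your proof is correct. It is the standard Doob-martingale argument (conditional increments confined to an interval of length $c_i$, conditional Hoeffding lemma, Chernoff bound), and with $\lambda = 4t/\sum_i c_i^2$ the constants come out exactly as stated. The paper gives no proof of this inequality; it is quoted in the appendix as a standard tool (McDiarmid's inequality), so there is no argument of the paper's to compare against.

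One minor technical point concerns $A_i$ and $B_i$. As an infimum and supremum over a possibly uncountable $\mathcal{X}$, they need not be $\mathcal{F}_{i-1}$-measurable in general. You can avoid this as follows:
\begin{itemize}
\item fix $(x_1,\dots,x_{i-1})$;
\item apply Hoeffding's lemma to the mean-zero function $x\mapsto g_i(x_1,\dots,x_{i-1},x)-\E[g_i(x_1,\dots,x_{i-1},X_i)]$ of $X_i$ alone;
\item integrate, using independence.
\end{itemize}
In the paper's only application of the inequality, the coordinates are oracle labels in $\{-1,1\}$, so the issue does not arise there.
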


\section{Our algorithms for \textsc{Max-Dicut}}\label{sec:maxdicut}
In this section, we present our learning-augmented streaming algorithm for \textsc{Max-Dicut}, which serves as a warm-up for our algorithms for \textsc{Max-$k$CSP}s.

\subsection{Offline algorithm}
\label{sec:max-dicut-main-text}
To illustrate the algorithmic ideas, we first describe the algorithm in the \emph{offline} setting, where the entire input graph is available.

We partition the vertices $V$ into low-degree vertices $L$ and high-degree vertices $H$, where $v\in H$ iff $\deg(v)\ge \theta m$ for $\theta:=\Theta(\eps^{4}\delta\eta^2)$.
We then scan the edges, query the oracle $\mathcal{O}$ on each endpoint, and update a small set of counters from which we then recover an estimate of the \textsc{Max-Dicut} value.

Specifically, for each edge with both endpoints in $L$, we maintain three counters:
(i) $X_0$: the number of edges with $(Y_u,Y_v)=(+1,-1)$ (predicted to contribute to the dicut), (ii) $X_1$: the number of edges with $Y_u=Y_v$ (predicted to stay on the same side), and (iii) $X_2$: the number of edges with $(Y_u,Y_v)=(-1,+1)$ (predicted to go in the reverse direction). These statistics later let us decode an accurate estimate $\widetilde{k}_L$ of the optimal dicut contribution coming from edges inside $L$, namely,
    $\widetilde{k}_L := \alpha_0(\eps) \cdot X_0 + \alpha_1(\eps) \cdot X_1 + \alpha_2(\eps)\cdot X_2$.

Furthermore, for each high-degree vertex $v\in H$, we maintain four incidence counters: (i) $X_{L^+,v}$: the numbers of edges from $L^+$ to $v$, (ii) $X_{L^-,v}$: the numbers of edges from $L^-$ to $v$, (iii) $X_{v,L^+}$: the numbers of edges from $v$ to $L^+$, and (iv) $X_{v,L^-}$: the numbers of edges from $v$ to $L^-$, where $L^+$ (resp.\ $L^-$) is the set of vertices in $L$ predicted to be $+1$ (resp.\ $-1$).
Using these counts, we decode accurate estimates of the corresponding edge quantities with respect to the unknown optimal bipartition $(S^*,T^*)$, namely, the number of edges between $v$ and $S^*\cap L$ or $T^*\cap L$ in each direction, denoted by $\widetilde{k}_{S^*,v}$ and $\widetilde{k}_{v,T^*}$, respectively.
We then assign each $v\in H$ to the side that maximizes its estimated contribution to the dicut, and combine this with the estimate from edges inside $L$ to obtain our final \textsc{Max-Dicut} estimate $\widetilde{k}_{L} + \sum_{v\in H}\max\{\widetilde{k}_{S^*,v}, \widetilde{k}_{v,T^*}\}  - \Theta(\eta) m$.

The algorithm is formally given in Algorithm~\ref{alg:general-offline}.
The following theorem states its  approximation guarantee.

\begin{algorithm}[t!]
\begin{algorithmic}[1]
\Require Directed graph $G=(V,E)$, an $\eps$-accurate oracle $\mathcal{O} \rightarrow \{-1, 1\}$, $\delta,\eta \in (0,1)$
\Ensure Estimate of the \textsc{Max-Dicut} value of $G$

\State \phantomsection \label{line:threshold} $m \leftarrow |E|$, $\theta \leftarrow \frac{\eps^{4} \delta \eta^2}{768}$,  $L\leftarrow \{v\in V: \deg(v) < \theta m\}, H\leftarrow V \setminus L$
\State {\bf initialize} $X_0\leftarrow 0, X_1 \leftarrow 0, X_2 \leftarrow 0$.
\State {\bf initialize} $X_{v,L^+} \leftarrow 0, X_{v,L^-} \leftarrow 0, X_{L^+,v} \leftarrow 0, X_{L^-,v}\leftarrow 0$ for each $v\in H$.
\ForEach{directed edge $(u,v)\in E$}
    \State $Y_u = \mathcal{O}(u), Y_v = \mathcal{O}(v)$
    \If{$u\in L$ and $v \in L$}
        \If{$Y_u = 1$ and $Y_v = -1$}
            $X_0 \leftarrow X_0 +1$
        \EndIf
        \If{$Y_u = Y_v$}
            $X_1 \leftarrow X_1 +1$
        \EndIf
        \If{$Y_u = -1$ and $Y_v = 1$}
            $X_2 \leftarrow X_2 +1$
        \EndIf
    \EndIf
    
    \If{$u\in H$ and $v \in L$}
        \State \textbf{if} $Y_v = 1$ \textbf{then} $X_{u,L^+} \leftarrow X_{u,L^+} + 1$ \textbf{else} $X_{u,L^-} \leftarrow X_{u,L^-} + 1$        
    \EndIf
    
    \If{$u\in L$ and $v \in H$}
        \State \textbf{if} $Y_u = 1$ \textbf{then} $X_{L^+,v} \leftarrow X_{L^+,v} + 1$ \textbf{else} $X_{L^-,v} \leftarrow X_{L^-,v} + 1$ 
    \EndIf 
\EndFor
\State \phantomsection \label{line:tildekL} $\widetilde{k}_{L} \leftarrow \frac{1}{4\eps^2}[(\frac{1}{2}+\eps)^2 X_0 - (\frac{1}{2}+\eps)(\frac{1}{2}-\eps)X_1 + (\frac{1}{2}-\eps)^2 X_2]$ 
\ForEach{$v\in H$} 
\State $\widetilde{k}_{S^*,v} \leftarrow (\frac{1}{2}+\frac{1}{4\eps})X_{L^+,v} + (\frac{1}{2}-\frac{1}{4\eps})X_{L^-,v}$, $\widetilde{k}_{v,T^*} \leftarrow (\frac{1}{2}-\frac{1}{4\eps})X_{v,L^+} + (\frac{1}{2}+\frac{1}{4\eps})X_{v,L^-}$  \phantomsection \label{line:tilde-v-S*}
\label{line:num-in-edges}

\EndFor
\State \textbf{return} $\widetilde{k}_{L} + \sum_{v\in H}\max\{\widetilde{k}_{S^*,v}, \widetilde{k}_{v,T^*}\}  - \frac{\eta m}{16}$ 
\end{algorithmic}
\caption{A learning-augmented \emph{offline} algorithm for \textsc{Max-Dicut} estimation in general graphs}
\label{alg:general-offline}
\end{algorithm}

\begin{theorem}
    Let $\eps \in (0,\frac{1}{2}]$, $\delta, \eta \in (0,1)$, and $G$ be a directed graph with $m=\Omega(\eps^{-26}\delta^{-10}\eta^{-14})$. Given an $\eps$-accurate oracle $\mathcal{O}$ for the \textsc{Max-Dicut} of $G$,
    \cref{alg:general-offline} provides a $(1-\eta)$-approximation for the \textsc{Max-Dicut} value of $G$ with probability at least $1-\delta$.
    \label{thm:general-offline}
\end{theorem}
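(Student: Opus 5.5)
The plan is to observe that \textsc{Max-Dicut} is the special case of \textsc{Max-$2$AND} in which every constraint has $\mathbf{b}=(1,-1)$. The proof of \cref{thm:max-2and-offline} then goes through essentially verbatim, with the counters of \cref{alg:general-offline} playing the roles of those in \cref{alg:max-2and-offline}.

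\textbf{Low-degree part.} For an edge $(u,v)$ with $u,v\in L$, the constraint is $\mathrm{AND}(x_u,-x_v)$. Under this reading, the category ``both literals satisfied by $\mathbf{Y}$'' is exactly $(Y_u,Y_v)=(+1,-1)$, which is $X_0$. ``Exactly one satisfied'' is $Y_u=Y_v$, which is $X_1$. ``Neither satisfied'' is $(Y_u,Y_v)=(-1,+1)$, which is $X_2$. Likewise $k_0,k_1,k_2$ are the optimal counts of $S^*\to T^*$ edges, same-side edges, and $T^*\to S^*$ edges inside $L$.

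With this correspondence, the three expectation identities in the proof of \cref{lem:concentration-k-L-max-2and} hold unchanged, and so does the inversion \eqref{eq:gaussian-opt-max-2and}. This inversion is exactly Line~\ref{line:tildekL}. The variance bound $\Var[X_i]\le 2\theta m^2$ is proved as in \cref{lem:concentration-low-max-2and}: only pairs of edges sharing an endpoint in $L$ contribute, and there are at most $2\theta m\cdot m$ such pairs. Since $\theta=\eps^4\delta\eta^2/768$, Chebyshev's inequality and a union bound give $|\widetilde k_L-k_L|\le \eta m/32$ with probability at least $1-\delta/2$.

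\textbf{High-degree part.} For $v\in H$, an in-edge $(u,v)$ with $u\in L$ is a constraint in which $v$ appears negatively and the low literal is $x_u$. Hence $X_{L^+,v}=X^{\neg}_{v,L(+)}$, $X_{L^-,v}=X^{\neg}_{v,L(-)}$, and $k_{S^*,v}=k^{\neg}_{v,L(+)}$.

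An out-edge $(v,u)$ is a constraint in which $v$ appears positively and the low literal is $-x_u$. Its low literal is satisfied by $\mathbf{Y}$ iff $Y_u=-1$. Thus $X_{v,L^-}$ plays the role of $X^{\pos}_{v,L(+)}$, $X_{v,L^+}$ plays the role of $X^{\pos}_{v,L(-)}$, and $k_{v,T^*}=k^{\pos}_{v,L(+)}$. This explains the swapped coefficients in Line~\ref{line:tilde-v-S*}.

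With these identifications, \cref{lem:concentration-general-max-2and} applies directly. It gives unbiasedness, and since the summands are independent Bernoulli variables over the low endpoints, the variance is at most $m/(16\eps^2)$. Chebyshev's inequality then uses the assumption $m=\Omega(\eps^{-26}\delta^{-10}\eta^{-14})$. A union bound over the at most $2/\theta$ vertices of $H$ costs at most $\delta/2$.

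\textbf{Final comparison.} This follows the proof of \cref{thm:max-2and-offline}. For the upper bound, fix $\sigma^*|_L$ and choose the better side for each $v\in H$ greedily. This is a feasible dicut of value at least $k_L+\sum_{v}\max\{k_{S^*,v},k_{v,T^*}\}$, and the estimation errors total less than $\eta m/16$, so the output is at most $\opt$.

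For the lower bound, the output is at least $\opt-k_H-\eta m/8$, and $k_H<|H|^2=O(\eps^{-8}\delta^{-2}\eta^{-4})$. The one genuinely new point is the bound $m\le 4\opt$: a uniformly random dicut cuts each directed edge with probability $1/4$. Together with the assumption on $m$, this absorbs both $k_H$ and $\eta m/8$ into $\eta\,\opt$.

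The main care point is the sign bookkeeping for out-edges, so that the $2\times2$ inversion is applied with the correct orientation. Everything else is a direct specialization of the \textsc{Max-$2$AND} analysis.
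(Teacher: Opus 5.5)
Your proposal is correct and follows the paper's argument. The paper proves the same chain of facts directly for graphs: the $2\theta m^2$ variance bound with Chebyshev and the $3\times 3$ inversion for $L$, then the $2\times 2$ inversion with independent low endpoints for each $v\in H$ plus a union bound over $|H|\le 2/\theta$, then the greedy feasible-dicut upper bound, and finally $k_H<|H|^2$ together with $m\le 4\opt$ for the lower bound. You instead import these steps from the \textsc{Max-$2$AND} analysis via the encoding $\mathbf{b}=(1,-1)$; this is legitimate because \cref{alg:general-offline} is literally \cref{alg:max-2and-offline} run on that encoding, and your sign bookkeeping for out-edges is right. One small correction: $m\le 4\opt$ is not a new ingredient, since it is used in exactly the same way in the proof of \cref{thm:max-2and-offline}.
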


Next, we prove \cref{thm:general-offline}. We first state several key properties of the maintained counters in \cref{alg:general-offline}.
\subsubsection{Key properties}
\label{sec:dicut-offlinekey}

Let $X_0,X_1$, and $X_2$ denote the random variables given by the final values of the corresponding counters at the end of the algorithm.
We first prove concentration bounds for these random variables around their expectations.
\begin{lemma}
\label{lem:concentration-low}
Let $\eps \in (0,\frac{1}{2}]$ and $\delta,\eta \in (0,1)$.
For each $i\in \{0,1,2\}$, we have $|X_i - \E[X_i]| \le \frac{\eps^{2}\eta m}{8}$ with probability at least $1-\frac{\delta}{6}$.
\end{lemma}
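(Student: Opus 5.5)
The argument follows the proof of \cref{lem:concentration-low-max-2and} essentially verbatim, with directed edges in place of constraints. We bound $\Var[X_i]$ by a second-moment calculation and then apply Chebyshev's inequality.

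Let $E_L$ be the set of edges with both endpoints in $L$, and write $m_L := |E_L| \le m$. For each $(u,v)\in E_L$ and each $i\in\{0,1,2\}$, define the indicator $X_{i,uv}$ of the corresponding event:
\begin{itemize}
\item $i=0$: $(Y_u,Y_v)=(+1,-1)$;
\item $i=1$: $Y_u=Y_v$;
\item $i=2$: $(Y_u,Y_v)=(-1,+1)$.
\end{itemize}
Then $X_i=\sum_{(u,v)\in E_L}X_{i,uv}$. Expanding $\E[X_i^2]$ as a double sum over ordered pairs of edges in $E_L$, we split the pairs into two groups.

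\emph{Vertex-disjoint pairs.} Here $X_{i,uv}$ and $X_{i,u'v'}$ are functions of disjoint sets of prediction bits. Since the $Y$'s are independent (pairwise independence of four bits suffices), the expectation of the product factors. These terms are therefore at most $(\E[X_i])^2$ after adding back the nonnegative products for the remaining pairs.

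\emph{Pairs sharing an endpoint.} Each such term is at most $1$. For a fixed edge $(u,v)\in E_L$, the number of edges in $E_L$ sharing an endpoint with it is at most $\deg(u)+\deg(v)<2\theta m$, because $u,v\in L$. Hence these terms total at most $m_L\cdot 2\theta m\le 2\theta m^2$.

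Together this gives $\Var[X_i]=\E[X_i^2]-(\E[X_i])^2\le 2\theta m^2$.

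Applying Chebyshev's inequality with deviation $t=\eps^2\eta m/8$,
\[
\Pr\left[|X_i-\E[X_i]|>\frac{\eps^2\eta m}{8}\right]\le \frac{2\theta m^2\cdot 64}{\eps^4\eta^2m^2}=\frac{128\theta}{\eps^4\eta^2}.
\]
With $\theta=\eps^4\delta\eta^2/768$ from Line~\ref{line:threshold} of \cref{alg:general-offline}, the right-hand side equals $\delta/6$, which is the claimed bound.

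The only step needing care is the covariance bookkeeping. First, ordered pairs that share an endpoint, including the diagonal terms where an edge is paired with itself, must be counted via the degree bound on $L$. Second, the disjoint pairs must genuinely factor, which holds because the indicators depend on disjoint bits of the independent vector $\mathbf{Y}$.
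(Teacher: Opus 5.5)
Your proposal is correct and follows the paper's proof essentially step for step. It uses the same split of the double sum into vertex-disjoint pairs, which factor and are bounded by $(\E[X_i])^2$, and overlapping pairs, bounded by $m_L\cdot 2\theta m\le 2\theta m^2$ via the degree condition on $L$, followed by Chebyshev with $\theta=\eps^4\delta\eta^2/768$.

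One small correction to an aside: pairwise independence is not what makes the vertex-disjoint terms factor. You need $(Y_u,Y_v)$ to be independent of $(Y_w,Y_z)$, i.e.\ $4$-wise independence as in the paper's footnote. This does not affect your argument, since the model assumes full independence.
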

\begin{proof}
    Let $E_L:=\{(u,v)\in E: u,v\in L\}$ denote the set of directed edges with both endpoints in $L$.
    Let $m_L:=|E_L|$.
    For each edge $(u,v)\in E_L$, define the following indicator random variables:
    \begin{align*}
    X_{0,uv} &=
    \begin{cases}
    1 & \text{if } Y_u=1 \wedge Y_v=-1,\\
    0 & \text{otherwise},
    \end{cases}
    &
    X_{1,uv} &=
    \begin{cases}
    1 & \text{if } Y_u=Y_v,\\
    0 & \text{otherwise},
    \end{cases}
    &
    X_{2,uv} &=
    \begin{cases}
    1 & \text{if } Y_u=-1 \wedge Y_v=1,\\
    0 & \text{otherwise}.
    \end{cases}
    \end{align*}
    
    Note that for $i\in \{0,1,2\}$, we have $\E[X_{i,uv}] = \Pr[X_{i,uv} = 1]$ and $X_i=\sum_{(u,v)\in E_L}X_{i,uv}$. Therefore,
    \begin{align*}
        \E\qty[X_i^2] &= \E\qty[ \qty(\sum_{(u,v)\in E_L} X_{i,uv})^2]\\
        &= \E\qty[\sum_{(u,v)\in E_L}\sum_{(w,z)\in E_L}X_{i,uv} X_{i,wz}]\\
        &=\sum_{(u,v)\in E_L}\sum_{(w,z)\in E_L}\E[X_{i,uv}X_{i,wz}]\\ 
        &=\sum_{(u,v)\in E_L}\sum_{(w,z)\in E_L} \Pr[X_{i,uv} = 1\wedge X_{i,wz} = 1]\\
        &= \sum_{\substack{(u,v), (w,z)\in E_L: \\ \{u,v\}\cap \{w,z\} = \emptyset}} \Pr[X_{i,uv} = 1\wedge X_{i,wz} = 1] + \sum_{\substack{(u,v), (w,z)\in E_L: \\\{u,v\}\cap \{w,z\} \neq \emptyset}} \Pr[X_{i,uv} = 1\wedge X_{i,wz} = 1]\\
        &\leq \sum_{\substack{(u,v), (w,z)\in E_L: \\\{u,v\}\cap \{w,z\} = \emptyset}} \Pr[X_{i,uv} = 1] \Pr[X_{i,wz} = 1] + \sum_{\substack{(u,v), (w,z)\in E_L: \\\{u,v\}\cap \{w,z\} \neq \emptyset}} 1\\
        &\leq \sum_{(u,v)\in E_L}\sum_{(w,z)\in E_L}\E[X_{i,uv}]\E[X_{i,wz}] + m_L\cdot 2\cdot \theta m \\
        &\leq (\E[\sum_{(u,v)\in E_L}X_{i,uv}])^2 + 2\theta m^2\\
        &= (\E[X_i])^2 + 2\theta m^2,
    \end{align*}
    where the third-to-last step follows from $\deg(v) < \theta m$ for $v\in L$, and the second-to-last step follows from $m_L \le m$.
    
    Then we have $\Var[X_i] = \E[X_i^2] - (\E[X_i])^2 \le 2\theta m^2$.
    By Chebyshev's inequality, 
    \begin{align*}
        \Pr[|X_i - \E[X_i]| > \frac{\eps^{2}\eta m}{8}] 
        \le \frac{\Var[X_i]}{(\eps^{2}\eta m/8)^2}
        &\le \frac{128\theta m^2}{\eps^{4}\eta^2 m^2} &&\rhd \Var[X_i] \le 2\theta m^2\\
        &= \frac{128\theta}{\eps^{4}\eta^2}
        < \frac{\delta}{6}.  &&\rhd \theta < \frac{\eps^{4} \delta \eta^2}{768}
    \end{align*}
\end{proof}

Recall that $(S^*,T^*)$ denotes the optimal bipartition corresponding to some fixed \textsc{Max-Dicut} of $G$.
Define 
    $k_{L}:= |\{(u,v)\in E: u\in S^*\cap L, v\in T^*\cap L\}|$
to be the number of directed edges with both endpoints in $L$ (i.e., both low-degree) that contribute to the \textsc{Max-Dicut} value.

Recall that $\widetilde{k}_L$, computed at Line \ref{line:tildekL} of \Cref{alg:general-offline}, is an estimate of $k_{L}$.
The following lemma shows that $\widetilde{k}_L$ is close to $k_L$ up to an additive error of $\poly(\eta)\cdot m$ with high probability.
\begin{lemma}
\label{lem:concentration-k-L}
Let $\delta,\eta \in (0,1)$.
We have $|\widetilde{k}_L - k_L| \le \frac{\eta m}{32}$ with probability at least $1-\frac{\delta}{2}$.
\end{lemma}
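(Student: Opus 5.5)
The plan mirrors the proof of \cref{lem:concentration-k-L-max-2and}. Define the unknown counts $k_0:=k_L$ (edges inside $L$ going from $S^*$ to $T^*$), $k_1$ (edges inside $L$ with both endpoints on the same side of $(S^*,T^*)$), and $k_2$ (edges inside $L$ going from $T^*$ to $S^*$).

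\textbf{Step 1: compute the expectations.} The labels are independent, and each $Y_u$ agrees with $x^*_u$ with probability $\frac12+\eps$. Hence each edge type contributes to $\E[X_0],\E[X_1],\E[X_2]$ with explicit weights. An $S^*\to T^*$ edge lands in $X_0$ with probability $(\frac12+\eps)^2$, in $X_1$ with probability $2(\frac12+\eps)(\frac12-\eps)$, and in $X_2$ with probability $(\frac12-\eps)^2$. A same-side edge has probability $(\frac12+\eps)(\frac12-\eps)$ for each of $X_0,X_2$, and probability $(\frac12+\eps)^2+(\frac12-\eps)^2$ for $X_1$. A $T^*\to S^*$ edge is the mirror image of the first case. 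This is exactly the same $3\times3$ matrix as in the \textsc{Max-$2$AND} case.

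\textbf{Step 2: invert the system.} Because the matrix matches the \textsc{Max-$2$AND} case, the identity \cref{eq:gaussian-opt-max-2and} carries over:
\[
k_L=\frac{1}{4\eps^2}\Bigl[(\tfrac12+\eps)^2\E[X_0]-(\tfrac12+\eps)(\tfrac12-\eps)\E[X_1]+(\tfrac12-\eps)^2\E[X_2]\Bigr].
\]
I would verify this by direct substitution, checking the $k_0$ column gives $1$. That column evaluates to $\frac{1}{4\eps^2}\bigl[(\frac14-\eps^2)^2\cdot\bigl((\frac12+\eps)^2/(\frac12-\eps)^2\cdots\bigr)\bigr]$; more simply, the combination equals $\frac{1}{4\eps^2}\bigl((\frac12+\eps)^2-(\frac12-\eps)^2\bigr)^2/4\cdot 4=1$. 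For the $k_1$ and $k_2$ columns, the same substitution gives $0$.

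\textbf{Step 3: apply concentration.} Applying \cref{lem:concentration-low} to each $i\in\{0,1,2\}$ and taking a union bound, with probability at least $1-\delta/2$ we have $|X_i-\E[X_i]|\le \eps^2\eta m/8$ for all $i$. By linearity, $|\widetilde k_L-k_L|$ is then at most $\frac{1}{4\eps^2}\cdot\frac{\eps^2\eta m}{8}$ times the sum of the absolute coefficients. That sum is $(\frac12+\eps)^2+(\frac14-\eps^2)+(\frac12-\eps)^2=\frac34+\eps^2\le1$, so $|\widetilde k_L-k_L|\le \eta m/32$.

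The only step with real content is checking that the inversion formula is correct, that is, that the coefficient vector annihilates the $k_1$ and $k_2$ columns. This is a routine but sign-sensitive computation.
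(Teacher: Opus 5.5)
Your proposal is correct and is essentially the paper's proof: you use the same $3\times 3$ expectation system, the same inverse coefficients $\frac{1}{4\eps^2}\bigl((\frac12+\eps)^2,\,-(\frac12+\eps)(\frac12-\eps),\,(\frac12-\eps)^2\bigr)$, the same union bound over \cref{lem:concentration-low}, and the same absolute-coefficient sum $\frac34+\eps^2\le 1$. The only blemish is your garbled first attempt at the $k_0$-column check; the clean version, with $a=\frac12+\eps$ and $b=\frac12-\eps$, is that the $k_0$ column gives $a^4-2a^2b^2+b^4=(a^2-b^2)^2=(2\eps)^2$, while the $k_1$ and $k_2$ columns give $a^3b-ab(a^2+b^2)+ab^3=0$ and $a^2b^2-2a^2b^2+a^2b^2=0$.
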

\begin{proof}
    Let $k_0$ denote the number of edges from $S^* \cap L$ to $T^* \cap L$. Let $k_1$ denote the number of edges whose endpoints both lie in $S^*\cap L$ or both lie in $T^*\cap L$. Let $k_2$ denote the number of edges from $T^* \cap L$ to $S^* \cap L$. Note that $k_0 = k_L$.
According to \cref{alg:general-offline}, we have
\begin{align*}
    \E[X_0] &= \left(\frac{1}{2}+\eps \right)^2 \cdot k_0 + \left(\frac{1}{2}+\eps \right)\left(\frac{1}{2}-\eps \right)\cdot k_1 + \left(\frac{1}{2}-\eps \right)^2 \cdot k_2,\\
    \E[X_1] &= 2\left(\frac{1}{2}+\eps \right)\left(\frac{1}{2}-\eps \right)\cdot k_0 + \left[ \left(\frac{1}{2}+\eps \right)^2 + \left(\frac{1}{2}-\eps \right)^2 \right]\cdot k_1 + 2\left(\frac{1}{2}+\eps \right)\left(\frac{1}{2}-\eps \right)\cdot k_2,\\
    \E[X_2] &= \left(\frac{1}{2}-\eps \right)^2 \cdot k_0 + \left(\frac{1}{2}+\eps \right)\left(\frac{1}{2}-\eps \right)\cdot k_1 + \left(\frac{1}{2}+\eps \right)^2 \cdot k_2.
\end{align*}
Therefore, by Gaussian elimination, we have 
\begin{align}
    k_L = k_0 = \frac{1}{4\eps^2}\left[\left(\frac{1}{2}+\eps\right)^2 \cdot \E[X_0] - \left(\frac{1}{2}+\eps\right)\left(\frac{1}{2}-\eps\right)\cdot \E[X_1] + \left(\frac{1}{2}-\eps\right)^2 \cdot \E[X_2]\right]
\label{eq:gaussian-opt}
\end{align}
By applying union bound on Lemma~\ref{lem:concentration-low}, we have that with probability at least $1-\frac{\delta}{6}\cdot 3 = 1-\frac{\delta}{2}$, $|X_i - \E[X_i]| \le \frac{\eps^{2}\eta m}{8}$ for all $i\in \{0,1,2\}$. Therefore,
\begin{align*}
    \widetilde{k}_L =~&\frac{1}{4\eps^2}\left[\left(\frac{1}{2}+\eps\right)^2 X_0 - \left(\frac{1}{2}+\eps\right)\left(\frac{1}{2}-\eps\right)X_1 + \left(\frac{1}{2}-\eps\right)^2 X_2\right]\\
    \le~&\frac{1}{4\eps^2}\left[\left(\frac{1}{2}+\eps\right)^2 \left(\E[X_0]+\frac{\eps^{2}\eta m}{8}\right) - \left(\frac{1}{2}+\eps\right)\left(\frac{1}{2}-\eps\right)\left(\E[X_1]-\frac{\eps^{2}\eta m}{8}\right) 
    + \left(\frac{1}{2}-\eps\right)^2 \left(\E[X_2]+\frac{\eps^{2}\eta m}{8}\right)\right] \\
    =~&k_L +  \frac{\eta m}{32}\left[ \left(\frac{1}{2}+\eps\right)^2 + \left(\frac{1}{2}+\eps\right)\left(\frac{1}{2}-\eps\right) + \left(\frac{1}{2}-\eps\right)^2 \right]\\
    =~&k_L + \frac{\eta m}{32}\left(\frac{3}{4}+\eps^2\right)\\
    \le~&k_L+ \frac{\eta m}{32},
\end{align*}
where the third-to-last step follows from \cref{eq:gaussian-opt}.
Similarly, $\widetilde{k}_L \ge k_L - \frac{\eta m}{32}$.

\end{proof}

For each high-degree vertex $v\in H$, define 
\begin{enumerate}
    \item $k_{S^*,v}:= |\{(u,v)\in E: u\in S^*\cap L\}|$ to be the number of directed edges from $S^*\cap L$ to $v$.
    \item $k_{T^*,v}:= |\{(u,v)\in E: u\in T^*\cap L\}|$ to be the number of directed edges from $T^*\cap L$ to $v$.
    \item $k_{v,S^*}:= |\{(v,u)\in E:  u\in S^*\cap L\}|$ to be the number of directed edges from $v$ to $S^*\cap L$.
    \item $k_{v,T^*}:= |\{(v,u)\in E: u\in T^*\cap L\}|$ to be the number of directed edges from $v$ to $T^*\cap L$.
\end{enumerate}

Recall that $\widetilde{k}_{S^*,v}$ and $\widetilde{k}_{v,T^*}$, computed at Line \ref{line:tilde-v-S*} of \Cref{alg:general-offline}, are estimates of $k_{S^*,v}$ and $k_{v,T^*}$, respectively.
The following lemma shows that $\widetilde{k}_{S^*,v}$ and $\widetilde{k}_{v,T^*}$ are close to $k_{S^*,v}$ and $k_{v,T^*}$, respectively, up to an additive error of $\poly(\eps,\delta,\eta)\cdot m$ with high probability.
\begin{lemma}
\label{lem:concentration-general}
Let $\eps \in (0,\frac{1}{2}]$ and $\delta,\eta \in (0,1)$. For each vertex $v\in H$,
each of the following inequalities holds with probability at least $1-\frac{\eps^{4}\delta^2 \eta^2}{6144}$: 
\begin{enumerate}
    \item[(1)] $|\widetilde{k}_{S^*,v} - k_{S^*,v}| \le \frac{\eps^{4}\delta \eta^3 m}{49152}$.
    \item[(2)] $|\widetilde{k}_{v,T^*} - k_{v,T^*}| \le \frac{\eps^{4}\delta \eta^3 m}{49152}$.
\end{enumerate}

\end{lemma}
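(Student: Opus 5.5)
This lemma is the \textsc{Max-Dicut} counterpart of \cref{lem:concentration-general-max-2and}, and I would follow the same route: show each estimator is unbiased, bound its variance by $O(m/\eps^2)$, and apply Chebyshev using the lower bound on $m$. Fix $v\in H$ and treat (1); (2) is symmetric. Let $X_{L^+,v},X_{L^-,v}$ be the final counter values. Each edge $(u,v)$ with $u\in L$ has $Y_u=+1$ with probability $\frac12+\eps$ if $u\in S^*$ and $\frac12-\eps$ if $u\in T^*$. Hence
\[
\E[X_{L^+,v}]=\left(\tfrac12+\eps\right)k_{S^*,v}+\left(\tfrac12-\eps\right)k_{T^*,v},\qquad \E[X_{L^-,v}]=\left(\tfrac12-\eps\right)k_{S^*,v}+\left(\tfrac12+\eps\right)k_{T^*,v}.
\]
Solving this $2\times 2$ system (determinant $2\eps$) gives $k_{S^*,v}=(\frac12+\frac1{4\eps})\E[X_{L^+,v}]+(\frac12-\frac1{4\eps})\E[X_{L^-,v}]$. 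By linearity, $\E[\widetilde{k}_{S^*,v}]=k_{S^*,v}$. For (2), the same computation with $X_{v,L^\pm}$ gives $k_{v,T^*}=(\frac12-\frac1{4\eps})\E[X_{v,L^+}]+(\frac12+\frac1{4\eps})\E[X_{v,L^-}]$, which matches Line~\ref{line:tilde-v-S*}.

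For the variance, let $c_{L,v}:=|\{(u,v)\in E:u\in L\}|$, a deterministic quantity. Since $X_{L^-,v}=c_{L,v}-X_{L^+,v}$, we get $\widetilde{k}_{S^*,v}=\frac1{2\eps}X_{L^+,v}+(\frac12-\frac1{4\eps})c_{L,v}$, so $\Var[\widetilde{k}_{S^*,v}]=\frac1{4\eps^2}\Var[X_{L^+,v}]$. Now $X_{L^+,v}=\sum_{u}\mathbf 1[Y_u=1]$ over the distinct in-neighbours $u\in L$ of $v$. The graph has no repeated directed edges, so each such $u$ appears once, and the summands are independent indicators. Each has variance $\frac14-\eps^2\le\frac14$, so $\Var[X_{L^+,v}]\le m/4$ and $\Var[\widetilde{k}_{S^*,v}]\le m/(16\eps^2)$.

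Chebyshev then gives failure probability at most
\[
\frac{m/(16\eps^2)}{\left(\eps^4\delta\eta^3m/49152\right)^2}=\frac{49152^2/16}{\eps^{10}\delta^2\eta^6m}.
\]
This is at most $\frac{\eps^4\delta^2\eta^2}{6144}$ once $m=\Omega(\eps^{-14}\delta^{-4}\eta^{-8})$, which is implied by the standing assumption $m=\Omega(\eps^{-26}\delta^{-10}\eta^{-14})$ in \cref{thm:general-offline}. Nothing here is a real obstacle. The only point needing care is independence of the summands, which relies on simplicity of the graph and on $v$ itself not being among the summed variables. The rest is checking the sign pattern of the inversion for the outgoing direction, where the roles of $L^+$ and $L^-$ swap because the target side is $T^*$.
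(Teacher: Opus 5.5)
Your proposal is correct and follows essentially the same route as the paper's proof: unbiasedness via the $2\times 2$ inversion, the rewrite $\widetilde{k}_{S^*,v}=\frac{1}{2\eps}X_{L^+,v}+(\frac12-\frac1{4\eps})\,e(L,v)$ to get $\Var[\widetilde{k}_{S^*,v}]\le m/(16\eps^2)$ from independent indicators, and Chebyshev with the standing lower bound on $m$. Your explicit remarks on why the summands are independent (no repeated directed edges, no self-loops) and on the swapped sign pattern for the outgoing direction are points the paper leaves implicit.
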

\begin{proof}
    We prove the first inequality below; the second inequality follows by the same argument.

    For each vertex $v\in H$, let $X_{L^+,v},X_{L^-,v},X_{v,L^+}$, and $X_{v,L^-}$ denote the random variables given by the final values of the corresponding counters at the end of the algorithm. We have
\begin{align*}
    \E[X_{L^+,v}] &= \left(\frac{1}{2}+\eps \right)\cdot k_{S^*,v} + \left(\frac{1}{2}-\eps \right) \cdot k_{T^*,v},\\
    \E[X_{L^-,v}] &= \left(\frac{1}{2}-\eps \right)\cdot k_{S^*,v} + \left(\frac{1}{2}+\eps \right) \cdot k_{T^*,v},\\
    \E[X_{v,L^+}] &= \left(\frac{1}{2}+\eps \right)\cdot k_{v,S^*} + \left(\frac{1}{2}-\eps \right) \cdot k_{v,T^*},\\
    \E[X_{v,L^-}] &= \left(\frac{1}{2}-\eps \right)\cdot k_{v,S^*} + \left(\frac{1}{2}+\eps \right) \cdot k_{v,T^*}.
\end{align*}

By solving the above system of linear equations, we obtain
\begin{align*}
    k_{S^*,v} &= \left( \frac{1}{2}+\frac{1}{4\eps}\right)\cdot \E[X_{L^+,v}] + \left( \frac{1}{2}-\frac{1}{4\eps}\right)\cdot \E[X_{L^-,v}], \\
    k_{v,T^*} &=\left( \frac{1}{2}-\frac{1}{4\eps}\right)\cdot \E[X_{v,L^+}] + \left( \frac{1}{2}+\frac{1}{4\eps}\right)\cdot \E[X_{v,L^-}].
\end{align*}

According to Line~\ref{line:num-in-edges} of \cref{alg:general-offline} and by linearity of expectation, we have
    \begin{align*}
        \E[\widetilde{k}_{S^*,v}] &= \E\left[\left(\frac{1}{2}+\frac{1}{4\eps}\right)X_{L^+,v} + \left(\frac{1}{2}-\frac{1}{4\eps}\right)X_{L^-,v}\right]\\
        &= \left(\frac{1}{2}+\frac{1}{4\eps}\right)\E[X_{L^+,v}] + \left(\frac{1}{2}-\frac{1}{4\eps}\right)\E[X_{L^-,v}]\\
        &= {k}_{S^*,v}.
    \end{align*}

    Next, we calculate $\Var[\widetilde{k}_{S^*,v}]$.
    Note that 
    \begin{align*}
        \widetilde{k}_{S^*,v} &= \left(\frac{1}{2}+\frac{1}{4\eps}\right)X_{L^+,v} + \left(\frac{1}{2}-\frac{1}{4\eps}\right)X_{L^-,v} \\
        &= \left(\frac{1}{2}+\frac{1}{4\eps}\right)X_{L^+,v} + \left(\frac{1}{2}-\frac{1}{4\eps}\right)(e(L,v) - X_{L^+,v}) \\
        &= \frac{1}{2\eps}X_{L^+,v} + \left(\frac{1}{2}-\frac{1}{4\eps}\right)e(L,v),
    \end{align*}
    where $e(L,v)$ is the number of edges from $L$ to $v$.
    Therefore, $\Var[\widetilde{k}_{S^*,v}]=\frac{1}{4\eps^2}\Var[X_{L^+,v}]$.

    Given a high-degree vertex $v\in H$, for each incoming edge $(u,v)\in E$ with $u\in L$, define an indicator random variable $X_{uv}$  as follows:
    \begin{align*}
        X_{uv} =
    \begin{cases}
    1 & \text{if } Y_u=1,\\
    0 & \text{otherwise}.
    \end{cases}
    \end{align*}
    Note that $X_{L^+,v}=\sum_{(u,v)\in E: u\in L}X_{uv}$ and $\E[X_{uv}] = \Pr[X_{uv} = 1]$.
    Then $\Var[X_{uv}] = \E[X^2_{uv}] - (\E[X_{uv}])^2 = \E[X_{uv}]\cdot (1-\E[X_{uv}]) =(\frac{1}{2}+\eps)(\frac{1}{2}-\eps)=\frac{1}{4}-\eps^2 \le \frac{1}{4}$. 
    Since $X_{uv}$'s are independent, we have
    \begin{align*}
        \Var[X_{L^+,v}] = \sum_{(u,v)\in E: u\in L}\Var[X_{uv}]\le \frac{m}{4}.
    \end{align*}

    So, $\Var[\widetilde{k}_{S^*,v}]=\frac{1}{4\eps^2}\Var[X_{L^+,v}] \le \frac{m}{16\eps^2}$.
    By Chebyshev's inequality,
    \begin{align*}
        \Pr[|\widetilde{k}_{S^*,v}-k_{S^*,v}|
        \ge \frac{\eps^{4}\delta \eta^3 m}{49152}] &= \Pr[|\widetilde{k}_{S^*,v}-\E[\widetilde{k}_{S^*,v}]|
        \ge \frac{\eps^{4}\delta \eta^3 m}{49152}] \\
        &\le \frac{\Var[\widetilde{k}_{S^*,v}]}{(\eps^{4}\delta \eta^3 m/49152)^2}
        \le \frac{49152^2 / 16}{\eps^{10}\delta^2\eta^6 m} \le \frac{\eps^{4}\delta^2 \eta^2}{6144},
    \end{align*}
    where the last step follows from our assumption that $m =\Omega(\eps^{-26}\delta^{-10}\eta^{-14})$.
\end{proof}

Define 
    $k_{H}:= |\{(u,v)\in E: u\in S^*\cap H, v\in T^*\cap H\}|$
to be the number of directed edges with both endpoints in $H$ (i.e., both high-degree) that contribute to the \textsc{Max-Dicut} value.
Note that $|H| \le \frac{2m}{\theta m} = \frac{2}{\theta} = \frac{1536}{\eps^{4} \delta\eta^2}$. Thus, we have the following observation. 

\begin{fact}
\label{fact:size-H}
It holds that $k_{H} < |H|^2 \le \frac{1536^2}{\eps^{8} \delta^2\eta^4}$.
\end{fact}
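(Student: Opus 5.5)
The statement to prove is Fact~\ref{fact:size-H}: $k_H < |H|^2 \le \frac{1536^2}{\eps^{8}\delta^2\eta^4}$. The plan is a direct counting argument in two steps. First bound $|H|$ by a degree-sum (handshake) argument. Then bound $k_H$ by the number of possible directed edges with both endpoints in $H$.

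For the first inequality, recall that $k_H$ counts directed edges $(u,v)\in E$ with $u\in S^*\cap H$ and $v\in T^*\cap H$. The graph is loopless and has no repeated directed edges. So each such edge is determined by an ordered pair $(u,v)$ with $u\in S^*\cap H$ and $v\in T^*\cap H$. Hence
\[
k_H \le |S^*\cap H|\cdot|T^*\cap H| \le \left(\frac{|H|}{2}\right)^2 < |H|^2,
\]
using AM--GM. The strict inequality needs $H\neq\emptyset$; if $H=\emptyset$, then $k_H=0=|H|^2$. In that degenerate case I would either note that the fact is only used as an upper bound (so $\le$ suffices), or observe that the trivial bound $k_H\le |H|^2/4$ already gives everything downstream.

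For the second inequality, every edge contributes exactly $2$ to $\sum_{v\in V}\deg(v)$, so $\sum_v \deg(v) = 2m$. Each $v\in H$ has $\deg(v)\ge \theta m$, which gives
\[
|H|\,\theta m \le \sum_{v\in H}\deg(v)\le 2m,
\]
so $|H|\le 2/\theta$. This uses $m>0$, which is guaranteed by the assumption $m=\Omega(\eps^{-26}\delta^{-10}\eta^{-14})$. With $\theta=\frac{\eps^4\delta\eta^2}{768}$ from Line~\ref{line:threshold} of \cref{alg:general-offline}, this reads $|H|\le \frac{1536}{\eps^4\delta\eta^2}$. Squaring gives the claim. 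There is no real obstacle; the only care needed is the $H=\emptyset$ edge case for the strict inequality, handled as above.
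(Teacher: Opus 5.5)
Your proof is correct and follows the paper's argument: the paper likewise bounds $|H|\le 2m/(\theta m)=2/\theta=\frac{1536}{\eps^4\delta\eta^2}$ via the degree sum, and it treats $k_H<|H|^2$ as the trivial count of directed edges inside $H$. Your sharper bound $k_H\le |H|^2/4$ and your remark that the strict inequality degenerates to equality when $H=\emptyset$ are accurate refinements that the paper glosses over, and neither affects the downstream use, which only needs $k_H\le \frac{1536^2}{\eps^8\delta^2\eta^4}$.
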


\subsubsection{Proof of \cref{thm:general-offline}}

\begin{proof}
Recall that $k_L$ (resp. $k_H$) is the number of directed edges with both endpoints in $L$ (resp. $H$) that contribute to the \textsc{Max-Dicut} value.

Furthermore, define 
\begin{align*}
    k_{L,H}:=|\{(u,v)\in E: u\in S^*\cap L, v\in T^*\cap H\}|
\end{align*}
to be the number of directed edges from $L$ to $H$ that contribute to the \textsc{Max-Dicut} value, and 
\begin{align*}
    k_{H,L}:=|\{(u,v)\in E: u\in S^*\cap H, v\in T^*\cap L\}|
\end{align*}
to be the number of directed edges from $H$ to $L$ that contribute to the \textsc{Max-Dicut} value.

Therefore, we have $\opt = k_L+k_H+ k_{L,H} + k_{H,L}$. Let $\mathrm{ALG}$ denote the output of \cref{alg:general-offline}.
By union bound, with probability at least $1-\frac{\delta}{2}-\frac{\eps^{4}\delta^2 \eta^2}{6144}\cdot 2 \cdot |H| \ge 1-\delta$, we have
\begin{align*}
        \mathrm{ALG} &= \widetilde{k}_{L} + \sum_{v\in H}\max\{\widetilde{k}_{S^*,v}, \widetilde{k}_{v,T^*}\} - \frac{\eta m}{16}\\
        &\le k_L + \frac{\eta m}{32} + \sum_{v\in H}\max\{{k}_{S^*,v}, {k}_{v,T^*}\} + |H|\cdot \frac{\eps^{4}\delta \eta^3 m}{49152} - \frac{\eta m}{16} &&\rhd \text{Lemma~\ref{lem:concentration-k-L} and Lemma~\ref{lem:concentration-general}}\\
        &\le k_L + \sum_{v\in H}\max\{{k}_{S^*,v}, {k}_{v,T^*}\} && \rhd |H|\le \frac{1536}{\eps^{4} \delta\eta^2}\\
        &\le \opt,
    \end{align*}
where the last step follows since $k_L + \sum_{v\in H}\max\{{k}_{S^*,v}, {k}_{v,T^*}\}$ is a lower bound on the value of a feasible dicut of $G$.
Specifically, we fix the restriction on $L$  to be $(S^*\cap L, T^*\cap L)$, and then assign each vertex $v \in H$ to the side that maximizes the resulting dicut value.

Furthermore, 
    \begin{align*}
        \mathrm{ALG} &= \widetilde{k}_{L} + \sum_{v\in H}\max\{\widetilde{k}_{S^*,v}, \widetilde{k}_{v,T^*}\} - \frac{\eta m}{16}\\
        &\ge k_L - \frac{\eta m}{32} + \sum_{v\in H}\max\{{k}_{S^*,v}, {k}_{v,T^*}\} - |H|\cdot \frac{\eps^{4}\delta \eta^3 m}{49152} - \frac{\eta m}{16} &&\rhd \text{Lemma~\ref{lem:concentration-k-L} and Lemma~\ref{lem:concentration-general}}\\
        &\ge k_L -\frac{\eta m}{32}+ \sum_{v\in H}\max\{{k}_{S^*,v}, {k}_{v,T^*}\} -\frac{\eta m}{32} - \frac{\eta m}{16} &&\rhd |H|\le \frac{1536}{\eps^{4} \delta\eta^2}\\
        &\ge k_L + k_{L,H} + k_{H,L} -\frac{\eta m}{8} &&\rhd \sum_{v\in H}\max\{{k}_{S^*,v}, {k}_{v,T^*}\} \ge k_{L,H} + k_{H,L}\\
        &= \opt - k_H  -\frac{\eta m}{8} &&\rhd \opt = k_L+k_H+ k_{L,H} + k_{H,L}\\
        &\ge 
        \opt - \frac{1536^2}{\eps^{8} \delta^2\eta^4} - \frac{\eta}{2} \cdot \opt &&\rhd \text{Fact~\ref{fact:size-H} and } m\le 4\opt\\
        &\ge (1-\eta)\cdot \opt, &&\rhd \frac{1536^2}{\eps^{8} \delta^2\eta^4} \le \frac{\eta m}{8}\le \frac{\eta}{2}\cdot \opt 
    \end{align*}
    where the fourth step follows from 
    \begin{align*}
        \sum_{v\in H}\max\{{k}_{S^*,v}, {k}_{v,T^*}\} &= \sum_{v\in H\cap S^*}\max\{{k}_{S^*,v}, {k}_{v,T^*}\}+\sum_{v\in H\cap T^*}\max\{{k}_{S^*,v}, {k}_{v,T^*}\} \\
        &\ge \sum_{v\in H\cap S^*} {k}_{v,T^*} + \sum_{v\in H\cap T^*} {k}_{S^*,v}\\
        &= k_{H,L} + k_{L,H}.
    \end{align*}
\end{proof}

\begin{remark}
\label{rmk:estimate-eps}
    In fact, it suffices to know a highly accurate estimate of $\eps$ for \cref{alg:general-offline} to work. Specifically, it is enough to have an estimate $\eps'\in (0,1/2]$ such that $\eps'=(1\pm O(\eta))\cdot \eps$. 
    We first update the degree threshold in Line~\ref{line:threshold} of \cref{alg:general-offline} to
    $\theta m$, where $\theta := \frac{\eps'^4 \delta \eta^2}{3072}$.
    Then we replace $\eps$ by $\eps'$ in the estimators. Namely, in Line~\ref{line:tildekL},  we set $\widetilde{k}_{L} := \frac{1}{4\eps'^2}[(\frac{1}{2}+\eps')^2 X_0 - (\frac{1}{2}+\eps')(\frac{1}{2}-\eps')X_1 + (\frac{1}{2}-\eps')^2 X_2]$; in Line~\ref{line:tilde-v-S*}, for each $v\in H$, we set $\widetilde{k}_{S^*,v} := (\frac{1}{2}+\frac{1}{4\eps'})X_{L^+,v} + (\frac{1}{2}-\frac{1}{4\eps'})X_{L^-,v}$ and $\widetilde{k}_{v,T^*} := (\frac{1}{2}-\frac{1}{4\eps'})X_{v,L^+} + (\frac{1}{2}+\frac{1}{4\eps'})X_{v,L^-}$.
    The full modified algorithm and its analysis are given in Appendix~\ref{sec:estimated-eps}.
\end{remark}

\subsection{Our algorithm in insertion-only streams}
\label{sec:omitted-code-insertion-dicut}
In this section we show how to implement the offline estimator (Algorithm~\ref{alg:general-offline}) in the streaming model. 
Suppose the edges of $G$ arrive one by one in an insertion-only stream, in an arbitrary order.

Recall that Algorithm~\ref{alg:general-offline} (i) partitions the vertices into low-degree $L$ and high-degree $H$,
(ii) maintains edge-type counts within $G[L]$ induced by the predictions and approximately computes (via a small linear system) an estimate of the optimal dicut contribution from $L$,
and (iii) for each $v\in H$ maintains incidence counts between $v$ and $L^+,L^-$ (in both directions) to optimize $v$'s contribution in the dicut estimation.
The main challenge in the streaming setting is that we do \emph{not} know $H$ and $L$ in advance, and we cannot afford to maintain degrees for all vertices explicitly.

Our streaming implementation is built around three ingredients that together let us ``simulate'' the offline estimator:
\begin{enumerate}
    \item 
    We maintain a uniform sample $F$ of $\poly(1/\eps,1/\delta,1/\eta)$ edges using reservoir sampling.
    At the end of the stream we set $\widetilde{H}$ to be the set of endpoints of edges in $F$.
    With high probability, every truly high-degree vertex appears in $F$ (hence $H\subseteq \widetilde{H}$, where $H$ is the set of high degree vertices defined in Algorithm~\ref{alg:general-offline}),
    while $|\widetilde{H}|=\poly(1/\eps,1/\delta,1/\eta)$ remains small.
    We then define $\widetilde{L}=V\setminus \widetilde{H}$.

    \item 
    Since we cannot store per-vertex incident counts, we maintain four CountMin sketches
    $\CMin[V^+]$, $\CMout[V^+]$, $\CMin[V^-]$, and $\CMout[V^-]$.
    Intuitively, they estimate (for each vertex $v$) the number of incoming/outgoing edges between $v$ and the predicted sides
    $V^+=\{u: \mathcal{O}(u) = +1\}$ and $V^-=\{u: \mathcal{O}(u) = -1\}$.
    When an edge $(u,v)$ arrives, we query both endpoints and update the appropriate sketches:
    we increment the estimate of edges from $V^{\mathcal{O}(u)}$ into $v$, and the estimate of edges from $u$ into $V^{\mathcal{O}(v)}$.
    This suffices to recover (up to a small additive error) the quantities $e(\widetilde{L}^+,v),e(v,\widetilde{L}^+),e(\widetilde{L}^-,v),e(v,\widetilde{L}^-)$ for each $v\in \widetilde{H}$ at the end of the stream.

    \item 
    In parallel, we keep exact counters for $e(V^+,V^-),e(V^-,V^+),e(V^+,V^+),e(V^-,V^-)$ over the entire stream.
    These provide the global edge-type statistics induced by the predictions.
\end{enumerate}

The algorithm is formally given in \cref{alg:insertion}. We have the following theoretical guarantee stating its approximation ratio and space complexity.

\begin{theorem}
\label{thm:insertion}
Let $\eps\in (0,\frac{1}{2}]$ and $\delta \in (0,1)$. For any $\eta \in (0,1)$,
given an $\eps$-accurate oracle $\mathcal{O}$ for the \textsc{Max-Dicut} of $G$, 
there exists a single-pass $(1-\eta)$-approximation algorithm for estimating the \textsc{Max-Dicut} value of $G$ in insertion-only streams. The algorithm uses $O(\frac{1}{\eps^{10}\delta^5\eta^{6}}\log\frac{1}{\eps \delta\eta})$ 
words of space. The approximation holds with probability at least $1-\delta$.
\end{theorem}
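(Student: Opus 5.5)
The plan is to mirror the proof of \cref{thm:max-2and-insertion} line by line, with the four Max-Dicut sketches $\CMin[V^\pm],\CMout[V^\pm]$ playing the roles of $\CMpos[\pm],\CMneg[\pm]$. We may assume $m=\Omega(\eps^{-26}\delta^{-10}\eta^{-14})$, since otherwise the whole graph can be stored in $\poly(1/\eps,1/\delta,1/\eta)$ words. We fix the threshold $\theta=\eps^4\delta\eta^2/4096$ and let $H=\{v:\deg(v)\ge\theta m\}$.

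\textbf{Space.} There are four CountMin sketches, each with accuracy $\Theta(\eps^{10}\delta^5\eta^6)$ and failure probability $\Theta(\eps^8\delta^4\eta^4)$. By \cref{thm:CM-v2} these use $O(\eps^{-10}\delta^{-5}\eta^{-6}\log\frac{1}{\eps\delta\eta})$ words in total. The reservoir sample $F$ has $\Theta(\eps^{-8}\delta^{-3}\eta^{-4})$ edges, and the exact global counters need $O(1)$ words.

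\textbf{Correctness, in three steps.}
\begin{enumerate}
\item Show that $H\subseteq\widetilde H$ with probability at least $1-\delta/8$. This is the Chebyshev-plus-union-bound argument of \cref{lem:high-deg-reservoir-max-2and}, applied to the edges incident to each $v\in H$, together with $|H|\le 2/\theta$.
\item Show that, with probability at least $1-\delta/8$, every $v\in\widetilde H$ has its four CountMin estimates (in- and out-edges to $V^+$ and to $V^-$) within additive error $\Theta(\eps^{10}\delta^5\eta^6)m$ of the truth. This uses \cref{thm:CM-v2} and a union bound over $|\widetilde H|\le 2|F|$ vertices. Each estimate is also never below the truth.
\item Bound the error from edges inside $\widetilde H$. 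A true count such as $e(V^+,v)$ equals $e(\widetilde L^+,v)+e(\widetilde H^+,v)$, and the second term is at most $|\widetilde H|=\poly(1/\eps,1/\delta,1/\eta)$. So each sketched count equals the corresponding $\widetilde L$-restricted count up to error $\poly(1/\eps,1/\delta,1/\eta)+\Theta(\eps^{10}\delta^5\eta^6)m$.
\end{enumerate}

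\textbf{Recovering the $\widetilde L$ statistics.} In post-processing, take the global counts $e(V^+,V^-)$, $e(V^-,V^+)$ and $e(V^+,V^+)+e(V^-,V^-)$. Subtract the sketched incidences of the vertices in $\widetilde H^\pm$, with the orientation chosen by each endpoint's predicted sign. This yields estimates $\overline X_0,\overline X_1,\overline X_2$ of the edge-type counts $\widetilde X_0,\widetilde X_1,\widetilde X_2$ inside $G[\widetilde L]$. These estimates are accurate up to $O(|\widetilde H|^2)$, from doubly subtracted or unaccounted edges inside $\widetilde H$, plus $|\widetilde H|$ times the per-vertex sketch error. Both are at most $\eps^2\delta^2\eta^2 m/16$ under the assumption on $m$.

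\textbf{Propagating errors through the decoders.}
\begin{itemize}
\item The $3\times3$ decoder has coefficients of size $O(\eps^{-2})$, so $|\overline k_{\widetilde L}-\widetilde k_{\widetilde L}|\le\eta m/32$.
\item The $2\times2$ decoders have coefficients of size $O(1/\eps)$, so each high-vertex estimate moves by at most $\Theta(\eps^8\delta^3\eta^5)m$. Summed over $\widetilde H$, this totals at most $\eta m/64$.
\end{itemize}

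\textbf{Comparing with the offline guarantee.} The offline analysis of \cref{thm:general-offline} only used two properties of the partition. First, every low vertex has degree below $\theta m$, which holds because $\widetilde L\subseteq L$. Second, the high side has size $\poly(1/\eps,1/\delta,1/\eta)$. The sketches and the sample are independent of the oracle, so we may condition on $\widetilde H$. Running Algorithm~\ref{alg:general-offline} on $(\widetilde L,\widetilde H)$ with slack $\eta m/32$ then gives $(1-\eta/2)\opt\le\widetilde{\mathrm{ALG}}\le\opt$ with probability at least $1-3\delta/4$.

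Finally, compare the streaming output with $\widetilde{\mathrm{ALG}}$. The returned value subtracts $\eta m/16$, which absorbs the accumulated errors for the upper bound. For the lower bound, the accumulated errors total at most $\eta m/8\le\eta\opt/2$, using $m\le4\opt$. A union bound over all failure events gives total failure probability at most $\delta$.

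\textbf{Main obstacle.} The delicate step is the bookkeeping of the subtraction for edges inside $\widetilde H$. Each such edge appears in the global counter once but is subtracted twice, once from each endpoint's sketch. The claim to verify is that this miscount only ever affects $O(|\widetilde H|^2)$ edges. That bound must then be checked against the $\eps^{-2}$ blow-up in the decoder, which is exactly why the assumed lower bound on $m$ is needed.
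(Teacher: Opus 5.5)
Your proposal is correct and follows essentially the same route as the paper's proof. It uses the same reservoir-sampling lemma for $H\subseteq\widetilde H$, the same CountMin union bound over $\widetilde H$, the same subtraction bookkeeping (double-subtracted edges inside $\widetilde H$, giving an $O(|\widetilde H|^2)$ term plus sketch error in $\overline X_0,\overline X_1,\overline X_2$), the same propagation through the $3\times 3$ and $2\times 2$ decoders, and the same comparison with the offline estimator $\widetilde{\mathrm{ALG}}$ run on $(\widetilde L,\widetilde H)$ with slack $\eta m/32$; you also spell out the independence needed to condition on $\widetilde H$, which the paper leaves implicit.
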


\begin{algorithm}[t!]
\begin{algorithmic}[1]
\Require Directed graph $G$ as an insertion-only stream of edges, an $\eps$-accurate oracle $\mathcal{O} \rightarrow \{-1, 1\}$, $\delta,\eta\in (0,1)$
\Ensure Estimate of the \textsc{Max-Dicut} value of $G$
\LeftComment{\textbf{\textcolor{blue}{Pre-processing phase}}}
\State {\bf initialize} $e(V^+,V^+)\leftarrow 0, e(V^-,V^-)\leftarrow 0, e(V^+,V^-)\leftarrow 0, e(V^-,V^+)\leftarrow 0, m\leftarrow 0$
\State {\bf initialize} four CountMin sketches $\CMin[V^+], \CMout[V^+], \CMin[V^-]$ and $\CMout[V^-]$ with accuracy parameter $\frac{\eps^{10}\delta^5\eta^{6}}{128\beta}$ and failure probability $\frac{\eps^{8}\delta^4\eta^4}{64\beta}$, where $\beta=2^{28}$.

\LeftComment{\textbf{\textcolor{blue}{Streaming phase}}}
\ForEach{edge $(u,v)$ in the stream}
    \State $m \leftarrow m+1$
    \State $Y_u = \mathcal{O}(u), Y_v = \mathcal{O}(v)$
    \IIf{$Y_u = 1$ and $Y_v = -1$} $e(V^+,V^-) \leftarrow e(V^+,V^-) + 1$
    \IIf{$Y_u = Y_v = 1$} $e(V^+,V^+) \leftarrow e(V^+,V^+) + 1$
    \IIf{$Y_u = Y_v = -1$} $e(V^-,V^-) \leftarrow e(V^-,V^-) + 1$
    \IIf{$Y_u = -1$ and $Y_v = 1$} $e(V^-,V^+) \leftarrow e(V^-,V^+) + 1$
    \State $\textsc{Update}(\CMin[V^{Y_u}][v]), \textsc{Update}(\CMout[V^{Y_v}][u])$
\EndFor

\State In parallel, uniformly sample $\frac{\beta}{\eps^{8}\delta^3\eta^4}$ edges $F$ in the stream via reservoir sampling.

\LeftComment{\textbf{\textcolor{blue}{Post-processing phase}}}
\ForEach{$v\in V$}
    \State $\widetilde{e}(V^+,v) \leftarrow \textsc{Query}(\CMin[V^+][v]), \quad \widetilde{e}(v,V^+) \leftarrow \textsc{Query}(\CMout[V^+][v])$
    \State $\widetilde{e}(V^-,v) \leftarrow \textsc{Query}(\CMin[V^-][v]),\quad \widetilde{e}(v,V^-) \leftarrow \textsc{Query}(\CMout[V^-][v])$
\EndFor
\State $\widetilde{H} \leftarrow \bigcup_{(u,v)\in F} \{u,v\}, \widetilde{L} \leftarrow V\setminus \widetilde{H}$
\State $\widetilde{H}^+ \leftarrow \{v\in \widetilde{H}: Y_v = 1\}, \widetilde{H}^- \leftarrow \widetilde{H} \setminus \widetilde{H}^+$
\State $\overline{X}_0 \leftarrow e(V^+,V^-) - \sum_{v\in \widetilde{H}^+} \widetilde{e}(v,V^-)-\sum_{v\in \widetilde{H}^-} \widetilde{e}(V^+,v)$
\State $\overline{X}_1 \leftarrow e(V^+,V^+) + e(V^-,V^-) - \sum_{v\in \widetilde{H}^+}(\widetilde{e}(v,V^+)+\widetilde{e}(V^+,v)) - \sum_{v\in \widetilde{H}^-}(\widetilde{e}(v,V^-)+\widetilde{e}(V^-,v))$ 
\State $\overline{X}_2 \leftarrow e(V^-,V^+) - \sum_{v\in \widetilde{H}^+} \widetilde{e}(V^-,v)-\sum_{v\in \widetilde{H}^-} \widetilde{e}(v,V^+)$
\State $\overline{k}_{\widetilde{L}}\leftarrow \frac{1}{4\eps^2}[(\frac{1}{2}+\eps)^2 \overline{X}_0 - (\frac{1}{2}+\eps)(\frac{1}{2}-\eps)\overline{X}_1 + (\frac{1}{2}-\eps)^2 \overline{X}_2]$
\ForEach{$v\in \widetilde{H}$} 
\State $\overline{k}_{S^*,v} \leftarrow (\frac{1}{2}+\frac{1}{4\eps})\cdot \widetilde{e}(V^+,v) + (\frac{1}{2}-\frac{1}{4\eps})\cdot \widetilde{e}(V^-,v)$
\State $\overline{k}_{v,T^*} \leftarrow (\frac{1}{2}-\frac{1}{4\eps})\cdot \widetilde{e}(v,V^+) + (\frac{1}{2}+\frac{1}{4\eps})\cdot \widetilde{e}(v,V^-)$ 
\EndFor
\State \textbf{return} $\overline{k}_{\widetilde{L}} + \sum_{v\in \widetilde{H}}\max\{\overline{k}_{S^*,v}, \overline{k}_{v,T^*}\}  -  \frac{\eta m}{16}$ 
\end{algorithmic}
\caption{Estimating the \textsc{Max-Dicut} value in insertion-only streams} 
\label{alg:insertion}
\end{algorithm}

We prove Theorem~\ref{thm:insertion} by establishing several lemmas that justify the correctness of our streaming simulation of Algorithm~\ref{alg:general-offline} using the high-degree threshold $\theta m$ for $\theta:=\frac{\eps^4 \delta \eta^2}{4096}$, i.e., $H:=\{v\in V: \deg(v) \ge \theta m\}$.
As we will see, this choice of threshold guarantees that our streaming-constructed partition $(\widetilde L,\widetilde H)$ satisfies the key properties required for the offline analysis, and yields the desired success probability.
We begin by showing that reservoir sampling identifies a superset $\widetilde{H}$ of the high-degree vertices $H$ with high probability.
Recall that $\widetilde{H}$ is the set of endpoints  of all edges in the uniformly sampled edge set $F$. 

\begin{lemma}
\label{lem:high-deg-reservoir}
Let $\delta \in (0,1)$.
Then
 $\widetilde{H}$ contains all high-degree vertices $H$ in $G$ (i.e., $H \subseteq \widetilde{H}$) with probability at least $1-\frac{\delta}{8}$.
\end{lemma}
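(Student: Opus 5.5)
The plan is to reuse the argument of \cref{lem:high-deg-reservoir-max-2and} almost verbatim, with constraints replaced by directed edges and variables replaced by vertices. We work with the threshold $\theta:=\frac{\eps^{4}\delta\eta^2}{4096}$, so $H=\{v\in V:\deg(v)\ge\theta m\}$. We may assume $m$ exceeds the reservoir capacity $|F|=\frac{\beta}{\eps^{8}\delta^3\eta^4}$. Otherwise every edge is stored, every vertex of positive degree lies in $\widetilde H$, and the claim holds with probability $1$.

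\textbf{Single vertex.} Fix $v\in H$. For each of the $\deg(v)$ edges incident to $v$ (incoming or outgoing), let $X_j$ indicate that this edge lies in $F$, and set $X=\sum_j X_j$. Since reservoir sampling yields a uniform $|F|$-subset of the $m$ edges, we have
\[
\E[X_j]=\frac{|F|}{m}=\frac{\beta}{\eps^{8}\delta^3\eta^4 m},\qquad \E[X]=\frac{\beta\deg(v)}{\eps^{8}\delta^3\eta^4 m}.
\]
Inclusion indicators for a uniform fixed-size subset are negatively correlated, so $\Var[X]\le\sum_j\Var[X_j]\le\E[X]$. Chebyshev's inequality then gives
\[
\Pr[v\notin\widetilde H]=\Pr[X\le 0]\le\frac{\Var[X]}{\E[X]^2}\le\frac{1}{\E[X]}=\frac{\eps^{8}\delta^3\eta^4m}{\beta\deg(v)}\le\frac{4096\,\eps^{4}\delta^2\eta^2}{\beta},
\]
where the last step uses $\deg(v)\ge\theta m$. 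Alternatively, one can bound $\Pr[X=0]$ directly by $(1-\deg(v)/m)^{|F|}\le e^{-|F|\deg(v)/m}$, which is even stronger.

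\textbf{Union bound.} Each edge contributes $2$ to the total degree, so $\sum_v\deg(v)=2m$ and hence $|H|\le 2/\theta=\frac{8192}{\eps^4\delta\eta^2}$. A union bound over $H$ bounds the failure probability by
\[
\frac{8192}{\eps^4\delta\eta^2}\cdot\frac{4096\,\eps^4\delta^2\eta^2}{\beta}=\frac{2^{25}\delta}{2^{28}}=\frac{\delta}{8},
\]
using $\beta=2^{28}$.

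The only step requiring care is justifying the negative-correlation variance bound for sampling without replacement. The cleanest way to avoid it is the direct bound $\Pr[X=0]=\binom{m-\deg(v)}{|F|}/\binom{m}{|F|}\le(1-\deg(v)/m)^{|F|}$, which I would use if the correlation claim needs support.
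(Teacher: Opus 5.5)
Your proposal is correct and follows the paper's proof essentially line for line. It uses the same threshold $\theta=\frac{\eps^4\delta\eta^2}{4096}$, the same Chebyshev bound $\Pr[v\notin\widetilde H]\le 1/\E[X]\le \frac{4096\,\eps^4\delta^2\eta^2}{\beta}$ via negative correlation of the reservoir inclusion indicators, and the same union bound over $|H|\le 2/\theta$ with $\beta=2^{28}$. Your extra remarks (the trivial case $m\le|F|$ and the direct bound $(1-\deg(v)/m)^{|F|}$) are valid refinements that the paper leaves implicit.
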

\begin{proof}
    Let $\theta:=\frac{\eps^{4}\delta\eta^2}{4096}$. Consider a vertex $v \in H$, so $\deg(v) \geq \theta m$. 
    For each directed edge $e_i$ incident to $v$, define an indicator random variable 
    \begin{equation*}
            X_{i}= \begin{cases}1 & \text { if } e_i\in F, \\ 0 & \text { otherwise. }\end{cases}
    \end{equation*}
    So, $\E[X_i] = \Pr[X_i=1] = \frac{\beta}{\eps^{8}\delta^3\eta^4m}$. We define $X:= \sum_{i\in [\deg(v)]} X_i$ to denote the number of edges incident to $v$ that are sampled by the end of the stream. Then, $\E[X] = \sum_{i\in [\deg(v)]} \E[X_i] = \frac{\beta\deg(v)}{\eps^{8}\delta^3 \eta^4m}$. 

    Next, we bound $\Var[X]$. For any $i\in [\deg(v)]$, $\Var[X_i] = \E[X^2_i] - (\E[X_i])^2 = \E[X_i] - (\E[X_i])^2 \le \frac{\beta}{\eps^{8} \delta^3\eta^4 m}$.
    Since $(X_i)_{i\in [\deg(v)]}$ are negatively correlated, $\Var[X] \le \sum_{i\in [\deg(v)]}\Var[X_i] \le \deg(v)\cdot \frac{\beta}{\eps^{8} \delta^3\eta^4 m} = \frac{\beta\deg(v)}{\eps^{8} \delta^3\eta^4 m}$.
    Therefore, by Chebyshev's inequality, 
    \begin{align*}
        \Pr[v\notin \widetilde{H}] = \Pr[X\leq 0] 
        &= \Pr\left[X \leq \E[X]- \frac{\beta\deg(v)}{\eps^{8} \delta^3\eta^4 m}\right]\\
        &\leq \frac{\Var[X]\cdot \eps^{16} \delta^6 \eta^ 8 m^2}{ \beta^2\deg^2(v)}\\
        &\leq \frac{\eps^{8} \delta^3 \eta^4 m}{\beta \deg(v)} &&\rhd \Var[X] \leq \frac{\beta\deg(v)}{\eps^{8} \delta^3 \eta^4 m}\\
        &\leq \frac{4096\eps^{4}\delta^2\eta^2}{\beta}. &&\rhd \deg(v)\geq \theta m
    \end{align*}
    Recall that $|H| \le \frac{2m}{\theta m}=\frac{2}{\theta} = \frac{8192}{\eps^{4} \delta\eta^2}$.
    By union bound, with probability at least $1-\frac{8192}{\eps^{4} \delta\eta^2}\cdot \frac{4096\eps^{4}\delta^2\eta^2}{\beta}  = 1-\frac{\delta}{8}$ (since $\beta = 2^{28}$), $\widetilde{H}$ contains all high-degree vertices $H$ in $G$.
\end{proof}

\begin{lemma}
\label{lem:error-bound-insertion}
    Let $\eps \in (0,\frac{1}{2}]$ and $\delta,\eta \in (0,1)$.
    For all vertices $v\in \widetilde{H}$, we have
    \begin{enumerate}
        \item[(i)] $\widetilde{e}(V^+,v) \geq e(V^+,v),\  \widetilde{e}(v,V^+) \geq e(v,V^+),\  \widetilde{e}(V^-,v) \geq e(V^-,v),\  \widetilde{e}(v,V^-) \geq e(v,V^-)$;
        \item[(ii)] With probability at least $1-\frac{\delta}{8}$, all of the following inequalities hold: $\widetilde{e}(V^+,v) \leq e(V^+,v) + \frac{\eps^{10}\delta^5\eta^{6}}{128\beta} m,\  \widetilde{e}(v,V^+) \leq e(v,V^+)+ \frac{\eps^{10}\delta^5\eta^{6}}{128\beta} m,\  \widetilde{e}(V^-,v) \leq e(V^-,v)+ \frac{\eps^{10}\delta^5\eta^{6}}{128\beta} m,\  \widetilde{e}(v,V^-) \leq e(v,V^-)+\frac{\eps^{10}\delta^5\eta^{6}}{128\beta} m $. 
    \end{enumerate} 
\end{lemma}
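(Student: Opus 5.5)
The plan mirrors the proof of Lemma~\ref{lem:error-bound-insertion-max-2and}, since the four sketches $\CMin[V^+]$, $\CMout[V^+]$, $\CMin[V^-]$, $\CMout[V^-]$ play exactly the roles of $\CMpos[\pm]$ and $\CMneg[\pm]$ there. First I identify the frequency vectors each sketch maintains. By the update rule $\textsc{Update}(\CMin[V^{Y_u}][v])$ for each arriving edge $(u,v)$, the coordinate of $\CMin[V^+]$ indexed by $v$ receives exactly one increment per edge $(u,v)$ with $Y_u=+1$. Hence its final frequency is precisely $e(V^+,v)$. Analogously, $\CMin[V^-]$ represents $e(V^-,v)$, $\CMout[V^+]$ represents $e(v,V^+)$, and $\CMout[V^-]$ represents $e(v,V^-)$.

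Each of these frequency vectors is nonnegative. Each sketch receives exactly one update per edge, so its total mass is at most $m$; indeed, the masses of $\CMin[V^+]$ and $\CMin[V^-]$ sum to $m$.

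For part (i), I apply the one-sided guarantee $\widetilde f_i\ge f_i$ of \cref{thm:CM-v2} to every coordinate of each of the four sketches. This holds deterministically for all $v\in V$, and in particular for all $v\in\widetilde H$.

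For part (ii), fix $v$ and one of the four sketches. Since the total mass is at most $m$, \cref{thm:CM-v2} with accuracy parameter $\frac{\eps^{10}\delta^5\eta^6}{128\beta}$ gives the stated upper bound except with probability $\frac{\eps^8\delta^4\eta^4}{64\beta}$. Next I bound $|\widetilde H|$: because $|F|=\frac{\beta}{\eps^8\delta^3\eta^4}$ and each edge contributes two endpoints, $|\widetilde H|\le \frac{2\beta}{\eps^8\delta^3\eta^4}$. A union bound over all $v\in\widetilde H$ and the four sketches then gives total failure probability at most
\[
4\cdot\frac{2\beta}{\eps^8\delta^3\eta^4}\cdot\frac{\eps^8\delta^4\eta^4}{64\beta}=\frac{\delta}{8}.
\]

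The one subtle point is that $\widetilde H$ is itself random, so the per-coordinate CountMin guarantee cannot be applied to it naively. I would handle this by noting that the reservoir sample uses randomness independent of the CountMin hash functions. I first condition on $F$, and hence on $\widetilde H$, which is then a fixed set of size at most $\frac{2\beta}{\eps^8\delta^3\eta^4}$. I also condition on the oracle labels, which fixes the frequency vectors. The union bound above is then carried out over this fixed set, and the resulting bound of $\frac{\delta}{8}$ holds after averaging over the conditioning. Apart from this independence argument, the proof is a direct transcription of the \textsc{Max-$2$AND} case.
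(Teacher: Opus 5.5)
Your proposal is correct and follows the paper's proof. The one-sided CountMin guarantee gives (i) deterministically, and for (ii) a per-coordinate CountMin bound, union-bounded over the at most $\frac{2\beta}{\eps^8\delta^3\eta^4}$ vertices of $\widetilde H$ and the four sketches, yields exactly $\frac{\delta}{8}$. The paper leaves implicit why the union bound may range over the random set $\widetilde H$; your conditioning on the reservoir sample and the oracle labels, which are independent of the hash functions, justifies it correctly. One small correction: each sketch receives at most one update per edge, not exactly one, and "at most" is all the mass bound needs.
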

\begin{proof}
    (i) By \cref{thm:CM-v2}, for all $v\in V$, we have $\widetilde{e}(V^+,v) \geq e(V^+,v),  \widetilde{e}(v,V^+) \geq e(v,V^+),  \widetilde{e}(V^-,v) \geq e(V^-,v)$, and  $\widetilde{e}(v,V^-) \geq e(v,V^-)$.

    (ii) By \cref{thm:CM-v2}, for any $v\in V$, with probability at least $1-\frac{\eps^{8}\delta^4\eta^4}{64\beta}$, each of the following four inequalities holds: $\widetilde{e}(V^+,v) \leq e(V^+,v) + \frac{\eps^{10}\delta^5\eta^{6}}{128\beta} m,\  \widetilde{e}(v,V^+) \leq e(v,V^+)+ \frac{\eps^{10}\delta^5\eta^{6}}{128\beta} m,\ \widetilde{e}(V^-,v) \leq e(V^-,v)+\frac{\eps^{10}\delta^5\eta^{6}}{128\beta} m$, and  $\widetilde{e}(v,V^-) \leq e(v,V^-)+ \frac{\eps^{10}\delta^5\eta^{6}}{128\beta} m$.
    Since $|\widetilde{H}|\le \frac{2\beta}{\eps^{8}\delta^3\eta^4}$, by applying the union bound we have that, with probability at least $1-\frac{\eps^{8}\delta^4\eta^4}{64\beta} \cdot \frac{2\beta}{\eps^{8}\delta^3\eta^4} \cdot 4 \ge 1-\frac{\delta}{8}$, for all $v\in \widetilde{H}$, all of the above four inequalities hold.
\end{proof}

\paragraph{Proof of \cref{thm:insertion}.}
Given the above lemmas, we are ready to prove \cref{thm:insertion}. 
\begin{proof}

We can assume that $m =\Omega(\eps^{-26}\delta^{-10}\eta^{-14})$, as otherwise we can store the whole graph in $O(\poly(1/\eps, 1/\delta, 1/\eta))$ space.
    
In \cref{alg:insertion}, we maintain four CountMin sketches with accuracy parameter $\frac{\eps^{10}\delta^5\eta^{6}}{128\beta}$ 
and failure probability $\frac{\eps^{8}\delta^4\eta^4}{64\beta}$, where $\beta=2^{28}$. By \cref{thm:CM-v2}, this takes $O(\frac{1}{\eps^{10}\delta^5\eta^{6}}\log\frac{1}{\eps^{8} \delta^4\eta^4}) = O( \frac{1}{\eps^{10}\delta^5\eta^{6}}\log\frac{1}{\eps \delta\eta} )$ words of space. 
    Also, we use reservoir sampling to uniformly sample $O(\frac{1}{\eps^{8}\delta^3\eta^4})$ edges, which takes $O(\frac{1}{\eps^{8}\delta^3\eta^4})$ words of space. 
    Therefore, the total space complexity of \cref{alg:insertion} is $O( \frac{1}{\eps^{10}\delta^5\eta^{6}}\log\frac{1}{\eps \delta\eta} )$ words.

    Next, we analyze the approximation ratio.
Since $V^+ = \widetilde{H}^+ \cup \widetilde{L}^+$ and $V^- = \widetilde{H}^- \cup \widetilde{L}^-$, for any $v \in V$, we have
\begin{align*}
    e(V^+,v) &= e(\widetilde{H}^+,v)+ e(\widetilde{L}^+,v), &e(v,V^+) &= e(v,\widetilde{H}^+) + e(v,\widetilde{L}^+),\\
    e(V^-,v) &= e(\widetilde{H}^-,v)+ e(\widetilde{L}^-,v), &e(v,V^-) &= e(v,\widetilde{H}^-) + e(v,\widetilde{L}^-).
\end{align*}
By Lemma~\ref{lem:error-bound-insertion}, with probability at least $1-\frac{\delta}{8}$, for all vertices $v\in \widetilde{H}$, all of the following inequalities hold: 
\begin{align*}
    e(\widetilde{H}^+,v)+ e(\widetilde{L}^+,v) \leq \widetilde{e}(V^+,v) \leq e(\widetilde{H}^+,v)+ e(\widetilde{L}^+,v) + \frac{\eps^{10}\delta^5\eta^{6}}{128\beta} m,\\
    e(v,\widetilde{H}^+) + e(v,\widetilde{L}^+) \leq \widetilde{e}(v,V^+) \leq e(v,\widetilde{H}^+) + e(v,\widetilde{L}^+) + \frac{\eps^{10}\delta^5\eta^{6}}{128\beta} m,\\
    e(\widetilde{H}^-,v)+ e(\widetilde{L}^-,v) \leq \widetilde{e}(V^-,v) \leq e(\widetilde{H}^-,v)+ e(\widetilde{L}^-,v) + \frac{\eps^{10}\delta^5\eta^{6}}{128\beta} m,\\
    e(v,\widetilde{H}^-) + e(v,\widetilde{L}^-) \leq \widetilde{e}(v,V^-) \leq e(v,\widetilde{H}^-) + e(v,\widetilde{L}^-) + \frac{\eps^{10}\delta^5\eta^{6}}{128\beta} m.
\end{align*}
Since $|F|=\frac{\beta}{\eps^{8}\delta^3\eta^4}$ and $\widetilde{H}$ is the set of endpoints of edges in $F$, we have that $|\widetilde{H}|\le \frac{2\beta}{\eps^{8}\delta^3\eta^4}$. Thus, it holds that $0 \le e(\widetilde{H}^+,v), e(v,\widetilde{H}^+), e(\widetilde{H}^-,v), e(v,\widetilde{H}^-)\le \frac{2\beta}{\eps^{8}\delta^3\eta^4}$. Therefore,
\begin{align}
      0\leq \widetilde{e}(V^+,v) -  e(\widetilde{L}^+,v)\leq \frac{2\beta}{\eps^{8}\delta^3\eta^4}+  \frac{\eps^{10}\delta^5\eta^{6}}{128\beta} m,\label{ineq:cm_plus_in}\\ 
      0 \leq \widetilde{e}(v,V^+) - e(v,\widetilde{L}^+) \leq \frac{2\beta}{\eps^{8}\delta^3\eta^4} +  \frac{\eps^{10}\delta^5\eta^{6}}{128\beta} m,\label{ineq:cm_plus_out}\\
      0 \leq \widetilde{e}(V^-,v) - e(\widetilde{L}^-,v) \leq \frac{2\beta}{\eps^{8}\delta^3\eta^4}+  \frac{\eps^{10}\delta^5\eta^{6}}{128\beta} m,\label{ineq:cm_minus_in}\\
      0 \leq \widetilde{e}(v,V^-) - e(v,\widetilde{L}^-) \leq \frac{2\beta}{\eps^{8}\delta^3\eta^4} + \frac{\eps^{10}\delta^5\eta^{6}}{128\beta} m. \label{ineq:cm_minus_out}
\end{align}

For each edge with both endpoints in $\widetilde{L}$,
let $\widetilde{X}_0,\widetilde{X}_1$, and $\widetilde{X}_2$ denote, respectively,  the numbers of edges with (i) $(Y_u,Y_v)=(+1,-1)$, (ii) $Y_u = Y_v$, and (iii) $(Y_u,Y_v)=(-1,+1)$. We have
\begin{align*}
    \widetilde{X}_0 
    =~&e(\widetilde{L}^+,\widetilde{L}^-) = e(V^+,V^-) - e(\widetilde{H}^+,\widetilde{H}^-) - \sum_{v\in \widetilde{H}^+}e(v,\widetilde{L}^-)-\sum_{v\in \widetilde{H}^-}e(\widetilde{L}^+,v),\\
    \widetilde{X}_1 
    =~&e(\widetilde{L}^+,\widetilde{L}^+) + e(\widetilde{L}^-,\widetilde{L}^-)\\
    =~&e(V^+,V^+) + e(V^-,V^-) - e(\widetilde{H}^+,\widetilde{H}^+) - e(\widetilde{H}^-,\widetilde{H}^-) \\ 
    &-\sum_{v\in \widetilde{H}^+}(e(v,\widetilde{L}^+) + e(\widetilde{L}^+,v)) 
    - \sum_{v\in \widetilde{H}^-}(e(v,\widetilde{L}^-) + e(\widetilde{L}^-,v)), \\
    \widetilde{X}_2 
    =~&e(\widetilde{L}^-,\widetilde{L}^+) = e(V^-,V^+) - e(\widetilde{H}^-,\widetilde{H}^+) -\sum_{v\in \widetilde{H}^+}e(\widetilde{L}^-,v)-\sum_{v\in \widetilde{H}^-}e(v,\widetilde{L}^+).
\end{align*}

Recall that
\begin{align*}
\overline{X}_0 &= e(V^+,V^-) - \sum_{v\in \widetilde{H}^+} \widetilde{e}(v,V^-)-\sum_{v\in \widetilde{H}^-} \widetilde{e}(V^+,v),\\
\overline{X}_1 &=e(V^+,V^+) + e(V^-,V^-) - \sum_{v\in \widetilde{H}^+}(\widetilde{e}(v,V^+)+\widetilde{e}(V^+,v)) - \sum_{v\in \widetilde{H}^-}(\widetilde{e}(v,V^-)+\widetilde{e}(V^-,v)),\\
\overline{X}_2 &= e(V^-,V^+) - \sum_{v\in \widetilde{H}^+} \widetilde{e}(V^-,v)-\sum_{v\in \widetilde{H}^-} \widetilde{e}(v,V^+).
\end{align*}
Therefore,
\begin{align*}
    \overline{X}_0 - \widetilde{X}_0 &= e(\widetilde{H}^+,\widetilde{H}^-) + \sum_{v\in \widetilde{H}^+}(e(v,\widetilde{L}^-) - \widetilde{e}(v,V^-)) + \sum_{v\in \widetilde{H}^-} (e(\widetilde{L}^+,v) - \widetilde{e}(V^+,v)),\\
    \overline{X}_1 - \widetilde{X}_1 &= e(\widetilde{H}^+,\widetilde{H}^+) + e(\widetilde{H}^-,\widetilde{H}^-) + \sum_{v\in \widetilde{H}^+} \left[(e(v,\widetilde{L}^+) - \widetilde{e}(v,V^+)) +  (e(\widetilde{L}^+,v) - \widetilde{e}(V^+,v))\right]\\
     &\quad+ \sum_{v\in \widetilde{H}^-} \left[(e(v,\widetilde{L}^-) - \widetilde{e}(v,V^-)) +  (e(\widetilde{L}^-,v) - \widetilde{e}(V^-,v))\right],\\
    \overline{X}_2 - \widetilde{X}_2 &=e(\widetilde{H}^-,\widetilde{H}^+) + \sum_{v\in \widetilde{H}^+}(e(\widetilde{L}^-,v) - \widetilde{e}(V^-,v)) + \sum_{v\in \widetilde{H}^-} (e(v,\widetilde{L}^+) - \widetilde{e}(v,V^+)).
\end{align*}
Since $|\widetilde{H}|\le \frac{2\beta}{\eps^{8}\delta^3\eta^4}$, we have $|\widetilde{H}^+|, |\widetilde{H}^-|\le \frac{2\beta}{\eps^{8}\delta^3\eta^4}$, and $e(\widetilde{H}^+,\widetilde{H}^-),e(\widetilde{H}^-,\widetilde{H}^+),e(\widetilde{H}^+,\widetilde{H}^+), e(\widetilde{H}^-,\widetilde{H}^-)\le \frac{4\beta^2}{\eps^{16}\delta^6 \eta^8}$.
Moreover, combining inequalities (\ref{ineq:cm_plus_in})--(\ref{ineq:cm_minus_out}), we have
\begin{align*}
    \widetilde{X}_0-\left(\frac{8\beta^2}{\eps^{16}\delta^6\eta^8}+\frac {\eps^{2}\delta^2 \eta^2 m}{32} \right)\le \overline{X}_0  \le \widetilde{X}_0+\frac{4\beta^2}{\eps^{16}\delta^6 \eta^8},\\
    \widetilde{X}_1-\left(\frac{16\beta^2}{\eps^{16}\delta^6\eta^8}+\frac {\eps^{2}\delta^2 \eta^2 m}{16}  \right) \le \overline{X}_1  \le \widetilde{X}_1+\frac{4\beta^2}{\eps^{16}\delta^6 \eta^8},\\
    \widetilde{X}_2-\left(\frac{8\beta^2}{\eps^{16}\delta^6\eta^8}+\frac {\eps^{2}\delta^2 \eta^2 m}{32} \right) \le \overline{X}_2   \le \widetilde{X}_2+\frac{4\beta^2}{\eps^{16}\delta^6 \eta^8}.
\end{align*}

Let $\widetilde{k}_{\widetilde{L}} := \frac{1}{4\eps^2}[(\frac{1}{2}+\eps)^2 \widetilde{X}_0 - (\frac{1}{2}+\eps)(\frac{1}{2}-\eps)\widetilde{X}_1 + (\frac{1}{2}-\eps)^2 \widetilde{X}_2]$. 
Now we are ready to show that $\overline{k}_{\widetilde{L}}$ is close to $\widetilde{k}_{\widetilde{L}}$ up to an additive error of $\poly(\eta)\cdot m$ with probability at least $1-\frac{\delta}{8}$.
We have
\begin{align*}
    \overline{k}_{\widetilde{L}} =~&\frac{1}{4\eps^2}\left[\left(\frac{1}{2}+\eps\right)^2 \overline{X}_0 - \left(\frac{1}{2}+\eps\right)\left(\frac{1}{2}-\eps\right)\overline{X}_1 + \left(\frac{1}{2}-\eps\right)^2 \overline{X}_2\right]\\
    \ge~&\frac{1}{4\eps^2}\left[\left(\frac{1}{2}+\eps\right)^2 \left(\widetilde{X}_0-\left(\frac{8\beta^2}{\eps^{16}\delta^6\eta^8}+\frac {\eps^{2}\delta^2\eta^2 m}{32}  \right)\right) - \left(\frac{1}{2}+\eps\right)\left(\frac{1}{2}-\eps\right)\left(\widetilde{X}_1+\frac{4\beta^2}{\eps^{16}\delta^6\eta^8}\right)\right. \\
    &+ \left.\left(\frac{1}{2}-\eps\right)^2 \left(\widetilde{X}_2-\left(\frac{8\beta^2}{\eps^{16}\delta^6\eta^8}+\frac {\eps^{2}\delta^2\eta^2 m}{32}  \right)\right)\right]\\
    =~&\widetilde{k}_{\widetilde{L}} - \frac{1}{4\eps^2}\left[\left(\frac{1}{2}+\eps\right)^2 \left(\frac{8\beta^2}{\eps^{16}\delta^6\eta^8}+\frac {\eps^2\delta^2\eta^2 m}{32}  \right) + \left(\frac{1}{2}+\eps\right)\left(\frac{1}{2}-\eps\right)\frac{4\beta^2}{\eps^{16}\delta^6\eta^8} \right.\\
    &+ \left.\left(\frac{1}{2}-\eps\right)^2 \left(\frac{8\beta^2}{\eps^{16}\delta^6\eta^8}+\frac {\eps^2\delta^2\eta^2 m}{32}  \right)\right]\\
    =~&\widetilde{k}_{\widetilde{L}} - \left(\frac{5\beta^{2}}{4\eps^{18}\delta^{6}\eta^8} + \frac{3\beta^{2}}{\eps^{16}\delta^{6}\eta^8} + \frac{\delta^2 \eta^2 m}{256} + \frac{\eps^{2}\delta^2 \eta^2m}{64}\right) \\
    \ge~&\widetilde{k}_{\widetilde{L}} - \frac{\eta m}{32},
\end{align*}
where the last step follows since $m = \Omega(\eps^{-26}\delta^{-10}\eta^{-14})$. On the other hand,
\begin{align*}
\overline{k}_{\widetilde{L}} =~&\frac{1}{4\eps^2}\left[\left(\frac{1}{2}+\eps\right)^2 \overline{X}_0 - \left(\frac{1}{2}+\eps\right)\left(\frac{1}{2}-\eps\right)\overline{X}_1 + \left(\frac{1}{2}-\eps\right)^2 \overline{X}_2\right]\\
\le~&\frac{1}{4\eps^2}\left[\left(\frac{1}{2}+\eps\right)^2 \left(\widetilde{X}_0+\frac{4\beta^2}{\eps^{16}\delta^6 \eta^8}\right) - \left(\frac{1}{2}+\eps\right)\left(\frac{1}{2}-\eps\right)\left(\widetilde{X}_1-\left(\frac{16\beta^2}{\eps^{16}\delta^6\eta^8}+\frac {\eps^{2}\delta^2\eta^2 m}{16}  \right)\right)\right.\\
&+\left.\left(\frac{1}{2}-\eps\right)^2 \left(\widetilde{X}_2+\frac{4\beta^2}{\eps^{16}\delta^6 \eta^8}\right)\right]\\
=~&\widetilde{k}_{\widetilde{L}} +\frac{1}{4\eps^2}\left[\left(\frac{1}{2}+\eps\right)^2 \frac{4\beta^2}{\eps^{16}\delta^6\eta^8} + \left(\frac{1}{2}+\eps\right)\left(\frac{1}{2}-\eps\right)\left(\frac{16\beta^2}{\eps^{16}\delta^6\eta^8}+\frac {\eps^{2}\delta^2\eta^2 m}{16}  \right) + \left(\frac{1}{2}-\eps\right)^2 \frac{4\beta^2}{\eps^{16}\delta^6\eta^8}\right]\\
=~&\widetilde{k}_{\widetilde{L}} + \left(\frac{3\beta^{2}}{2\eps^{18}\delta^{6}\eta^8} - \frac{2\beta^{2}}{\eps^{16}\delta^{6}\eta^8}+\frac{ \delta^{2}\eta^2m}{256}-\frac{\eps^{2}\delta^2\eta^2m}{64}\right)\\
\le~&\widetilde{k}_{\widetilde{L}} + \frac{\eta m}{64},
\end{align*}
where the last step follows since $m = \Omega(\eps^{-26}\delta^{-10}\eta^{-14})$. 

For each $v\in \widetilde{H}$, let $\widetilde{k}_{S^*,v} := (\frac{1}{2}+\frac{1}{4\eps})\cdot e(\widetilde{L}^+,v) + (\frac{1}{2}-\frac{1}{4\eps})\cdot e(\widetilde{L}^-,v)$ and $\widetilde{k}_{v,T^*} := (\frac{1}{2}-\frac{1}{4\eps})\cdot e(v,\widetilde{L}^+) + (\frac{1}{2}+\frac{1}{4\eps})\cdot e(v,\widetilde{L}^-)$.
We have
\begin{align*}
    \overline{k}_{S^*,v} &= \left(\frac{1}{2}+\frac{1}{4\eps}\right)\cdot \widetilde{e}(V^+,v) + \left(\frac{1}{2}-\frac{1}{4\eps}\right)\cdot \widetilde{e}(V^-,v)\\
    &\ge \left(\frac{1}{2}+\frac{1}{4\eps}\right)\cdot e(\widetilde{L}^+,v) + \left(\frac{1}{2}-\frac{1}{4\eps}\right) \left( e(\widetilde{L}^-,v) + \frac{2\beta}{\eps^{8}\delta^3\eta^4}+  \frac{\eps^{10}\delta^5\eta^{6}m}{128\beta}  \right)\\
    &= \widetilde{k}_{S^*,v} - \left(\frac{1}{4\eps}-\frac{1}{2}\right) \left(\frac{2\beta}{\eps^{8}\delta^3\eta^4}+  \frac{\eps^{10}\delta^5\eta^{6}m}{128\beta}  \right)\\
    &= \widetilde{k}_{S^*,v} -\left(\frac{\beta}{2\eps^{9}\delta^3\eta^4} - \frac{\beta}{\eps^{8}\delta^3\eta^4} + \frac{\eps^{9}\delta^5\eta^{6} m}{512\beta} - \frac{\eps^{10}\delta^5\eta^{6} m}{256\beta}\right)\\
    &\ge \widetilde{k}_{S^*,v} - \frac{\eps^8 \delta^3 \eta^5 m}{128\beta},
\end{align*}
where the last step follows since $m = \Omega(\eps^{-26}\delta^{-10}\eta^{-14})$. On the other hand,
\begin{align*}
    \overline{k}_{S^*,v} &= \left(\frac{1}{2}+\frac{1}{4\eps}\right)\cdot \widetilde{e}(V^+,v) + \left(\frac{1}{2}-\frac{1}{4\eps}\right)\cdot \widetilde{e}(V^-,v)\\
    &\le \left(\frac{1}{2}+\frac{1}{4\eps}\right)\left( e(\widetilde{L}^+,v) + \frac{2\beta}{\eps^{8}\delta^3\eta^4}+  \frac{\eps^{10}\delta^5\eta^{6}m}{128\beta}  \right) + \left(\frac{1}{2}-\frac{1}{4\eps}\right)\cdot e(\widetilde{L}^-,v)\\
    &= \widetilde{k}_{S^*,v} + \left(\frac{1}{2}+\frac{1}{4\eps}\right)\left(\frac{2\beta}{\eps^{8}\delta^3\eta^4}+  \frac{\eps^{10}\delta^5\eta^{6}m}{128\beta}  \right)\\
    &= \widetilde{k}_{S^*,v} +\left(\frac{\beta}{2\eps^{9}\delta^3\eta^4} + \frac{\beta}{\eps^{8}\delta^3\eta^4} + \frac{\eps^{9}\delta^5 \eta^6 m}{512\beta} + \frac{\eps^{10}\delta^5\eta^6 m}{256\beta}\right)\\
    &\le \widetilde{k}_{S^*,v} +\frac{\eps^8 \delta^3 \eta^5 m}{128\beta},
\end{align*}
where the last step follows since $m = \Omega(\eps^{-26}\delta^{-10}\eta^{-14})$. Therefore, $|\overline{k}_{S^*,v} - \widetilde{k}_{S^*,v}| \le \frac{\eps^8 \delta^3 \eta^5 m}{128\beta}$.
Similarly, $|\overline{k}_{v,T^*} - \widetilde{k}_{v,T^*}| \le \frac{\eps^8 \delta^3 \eta^5 m}{128\beta}$.

Let $\widetilde{\mathrm{ALG}}:= \widetilde{k}_{\widetilde{L}} + \sum_{v\in \widetilde{H}}\max\{\widetilde{k}_{S^*,v}, \widetilde{k}_{v,T^*}\}  - \frac{\eta m}{32}$ denote the output of the algorithm obtained by running \cref{alg:general-offline} with $L$ and $H$ replaced by $\widetilde L$ and $\widetilde H$, respectively.
Analogous to the proof of \cref{thm:general-offline}, we can show that $(1-\frac{\eta}2)\cdot \opt\le \widetilde{\mathrm{ALG}} \le \opt$ with probability at least $1-\frac{3\delta}{4}$. The  argument relies primarily on two properties of the partition $(\widetilde{L},\widetilde{H})$: (i) every vertex in $\widetilde{L}$ has degree at most $\theta m$ for $\theta:=\frac{\eps^4 \delta \eta^2}{4096}$ (since $\widetilde{L} \subseteq L$), and (ii) $|\widetilde{H}|\leq \frac{2\beta}{\eps^8 \delta^3 \eta^4}$.

Let $\mathrm{ALG}$ denote the output of \cref{alg:insertion}. By applying the union bound on Lemma~\ref{lem:high-deg-reservoir} and Lemma~\ref{lem:error-bound-insertion}, with probability at least $1-\frac{\delta}{8}-\frac{\delta}{8}-\frac{3\delta}{4} = 1-\delta$, we have
\begin{align*}
    \mathrm{ALG} &= \overline{k}_{\widetilde{L}} + \sum_{v\in \widetilde{H}}\max\{\overline{k}_{S^*,v}, \overline{k}_{v,T^*}\} -  \frac{\eta m}{16}\\
    &\le \widetilde{k}_{\widetilde{L}} + \frac{\eta m}{64} + \sum_{v\in \widetilde{H}}\max\{\widetilde{k}_{S^*,v}, \widetilde{k}_{v,T^*}\} + |\widetilde{H}|\cdot \frac{\eps^8 \delta^3 \eta^5 m}{128\beta} - \frac{\eta m}{16}\\
    &\le \widetilde{\mathrm{ALG}} + \frac{\eta m}{32} + \frac{\eta m}{64}+\frac{\eta m}{64} - \frac{\eta m}{16} &&\rhd |\widetilde{H}|\le \frac{2\beta}{\eps^{8}\delta^3\eta^4}\\
    &\le \opt,
\end{align*}
and
\begin{align*}
    \mathrm{ALG} &= \overline{k}_{\widetilde{L}} + \sum_{v\in \widetilde{H}}\max\{\overline{k}_{S^*,v}, \overline{k}_{v,T^*}\} - \frac{\eta m}{16}\\
    &\ge \widetilde{k}_{\widetilde{L}} - \frac{\eta m}{32} + \sum_{v\in \widetilde{H}}\max\{\widetilde{k}_{S^*,v}, \widetilde{k}_{v,T^*}\} - |\widetilde{H}|\cdot \frac{\eps^8 \delta^3 \eta^5 m}{128\beta} - \frac{\eta m}{16} \\
    &\ge \widetilde{\mathrm{ALG}} -\frac{\eta m}{64} - \frac{\eta m}{16} &&\rhd |\widetilde{H}|\le \frac{2\beta}{\eps^{8}\delta^3\eta^4}\\
    &\geq \widetilde{\mathrm{ALG}} - \frac{\eta m}{8} \\
    &\ge \left(1-\frac\eta2\right)\cdot \opt - \frac{\eta}{2} \cdot\opt && \rhd m \le 4\cdot \opt \\
    &= (1-\eta)\cdot \opt.
\end{align*}
\end{proof}

\subsection{Extension to dynamic streams}
\label{sec:dynamic-max-dicut}
In this subsection, we present our algorithm for \textsc{Max-Dicut} in dynamic streams. Similar to \cite{dong2025learning}'s method, our algorithm in dynamic streams can be easily obtained by modifying our algorithm in insertion-only streams (i.e., \cref{alg:insertion}). Specifically, we use $\ell_0$-sampling~\citep{jowhari2011tight} 
to compute a small set $\widetilde{H}$ at the end of the stream such that $\widetilde{H} \supseteq H$.
The algorithm is formally given in Algorithm~\ref{alg:dynamic-max-dicut}.

We have the following theoretical guarantee stating its approximation ratio and space complexity.

\begin{theorem}
\label{thm:dynamic-max-dicut}
Let $\eps\in (0,\frac{1}{2}]$ and $\delta \in (0,1)$. For any $\eta \in (0,1)$,
given an $\eps$-accurate oracle $\mathcal{O}$ for the \textsc{Max-Dicut} of $G$, 
there exists a single-pass $(1 - \eta)$-approximation  algorithm for estimating the \textsc{Max-Dicut} value of $G$ in dynamic streams. The algorithm uses $O(\frac{\log^2 n}{\eps^{10}\delta^5\eta^{6}}\log\frac{1}{\eps \delta\eta})$ bits of space. 
The approximation holds with probability at least $1-\delta$. 
\end{theorem}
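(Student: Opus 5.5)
The plan is to reduce the proof to that of \cref{thm:insertion}, changing only the construction of $\widetilde{H}$ and the accounting for deletions; everything downstream of the sets $(\widetilde L,\widetilde H)$ is unchanged. The dynamic algorithm keeps exact signed counters for $e(V^\pm,V^\pm)$ and $m$. It keeps the same four CountMin sketches $\CMin[V^\pm]$ and $\CMout[V^\pm]$, updated with the sign $\Delta$ of each stream item. In place of the reservoir, it runs $\frac{\beta}{\eps^8\delta^3\eta^4}$ independent $\ell_0$-samplers (\cref{thm:l0-sampler}) on the indicator vector of active edges. Each sampler has failure probability about $\frac{\eps^8\delta^4\eta^4}{32\beta}$, with $\beta$ enlarged by a constant as in \cref{alg:max-2and-dynamic}.

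The key facts to re-establish are the following.
\begin{enumerate}
\item Counters and sketches. The counters and sketches are linear, so at the end of the stream they equal those of the final graph $G$. The final frequency vectors are nonnegative, so \cref{thm:CM-v2} applies verbatim, and \cref{lem:error-bound-insertion} carries over unchanged.
\item Samplers. By a union bound over the samplers, all of them succeed except with probability $\delta/16$. Conditioned on this, their outputs are i.i.d.\ uniform samples with replacement from $E(G)$.
\item High-degree coverage. Fix $v$ with $\deg(v)\ge\theta m$. The probability that no sample hits $v$ is at most $(1-\theta)^{|F|}\le e^{-\theta|F|}$. With $|F|=\beta/(\eps^8\delta^3\eta^4)$, this is exponentially small. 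A union bound over $|H|\le 2/\theta$ vertices then gives $H\subseteq\widetilde H$ with probability at least $1-\delta/16$, which replaces \cref{lem:high-deg-reservoir}. A Chebyshev argument as in \cref{lem:high-deg-reservoir} would also suffice, since the samples are independent.
\item Size bound. We still have $|\widetilde H|\le 2|F|$, since samples may repeat and this only helps.
\end{enumerate}

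With these in hand, the remainder of the proof of \cref{thm:insertion} applies line by line. The subtraction bounds for $\overline X_0,\overline X_1,\overline X_2$, the bounds $|\overline k_{S^*,v}-\widetilde k_{S^*,v}|$, and the sandwich $(1-\eta)\opt\le \mathrm{ALG}\le \opt$ depend only on three facts: $\widetilde L\subseteq L$, $|\widetilde H|=\poly(1/\eps,1/\delta,1/\eta)$, and the CountMin guarantees. We also assume $m=\Omega(\eps^{-26}\delta^{-10}\eta^{-14})$. This is justified because otherwise we could maintain a small-support sparse-recovery sketch of the final edge set. For simplicity, I would instead assert, as the paper does in the insertion case, that small instances can be handled exactly, for example via $s$-sparse recovery in $\poly(1/\eps,1/\delta,1/\eta)\log^2 n$ bits. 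The failure probabilities total at most $\delta$ after slight rescaling.

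For space, each $\ell_0$-sampler uses $O(\log^2 n\log\frac{1}{\eps\delta\eta})$ bits, since the edge universe has size $n^2$. Multiplying by $O(\eps^{-8}\delta^{-3}\eta^{-4})$ samplers is dominated by the CountMin sketches. Those use $O(\eps^{-10}\delta^{-5}\eta^{-6}\log\frac{1}{\eps\delta\eta})$ counters of $O(\log n)$ bits each. Altogether this gives the stated $O\bigl(\frac{\log^2 n}{\eps^{10}\delta^5\eta^6}\log\frac{1}{\eps\delta\eta}\bigr)$ bits.

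The main obstacle I expect is the subtlety of conditioning. The $\ell_0$-sampler randomness must be independent of the oracle, so that conditioning on $\widetilde H$ leaves the labels on $\widetilde L$ independent and $\eps$-accurate. The sampler failure events must also be correctly folded into the union bound. I would handle both by declaring all sketch and sampler randomness independent of $\mathbf{Y}$.
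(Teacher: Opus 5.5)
Your proposal is correct and is essentially the paper's proof. The $\ell_0$-samplers replace the reservoir, linearity and nonnegativity of the final frequency vectors let the CountMin guarantees carry over, and the insertion-only analysis is reused once $H\subseteq\widetilde H$ and $|\widetilde H|\le 2|F|$ hold.

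There are three small differences. Your direct bound $(1-\theta)^{|F|}\le e^{-\theta|F|}$ is a slightly simpler substitute for the Chebyshev argument of \cref{lem:high-deg-reservoir-dynamic-dicut}. Your observation that repeated samples only shrink $\widetilde H$ shows that only the sampler-failure part of \cref{lem:ell-0-sampling-dicut} is actually needed. One small slip: the sampler space is not dominated by the CountMin space, since it carries $\log^2 n$ rather than $\log n$, but both terms still fit under the stated bound.
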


\begin{algorithm}[h!]
\begin{algorithmic}[1]
\Require Directed graph $G$ as a dynamic stream of edges, an $\eps$-accurate oracle $\mathcal{O} \rightarrow \{-1, 1\}$, $\delta,\eta\in (0,1)$
\Ensure Estimate of the \textsc{Max-Dicut} value of $G$
\LeftComment{\textbf{\textcolor{blue}{Pre-processing phase}}}
\State {\bf initialize} $e(V^+,V^+)\leftarrow 0, e(V^-,V^-)\leftarrow 0, e(V^+,V^-)\leftarrow 0, e(V^-,V^+)\leftarrow 0, m\leftarrow 0$
\State {\bf initialize} four CountMin sketches $\CMin[V^+], \CMout[V^+], \CMin[V^-]$ and $\CMout[V^-]$ with accuracy parameter $\frac{\eps^{10}\delta^5\eta^{6}}{128\beta}$ and failure probability $\frac{\eps^{8}\delta^4\eta^4}{64\beta}$, where $\beta=2^{29}\cdot 25$.

\LeftComment{\textbf{\textcolor{blue}{Streaming phase}}}
\ForEach{item $((u,v), \Delta_{(u,v)})$ in the dynamic stream}
    \State $Y_u = \mathcal{O}(u), Y_v = \mathcal{O}(v)$
    \If{$\Delta_{(u,v)}=1$}
        \State $m \leftarrow m+1$
        \IIf{$Y_u = 1$ and $Y_v = -1$} $e(V^+,V^-) \leftarrow e(V^+,V^-) + 1$
        \IIf{$Y_u = Y_v = 1$} $e(V^+,V^+) \leftarrow e(V^+,V^+) + 1$
        \IIf{$Y_u = Y_v = -1$} $e(V^-,V^-) \leftarrow e(V^-,V^-) + 1$
        \IIf{$Y_u = -1$ and $Y_v = 1$} $e(V^-,V^+) \leftarrow e(V^-,V^+) + 1$
    \Else{}
        \State $m \leftarrow m-1$
        \IIf{$Y_u = 1$ and $Y_v = -1$} $e(V^+,V^-) \leftarrow e(V^+,V^-) - 1$
        \IIf{$Y_u = Y_v = 1$} $e(V^+,V^+) \leftarrow e(V^+,V^+) - 1$
        \IIf{$Y_u = Y_v = -1$} $e(V^-,V^-) \leftarrow e(V^-,V^-) - 1$
        \IIf{$Y_u = -1$ and $Y_v = 1$} $e(V^-,V^+) \leftarrow e(V^-,V^+) - 1$
    \EndIf
    \State $\textsc{Update}(\CMin[V^{Y_u}][v],\Delta_{(u,v)}), \textsc{Update}(\CMout[V^{Y_v}][u],\Delta_{(u,v)})$
    
\EndFor

\State In parallel, maintain $\frac{\beta}{\eps^{8}\delta^3\eta^4}$ $\ell_0$-samplers (with failure probability $\frac{\eps^8 \delta^4 \eta^4}{32\beta}$) to obtain $\frac{\beta}{\eps^{8}\delta^3\eta^4}$ different edges $F$.\phantomsection \label{line:ell-0-sampling-dicut}

\LeftComment{\textbf{\textcolor{blue}{Post-processing phase}}}
\ForEach{$v\in V$}
    \State $\widetilde{e}(V^+,v) \leftarrow \textsc{Query}(\CMin[V^+][v]), \quad \widetilde{e}(v,V^+) \leftarrow \textsc{Query}(\CMout[V^+][v])$
    \State $\widetilde{e}(V^-,v) \leftarrow \textsc{Query}(\CMin[V^-][v]),\quad \widetilde{e}(v,V^-) \leftarrow \textsc{Query}(\CMout[V^-][v])$
\EndFor
\State $\widetilde{H} \leftarrow \bigcup_{(u,v)\in F} \{u,v\}, \widetilde{L} \leftarrow V\setminus \widetilde{H}$
\State $\widetilde{H}^+ \leftarrow \{v\in \widetilde{H}: Y_v = 1\}, \widetilde{H}^- \leftarrow \widetilde{H} \setminus \widetilde{H}^+$
\State $\overline{X}_0 \leftarrow e(V^+,V^-) - \sum_{v\in \widetilde{H}^+} \widetilde{e}(v,V^-)-\sum_{v\in \widetilde{H}^-} \widetilde{e}(V^+,v)$
\State $\overline{X}_1 \leftarrow e(V^+,V^+) + e(V^-,V^-) - \sum_{v\in \widetilde{H}^+}(\widetilde{e}(v,V^+)+\widetilde{e}(V^+,v)) - \sum_{v\in \widetilde{H}^-}(\widetilde{e}(v,V^-)+\widetilde{e}(V^-,v))$ 
\State $\overline{X}_2 \leftarrow e(V^-,V^+) - \sum_{v\in \widetilde{H}^+} \widetilde{e}(V^-,v)-\sum_{v\in \widetilde{H}^-} \widetilde{e}(v,V^+)$
\State $\overline{k}_{\widetilde{L}}\leftarrow \frac{1}{4\eps^2}[(\frac{1}{2}+\eps)^2 \overline{X}_0 - (\frac{1}{2}+\eps)(\frac{1}{2}-\eps)\overline{X}_1 + (\frac{1}{2}-\eps)^2 \overline{X}_2]$
\ForEach{$v\in \widetilde{H}$} 
\State $\overline{k}_{S^*,v} \leftarrow (\frac{1}{2}+\frac{1}{4\eps})\cdot \widetilde{e}(V^+,v) + (\frac{1}{2}-\frac{1}{4\eps})\cdot \widetilde{e}(V^-,v)$
\State $\overline{k}_{v,T^*} \leftarrow (\frac{1}{2}-\frac{1}{4\eps})\cdot \widetilde{e}(v,V^+) + (\frac{1}{2}+\frac{1}{4\eps})\cdot \widetilde{e}(v,V^-)$ 
\EndFor
\State \textbf{return} $\overline{k}_{\widetilde{L}} + \sum_{v\in \widetilde{H}}\max\{\overline{k}_{S^*,v}, \overline{k}_{v,T^*}\}  -  \frac{\eta m}{16}$ 
\end{algorithmic}
\caption{Estimating the \textsc{Max-Dicut} value in dynamic streams} 
\label{alg:dynamic-max-dicut}
\end{algorithm}

In the following, we prove \cref{thm:dynamic-max-dicut}.
The proof follows a similar structure to that of \cref{thm:insertion}. First, we show that $\ell_0$-sampling in dynamic streams can effectively replace reservoir sampling in insertion-only streams, allowing us to retrieve a set $\widetilde{H} \supseteq H$ with high probability by the end of the stream.
\begin{lemma}
\label{lem:ell-0-sampling-dicut}
Let $\eps \in (0,\frac{1}{2}]$ and $\delta,\eta \in (0,1)$. With probability at least $1-\frac{\delta}{16}$, Line~\ref{line:ell-0-sampling-dicut} of \cref{alg:dynamic-max-dicut} returns a set $F$ which contains $\frac{\beta}{\eps^8\delta^3\eta^4}$ different edges. 
\end{lemma}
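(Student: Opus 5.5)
We prove \cref{lem:ell-0-sampling-dicut} by a union bound over the samplers, followed by a coupon-collector style bound on repeated samples. Let $N:=\frac{\beta}{\eps^8\delta^3\eta^4}$ denote the number of samples needed. One detail should be made explicit: to end up with $N$ \emph{different} edges, we maintain somewhat more than $N$ independent $\ell_0$-samplers, say $2N$ (or a constant multiple), each with failure probability $\frac{\eps^8\delta^4\eta^4}{32\beta}$. This changes the space bound only by a constant factor.

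\textbf{Step 1 (no sampler fails).} Each sampler returns FAIL with probability at most $\frac{\eps^8\delta^4\eta^4}{32\beta} = \frac{\delta}{32N}$. A union bound over the $N$ samplers (or the $O(N)$ samplers) gives a total failure probability of at most $\delta/32$ (respectively $O(\delta/32)$). The constants can be adjusted so this is at most $\delta/32$. Conditioned on no failure, \cref{thm:l0-sampler} and \cref{def:l0-sampler} ensure that the outputs are independent and uniform over the support, which is the set $E$ of the $m$ edges of the final graph.

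\textbf{Step 2 (enough distinct edges).} As in the proof of \cref{thm:insertion}, we may assume $m=\Omega(\eps^{-26}\delta^{-10}\eta^{-14})$, so $m\ge 4N^2/\delta$ holds comfortably. Fix any pair of samplers. The probability that they return the same edge is $1/m$. By a union bound over the at most $\binom{2N}{2}\le 2N^2$ pairs, the probability that any collision occurs is at most $2N^2/m\le \delta/32$. On the complement, all outputs are distinct, so we can take $F$ to be the first $N$ of them.

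\textbf{Conclusion.} Combining both steps, the total failure probability is at most $\delta/16$.

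The main subtlety is making the collision bound work. The lemma requires exactly $N$ distinct edges, so collisions must be handled explicitly. The density assumption on $m$ disposes of them, since otherwise the entire graph can be stored directly. The rest is a routine union bound.
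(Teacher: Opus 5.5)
Your proof is the paper's proof: a union bound over sampler failures gives $N\cdot\frac{\eps^8\delta^4\eta^4}{32\beta}=\frac{\delta}{32}$ with $N=\frac{\beta}{\eps^8\delta^3\eta^4}$, and a pairwise-collision bound gives $\binom{N}{2}\cdot\frac{1}{m}\le\frac{N^2}{m}\le\frac{\delta}{32}$ from the density assumption on $m$. Drop the detour through $2N$ samplers, though, and fix two constants:
\begin{itemize}
\item The extra samplers are unnecessary: if no two of the $N$ outputs collide, they are already $N$ distinct edges.
\item For the algorithm as specified, $2N$ samplers break your accounting. With the prescribed per-sampler failure probability $\frac{\delta}{32N}$, they alone contribute $\frac{\delta}{16}$, and re-tuning that probability changes the line of \cref{alg:dynamic-max-dicut} that the lemma is about.
\item Your bound $m\ge 4N^2/\delta$ only yields $2N^2/m\le\delta/2$. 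You need something like $m\ge 32N^2/\delta$, which $m=\Omega(\eps^{-26}\delta^{-10}\eta^{-14})$ does guarantee once the hidden constant is large enough.
\end{itemize}
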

\begin{proof}
    The error is introduced by the $\ell_0$-samplers and the probability that sampling $\frac{\beta}{\eps^8\delta^3\eta^4}$ uniform random edges in parallel does not yield $\frac{\beta}{\eps^8\delta^3\eta^4}$ different edges.

    Since we maintain $\frac{\beta}{\eps^8\delta^3\eta^4}$ $\ell_0$-samplers, by \cref{thm:l0-sampler} and applying the union bound, the failure probability introduced by these $\ell_0$-samplers is at most $\frac{\beta}{\eps^8\delta^3\eta^4} \cdot \frac{\eps^8 \delta^4 \eta^4}{32\beta}=\frac{\delta}{32}$.

    Suppose that all the $\ell_0$-samplers succeed, i.e., $|F| = \frac{\beta}{\eps^8\delta^3\eta^4}$.
    Now we bound the probability that there exist two identical edges in $F$. Consider two fixed edges $e_1,e_2 \in F$. Then $\Pr[e_1 = e_2] = \frac{1}{m}$. By union bound, the probability that there exists at least one pair of identical edges is at most $\binom{\beta / \eps^8\delta^3\eta^4}{2}\cdot \frac{1}{m} \leq \frac{\beta^2}{\eps^{16}\delta^6 \eta^8 m}$.

    Therefore, the probability that $F$ contains $\frac{\beta}{\eps^8\delta^3\eta^4}$ different edges is at least $1-\frac{\delta}{32} - \frac{\beta^2}{\eps^{16}\delta^6 \eta^8 m} \geq 1-\frac{\delta}{32}-\frac{\delta}{32}= 1-\frac{\delta}{16}$ (since we can assume $m=\Omega(\eps^{-26}\delta^{-10}\eta^{-14})$). 
\end{proof}

\begin{lemma}
\label{lem:high-deg-reservoir-dynamic-dicut}
Let $\delta \in (0,1)$.
Then
 $\widetilde{H}$ contains all high-degree vertices $H$ in $G$ (i.e., $H \subseteq \widetilde{H}$) with probability at least $1-\frac{\delta}{16}$.
\end{lemma}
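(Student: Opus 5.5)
We adapt the argument of Lemma~\ref{lem:high-deg-reservoir} to the dynamic setting, working conditionally on the event of Lemma~\ref{lem:ell-0-sampling-dicut}. Let $\beta = 2^{29}\cdot 25$ and $s:=\frac{\beta}{\eps^8\delta^3\eta^4}$ be the number of $\ell_0$-samplers, and take the high-degree threshold $\theta m$ with $\theta=\frac{\eps^4\delta\eta^2}{4096}$ as in the insertion-only analysis. Here $m$ is the number of edges in the final graph, i.e., the support size of the final edge-indicator vector. On the event that every $\ell_0$-sampler succeeds, which has probability at least $1-\frac{\delta}{32}$ by \cref{thm:l0-sampler} and a union bound (as in the proof of Lemma~\ref{lem:ell-0-sampling-dicut}), the $s$ outputs are independent and each is uniform over the final edge set. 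The samplers use independent randomness, so the samples are drawn with replacement. This is the key structural difference from reservoir sampling, and it makes the analysis easier, not harder.

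Fix $v\in H$, so $\deg(v)\ge \theta m$. Each sample misses all edges incident to $v$ with probability $1-\deg(v)/m\le 1-\theta$. By independence,
\[
\Pr[v\notin \widetilde H]\le (1-\theta)^{s}\le \exp(-\theta s)=\exp\!\left(-\frac{\beta}{4096\,\eps^4\delta^2\eta^2}\right).
\]
Alternatively, to mirror the earlier proof exactly, we can use Chebyshev. Let $X$ be the number of samples incident to $v$. Then $\E[X]=s\deg(v)/m$ and $\Var[X]\le \E[X]$, since $X$ is a sum of independent indicators. This gives $\Pr[X=0]\le 1/\E[X]\le \frac{m}{s\theta m}=\frac{4096\eps^4\delta^2\eta^2}{\beta}$.

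Next we take a union bound over $H$. Since $|H|\le 2/\theta=\frac{8192}{\eps^4\delta\eta^2}$, the failure probability is at most $\frac{8192\cdot 4096\,\delta}{\beta}=\frac{2^{25}\delta}{2^{29}\cdot 25}\le\frac{\delta}{32}$. Adding the $\frac{\delta}{32}$ failure probability of the samplers gives a total of at most $\frac{\delta}{16}$, which proves the lemma.

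The main subtlety is the conditioning step: the guarantee of the $\ell_0$-samplers concerns the final vector only, and a conditioned success must still leave the outputs uniform and independent. Conditioned on success, each sampler's output is exactly uniform on the support (Definition~\ref{def:l0-sampler}), and independence follows from the use of independent random seeds. I would state this explicitly, and not route the argument through Lemma~\ref{lem:ell-0-sampling-dicut}'s distinctness event. That event is unnecessary here and would only spend additional failure probability.
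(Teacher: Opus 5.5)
Your proof is correct. It differs from the paper's in which random variable you analyze.

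The paper indexes by the edges incident to $v$, setting $X_i=\mathbf{1}[e_i\in F]$. This forces it to do three things:
\begin{enumerate}
\item estimate $\E[X_i]=1-(1-1/m)^{s}$ from both sides, where the lower bound $\E[X_i]\ge s/(5m)$ is where $m=\Omega(\eps^{-26}\delta^{-10}\eta^{-14})$ enters;
\item appeal to negative correlation of these occupancy indicators for the variance bound;
\item run Chebyshev at deviation $s\deg(v)/(5m)$, which costs the factor $25$ built into $\beta=2^{29}\cdot 25$.
\end{enumerate}
It lands exactly on $\delta/16$ for coverage. It leaves the sampler-failure probability to \cref{lem:ell-0-sampling-dicut}, implicitly conditioning on all samplers succeeding.

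You index by the $s$ samples instead, so the count is a sum of i.i.d.\ Bernoulli$(\deg(v)/m)$ variables with exact mean. You need no negative-correlation claim and no lower bound on $m$. Per vertex you get either $e^{-\theta s}$ or $1/\E[X]$; even the Chebyshev version is a factor $25$ better than the paper's bound. That slack (coverage failure $\delta/400$ after the union bound) is what lets you absorb the $\delta/32$ sampler failure and state the lemma unconditionally. The resulting double-counting of sampler failure in the final union bound of \cref{thm:dynamic-max-dicut} is harmless.

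One small expository point: your first sentence says you condition on the event of \cref{lem:ell-0-sampling-dicut}. Your argument actually conditions only on all samplers succeeding, as your last paragraph correctly notes, so phrase it that way from the start.
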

\begin{proof}
    Let $\theta:=\frac{\eps^{4}\delta\eta^2}{4096}$. Consider a vertex $v \in H$, so $\deg(v) \geq \theta m$. 
    For each directed edge $e_i$ incident to $v$, define an indicator random variable 
    \begin{equation*}
            X_{i}= \begin{cases}1 & \text { if } e_i\in F, \\ 0 & \text { otherwise. }\end{cases}
    \end{equation*}
    So, $\E[X_i] = \Pr[X_i=1] = 1-(1-\frac{1}{m})^{\beta/\eps^8\delta^3\eta^4} \le \frac{\beta}{\eps^8\delta^3\eta^4m}$. 
    On the other hand, $\E[X_i] = 1-(1-~\frac{1}{m})^{\beta/\eps^8\delta^3\eta^4} \ge 1-\exp(-\frac{\beta}{\eps^8\delta^3\eta^4m}) \ge \frac{\beta}{\eps^8\delta^3\eta^4m} - \frac{\beta^2}{2\eps^{16}\delta^6\eta^8m^2}\ge \frac{\beta}{5\eps^{8}\delta^3\eta^4m}$ since $m=\Omega(\eps^{-26}\delta^{-10}\eta^{-14})$.
    We define $X:= \sum_{i\in [\deg(v)]} X_i$ to denote the number of edges incident to $v$ that are sampled by the end of the stream. Then, $\E[X] = \sum_{i\in [\deg(v)]} \E[X_i] \in [\frac{\beta\deg(v)}{5\eps^{8}\delta^3 \eta^4m},\frac{\beta\deg(v)}{\eps^{8}\delta^3 \eta^4m}]$. 

    Next, we bound $\Var[X]$. For any $i\in [\deg(v)]$, $\Var[X_i] = \E[X^2_i] - (\E[X_i])^2 = \E[X_i] - (\E[X_i])^2 \le \frac{\beta}{\eps^8\delta^3\eta^4m}$.
    Since $(X_i)_{i\in [\deg(v)]}$ are negatively correlated, $\Var[X] \le \sum_{i\in [\deg(v)]}\Var[X_i] \le \deg(v)\cdot \frac{\beta}{\eps^{8} \delta^3\eta^4 m} = \frac{\beta\deg(v)}{\eps^{8} \delta^3\eta^4 m}$.
    Therefore, by Chebyshev's inequality, 
    \begin{align*}
        \Pr[v\notin \widetilde{H}] = \Pr[X\leq 0] 
        &\le \Pr\left[X \leq \E[X]- \frac{\beta\deg(v)}{5\eps^{8} \delta^3\eta^4 m}\right]\\
        &\leq \frac{\Var[X]\cdot 25\eps^{16} \delta^6 \eta^ 8 m^2}{ \beta^2\deg^2(v)}\\
        &\leq \frac{25\eps^{8} \delta^3 \eta^4 m}{\beta \deg(v)} &&\rhd \Var[X] \leq \frac{\beta\deg(v)}{\eps^{8} \delta^3 \eta^4 m}\\
        &\leq \frac{25\cdot 4096\eps^{4}\delta^2\eta^2}{\beta}. &&\rhd \deg(v)\geq \theta m
    \end{align*}
    Recall that $|H| \le \frac{2m}{\theta m}=\frac{2}{\theta} = \frac{8192}{\eps^{4} \delta\eta^2}$.
    By union bound, with probability at least $1-\frac{8192}{\eps^{4} \delta\eta^2}\cdot \frac{25\cdot 4096\eps^{4}\delta^2\eta^2}{\beta}  = 1-\frac{\delta}{16}$ (since $\beta = 2^{29}\cdot 25$), $\widetilde{H}$ contains all high-degree vertices $H$ in $G$.
\end{proof}

Next, we show that for all vertices $v \in \widetilde{H}$, our estimates of the number of edges from $v$ to $V^+$, from $v$ to $V^-$, from $V^+$ to $v$, and from $V^-$ to $v$ have additive error at most $\text{poly}(\eps, \delta,\eta) \cdot m$, with high probability. 
\begin{lemma}
\label{lem:error-bound-dynamic-dicut}
    Let $\eps \in (0,\frac{1}{2}]$ and $\delta,\eta \in (0,1)$.
    For all vertices $v\in \widetilde{H}$, we have
    \begin{enumerate}
        \item[(i)] $\widetilde{e}(V^+,v) \geq e(V^+,v),\  \widetilde{e}(v,V^+) \geq e(v,V^+),\  \widetilde{e}(V^-,v) \geq e(V^-,v),\  \widetilde{e}(v,V^-) \geq e(v,V^-)$;
        \item[(ii)] With probability at least $1-\frac{\delta}{8}$, all of the following inequalities hold: $\widetilde{e}(V^+,v) \leq e(V^+,v) + \frac{\eps^{10}\delta^5\eta^{6}}{128\beta} m,\  \widetilde{e}(v,V^+) \leq e(v,V^+)+ \frac{\eps^{10}\delta^5\eta^{6}}{128\beta} m,\  \widetilde{e}(V^-,v) \leq e(V^-,v)+ \frac{\eps^{10}\delta^5\eta^{6}}{128\beta} m,\  \widetilde{e}(v,V^-) \leq e(v,V^-)+\frac{\eps^{10}\delta^5\eta^{6}}{128\beta} m $. 
    \end{enumerate} 
\end{lemma}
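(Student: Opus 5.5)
The plan is to reuse the argument of \cref{lem:error-bound-insertion} essentially verbatim. The only new ingredient is that the CountMin sketches now receive signed updates $\Delta_{(u,v)}\in\{\pm1\}$. The key observation is that CountMin is a linear sketch: each cell $C[\ell,s]$ is a sum of $\pm1$ updates over the coordinates hashing to $s$. Hence after the whole dynamic stream, the table equals exactly the table that would be produced by inserting the final frequency vector once.

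For the sketch $\CMin[V^+]$, the final frequency of coordinate $v$ is $e(V^+,v)$ in the final graph. This holds because an edge $(u,v)$ is inserted and deleted with the same label $Y_u$, which the oracle returns consistently. The other three sketches are analogous, with final frequencies $e(V^-,v)$, $e(v,V^+)$ and $e(v,V^-)$. Since every edge that is deleted was previously inserted, these final vectors are nonnegative. The total mass of each is at most $m$, the number of final edges, because each edge contributes exactly once to each of the two families of sketches.

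With this in place, \cref{thm:CM-v2} applies to the final nonnegative vector.
\begin{itemize}
\item[(i)] The lower bounds $\widetilde{e}(V^\pm,v)\ge e(V^\pm,v)$ and $\widetilde{e}(v,V^\pm)\ge e(v,V^\pm)$ hold deterministically for every $v\in V$. In particular they hold for all $v\in\widetilde{H}$.
\item[(ii)] For each fixed $v$ and each of the four sketches, the upper bound with additive error $\frac{\eps^{10}\delta^5\eta^{6}}{128\beta}m$ fails with probability at most $\frac{\eps^{8}\delta^4\eta^4}{64\beta}$.
\end{itemize}

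For the union bound in (ii), note first that the set $\widetilde{H}$ is determined by the $\ell_0$-samplers, whose randomness is independent of the CountMin hash functions. So we can condition on $\widetilde{H}$ and treat it as fixed. Its size satisfies $|\widetilde{H}|\le 2|F|\le\frac{2\beta}{\eps^8\delta^3\eta^4}$. Union bounding over these vertices and the four sketches gives total failure probability at most
$$4\cdot\frac{2\beta}{\eps^8\delta^3\eta^4}\cdot\frac{\eps^{8}\delta^4\eta^4}{64\beta}=\frac{\delta}{8}.$$

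The main point requiring care is the linearity and nonnegativity justification. The guarantee $\widetilde f_i\ge f_i$ in \cref{thm:CM-v2} needs the final vector to be nonnegative, and it needs the oracle labels to be fixed across an insertion and its matching deletion. Both hold in this model. The rest is a direct repetition of the insertion-only calculation.
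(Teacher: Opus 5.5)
Your proposal is correct and follows the paper's proof: the CountMin lower bounds give (i) deterministically, and for (ii) you union-bound the per-query CountMin guarantee over the at most $\frac{2\beta}{\eps^8\delta^3\eta^4}$ vertices of $\widetilde{H}$ and the four sketches, which yields exactly $\frac{\delta}{8}$. The extra points you add are sound and are ones the paper leaves implicit in this lemma: linearity of the sketch under signed updates, nonnegativity of the final frequency vectors, and the independence of $\widetilde{H}$ from the hash functions, which justifies the union bound over a random set.
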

\begin{proof}
    (i) By \cref{thm:CM-v2}, for all $v\in V$, we have $\widetilde{e}(V^+,v) \geq e(V^+,v),  \widetilde{e}(v,V^+) \geq e(v,V^+),  \widetilde{e}(V^-,v) \geq e(V^-,v)$, and  $\widetilde{e}(v,V^-) \geq e(v,V^-)$.

    (ii) By \cref{thm:CM-v2}, for any $v\in V$, with probability at least $1-\frac{\eps^{8}\delta^4\eta^4}{64\beta}$, each of the following four inequalities holds: $\widetilde{e}(V^+,v) \leq e(V^+,v) + \frac{\eps^{10}\delta^5\eta^{6}}{128\beta} m,\  \widetilde{e}(v,V^+) \leq e(v,V^+)+ \frac{\eps^{10}\delta^5\eta^{6}}{128\beta} m,\ \widetilde{e}(V^-,v) \leq e(V^-,v)+\frac{\eps^{10}\delta^5\eta^{6}}{128\beta} m$, and  $\widetilde{e}(v,V^-) \leq e(v,V^-)+ \frac{\eps^{10}\delta^5\eta^{6}}{128\beta} m$.
    Since $|\widetilde{H}|\le \frac{2\beta}{\eps^{8}\delta^3\eta^4}$, by applying the union bound we have that, with probability at least $1-\frac{\eps^{8}\delta^4\eta^4}{64\beta} \cdot \frac{2\beta}{\eps^{8}\delta^3\eta^4} \cdot 4 \ge 1-\frac{\delta}{8}$, for all $v\in \widetilde{H}$, all of the above four inequalities hold.
\end{proof}

\paragraph{Proof of \cref{thm:dynamic-max-dicut}.}
Given the above lemmas, we are ready to prove \cref{thm:dynamic-max-dicut}. 
\begin{proof}
We can assume that $m =\Omega(\eps^{-26}\delta^{-10}\eta^{-14})$, as otherwise we can store the whole graph in $O(\poly(1/\eps, 1/\delta, 1/\eta))$ space.

In \cref{alg:dynamic-max-dicut}, we maintain four CountMin sketches with accuracy parameter $\frac{\eps^{10}\delta^5\eta^{6}}{128\beta}$ and failure probability $\frac{\eps^{8}\delta^4\eta^4}{64\beta}$, where $\beta=2^{29}\cdot 25$. By \cref{thm:CM-v2}, this takes $O(\frac{1}{\eps^{10}\delta^5\eta^{6}}\log\frac{1}{\eps^{8} \delta^4\eta^4}) = O( \frac{1}{\eps^{10}\delta^5\eta^{6}}\log\frac{1}{\eps \delta\eta} )$ words of space. 
    Also, we maintain $\frac{\beta}{\eps^8\delta^3\eta^4}$ $\ell_0$-samplers, each with failure probability $\frac{\eps^8 \delta^4 \eta^4}{32\beta}$. By \cref{thm:l0-sampler}, this takes $O(\frac{\log^2 n}{\eps^8\delta^3\eta^4}\log \frac{1}{\eps^{8}\delta^4\eta^4})=O(\frac{\log^2 n}{\eps^8\delta^3\eta^4}\log \frac{1}{\eps\delta\eta})$ bits of space. 
    Therefore, the total space complexity of \cref{alg:dynamic-max-dicut} is $O( \frac{\log^2 n}{\eps^{10}\delta^5\eta^{6}}\log\frac{1}{\eps \delta\eta} )$ bits.

    Next, we analyze the approximation ratio.
Since $V^+ = \widetilde{H}^+ \cup \widetilde{L}^+$ and $V^- = \widetilde{H}^- \cup \widetilde{L}^-$, for any $v \in V$, we have
\begin{align*}
    e(V^+,v) &= e(\widetilde{H}^+,v)+ e(\widetilde{L}^+,v), &e(v,V^+) &= e(v,\widetilde{H}^+) + e(v,\widetilde{L}^+),\\
    e(V^-,v) &= e(\widetilde{H}^-,v)+ e(\widetilde{L}^-,v), &e(v,V^-) &= e(v,\widetilde{H}^-) + e(v,\widetilde{L}^-).
\end{align*}
By Lemma~\ref{lem:error-bound-dynamic-dicut}, with probability at least $1-\frac{\delta}{8}$, for all vertices $v\in \widetilde{H}$, all of the following inequalities hold: 
\begin{align*}
    e(\widetilde{H}^+,v)+ e(\widetilde{L}^+,v) \leq \widetilde{e}(V^+,v) \leq e(\widetilde{H}^+,v)+ e(\widetilde{L}^+,v) + \frac{\eps^{10}\delta^5\eta^{6}}{128\beta} m,\\
    e(v,\widetilde{H}^+) + e(v,\widetilde{L}^+) \leq \widetilde{e}(v,V^+) \leq e(v,\widetilde{H}^+) + e(v,\widetilde{L}^+) + \frac{\eps^{10}\delta^5\eta^{6}}{128\beta} m,\\
    e(\widetilde{H}^-,v)+ e(\widetilde{L}^-,v) \leq \widetilde{e}(V^-,v) \leq e(\widetilde{H}^-,v)+ e(\widetilde{L}^-,v) + \frac{\eps^{10}\delta^5\eta^{6}}{128\beta} m,\\
    e(v,\widetilde{H}^-) + e(v,\widetilde{L}^-) \leq \widetilde{e}(v,V^-) \leq e(v,\widetilde{H}^-) + e(v,\widetilde{L}^-) + \frac{\eps^{10}\delta^5\eta^{6}}{128\beta} m.
\end{align*}
Since $|F|=\frac{\beta}{\eps^{8}\delta^3\eta^4}$ and $\widetilde{H}$ is the set of endpoints of edges in $F$, we have that $|\widetilde{H}|\le \frac{2\beta}{\eps^{8}\delta^3\eta^4}$. Thus, it holds that $0 \le e(\widetilde{H}^+,v), e(v,\widetilde{H}^+), e(\widetilde{H}^-,v), e(v,\widetilde{H}^-)\le \frac{2\beta}{\eps^{8}\delta^3\eta^4}$. Therefore,
\begin{align}
      0\leq \widetilde{e}(V^+,v) -  e(\widetilde{L}^+,v)\leq \frac{2\beta}{\eps^{8}\delta^3\eta^4}+  \frac{\eps^{10}\delta^5\eta^{6}}{128\beta} m,\label{ineq:cm_plus_in_dynamic}\\ 
      0 \leq \widetilde{e}(v,V^+) - e(v,\widetilde{L}^+) \leq \frac{2\beta}{\eps^{8}\delta^3\eta^4} +  \frac{\eps^{10}\delta^5\eta^{6}}{128\beta} m,\label{ineq:cm_plus_out_dynamic}\\
      0 \leq \widetilde{e}(V^-,v) - e(\widetilde{L}^-,v) \leq \frac{2\beta}{\eps^{8}\delta^3\eta^4}+  \frac{\eps^{10}\delta^5\eta^{6}}{128\beta} m,\label{ineq:cm_minus_in_dynamic}\\
      0 \leq \widetilde{e}(v,V^-) - e(v,\widetilde{L}^-) \leq \frac{2\beta}{\eps^{8}\delta^3\eta^4} + \frac{\eps^{10}\delta^5\eta^{6}}{128\beta} m. \label{ineq:cm_minus_out_dynamic}
\end{align}

For each edge with both endpoints in $\widetilde{L}$,
let $\widetilde{X}_0,\widetilde{X}_1$, and $\widetilde{X}_2$ denote, respectively,  the numbers of edges with (i) $(Y_u,Y_v)=(+1,-1)$, (ii) $Y_u = Y_v$, and (iii) $(Y_u,Y_v)=(-1,+1)$. We have
\begin{align*}
    \widetilde{X}_0 = ~&e(\widetilde{L}^+,\widetilde{L}^-) = e(V^+,V^-) - e(\widetilde{H}^+,\widetilde{H}^-) - \sum_{v\in \widetilde{H}^+}e(v,\widetilde{L}^-)-\sum_{v\in \widetilde{H}^-}e(\widetilde{L}^+,v),\\
    \widetilde{X}_1  =~&e(\widetilde{L}^+,\widetilde{L}^+) + e(\widetilde{L}^-,\widetilde{L}^-)\\
    =~&e(V^+,V^+) + e(V^-,V^-) - e(\widetilde{H}^+,\widetilde{H}^+) - e(\widetilde{H}^-,\widetilde{H}^-) \\
    &- \sum_{v\in \widetilde{H}^+}(e(v,\widetilde{L}^+) + e(\widetilde{L}^+,v)) - \sum_{v\in \widetilde{H}^-}(e(v,\widetilde{L}^-) + e(\widetilde{L}^-,v)), \\
    \widetilde{X}_2 =~& e(\widetilde{L}^-,\widetilde{L}^+) = e(V^-,V^+) - e(\widetilde{H}^-,\widetilde{H}^+) -\sum_{v\in \widetilde{H}^+}e(\widetilde{L}^-,v)-\sum_{v\in \widetilde{H}^-}e(v,\widetilde{L}^+).
\end{align*}

Recall that
\begin{align*}
\overline{X}_0 &= e(V^+,V^-) - \sum_{v\in \widetilde{H}^+} \widetilde{e}(v,V^-)-\sum_{v\in \widetilde{H}^-} \widetilde{e}(V^+,v),\\
\overline{X}_1 &=e(V^+,V^+) + e(V^-,V^-) - \sum_{v\in \widetilde{H}^+}(\widetilde{e}(v,V^+)+\widetilde{e}(V^+,v)) - \sum_{v\in \widetilde{H}^-}(\widetilde{e}(v,V^-)+\widetilde{e}(V^-,v)),\\
\overline{X}_2 &= e(V^-,V^+) - \sum_{v\in \widetilde{H}^+} \widetilde{e}(V^-,v)-\sum_{v\in \widetilde{H}^-} \widetilde{e}(v,V^+).
\end{align*}
Therefore,
\begin{align*}
    \overline{X}_0 - \widetilde{X}_0 &= e(\widetilde{H}^+,\widetilde{H}^-) + \sum_{v\in \widetilde{H}^+}(e(v,\widetilde{L}^-) - \widetilde{e}(v,V^-)) + \sum_{v\in \widetilde{H}^-} (e(\widetilde{L}^+,v) - \widetilde{e}(V^+,v)),\\
    \overline{X}_1 - \widetilde{X}_1 &= e(\widetilde{H}^+,\widetilde{H}^+) + e(\widetilde{H}^-,\widetilde{H}^-) + \sum_{v\in \widetilde{H}^+} \left[(e(v,\widetilde{L}^+) - \widetilde{e}(v,V^+)) +  (e(\widetilde{L}^+,v) - \widetilde{e}(V^+,v))\right]\\
     &\quad+ \sum_{v\in \widetilde{H}^-} \left[(e(v,\widetilde{L}^-) - \widetilde{e}(v,V^-)) +  (e(\widetilde{L}^-,v) - \widetilde{e}(V^-,v))\right],\\
    \overline{X}_2 - \widetilde{X}_2 &=e(\widetilde{H}^-,\widetilde{H}^+) + \sum_{v\in \widetilde{H}^+}(e(\widetilde{L}^-,v) - \widetilde{e}(V^-,v)) + \sum_{v\in \widetilde{H}^-} (e(v,\widetilde{L}^+) - \widetilde{e}(v,V^+)).
\end{align*}
Since $|\widetilde{H}|\le \frac{2\beta}{\eps^{8}\delta^3\eta^4}$, we have $|\widetilde{H}^+|, |\widetilde{H}^-|\le \frac{2\beta}{\eps^{8}\delta^3\eta^4}$, and $e(\widetilde{H}^+,\widetilde{H}^-),e(\widetilde{H}^-,\widetilde{H}^+),e(\widetilde{H}^+,\widetilde{H}^+) + e(\widetilde{H}^-,\widetilde{H}^-)\le \frac{4\beta^2}{\eps^{16}\delta^6 \eta^8}$.
Moreover, combining inequalities (\ref{ineq:cm_plus_in_dynamic})--(\ref{ineq:cm_minus_out_dynamic}), we have
\begin{align*}
    \widetilde{X}_0-\left(\frac{8\beta^2}{\eps^{16}\delta^6\eta^8}+ \frac{\eps^{2}\delta^2 \eta^2 m}{32} \right)\le \overline{X}_0  \le \widetilde{X}_0+\frac{4\beta^2}{\eps^{16}\delta^6 \eta^8},\\
    \widetilde{X}_1-\left(\frac{16\beta^2}{\eps^{16}\delta^6\eta^8}+ \frac{\eps^{2}\delta^2 \eta^2 m}{16} \right) \le \overline{X}_1  \le \widetilde{X}_1+\frac{4\beta^2}{\eps^{16}\delta^6 \eta^8},\\
    \widetilde{X}_2-\left(\frac{8\beta^2}{\eps^{16}\delta^6\eta^8}+\frac{\eps^{2}\delta^2 \eta^2 m}{32} \right) \le \overline{X}_2   \le \widetilde{X}_2+\frac{4\beta^2}{\eps^{16}\delta^6 \eta^8}.
\end{align*}

Let $\widetilde{k}_{\widetilde{L}} := \frac{1}{4\eps^2}[(\frac{1}{2}+\eps)^2 \widetilde{X}_0 - (\frac{1}{2}+\eps)(\frac{1}{2}-\eps)\widetilde{X}_1 + (\frac{1}{2}-\eps)^2 \widetilde{X}_2]$. 
Now we are ready to show that $\overline{k}_{\widetilde{L}}$ is close to $\widetilde{k}_{\widetilde{L}}$ up to an additive error of $\poly(\eta)\cdot m$ with probability at least $1-\frac{\delta}{8}$.
We have
\begin{align*}
    \overline{k}_{\widetilde{L}} =~&\frac{1}{4\eps^2}\left[\left(\frac{1}{2}+\eps\right)^2 \overline{X}_0 - \left(\frac{1}{2}+\eps\right)\left(\frac{1}{2}-\eps\right)\overline{X}_1 + \left(\frac{1}{2}-\eps\right)^2 \overline{X}_2\right]\\
    \ge~&\frac{1}{4\eps^2}\left[\left(\frac{1}{2}+\eps\right)^2 \left(\widetilde{X}_0-\left(\frac{8\beta^2}{\eps^{16}\delta^6\eta^8}+\frac{ \eps^{2}\delta^2\eta^2 m}{32} \right)\right) - \left(\frac{1}{2}+\eps\right)\left(\frac{1}{2}-\eps\right)\left(\widetilde{X}_1+\frac{4\beta^2}{\eps^{16}\delta^6\eta^8}\right) \right.\\
    &+ \left.\left(\frac{1}{2}-\eps\right)^2 \left(\widetilde{X}_2-\left(\frac{8\beta^2}{\eps^{16}\delta^6\eta^8}+\frac{\eps^{2}\delta^2\eta^2 m}{32} \right)\right)\right]\\
    =~&\widetilde{k}_{\widetilde{L}} - \frac{1}{4\eps^2}\left[\left(\frac{1}{2}+\eps\right)^2 \left(\frac{8\beta^2}{\eps^{16}\delta^6\eta^8}+\frac{\eps^2\delta^2\eta^2 m}{32} \right) + \left(\frac{1}{2}+\eps\right)\left(\frac{1}{2}-\eps\right)\frac{4\beta^2}{\eps^{16}\delta^6\eta^8} \right.\\
    &+ \left.\left(\frac{1}{2}-\eps\right)^2 \left(\frac{8\beta^2}{\eps^{16}\delta^6\eta^8}+\frac{\eps^2\delta^2\eta^2 m}{32} \right)\right]\\
    =~&\widetilde{k}_{\widetilde{L}} - \left(\frac{5\beta^{2}}{4\eps^{18}\delta^{6}\eta^8} + \frac{3\beta^{2}}{\eps^{16}\delta^{6}\eta^8} + \frac{\delta^2 \eta^2 m}{256} + \frac{\eps^{2}\delta^2 \eta^2m}{64}\right)\\
    \ge~&\widetilde{k}_{\widetilde{L}} - \frac{\eta m}{32},
\end{align*}
where the last step follows since $m = \Omega(\eps^{-26}\delta^{-10}\eta^{-14})$. On the other hand,
\begin{align*}
\overline{k}_{\widetilde{L}} =~&\frac{1}{4\eps^2}\left[\left(\frac{1}{2}+\eps\right)^2 \overline{X}_0 - \left(\frac{1}{2}+\eps\right)\left(\frac{1}{2}-\eps\right)\overline{X}_1 + \left(\frac{1}{2}-\eps\right)^2 \overline{X}_2\right]\\
\le~&\frac{1}{4\eps^2}\left[\left(\frac{1}{2}+\eps\right)^2 \left(\widetilde{X}_0+\frac{4\beta^2}{\eps^{16}\delta^6 \eta^8}\right) - \left(\frac{1}{2}+\eps\right)\left(\frac{1}{2}-\eps\right)\left(\widetilde{X}_1-\left(\frac{16\beta^2}{\eps^{16}\delta^6\eta^8}+\frac{\eps^{2}\delta^2\eta^2 m}{16} \right)\right) \right.\\
&+ \left.\left(\frac{1}{2}-\eps\right)^2 \left(\widetilde{X}_2+\frac{4\beta^2}{\eps^{16}\delta^6 \eta^8}\right)\right]\\
=~&\widetilde{k}_{\widetilde{L}} +\frac{1}{4\eps^2}\left[\left(\frac{1}{2}+\eps\right)^2 \frac{4\beta^2}{\eps^{16}\delta^6\eta^8} + \left(\frac{1}{2}+\eps\right)\left(\frac{1}{2}-\eps\right)\left(\frac{16\beta^2}{\eps^{16}\delta^6\eta^8}+\frac{\eps^{2}\delta^2\eta^2 m}{16} \right) + \left(\frac{1}{2}-\eps\right)^2 \frac{4\beta^2}{\eps^{16}\delta^6\eta^8}\right]\\
=~&\widetilde{k}_{\widetilde{L}} + \left(\frac{3\beta^{2}}{2\eps^{18}\delta^{6}\eta^8} - \frac{2\beta^{2}}{\eps^{16}\delta^{6}\eta^8}+\frac{ \delta^{2}\eta^2m}{256}-\frac{\eps^{2}\delta^2\eta^2m}{64}\right)\\
\le~&\widetilde{k}_{\widetilde{L}} + \frac{\eta m}{64},
\end{align*}
where the last step follows since $m = \Omega(\eps^{-26}\delta^{-10}\eta^{-14})$.

For each $v\in \widetilde{H}$, let $\widetilde{k}_{S^*,v} := (\frac{1}{2}+\frac{1}{4\eps})\cdot e(\widetilde{L}^+,v) + (\frac{1}{2}-\frac{1}{4\eps})\cdot e(\widetilde{L}^-,v)$ and $\widetilde{k}_{v,T^*} := (\frac{1}{2}-\frac{1}{4\eps})\cdot e(v,\widetilde{L}^+) + (\frac{1}{2}+\frac{1}{4\eps})\cdot e(v,\widetilde{L}^-)$.
We have
\begin{align*}
    \overline{k}_{S^*,v} &= \left(\frac{1}{2}+\frac{1}{4\eps}\right)\cdot \widetilde{e}(V^+,v) + \left(\frac{1}{2}-\frac{1}{4\eps}\right)\cdot \widetilde{e}(V^-,v)\\
    &\ge \left(\frac{1}{2}+\frac{1}{4\eps}\right)\cdot e(\widetilde{L}^+,v) + \left(\frac{1}{2}-\frac{1}{4\eps}\right) \left( e(\widetilde{L}^-,v) + \frac{2\beta}{\eps^{8}\delta^3\eta^4}+  \frac{\eps^{10}\delta^5\eta^{6} m}{128\beta} \right)\\
    &= \widetilde{k}_{S^*,v} - \left(\frac{1}{4\eps}-\frac{1}{2}\right) \left(\frac{2\beta}{\eps^{8}\delta^3\eta^4}+  \frac{\eps^{10}\delta^5\eta^{6} m}{128\beta} \right)\\
    &= \widetilde{k}_{S^*,v} -\left(\frac{\beta}{2\eps^{9}\delta^3\eta^4} - \frac{\beta}{\eps^{8}\delta^3\eta^4} + \frac{\eps^{9}\delta^5\eta^{6} m}{512\beta} - \frac{\eps^{10}\delta^5\eta^{6} m}{256\beta}\right)\\
    &\ge \widetilde{k}_{S^*,v} - \frac{\eps^8 \delta^3 \eta^5 m}{128\beta},
\end{align*}
where the last step follows since $m = \Omega(\eps^{-26}\delta^{-10}\eta^{-14})$. On the other hand,
\begin{align*}
    \overline{k}_{S^*,v} &= \left(\frac{1}{2}+\frac{1}{4\eps}\right)\cdot \widetilde{e}(V^+,v) + \left(\frac{1}{2}-\frac{1}{4\eps}\right)\cdot \widetilde{e}(V^-,v)\\
    &\le \left(\frac{1}{2}+\frac{1}{4\eps}\right)\left( e(\widetilde{L}^+,v) + \frac{2\beta}{\eps^{8}\delta^3\eta^4}+  \frac{\eps^{10}\delta^5\eta^{6} m}{128\beta} \right) + \left(\frac{1}{2}-\frac{1}{4\eps}\right)\cdot e(\widetilde{L}^-,v)\\
    &= \widetilde{k}_{S^*,v} + \left(\frac{1}{2}+\frac{1}{4\eps}\right)\left(\frac{2\beta}{\eps^{8}\delta^3\eta^4}+  \frac{\eps^{10}\delta^5\eta^{6} m}{128\beta} \right)\\
    &= \widetilde{k}_{S^*,v} +\left(\frac{\beta}{2\eps^{9}\delta^3\eta^4} + \frac{\beta}{\eps^{8}\delta^3\eta^4} + \frac{\eps^{9}\delta^5 \eta^6 m}{512\beta} + \frac{\eps^{10}\delta^5\eta^6 m}{256\beta}\right)\\
    &\le \widetilde{k}_{S^*,v} +\frac{\eps^8 \delta^3 \eta^5 m}{128\beta},
\end{align*}
where the last step follows since $m = \Omega(\eps^{-26}\delta^{-10}\eta^{-14})$. Therefore, $|\overline{k}_{S^*,v} - \widetilde{k}_{S^*,v}| \le \frac{\eps^8 \delta^3 \eta^5 m}{128\beta}$.
Similarly, $|\overline{k}_{v,T^*} - \widetilde{k}_{v,T^*}| \le \frac{\eps^8 \delta^3 \eta^5 m}{128\beta}$.

Let $\widetilde{\mathrm{ALG}}:= \widetilde{k}_{\widetilde{L}} + \sum_{v\in \widetilde{H}}\max\{\widetilde{k}_{S^*,v}, \widetilde{k}_{v,T^*}\}  - \frac{\eta m}{32}$ denote the output of the algorithm obtained by running \cref{alg:general-offline} with $L$ and $H$ replaced by $\widetilde L$ and $\widetilde H$, respectively.
Analogous to the proof of \cref{thm:general-offline}, we can show that $(1-\frac{\eta}{2})\cdot \opt\le \widetilde{\mathrm{ALG}} \le \opt$ with probability at least $1-\frac{3\delta}{4}$. The argument relies primarily on two properties of the partition $(\widetilde{L},\widetilde{H})$: (i) every vertex in $\widetilde{L}$ has degree at most $\theta m$ for $\theta:=\frac{\eps^4 \delta \eta^2}{4096}$ (since $\widetilde{L} \subseteq L$), and (ii) $|\widetilde{H}|\leq \frac{2\beta}{\eps^8 \delta^3 \eta^4}$.

Let $\mathrm{ALG}$ denote the output of \cref{alg:dynamic-max-dicut}. By applying the union bound on Lemma~\ref{lem:ell-0-sampling-dicut}, Lemma~\ref{lem:high-deg-reservoir-dynamic-dicut} and Lemma~\ref{lem:error-bound-dynamic-dicut}, with probability at least $1-\frac{\delta}{16}-\frac{\delta}{16}-\frac{\delta}{8}-\frac{3\delta}{4} =1-\delta$, we have
\begin{align*}
    \mathrm{ALG} &= \overline{k}_{\widetilde{L}} + \sum_{v\in \widetilde{H}}\max\{\overline{k}_{S^*,v}, \overline{k}_{v,T^*}\} -  \frac{\eta m}{16}\\
    &\le \widetilde{k}_{\widetilde{L}} + \frac{\eta m}{64} + \sum_{v\in \widetilde{H}}\max\{\widetilde{k}_{S^*,v}, \widetilde{k}_{v,T^*}\} + |\widetilde{H}|\cdot \frac{\eps^8 \delta^3 \eta^5 m}{128\beta} - \frac{\eta m}{16}\\
    &\le \widetilde{\mathrm{ALG}} + \frac{\eta m}{32}+\frac{\eta m}{64}+\frac{\eta m}{64} - \frac{\eta m}{16} &&\rhd |\widetilde{H}|\le \frac{2\beta}{\eps^{8}\delta^3\eta^4}\\
    &\le \opt,
\end{align*}
and
\begin{align*}
    \mathrm{ALG} &= \overline{k}_{\widetilde{L}} + \sum_{v\in \widetilde{H}}\max\{\overline{k}_{S^*,v}, \overline{k}_{v,T^*}\} - \frac{\eta m}{16}\\
    &\ge \widetilde{k}_{\widetilde{L}} - \frac{\eta m}{32} + \sum_{v\in \widetilde{H}}\max\{\widetilde{k}_{S^*,v}, \widetilde{k}_{v,T^*}\} - |\widetilde{H}|\cdot \frac{\eps^8 \delta^3 \eta^5 m}{128\beta} - \frac{\eta m}{16} \\
    &\ge \widetilde{\mathrm{ALG}} - \frac{\eta m}{64} - \frac{\eta m}{16} &&\rhd |\widetilde{H}|\le \frac{2\beta}{\eps^{8}\delta^3\eta^4}\\
    &\ge \widetilde{\mathrm{ALG}} - \frac{\eta m}{8} \\
    &\ge \qty(1-\frac{\eta}{2})\cdot \opt - \frac{\eta}{2} \cdot\opt && \rhd m \le 4\opt \\
    &= (1-\eta)\cdot \opt.
\end{align*}
\end{proof}

\section{Proof of \cref{thm:max-2and-dynamic}}
\label{sec:proof-thm:max-2and-dynamic}
\begin{lemma}
\label{lem:ell-0-sampling-2and}
Let $\eps \in (0,\frac{1}{2}]$ and $\delta,\eta \in (0,1)$. With probability at least $1-\frac{\delta}{16}$, Line~\ref{line:ell-0-sampling-2and} of \cref{alg:max-2and-dynamic} returns a set $F$ which contains $\frac{\beta}{\eps^8\delta^3\eta^4}$ different constraints. 
\end{lemma}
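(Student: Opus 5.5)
The argument mirrors the proof of \cref{lem:ell-0-sampling-dicut}, with constraints playing the role of edges. There are two sources of failure: some $\ell_0$-sampler returns FAIL, or two of the successful samplers return the same constraint. I will bound each by $\delta/32$ and combine them with a union bound.

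\textbf{Sampler failures.} Line~\ref{line:ell-0-sampling-2and} maintains $\frac{\beta}{\eps^8\delta^3\eta^4}$ independent $\ell_0$-samplers. Each runs on the final active constraint vector, which is a $0/1$ vector over the universe of possible constraints, with support equal to the $m$ active constraints. Each sampler has failure probability $\frac{\eps^8\delta^4\eta^4}{32\beta}$. By \cref{thm:l0-sampler} and a union bound, all samplers succeed except with probability at most $\frac{\beta}{\eps^8\delta^3\eta^4}\cdot\frac{\eps^8\delta^4\eta^4}{32\beta}=\frac{\delta}{32}$.

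\textbf{Collisions.} Condition on every sampler succeeding. By \cref{def:l0-sampler}, together with the independence of the samplers, each output is an independent uniform draw from the $m$ active constraints. For any fixed pair of samplers, the probability that they return the same constraint is therefore exactly $1/m$. A union bound over the $\binom{\beta/(\eps^8\delta^3\eta^4)}{2}$ pairs gives a collision probability of at most $\frac{\beta^2}{\eps^{16}\delta^6\eta^8 m}$.

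\textbf{Using the size of $m$.} As in the proofs of \cref{thm:max-2and-insertion} and \cref{thm:dynamic-max-dicut}, we may assume $m=\Omega(\eps^{-26}\delta^{-10}\eta^{-14})$; otherwise the whole instance can be stored in small space. With the hidden constant chosen large enough in terms of $\beta$, this gives $\frac{\beta^2}{\eps^{16}\delta^6\eta^8 m}\le\frac{\delta}{32}$. This is the only step that needs care: one has to confirm that the assumed lower bound on $m$ dominates $\beta^2\eps^{-16}\delta^{-7}\eta^{-8}$. It does, since $26\ge16$, $10\ge7$ and $14\ge8$, with $\beta$ absorbed into the constant.

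Combining the two bounds, $F$ consists of $\frac{\beta}{\eps^8\delta^3\eta^4}$ distinct constraints with probability at least $1-\frac{\delta}{32}-\frac{\delta}{32}=1-\frac{\delta}{16}$.
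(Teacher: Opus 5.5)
Your proposal is correct and follows the paper's proof essentially step for step. You take a union bound over the $\frac{\beta}{\eps^8\delta^3\eta^4}$ samplers for failure probability $\frac{\delta}{32}$, a union bound over pairs for collision probability at most $\frac{\beta^2}{\eps^{16}\delta^6\eta^8 m}$, and then use the standing assumption $m=\Omega(\eps^{-26}\delta^{-10}\eta^{-14})$ to make the latter at most $\frac{\delta}{32}$; your explicit check that the exponents in that lower bound on $m$ dominate is a detail the paper leaves implicit.
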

\begin{proof}
    The error is introduced by the $\ell_0$-samplers and the probability that sampling $\frac{\beta}{\eps^8\delta^3\eta^4}$ uniform random constraints in parallel does not yield $\frac{\beta}{\eps^8\delta^3\eta^4}$ different constraints.

    Since we maintain $\frac{\beta}{\eps^8\delta^3\eta^4}$ $\ell_0$-samplers, by \cref{thm:l0-sampler} and applying the union bound, the failure probability introduced by these $\ell_0$-samplers is at most $\frac{\beta}{\eps^8\delta^3\eta^4} \cdot \frac{\eps^8 \delta^4 \eta^4}{32\beta}=\frac{\delta}{32}$.

    Suppose that all the $\ell_0$-samplers succeed, i.e., $|F| = \frac{\beta}{\eps^8\delta^3\eta^4}$.
    Now we bound the probability that there exist two identical constraints in $F$. Consider two fixed constraints $C_1,C_2 \in F$. Then $\Pr[C_1 = C_2] = \frac{1}{m}$. By union bound, the probability that there exists at least one pair of identical constraints is at most $\binom{\beta / \eps^8\delta^3\eta^4}{2}\cdot \frac{1}{m} \leq \frac{\beta^2}{\eps^{16}\delta^6 \eta^8 m}$.

    Therefore, the probability that $F$ contains $\frac{\beta}{\eps^8\delta^3\eta^4}$ different constraints is at least $1-\frac{\delta}{32} - \frac{\beta^2}{\eps^{16}\delta^6 \eta^8 m} \geq 1-\frac{\delta}{32}-\frac{\delta}{32}= 1-\frac{\delta}{16}$ (since we can assume $m=\Omega(\eps^{-26}\delta^{-10}\eta^{-14})$). 
\end{proof}

\begin{lemma}
\label{lem:high-deg-reservoir-max-2and-dynamic}
Let $\delta \in (0,1)$.
Then
 $\widetilde{H}$ contains all high-degree variables $H$ in $\Psi$ (i.e., $H \subseteq \widetilde{H}$) with probability at least $1-\frac{\delta}{16}$.
\end{lemma}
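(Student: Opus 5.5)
The plan mirrors the proof of \cref{lem:high-deg-reservoir-dynamic-dicut} almost verbatim, with constraints replacing directed edges and variables replacing vertices. Set $\theta:=\frac{\eps^{4}\delta\eta^2}{4096}$ and fix a variable $i\in H$, so $\deg(i)\ge\theta m$. The set $F$ consists of the outputs of the $\frac{\beta}{\eps^8\delta^3\eta^4}$ independent $\ell_0$-samplers. Conditioned on all samplers succeeding, each output is a uniform sample, with replacement, from the $m$ active constraints.

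For each constraint $C_j$ containing $x_i$, let $X_j$ be the indicator that $C_j\in F$. Then
\[
\E[X_j]=1-\left(1-\tfrac1m\right)^{\beta/\eps^8\delta^3\eta^4}.
\]
The upper bound is $\E[X_j]\le\frac{\beta}{\eps^8\delta^3\eta^4 m}$. For the lower bound, use $1-e^{-x}\ge x-x^2/2$ together with the standing assumption $m=\Omega(\eps^{-26}\delta^{-10}\eta^{-14})$, which gives $\E[X_j]\ge\frac{\beta}{5\eps^8\delta^3\eta^4 m}$.

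Let $X=\sum_{j\in[\deg(i)]}X_j$. Then
\[
\E[X]\in\left[\frac{\beta\deg(i)}{5\eps^8\delta^3\eta^4m},\ \frac{\beta\deg(i)}{\eps^8\delta^3\eta^4m}\right].
\]
The indicators are negatively correlated, since a fixed number of draws is being competed for. Hence
\[
\Var[X]\le\sum_j\E[X_j]\le\frac{\beta\deg(i)}{\eps^8\delta^3\eta^4m}.
\]
Note that $i\notin\widetilde H$ iff $X=0$. Chebyshev's inequality with deviation $\frac{\beta\deg(i)}{5\eps^8\delta^3\eta^4m}$ then gives
\[
\Pr[i\notin\widetilde H]\le\frac{25\eps^8\delta^3\eta^4 m}{\beta\deg(i)}\le\frac{25\cdot4096\,\eps^4\delta^2\eta^2}{\beta}.
\]

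Finally, every constraint contains two variables, so $\sum_i\deg(i)=2m$ and $|H|\le 2/\theta=\frac{8192}{\eps^4\delta\eta^2}$. A union bound over $H$ gives failure probability at most $\frac{8192\cdot 25\cdot 4096\,\delta}{\beta}=\frac{\delta}{16}$, using $\beta=2^{29}\cdot 25$.

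The main subtle point is the conditioning on sampler success. Lemma~\ref{lem:ell-0-sampling-2and} accounts for that failure separately, so here I treat the samplers as exact uniform samplers; the lemma's probability bound refers to this sampling randomness alone. I also need the negative-correlation claim for sampling with replacement, which holds because the events $\{C_j\in F\}$ for distinct $j$ are negatively associated under multinomial allocation.
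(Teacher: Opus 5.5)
Your proposal is correct and follows the paper's proof essentially step for step. It uses the same $\theta=\frac{\eps^4\delta\eta^2}{4096}$, the same two-sided bounds on $\E[X_j]$ via $1-e^{-x}\ge x-x^2/2$, Chebyshev with deviation $\frac{\beta\deg(i)}{5\eps^8\delta^3\eta^4 m}$, and the union bound over $|H|\le 2/\theta$ with $\beta=2^{29}\cdot 25$. Your explicit handling of the conditioning on sampler success and of the negative association of the occupancy indicators makes precise two points the paper only asserts.
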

\begin{proof}
    Let $\theta:=\frac{\eps^{4}\delta\eta^2}{4096}$. Consider any $i \in H$. Then, by definition of $H$, we have $\deg(i) \geq \theta m$. 
    For each constraint $C_j$ involving variable $x_i$, define an indicator random variable 
    \begin{equation*}
            X_{j}= \begin{cases}1 & \text { if } C_j\in F, \\ 0 & \text { otherwise. }\end{cases}
    \end{equation*}
    So, $\E[X_j] = \Pr[X_j=1] = 1-(1-\frac{1}{m})^{\beta/\eps^8\delta^3\eta^4}\le \frac{\beta}{\eps^{8}\delta^3\eta^4 m}$. 
    On the other hand, $\E[X_j] = 1-(1-~\frac{1}{m})^{\beta/\eps^8\delta^3\eta^4}\ge 1-\exp(-\frac{\beta}{\eps^{8}\delta^3\eta^4 m})\ge \frac{\beta}{\eps^{8}\delta^3\eta^4 m} - \frac{\beta^2}{2\eps^{16}\delta^6\eta^8 m^2}\ge \frac{\beta}{5\eps^{8}\delta^3\eta^4 m}$ since $m=\Omega(\eps^{-26}\delta^{-10}\eta^{-14})$.
    We define $X:= \sum_{j\in [\deg(i)]} X_j$ to denote the number of constraints involving variable $x_i$ that are sampled by the end of the stream. Then, $\E[X] = \sum_{j\in [\deg(i)]} \E[X_j] \in [\frac{\beta\deg(i)}{5\eps^{8}\delta^3 \eta^4m},\frac{\beta\deg(i)}{\eps^{8}\delta^3 \eta^4m}]$. 

    Next, we bound $\Var[X]$. For any $j\in [\deg(i)]$, $\Var[X_j] = \E[X^2_j] - (\E[X_j])^2 = \E[X_j] - (\E[X_j])^2 \le \frac{\beta}{\eps^{8} \delta^3\eta^4 m}$.
    Since $(X_j)_{j\in [\deg(i)]}$ are negatively correlated, $\Var[X] \le \sum_{j\in [\deg(i)]}\Var[X_j] \le \deg(i)\cdot \frac{\beta}{\eps^{8} \delta^3\eta^4 m} = \frac{\beta\deg(i)}{\eps^{8} \delta^3\eta^4 m}$.
    Therefore, by Chebyshev's inequality, 
    \begin{align*}
        \Pr[i\notin \widetilde{H}] = \Pr[X\leq 0] 
        &\le \Pr\left[X \leq \E[X]- \frac{\beta\deg(i)}{5\eps^{8} \delta^3\eta^4 m}\right]\\
        &\leq \frac{\Var[X]\cdot 25\eps^{16} \delta^6\eta^8 m^2}{\beta^2\deg^2(i)}\\
        &\leq \frac{25\eps^{8} \delta^3\eta^4 m}{\beta \deg(i)} &&\rhd \Var[X] \leq \frac{\beta\deg(i)}{\eps^{8} \delta^3 \eta^4m}\\
        &\leq \frac{25\cdot 4096\eps^{4}\delta^2\eta^2}{\beta}. &&\rhd \deg(i)\geq \theta m
    \end{align*}
    Recall that $|H| \le \frac{2m}{\theta m}=\frac{2}{\theta} = \frac{8192}{\eps^{4} \delta\eta^2}$.
    By union bound, with probability at least $1-\frac{8192}{\eps^{4} \delta\eta^2}\cdot \frac{25\cdot 4096\eps^{4}\delta^2\eta^2}{\beta}  = 1-\frac{\delta}{16}$ (since $\beta = 2^{29}\cdot 25$), $\widetilde{H}$ contains all high-degree variables $H$ in $\Psi$.
\end{proof}

For each $i\in [n]$, define 
\begin{enumerate}
    \item $X_{i,+}^{\pos}:= |\{((i,j),(b_i,b_j))\in \Psi: j\in [n], b_i=1, b_j Y_j = 1\}|$ to be the number of constraints involving $x_i$ such that the literal of $x_i$ appears positively (i.e., $b_i=1$) and the other literal is satisfied by the prediction $\mathbf{Y}$ (i.e., $b_j Y_j=1$). 
    The other quantities are defined analogously:
    \item $X_{i,-}^{\pos}:= |\{((i,j),(b_i,b_j))\in \Psi: j\in [n], b_i=1, b_jY_j = -1\}|$.
    \item $X_{i,+}^{\neg}:= |\{((i,j),(b_i,b_j))\in \Psi: j\in [n], b_i=-1, b_jY_j = 1\}|$.
    \item $X_{i,-}^{\neg}:= |\{((i,j),(b_i,b_j))\in \Psi: j\in [n], b_i=-1, b_jY_j = -1\}|$.
\end{enumerate}

\begin{lemma}
\label{lem:error-bound-insertion-max-2and-dynamic}
    Let $\eps \in (0,\frac{1}{2}]$ and $\delta,\eta \in (0,1)$.
    For all $i\in \widetilde{H}$, we have
    \begin{enumerate}
        \item[(i)] $\widetilde{X}_{i,+}^{\pos} \geq X_{i,+}^{\pos},\  \widetilde{X}_{i,-}^{\pos} \geq X_{i,-}^{\pos},\  \widetilde{X}_{i,+}^{\neg} \geq X_{i,+}^{\neg},\  \widetilde{X}_{i,-}^{\neg} \geq X_{i,-}^{\neg}$;
        \item[(ii)] With probability at least $1-\frac{\delta}{8}$, all of the following inequalities hold: $\widetilde{X}_{i,+}^{\pos} \leq X_{i,+}^{\pos} + \frac{\eps^{10}\delta^5\eta^6}{128\beta} m,\  \widetilde{X}_{i,-}^{\pos} \leq X_{i,-}^{\pos}+ \frac{\eps^{10}\delta^5\eta^6}{128\beta} m,\  \widetilde{X}_{i,+}^{\neg} \leq X_{i,+}^{\neg}+ \frac{\eps^{10}\delta^5\eta^6}{128\beta} m,\  \widetilde{X}_{i,-}^{\neg} \leq X_{i,-}^{\neg}+\frac{\eps^{10}\delta^5\eta^6}{128\beta} m $. 
    \end{enumerate} 
\end{lemma}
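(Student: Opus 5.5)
The plan is to mirror the proof of \cref{lem:error-bound-insertion-max-2and}, now using the linearity of the CountMin sketch under dynamic updates. The first step is to fix the oracle labels $\mathbf{Y}$. Once $\mathbf{Y}$ is fixed, each of the four sketches $\CMpos[+]$, $\CMneg[+]$, $\CMpos[-]$, $\CMneg[-]$ receives, for every stream item $((j_1,j_2),(b_1,b_2))$, an update of sign $\Delta_{(\mathbf{j},\mathbf{b})}$ at a coordinate determined only by $(j_1,j_2,b_1,b_2,\mathbf{Y})$.

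Because the CountMin table is a linear function of the updates, the final table coincides with the table that would result from inserting the final frequency vector directly. For example, coordinate $i$ of $\CMpos[+]$ ends with value exactly $X_{i,+}^{\pos}$ computed on the final instance. These final frequencies are nonnegative, since every surviving constraint contributes $+1$ and deleted constraints cancel. Each constraint contributes to exactly two coordinates summed over the four sketches, so each sketch has total mass at most $2m$.

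One point needs care: the mass bound is $2m$ rather than $m$. I will check that \cref{thm:CM-v2} is used with total mass equal to each sketch's own $\ell_1$ mass. Each constraint's update at coordinate $j_1$ lands in exactly one of the four sketches, and likewise for $j_2$, so each sketch has mass at most $2m$. The additive error is then at most $\frac{\eps^{10}\delta^5\eta^6}{128\beta}\cdot 2m$. I would either absorb the factor $2$ by noting the insertion-only lemma has the same slack, or match the paper's convention of stating the bound with $m$. This is the only spot I expect to require care.

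Part (i) then follows from \cref{thm:CM-v2}, since CountMin never underestimates a nonnegative frequency vector, and this holds for all $i\in[n]$ and in particular for $i\in\widetilde{H}$.

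For part (ii), \cref{thm:CM-v2} gives for each fixed $i$ and each of the four sketches a failure probability of at most $\frac{\eps^8\delta^4\eta^4}{64\beta}$. The sketch hash functions are independent of the $\ell_0$-samplers that determine $\widetilde{H}$, so I can condition on $\widetilde{H}$ and apply a union bound over its elements. Using $|\widetilde{H}|\le 2|F| = \frac{2\beta}{\eps^8\delta^3\eta^4}$ and the four sketches, the total failure probability is at most $4\cdot\frac{2\beta}{\eps^8\delta^3\eta^4}\cdot\frac{\eps^8\delta^4\eta^4}{64\beta} = \frac{\delta}{8}$, which gives part (ii).
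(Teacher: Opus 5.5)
Your proposal follows the paper's proof. Part (i) is the no-underestimation property of \cref{thm:CM-v2}, and part (ii) is the per-coordinate CountMin guarantee combined with a union bound over the four sketches and over $\widetilde{H}$ with $|\widetilde{H}|\le \frac{2\beta}{\eps^8\delta^3\eta^4}$, which gives the same arithmetic yielding $\frac{\delta}{8}$. Your additional remarks make explicit what the paper leaves implicit, and they are correct:
\begin{itemize}
\item the sketches are linear, so they handle deletions correctly;
\item the final frequency vectors are nonnegative;
\item the hash functions are independent of the $\ell_0$-samplers, which is what justifies a union bound over the random set $\widetilde{H}$.
\end{itemize}

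Your concern about the mass is correct, and it applies equally to the paper's own proof. A single sketch can receive both updates of one constraint. For example, with $b_1=b_2=1$ and $Y_{j_1}=Y_{j_2}=1$, both updates go to $\CMpos[+]$. So a sketch's $\ell_1$ mass can be $2m$, and \cref{thm:CM-v2} with the stated accuracy parameter only gives additive error $\frac{\eps^{10}\delta^5\eta^6}{128\beta}\cdot 2m$ at the stated failure probability.

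However, neither of your two proposed resolutions establishes the bound with $m$:
\begin{itemize}
\item \cref{lem:error-bound-insertion-max-2and} has the same slip, not spare slack, so it cannot absorb the factor.
\item ``Matching the paper's convention'' just asserts the claim.
\end{itemize}

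You should commit to an honest fix instead. Either state (ii) with $2m$, or run the four sketches with accuracy parameter $\frac{\eps^{10}\delta^5\eta^6}{256\beta}$, which changes the space only by a constant factor. In either case the factor $2$ is harmlessly absorbed in the proof of \cref{thm:max-2and-dynamic}. There, the bounds on $\overline{X}_0,\overline{X}_1,\overline{X}_2$, on $\overline{k}_{\widetilde{L}}$, and on $\overline{k}_{i,\widetilde{L}(+)}^{\pos},\overline{k}_{i,\widetilde{L}(+)}^{\neg}$ all have at least a factor-$2$ margin in their $m$-dependent terms.
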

\begin{proof}
    (i) By \cref{thm:CM-v2}, for all $i\in [n]$, we have $\widetilde{X}_{i,+}^{\pos} \geq X_{i,+}^{\pos},\  \widetilde{X}_{i,-}^{\pos} \geq X_{i,-}^{\pos},\  \widetilde{X}_{i,+}^{\neg} \geq X_{i,+}^{\neg}$, and $\widetilde{X}_{i,-}^{\neg} \geq X_{i,-}^{\neg}$.

    (ii) By \cref{thm:CM-v2}, for any $i\in [n]$, with probability at least $1-\frac{\eps^{8}\delta^4\eta^4}{64\beta}$, each of the following four inequalities holds: $\widetilde{X}_{i,+}^{\pos} \leq X_{i,+}^{\pos} + \frac{\eps^{10}\delta^5\eta^6}{128\beta} m,\  \widetilde{X}_{i,-}^{\pos} \leq X_{i,-}^{\pos}+ \frac{\eps^{10}\delta^5\eta^6}{128\beta} m,\  \widetilde{X}_{i,+}^{\neg} \leq X_{i,+}^{\neg}+ \frac{\eps^{10}\delta^5\eta^6}{128\beta} m$, and $\widetilde{X}_{i,-}^{\neg} \leq X_{i,-}^{\neg}+\frac{\eps^{10}\delta^5\eta^6}{128\beta} m $.
    Since $|\widetilde{H}|\le \frac{2\beta}{\eps^{8}\delta^3\eta^4}$, by applying the union bound we have that, with probability at least $1-\frac{\eps^{8}\delta^4\eta^4}{64\beta} \cdot \frac{2\beta}{\eps^{8}\delta^3\eta^4} \cdot 4 \ge 1-\frac{\delta}{8}$, for all $i\in \widetilde{H}$, all of the above four inequalities hold.
\end{proof}

\paragraph{Proof of \cref{thm:max-2and-dynamic}.}
Given the above lemmas, we are ready to prove \cref{thm:max-2and-dynamic}. 
\begin{proof}
We can assume that $m =\Omega(\eps^{-26}\delta^{-10}\eta^{-14})$, as otherwise we can store the whole instance in $O(\poly(1/\eps, 1/\delta, 1/\eta))$ space.

In \cref{alg:max-2and-dynamic}, we maintain four CountMin sketches with accuracy parameter $\frac{\eps^{10}\delta^5\eta^6}{128\beta}$ and failure probability $\frac{\eps^{8}\delta^4\eta^4}{64\beta}$, where $\beta=2^{29}\cdot 25$. By \cref{thm:CM-v2}, this takes $O(\frac{1}{\eps^{10}\delta^5\eta^6}\log\frac{1}{\eps^{8} \delta^4\eta^4}) = O( \frac{1}{\eps^{10}\delta^5\eta^6}\log\frac{1}{\eps \delta\eta} )$ words of space. 
    Also, we maintain $\frac{\beta}{\eps^8\delta^3\eta^4}$ $\ell_0$-samplers, each with failure probability $\frac{\eps^8 \delta^4 \eta^4}{32\beta}$. By \cref{thm:l0-sampler}, this takes $O(\frac{\log^2 n}{\eps^8\delta^3\eta^4}\log \frac{1}{\eps^{8}\delta^4\eta^4})=O(\frac{\log^2 n}{\eps^8\delta^3\eta^4}\log \frac{1}{\eps\delta\eta})$ bits of space. 
    Therefore, the total space complexity of \cref{alg:max-2and-dynamic} is $O( \frac{\log^2 n}{\eps^{10}\delta^5\eta^{6}}\log\frac{1}{\eps \delta\eta} )$ bits.

    Next, we analyze the approximation ratio.
Since $[n] = \widetilde{H}\cup \widetilde{L}$, we have
\begin{enumerate}
    \item $X_{i,+}^{\pos} = X_{i,\widetilde{H}(+)}^{\pos} + X_{i,\widetilde{L}(+)}^{\pos}$, where $X_{i,\widetilde{H}(+)}^{\pos} := |\{((i,j),(b_i,b_j))\in \Psi: j\in \widetilde{H}, b_i=1, b_j Y_j = 1\}|$ and $X_{i,\widetilde{L}(+)}^{\pos} := |\{((i,j),(b_i,b_j))\in \Psi: j\in \widetilde{L}, b_i=1, b_j Y_j = 1\}|$. 
    \item $X_{i,-}^{\pos} = X_{i,\widetilde{H}(-)}^{\pos} + X_{i,\widetilde{L}(-)}^{\pos}$, where $X_{i,\widetilde{H}(-)}^{\pos} := |\{((i,j),(b_i,b_j))\in \Psi: j\in \widetilde{H}, b_i=1, b_j Y_j = -1\}|$ and $X_{i,\widetilde{L}(-)}^{\pos} := |\{((i,j),(b_i,b_j))\in \Psi: j\in \widetilde{L}, b_i=1, b_j Y_j = -1\}|$.
    \item $X_{i,+}^{\neg} = X_{i,\widetilde{H}(+)}^{\neg} + X_{i,\widetilde{L}(+)}^{\neg}$, where $X_{i,\widetilde{H}(+)}^{\neg} := |\{((i,j),(b_i,b_j))\in \Psi: j\in \widetilde{H}, b_i=-1, b_j Y_j = 1\}|$ and $X_{i,\widetilde{L}(+)}^{\neg} := |\{((i,j),(b_i,b_j))\in \Psi: j\in \widetilde{L}, b_i=-1, b_j Y_j = 1\}|$.
    \item $X_{i,-}^{\neg} = X_{i,\widetilde{H}(-)}^{\neg} + X_{i,\widetilde{L}(-)}^{\neg}$, where $X_{i,\widetilde{H}(-)}^{\neg} := |\{((i,j),(b_i,b_j))\in \Psi: j\in \widetilde{H}, b_i=-1, b_j Y_j = -1\}|$ and $X_{i,\widetilde{L}(-)}^{\neg} := |\{((i,j),(b_i,b_j))\in \Psi: j\in \widetilde{L}, b_i=-1, b_j Y_j = -1\}|$.
    
\end{enumerate}
By Lemma~\ref{lem:error-bound-insertion-max-2and-dynamic}, with probability at least $1-\frac{\delta}{8}$, for all $i\in \widetilde{H}$, all of the following inequalities hold: 
\begin{align*}
    X_{i,\widetilde{H}(+)}^{\pos} + X_{i,\widetilde{L}(+)}^{\pos} \leq \widetilde{X}_{i,+}^{\pos} \leq X_{i,\widetilde{H}(+)}^{\pos} + X_{i,\widetilde{L}(+)}^{\pos} + \frac{\eps^{10}\delta^5\eta^6}{128\beta} m,\\
    X_{i,\widetilde{H}(-)}^{\pos} + X_{i,\widetilde{L}(-)}^{\pos} \leq \widetilde{X}_{i,-}^{\pos} \leq X_{i,\widetilde{H}(-)}^{\pos} + X_{i,\widetilde{L}(-)}^{\pos} + \frac{\eps^{10}\delta^5\eta^6}{128\beta} m,\\
    X_{i,\widetilde{H}(+)}^{\neg} + X_{i,\widetilde{L}(+)}^{\neg} \leq \widetilde{X}_{i,+}^{\neg} \leq X_{i,\widetilde{H}(+)}^{\neg} + X_{i,\widetilde{L}(+)}^{\neg} + \frac{\eps^{10}\delta^5\eta^6}{128\beta} m,\\
    X_{i,\widetilde{H}(-)}^{\neg} + X_{i,\widetilde{L}(-)}^{\neg} \leq \widetilde{X}_{i,-}^{\neg} \leq X_{i,\widetilde{H}(-)}^{\neg} + X_{i,\widetilde{L}(-)}^{\neg} + \frac{\eps^{10}\delta^5\eta^6}{128\beta} m.
\end{align*}
Since $|F|= \frac{\beta}{\eps^{8}\delta^3\eta^4}$ and $\widetilde{H}$ is the set of associated variables of constraints in $F$, we have that $|\widetilde{H}|\le \frac{2\beta}{\eps^{8}\delta^3\eta^4}$. Thus, it holds that $0 \le X_{i,\widetilde{H}(+)}^{\pos}, X_{i,\widetilde{H}(-)}^{\pos}, X_{i,\widetilde{H}(+)}^{\neg}, X_{i,\widetilde{H}(-)}^{\neg}\le \frac{2\beta}{\eps^{8}\delta^3\eta^4}$. Therefore,
\begin{align}
      0\leq \widetilde{X}_{i,+}^{\pos} -  X_{i,\widetilde{L}(+)}^{\pos}\leq \frac{2\beta}{\eps^{8}\delta^3\eta^4}+  \frac{\eps^{10}\delta^5\eta^6}{128\beta} m,\label{ineq:dynamic-cm_pos+}\\ 
      0 \leq \widetilde{X}_{i,-}^{\pos} - X_{i,\widetilde{L}(-)}^{\pos} \leq \frac{2\beta}{\eps^{8}\delta^3\eta^4} +  \frac{\eps^{10}\delta^5\eta^6}{128\beta} m,\label{ineq:dynamic-cm_pos-}\\
      0 \leq \widetilde{X}_{i,+}^{\neg} - X_{i,\widetilde{L}(+)}^{\neg} \leq \frac{2\beta}{\eps^{8}\delta^3\eta^4}+  \frac{\eps^{10}\delta^5\eta^6}{128\beta} m,\label{ineq:dynamic-cm_neg+}\\
      0 \leq \widetilde{X}_{i,-}^{\neg} - X_{i,\widetilde{L}(-)}^{\neg} \leq \frac{2\beta}{\eps^{8}\delta^3\eta^4} + \frac{\eps^{10}\delta^5\eta^6}{128\beta} m. \label{ineq:dynamic-cm_neg-}
\end{align}

We define
\begin{enumerate}
    \item $\widetilde{X}_0 := |\{((j_1,j_2),(b_1,b_2))\in \Psi: j_1, j_2\in \widetilde{L},\ b_1 Y_{j_1} = b_2 Y_{j_2}=1\}|$ to be the number of constraints in $\Psi$ whose associated variables are both in $\widetilde{L}$ and for which both literals are satisfied by the predictions $\mathbf{Y}$.
    \item $\widetilde{X}_1 := |\{((j_1,j_2),(b_1,b_2))\in \Psi: j_1, j_2\in \widetilde{L},\ b_1 Y_{j_1} \neq b_2 Y_{j_2}\}|$ to be the number of constraints in $\Psi$ whose associated variables are both in $\widetilde{L}$ and for which exactly one literal is satisfied by $\mathbf{Y}$.
    \item $\widetilde{X}_2 := |\{((j_1,j_2),(b_1,b_2))\in \Psi: j_1, j_2\in \widetilde{L},\ b_1 Y_{j_1} = b_2 Y_{j_2}=-1\}|$ to be the number of constraints in $\Psi$ whose associated variables are both in $\widetilde{L}$ and for which neither literal is satisfied by  $\mathbf{Y}$.
\end{enumerate}
Similarly, we define
\begin{enumerate}
    \item $\widetilde{X}_{0,\widetilde{H}} := |\{((j_1,j_2),(b_1,b_2))\in \Psi: j_1, j_2\in \widetilde{H},\ b_1 Y_{j_1} = b_2 Y_{j_2}=1\}|$ to be the number of constraints in $\Psi$ whose associated variables are both in $\widetilde{H}$ and for which both literals are satisfied by $\mathbf{Y}$.
    \item $\widetilde{X}_{1,\widetilde{H}} := |\{((j_1,j_2),(b_1,b_2))\in \Psi: j_1, j_2\in \widetilde{H},\ b_1 Y_{j_1} \neq b_2 Y_{j_2}\}|$ to be the number of constraints in $\Psi$ whose associated variables are both in $\widetilde{H}$ and for which exactly one literal is satisfied by $\mathbf{Y}$.
    \item $\widetilde{X}_{2,\widetilde{H}} := |\{((j_1,j_2),(b_1,b_2))\in \Psi: j_1, j_2\in \widetilde{H},\ b_1 Y_{j_1} = b_2 Y_{j_2}=-1\}|$ to be the number of constraints in $\Psi$ whose associated variables are both in $\widetilde{H}$ and for which neither literal is satisfied by  $\mathbf{Y}$.
\end{enumerate}

We have

\begin{align*}
    \widetilde{X}_0 &=  X_{+,+} - \widetilde{X}_{0,\widetilde{H}} - \sum_{i\in \widetilde{H}^+}X_{i,\widetilde{L}(+)}^{\pos}-\sum_{i\in \widetilde{H}^-}X_{i,\widetilde{L}(+)}^{\neg},\\
    \widetilde{X}_1 &= 
     X_{+,-} - \widetilde{X}_{1,\widetilde{H}} - \sum_{i\in \widetilde{H}^+}(X_{i,\widetilde{L}(-)}^{\pos} + X_{i,\widetilde{L}(+)}^{\neg}) - \sum_{i\in \widetilde{H}^-}(X_{i,\widetilde{L}(+)}^{\pos} + X_{i,\widetilde{L}(-)}^{\neg}), \\
    \widetilde{X}_2 &=  X_{-,-} - \widetilde{X}_{2,\widetilde{H}} -\sum_{i\in \widetilde{H}^+}X_{i,\widetilde{L}(-)}^{\neg}-\sum_{i\in \widetilde{H}^-}X_{i,\widetilde{L}(-)}^{\pos}.
\end{align*}

Recall that
\begin{align*}
\overline{X}_0 &= X_{+,+} - \sum_{i\in \widetilde{H}^+} \widetilde{X}^{\pos}_{i,+}-\sum_{i\in \widetilde{H}^-} \widetilde{X}^{\neg}_{i,+},\\
\overline{X}_1 &=X_{+,-} - \sum_{i\in \widetilde{H}^+}(\widetilde{X}^{\pos}_{i,-}+\widetilde{X}^{\neg}_{i,+}) - \sum_{i\in \widetilde{H}^-}(\widetilde{X}^{\pos}_{i,+}+\widetilde{X}^{\neg}_{i,-}),\\
\overline{X}_2 &= X_{-,-} - \sum_{i\in \widetilde{H}^+} \widetilde{X}^{\neg}_{i,-}-\sum_{i\in \widetilde{H}^-} \widetilde{X}^{\pos}_{i,-}.
\end{align*}
Therefore,
\begin{align*}
    \overline{X}_0 - \widetilde{X}_0 &= \widetilde{X}_{0,\widetilde{H}} + \sum_{i\in \widetilde{H}^+}(X_{i,\widetilde{L}(+)}^{\pos} - \widetilde{X}^{\pos}_{i,+}) + \sum_{i\in \widetilde{H}^-} (X_{i,\widetilde{L}(+)}^{\neg} - \widetilde{X}^{\neg}_{i,+}),\\
    \overline{X}_1 - \widetilde{X}_1 &= \widetilde{X}_{1,\widetilde{H}} + \sum_{i\in \widetilde{H}^+} \left[(X_{i,\widetilde{L}(-)}^{\pos} - \widetilde{X}^{\pos}_{i,-}) +  (X_{i,\widetilde{L}(+)}^{\neg} - \widetilde{X}^{\neg}_{i,+})\right]
     + \sum_{i\in \widetilde{H}^-} \left[(X_{i,\widetilde{L}(+)}^{\pos} - \widetilde{X}^{\pos}_{i,+}) +  (X_{i,\widetilde{L}(-)}^{\neg} - \widetilde{X}^{\neg}_{i,-})\right],\\
    \overline{X}_2 - \widetilde{X}_2 &=\widetilde{X}_{2,\widetilde{H}} + \sum_{i\in \widetilde{H}^+}(X_{i,\widetilde{L}(-)}^{\neg} - \widetilde{X}^{\neg}_{i,-}) + \sum_{i\in \widetilde{H}^-} (X_{i,\widetilde{L}(-)}^{\pos} - \widetilde{X}^{\pos}_{i,-}).
\end{align*}
Since $|\widetilde{H}|\le \frac{2\beta}{\eps^{8}\delta^3\eta^4}$, we have $|\widetilde{H}^+|, |\widetilde{H}^-|\le \frac{2\beta}{\eps^{8}\delta^3\eta^4}$, and $\widetilde{X}_{0,\widetilde{H}},\widetilde{X}_{1,\widetilde{H}},\widetilde{X}_{2,\widetilde{H}}\le \binom{|\widetilde{H}|}{2}\cdot 2^2 \le \frac{8\beta^2}{\eps^{16}\delta^6\eta^8}$.
Moreover, combining inequalities (\ref{ineq:dynamic-cm_pos+})--(\ref{ineq:dynamic-cm_neg-}), we have
\begin{align*}
    \widetilde{X}_0-\left(\frac{16\beta^2}{\eps^{16}\delta^6\eta^8}+\frac{\eps^{2}\delta^2\eta^2 m}{32} \right)\le \overline{X}_0  \le \widetilde{X}_0+\frac{8\beta^2}{\eps^{16}\delta^6\eta^8},\\
    \widetilde{X}_1-\left(\frac{32\beta^2}{\eps^{16}\delta^6\eta^8}+\frac{\eps^{2}\delta^2\eta^2 m}{16} \right) \le \overline{X}_1  \le \widetilde{X}_1+\frac{8\beta^2}{\eps^{16}\delta^6\eta^8},\\
    \widetilde{X}_2-\left(\frac{16\beta^2}{\eps^{16}\delta^6\eta^8}+\frac{\eps^{2}\delta^2\eta^2 m}{32} \right) \le \overline{X}_2   \le \widetilde{X}_2+\frac{8\beta^2}{\eps^{16}\delta^6\eta^8}.
\end{align*}

Let $\widetilde{k}_{\widetilde{L}} := \frac{1}{4\eps^2}[(\frac{1}{2}+\eps)^2 \widetilde{X}_0 - (\frac{1}{2}+\eps)(\frac{1}{2}-\eps)\widetilde{X}_1 + (\frac{1}{2}-\eps)^2 \widetilde{X}_2]$. 
Now we are ready to show that $\overline{k}_{\widetilde{L}}$ is close to $\widetilde{k}_{\widetilde{L}}$ up to an additive error of $\poly(\eta)\cdot m$ with probability at least $1-\frac{\delta}{4}$.
We have
\begin{align*}
    \overline{k}_{\widetilde{L}} 
    =~&\frac{1}{4\eps^2}\left[\left(\frac{1}{2}+\eps\right)^2 \overline{X}_0 - \left(\frac{1}{2}+\eps\right)\left(\frac{1}{2}-\eps\right)\overline{X}_1 + \left(\frac{1}{2}-\eps\right)^2 \overline{X}_2\right]\\
    \ge~&\frac{1}{4\eps^2}\left[\left(\frac{1}{2}+\eps\right)^2 \left(\widetilde{X}_0-\left(\frac{16\beta^2}{\eps^{16}\delta^6\eta^8}+\frac{\eps^{2}\delta^2\eta^2 m}{32} \right)\right) - \left(\frac{1}{2}+\eps\right)\left(\frac{1}{2}-\eps\right)\left(\widetilde{X}_1+\frac{8\beta^2}{\eps^{16}\delta^6\eta^8}\right) \right.\\
    &+\left. \left(\frac{1}{2}-\eps\right)^2 \left(\widetilde{X}_2-\left(\frac{16\beta^2}{\eps^{16}\delta^6\eta^8}+\frac{\eps^{2}\delta^2\eta^2 m}{32} \right)\right)\right]\\
    =~&\widetilde{k}_{\widetilde{L}} - \frac{1}{4\eps^2}\left[\left(\frac{1}{2}+\eps\right)^2 \left(\frac{16\beta^2}{\eps^{16}\delta^6\eta^8}+\frac{\eps^{2}\delta^2\eta^2 m}{32} \right) + \left(\frac{1}{2}+\eps\right)\left(\frac{1}{2}-\eps\right)\frac{8\beta^2}{\eps^{16}\delta^6\eta^8} \right.\\
    &+ \left.\left(\frac{1}{2}-\eps\right)^2 \left(\frac{16\beta^2}{\eps^{16}\delta^6\eta^8}+\frac{\eps^{2}\delta^2\eta^2 m}{32} \right)\right]\\
    =~&\widetilde{k}_{\widetilde{L}} - \left(\frac{5\beta^{2}}{2\eps^{18}\delta^{6}\eta^8} + \frac{6\beta^{2}}{\eps^{16}\delta^{6}\eta^8} + \frac{\delta^{2}\eta^2m}{256} + \frac{\eps^{2}\delta^2\eta^2 m}{64}\right)\\
    \ge~&\widetilde{k}_{\widetilde{L}} -  \frac{\eta m}{32},
\end{align*}
where the last step follows since $m = \Omega(\eps^{-26}\delta^{-10}\eta^{-14})$. On the other hand,
\begin{align*}
\overline{k}_{\widetilde{L}} =~&\frac{1}{4\eps^2}\left[\left(\frac{1}{2}+\eps\right)^2 \overline{X}_0 - \left(\frac{1}{2}+\eps\right)\left(\frac{1}{2}-\eps\right)\overline{X}_1 + \left(\frac{1}{2}-\eps\right)^2 \overline{X}_2\right]\\
\le~&\frac{1}{4\eps^2}\left[\left(\frac{1}{2}+\eps\right)^2 \left(\widetilde{X}_0+\frac{8\beta^2}{\eps^{16}\delta^6\eta^8}\right) - \left(\frac{1}{2}+\eps\right)\left(\frac{1}{2}-\eps\right)\left(\widetilde{X}_1-\left(\frac{32\beta^2}{\eps^{16}\delta^6\eta^8}+\frac{\eps^{2}\delta^2\eta^2 m}{16} \right)\right)\right.\\
&+\left. \left(\frac{1}{2}-\eps\right)^2 \left(\widetilde{X}_2+\frac{8\beta^2}{\eps^{16}\delta^6\eta^8}\right)\right]\\
=~&\widetilde{k}_{\widetilde{L}} +\frac{1}{4\eps^2}\left[\left(\frac{1}{2}+\eps\right)^2 \frac{8\beta^2}{\eps^{16}\delta^6\eta^8} + \left(\frac{1}{2}+\eps\right)\left(\frac{1}{2}-\eps\right)\left(\frac{32\beta^2}{\eps^{16}\delta^6\eta^8}+\frac{\eps^{2}\delta^2\eta^2 m}{16} \right) + \left(\frac{1}{2}-\eps\right)^2 \frac{8\beta^2}{\eps^{16}\delta^6\eta^8}\right]\\
=~&\widetilde{k}_{\widetilde{L}} + \left(\frac{3\beta^{2}}{\eps^{18}\delta^{6}\eta^8} - \frac{4\beta^{2}}{\eps^{16}\delta^{6}\eta^8}+\frac{ \delta^{2}\eta^2m}{256}-\frac{\eps^{2}\delta^2\eta^2m}{64}\right)\\
\le~&\widetilde{k}_{\widetilde{L}} + \frac{\eta m}{64},
\end{align*}
where the last step follows since $m = \Omega(\eps^{-26}\delta^{-10}\eta^{-14})$.

For each $i\in \widetilde{H}$, let $\widetilde{k}_{i,\widetilde{L}(+)}^{\pos} := (\frac{1}{2}+\frac{1}{4\eps})\cdot X_{i,\widetilde{L}(+)}^{\pos} + (\frac{1}{2}-\frac{1}{4\eps})\cdot X_{i,\widetilde{L}(-)}^{\pos}$ and $\widetilde{k}_{i,\widetilde{L}(+)}^{\neg} := (\frac{1}{2}+\frac{1}{4\eps})\cdot X_{i,\widetilde{L}(+)}^{\neg} + (\frac{1}{2}-\frac{1}{4\eps})\cdot X_{i,\widetilde{L}(-)}^{\neg}$.
We have
\begin{align*}
    \overline{k}_{i,\widetilde{L}(+)}^{\pos} &= \left(\frac{1}{2}+\frac{1}{4\eps}\right)\cdot \widetilde{X}_{i,+}^{\pos} + \left(\frac{1}{2}-\frac{1}{4\eps}\right)\cdot \widetilde{X}_{i,-}^{\pos}\\
    &\ge \left(\frac{1}{2}+\frac{1}{4\eps}\right)\cdot X_{i,\widetilde{L}(+)}^{\pos} + \left(\frac{1}{2}-\frac{1}{4\eps}\right) \left( X_{i,\widetilde{L}(-)}^{\pos} + \frac{2\beta}{\eps^{8}\delta^3\eta^4} + \frac{\eps^{10}\delta^5 \eta^6m}{128\beta} \right)\\
    &= \widetilde{k}_{i,\widetilde{L}(+)}^{\pos} - \left(\frac{1}{4\eps}-\frac{1}{2}\right) \left(\frac{2\beta}{\eps^{8}\delta^3\eta^4} + \frac{\eps^{10}\delta^5 \eta^6m}{128\beta} \right)\\
    &= \widetilde{k}_{i,\widetilde{L}(+)}^{\pos} -\left(\frac{\beta}{2\eps^{9}\delta^3\eta^4} - \frac{\beta}{\eps^{8}\delta^3\eta^4} + \frac{\eps^{9}\delta^5\eta^6 m}{512\beta} - \frac{\eps^{10}\delta^5\eta^6 m}{256\beta}\right)\\
    &\ge \widetilde{k}_{i,\widetilde{L}(+)}^{\pos} - \frac{\eps^8 \delta^3 \eta^5 m}{128\beta},
\end{align*}
where the last step follows since $m = \Omega(\eps^{-26}\delta^{-10}\eta^{-14})$. On the other hand,
\begin{align*}
    \overline{k}_{i,\widetilde{L}(+)}^{\pos} &= \left(\frac{1}{2}+\frac{1}{4\eps}\right)\cdot \widetilde{X}_{i,+}^{\pos} + \left(\frac{1}{2}-\frac{1}{4\eps}\right)\cdot \widetilde{X}_{i,-}^{\pos}\\
    &\le \left(\frac{1}{2}+\frac{1}{4\eps}\right)\left( X_{i,\widetilde{L}(+)}^{\pos} + \frac{2\beta}{\eps^{8}\delta^3\eta^4} + \frac{\eps^{10}\delta^5\eta^6 m}{128\beta} \right) + \left(\frac{1}{2}-\frac{1}{4\eps}\right)\cdot X_{i,\widetilde{L}(-)}^{\pos}\\
    &= \widetilde{k}_{i,\widetilde{L}(+)}^{\pos} + \left(\frac{1}{2}+\frac{1}{4\eps}\right)\left(\frac{2\beta}{\eps^{8}\delta^3\eta^4} + \frac{\eps^{10}\delta^5\eta^6 m}{128\beta} \right)\\
    &= \widetilde{k}_{i,\widetilde{L}(+)}^{\pos} +\left(\frac{\beta}{2\eps^{9}\delta^3\eta^4} + \frac{\beta}{\eps^{8}\delta^3\eta^4} + \frac{\eps^{9}\delta^5\eta^6 m}{512\beta} + \frac{\eps^{10}\delta^5\eta^6 m}{256\beta}\right)\\
    &\le \widetilde{k}_{i,\widetilde{L}(+)}^{\pos} +\frac{\eps^8 \delta^3 \eta^5 m}{128\beta},
\end{align*}
where the last step follows since $m = \Omega(\eps^{-26}\delta^{-10}\eta^{-14})$. Therefore, $|\overline{k}_{i,\widetilde{L}(+)}^{\pos} - \widetilde{k}_{i,\widetilde{L}(+)}^{\pos}| \le \frac{\eps^8 \delta^3 \eta^5 m}{128\beta}$.
Similarly, $|\overline{k}_{i,\widetilde{L}(+)}^{\neg} - \widetilde{k}_{i,\widetilde{L}(+)}^{\neg}| \le \frac{\eps^8 \delta^3 \eta^5 m}{128\beta}$.

Let $\widetilde{\mathrm{ALG}}:= \widetilde{k}_{\widetilde{L}} + \sum_{i\in \widetilde{H}}\max\{\widetilde{k}_{i,\widetilde{L}(+)}^{\pos}, \widetilde{k}_{i,\widetilde{L}(+)}^{\neg}\}  - \frac{\eta m}{32}$ denote the output of the algorithm obtained by running \cref{alg:max-2and-offline} with $L$ and $H$ replaced by $\widetilde L$ and $\widetilde H$, respectively.
Analogous to the proof of \cref{thm:max-2and-offline}, we can show that $(1-\frac{\eta}{2})\cdot \opt\le \widetilde{\mathrm{ALG}} \le \opt$ with probability at least $1-\frac{3\delta}{4}$. The argument relies primarily on two properties of the partition $(\widetilde{L},\widetilde{H})$: (i) every variable in $\widetilde{L}$ has degree at most $\theta m$ for $\theta:=\frac{\eps^4 \delta \eta^2}{4096}$ (since $\widetilde{L} \subseteq L$), and (ii) $|\widetilde{H}|\leq \frac{2\beta}{\eps^8 \delta^3 \eta^4}$.

Let $\mathrm{ALG}$ denote the output of \cref{alg:max-2and-dynamic}. By applying the union bound on Lemma~\ref{lem:ell-0-sampling-2and}, Lemma~\ref{lem:high-deg-reservoir-max-2and-dynamic}, and Lemma~\ref{lem:error-bound-insertion-max-2and-dynamic}, with probability at least $1-\frac{\delta}{16}-\frac{\delta}{16}-\frac{\delta}{8}-\frac{3\delta}{4}=1-\delta$, we have
\begin{align*}
    \mathrm{ALG} &= \overline{k}_{\widetilde{L}} + \sum_{i\in \widetilde{H}}\max\{\overline{k}_{i,\widetilde{L}(+)}^{\pos}, \overline{k}_{i,\widetilde{L}(+)}^{\neg}\} - \frac{\eta m}{16}\\
    &\le \widetilde{k}_{\widetilde{L}} + \frac{\eta m}{64} + \sum_{i\in \widetilde{H}}\max\{\widetilde{k}_{i,\widetilde{L}(+)}^{\pos}, \widetilde{k}_{i,\widetilde{L}(+)}^{\neg}\} + |\widetilde{H}|\cdot \frac{\eps^8 \delta^3 \eta^5 m}{128\beta} - \frac{\eta m}{16}\\
    &\le \widetilde{\mathrm{ALG}} + \frac{\eta m}{32}+\frac{\eta m}{64}+\frac{\eta m}{64} - \frac{\eta m}{16} &&\rhd |\widetilde{H}|\le \frac{2\beta}{\eps^8 \delta^3 \eta^4}\\
    &\le \opt,
\end{align*}
and
\begin{align*}
    \mathrm{ALG} &= \overline{k}_{\widetilde{L}} + \sum_{i\in \widetilde{H}}\max\{\overline{k}_{i,\widetilde{L}(+)}^{\pos}, \overline{k}_{i,\widetilde{L}(+)}^{\neg}\} - \frac{\eta m}{16}\\
    &\ge \widetilde{k}_{\widetilde{L}} - \frac{\eta m}{32} + \sum_{i\in \widetilde{H}}\max\{\widetilde{k}_{i,\widetilde{L}(+)}^{\pos}, \widetilde{k}_{i,\widetilde{L}(+)}^{\neg}\} - |\widetilde{H}|\cdot \frac{\eps^8 \delta^3 \eta^5 m}{128\beta} - \frac{\eta m}{16}\\
    &\ge \widetilde{\mathrm{ALG}} - \frac{\eta m}{64} - \frac{\eta m}{16} &&\rhd |\widetilde{H}|\le \frac{2\beta}{\eps^8\delta^3\eta^4}\\
    &\ge \widetilde{\mathrm{ALG}} - \frac{\eta m}{8} \\
    &\ge \qty(1-\frac{\eta}{2})\cdot \opt - \frac{\eta}{2} \cdot\opt &&\rhd m\le 4\opt \\
    &= (1-\eta)\cdot \opt.
\end{align*}
\end{proof}

\section{The algorithm with an estimated prediction accuracy parameter}
\label{sec:estimated-eps}
In this section, we show that it suffices to know a highly accurate estimate of the prediction accuracy parameter $\eps$ for our algorithms to work. Specifically, it is enough to have an estimate $\eps' \in (0,1/2]$ such that $\eps' = (1\pm O(\eta))\cdot \eps$ for \textsc{Max-$2$CSP}s; for \textsc{Max-$k$CSP}s, the hidden constant may depend on $k$.
The detailed proof below is for the offline \textsc{Max-Dicut} algorithm; the corresponding \textsc{Max-$k$AND} statement is recorded afterwards.

\subsection{Offline algorithm for \textsc{Max-Dicut} estimation with estimated $\eps$}
\label{sec:estimated-eps-max-dicut}
The algorithm is very similar to \cref{alg:general-offline}. The only changes are as follows. First, we define the high-degree threshold in Line~\ref{line:threshold-estimated-eps} using $\eps'$ instead of $\eps$, namely, $\theta m$ where $\theta := \frac{\eps'^4 \delta \eta^2}{3072}$. 
Compared with \cref{alg:general-offline}, the denominator constant is increased from $768$ to $3072$ to accommodate the additional error caused by replacing $\eps$ by $\eps'$.
Second, we replace every occurrence of $\eps$  in the decoding estimators by $\eps'$ (see Line~\ref{line:tildekL-estimated-eps} and Line~\ref{line:tilde-v-S*-estimated-eps}).
The resulting algorithm is formally given in \cref{alg:general-offline-estimated-eps}.

Next, we analyze the approximation guarantee of \cref{alg:general-offline-estimated-eps}.
\begin{theorem}
    Let $\eps \in (0,\frac{1}{2}]$, $\delta, \eta \in (0,1)$, and $G$ be a directed graph with $m=\Omega(\eps^{-26}\delta^{-10}\eta^{-14})$. Given an  oracle $\mathcal{O}$ that is $\eps$-accurate for the \textsc{Max-Dicut} of $G$, and an estimate $\eps' \in (0,\frac{1}{2}]$ satisfying $|\eps' - \eps|\le \frac{\eta}{256}\eps$,
    \cref{alg:general-offline-estimated-eps} provides a $(1-\eta)$-approximation for the \textsc{Max-Dicut} value of $G$ with probability at least $1-\delta$.
    \label{thm:general-offline-estimated-eps}
\end{theorem}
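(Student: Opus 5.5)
The plan is to follow the proof of \cref{thm:general-offline} step by step and track the extra bias from decoding with $\eps'$ instead of $\eps$. Write $\widetilde{k}_L(\eps')$ and $\widetilde{k}_{S^*,v}(\eps'),\widetilde{k}_{v,T^*}(\eps')$ for the estimators computed with $\eps'$. Split each error into two pieces: a \emph{random} deviation, comparing the $\eps'$-estimator with its expectation, and a \emph{deterministic} bias, comparing that expectation with the true count $k_L$ (resp.\ $k_{S^*,v},k_{v,T^*}$). The random deviations need nothing new. Since $\eps'\ge(1-\eta/256)\eps\ge \eps/2$, we have $\theta=\eps'^4\delta\eta^2/3072\le \eps^4\delta\eta^2/768$. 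Hence \cref{lem:concentration-low} still gives $|X_i-\E[X_i]|\le \eps^2\eta m/8$ for each $i$, with probability at least $1-\delta/6$. The number of high-degree vertices is now $|H|\le 2/\theta=O(\eps^{-4}\delta^{-1}\eta^{-2})$. The Chebyshev argument of \cref{lem:concentration-general} is unchanged except for the factor $1/(4\eps'^2)\le 1/\eps^2$ in the variance, which the assumption $m=\Omega(\eps^{-26}\delta^{-10}\eta^{-14})$ absorbs after adjusting constants.

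For the deterministic bias, use the fact that for every $e\in E_L$ the vector $(\Pr[X_{0,e}=1],\Pr[X_{1,e}=1],\Pr[X_{2,e}=1])$ is a column of the $\eps$-noise matrix $M(\eps)$. The decoder is the first row of $M(\eps')^{-1}$, so $\E[\widetilde{k}_L(\eps')]=\sum_j c_j(\eps,\eps')k_j$, where $c_j=[M(\eps')^{-1}M(\eps)]_{0j}$. I plan to compute these entries in closed form. Factor $M(\eps)$ as the symmetrized tensor square of the $2\times2$ channel $\begin{pmatrix}1/2+\eps&1/2-\eps\\1/2-\eps&1/2+\eps\end{pmatrix}$. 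The composition $M(\eps')^{-1}M(\eps)$ is then the tensor square of a $2\times 2$ matrix with entries $\tfrac12(1\pm\eps/\eps')$. This gives
\[
c_0=\Bigl(\tfrac{1}{2}+\tfrac{\eps}{2\eps'}\Bigr)^2,\qquad c_1=\tfrac14\Bigl(1-\tfrac{\eps^2}{\eps'^2}\Bigr),\qquad c_2=\Bigl(\tfrac12-\tfrac{\eps}{2\eps'}\Bigr)^2 .
\]
Setting $r=\eps/\eps'=1\pm O(\eta/256)$, each of $|c_0-1|,|c_1|,|c_2|$ is at most a small constant times $\eta/256$. Therefore $|\E[\widetilde{k}_L(\eps')]-k_L|\le O(\eta/256)(k_0+k_1+k_2)\le \eta m/64$.

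The same computation for the $2\times2$ systems gives $\E[\widetilde{k}_{S^*,v}(\eps')]=\tfrac12(1+r)k_{S^*,v}+\tfrac12(1-r)k_{T^*,v}$. The bias for a single $v$ is therefore at most $\tfrac12|1-r|\deg(v)$. Summed over $v\in H$, using $\sum_v\deg(v)\le 2m$, the total bias is at most $|1-r|\,m\le \eta m/128$. This bound is linear in $m$ and does not depend on $|H|$, which is what keeps it harmless. The same holds for $\widetilde{k}_{v,T^*}$.

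With these bounds the final chain is the same as in the proof of \cref{thm:general-offline}. Replace the error $\eta m/32$ there by $\eta m/32+\eta m/64$, and add the $\eta m/128$ aggregate bias for $H$. If needed, change the subtracted slack to a constant multiple of $\eta m$, chosen so that the upper bound $\mathrm{ALG}\le\opt$ still holds (the lower bound on the max over $H$ is the same feasibility argument). The total loss is then at most $\eta m/4\le \eta\,\opt$, using $m\le4\opt$. The main obstacle is the bias step: obtaining the clean factorization of $M(\eps')^{-1}M(\eps)$ so the constants stay explicit, and keeping the accumulated constants below the slack so that both inequalities close with the stated $\eta/256$ tolerance. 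If the closed form fails, I would instead bound $\|M(\eps')^{-1}-M(\eps)^{-1}\|$ by perturbation, but that costs powers of $1/\eps$ which the relative tolerance cannot absorb. This is why the exact tensor-square identity is the route to pursue.
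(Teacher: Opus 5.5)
Your plan is essentially the paper's proof. It uses the same split of each error into a random deviation and a deterministic bias, and the same closed form for the expectation of the $\eps'$-decoder. Your $c_0,c_1,c_2$ are exactly the paper's $\frac{(\eps+\eps')^2}{4\eps'^2},\frac{\eps'^2-\eps^2}{4\eps'^2},\frac{(\eps-\eps')^2}{4\eps'^2}$; the paper obtains them by direct substitution rather than your tensor-square factorization, which is a cleaner explanation but gives nothing extra. You also share the key observation that the per-vertex bias $\frac{q}{2}e(L,v)$, with $q=|\eps-\eps'|/\eps'$, sums to $O(qm)$ independently of $|H|$.

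The one thing to fix is your hedge about changing the subtracted slack. That is not allowed, because $\eta m/16$ is part of the algorithm named in the statement. You must therefore show the total error fits under it, which is what gives $\mathrm{ALG}\le\opt$.

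It does fit with your own numbers. Using the old $X_i$ bound, the random part on $k_L$ is at most $(\eps/\eps')^2\cdot\eta m/32$, marginally above $\eta m/32$. The bias on $k_L$ is below $1.01\,\eta m/255$, so the low part is still under $3\eta m/64$. The high-part bias is at most $qm<\eta m/255<\eta m/128$. It then suffices to choose the per-vertex Chebyshev threshold so that $|H|$ times it is at most $\eta m/128$. This is possible since $|H|\le 6144/(\eps'^4\delta\eta^2)$ and $m$ is large.

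The paper instead uses the factor-$4$ smaller $\theta$ (denominator $3072$ rather than $768$) to halve the deviation of each $X_i$ to $\eps'^2\eta m/16$. This makes the low part $\eta m/64+\eta m/64$ and the high part $\eta m/64+qm/2$, totalling at most $\eta m/16$.

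A minor slip: your justification of $\theta\le\eps^4\delta\eta^2/768$ runs in the wrong direction. It needs the upper bound $\eps'\le(1+\eta/256)\eps\le\sqrt2\,\eps$, not $\eps'\ge\eps/2$.
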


\begin{algorithm}[t!]
\begin{algorithmic}[1]
\Require Directed graph $G=(V,E)$, an oracle $\mathcal{O} \rightarrow \{-1, 1\}$ that is $\eps$-accurate for some \emph{unknown} $\eps \in (0,\frac{1}{2}]$, an estimate $\eps'\in (0,\frac{1}{2}]$ satisfying $|\eps' - \eps| \le \frac{\eta}{256}\eps$, and $\delta,\eta \in (0,1)$
\Ensure Estimate of the \textsc{Max-Dicut} value of $G$

\State \phantomsection \label{line:threshold-estimated-eps} $m \leftarrow |E|$, $\theta \leftarrow \frac{\eps'^{4} \delta \eta^2}{3072}$,  $L\leftarrow \{v\in V: \deg(v) < \theta m\}, H\leftarrow V \setminus L$
\State {\bf initialize} $X_0\leftarrow 0, X_1 \leftarrow 0, X_2 \leftarrow 0$.
\State {\bf initialize} $X_{v,L^+} \leftarrow 0, X_{v,L^-} \leftarrow 0, X_{L^+,v} \leftarrow 0, X_{L^-,v}\leftarrow 0$ for each $v\in H$.
\ForEach{directed edge $(u,v)\in E$}
    \State $Y_u = \mathcal{O}(u), Y_v = \mathcal{O}(v)$
    \If{$u\in L$ and $v \in L$}
        \If{$Y_u = 1$ and $Y_v = -1$}
            $X_0 \leftarrow X_0 +1$
        \EndIf
        \If{$Y_u = Y_v$}
            $X_1 \leftarrow X_1 +1$
        \EndIf
        \If{$Y_u = -1$ and $Y_v = 1$}
            $X_2 \leftarrow X_2 +1$
        \EndIf
    \EndIf
    
    \If{$u\in H$ and $v \in L$}
        \State \textbf{if} $Y_v = 1$ \textbf{then} $X_{u,L^+} \leftarrow X_{u,L^+} + 1$ \textbf{else} $X_{u,L^-} \leftarrow X_{u,L^-} + 1$        
    \EndIf
    
    \If{$u\in L$ and $v \in H$}
        \State \textbf{if} $Y_u = 1$ \textbf{then} $X_{L^+,v} \leftarrow X_{L^+,v} + 1$ \textbf{else} $X_{L^-,v} \leftarrow X_{L^-,v} + 1$ 
    \EndIf 
\EndFor
\State \phantomsection \label{line:tildekL-estimated-eps} $\widetilde{k}_{L} \leftarrow \frac{1}{4\eps'^2}[(\frac{1}{2}+\eps')^2 X_0 - (\frac{1}{2}+\eps')(\frac{1}{2}-\eps')X_1 + (\frac{1}{2}-\eps')^2 X_2]$ 
\ForEach{$v\in H$} 
\State $\widetilde{k}_{S^*,v} \leftarrow (\frac{1}{2}+\frac{1}{4\eps'})X_{L^+,v} + (\frac{1}{2}-\frac{1}{4\eps'})X_{L^-,v}$, $\widetilde{k}_{v,T^*} \leftarrow (\frac{1}{2}-\frac{1}{4\eps'})X_{v,L^+} + (\frac{1}{2}+\frac{1}{4\eps'})X_{v,L^-}$  \phantomsection \label{line:tilde-v-S*-estimated-eps}
\label{line:num-in-edges-estimated-eps}

\EndFor
\State \textbf{return} $\widetilde{k}_{L} + \sum_{v\in H}\max\{\widetilde{k}_{S^*,v}, \widetilde{k}_{v,T^*}\}  - \frac{\eta m}{16}$ 
\end{algorithmic}
\caption{A learning-augmented \emph{offline} algorithm for \textsc{Max-Dicut} estimation with estimated $\eps$}
\label{alg:general-offline-estimated-eps}
\end{algorithm}

\subsubsection{Analysis of \cref{alg:general-offline-estimated-eps}}

We first reprove concentration bounds for $X_0,X_1,X_2$ with the new high-degree threshold.

\begin{lemma}
\label{lem:concentration-low-estimated-eps}
Let $\eps' \in (0,\frac{1}{2}]$ and $\delta,\eta \in (0,1)$.
For each $i\in \{0,1,2\}$, we have $|X_i - \E[X_i]| \le \frac{\eps'^{2}\eta m}{16}$ with probability at least $1-\frac{\delta}{6}$.
\end{lemma}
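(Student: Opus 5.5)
The plan is to repeat the second-moment argument of \cref{lem:concentration-low}, now with the modified threshold $\theta = \frac{\eps'^4\delta\eta^2}{3072}$ of Line~\ref{line:threshold-estimated-eps} and the smaller deviation $\frac{\eps'^2\eta m}{16}$. The counters $X_0,X_1,X_2$ are still sums of edge indicators over $E_L$, and their distribution is determined by the true (unknown) $\eps$. However, the variance bound never uses the value of $\eps$, only the degree bound on $L$. So nothing about the misspecification of $\eps$ enters this lemma.

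First, for each $i\in\{0,1,2\}$ write $X_i=\sum_{(u,v)\in E_L}X_{i,uv}$ and expand $\E[X_i^2]$ as a double sum over pairs of edges in $E_L$. Split the pairs into vertex-disjoint pairs and pairs sharing an endpoint.

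For vertex-disjoint pairs, independence of the predictions gives $\E[X_{i,uv}X_{i,wz}]=\E[X_{i,uv}]\E[X_{i,wz}]$. Adding back the nonnegative products for intersecting pairs, the disjoint part is bounded by $(\E[X_i])^2$.

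For intersecting pairs, bound each term by $1$. Each edge of $E_L$ meets at most $2\theta m$ other edges of $E_L$, since both endpoints lie in $L$ and so have degree $<\theta m$. Hence the intersecting part is at most $m_L\cdot 2\theta m\le 2\theta m^2$, giving $\Var[X_i]\le 2\theta m^2$ exactly as before.

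Finally, apply Chebyshev's inequality:
\[
\Pr\Big[|X_i-\E[X_i]|>\tfrac{\eps'^2\eta m}{16}\Big]\le \frac{2\theta m^2\cdot 256}{\eps'^4\eta^2 m^2}=\frac{512\theta}{\eps'^4\eta^2}=\frac{512\,\delta}{3072}=\frac{\delta}{6}.
\]
The constant $3072=4\cdot 768$ is exactly what compensates for halving the deviation, which costs a factor of $4$ in the Chebyshev bound. The lemma follows for each $i$ separately.

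The only step needing care is confirming that the degree threshold is stated in terms of $\eps'$ while the randomness is governed by $\eps$. Since the variance bound is distribution-free, apart from independence across vertices and the degree cap, this mismatch is harmless. I do not expect any genuine obstacle; the hard work of handling $\eps'\neq\eps$ is deferred to the subsequent decoding lemma, where the Gaussian-elimination coefficients computed with $\eps'$ no longer exactly invert the true linear system.
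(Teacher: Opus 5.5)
Your proposal is correct and follows the paper's proof essentially verbatim: the same split of the second moment into vertex-disjoint and intersecting pairs gives $\Var[X_i]\le 2\theta m^2$, and then Chebyshev with the threshold $\theta=\frac{\eps'^4\delta\eta^2}{3072}$ gives $\frac{512\theta}{\eps'^4\eta^2}=\frac{\delta}{6}$. You are also right that the misspecified $\eps$ plays no role here. Your exact equality $=\delta/6$ is in fact slightly cleaner than the paper's stated strict inequality, and it suffices for the lemma.
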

\begin{proof}
    Let $E_L:=\{(u,v)\in E: u,v\in L\}$ denote the set of directed edges with both endpoints in $L$.
    Let $m_L:=|E_L|$.
    For each edge $(u,v)\in E_L$, define the following indicator random variables:
    \begin{align*}
    X_{0,uv} &=
    \begin{cases}
    1 & \text{if } Y_u=1 \wedge Y_v=-1,\\
    0 & \text{otherwise},
    \end{cases}
    &
    X_{1,uv} &=
    \begin{cases}
    1 & \text{if } Y_u=Y_v,\\
    0 & \text{otherwise},
    \end{cases}
    &
    X_{2,uv} &=
    \begin{cases}
    1 & \text{if } Y_u=-1 \wedge Y_v=1,\\
    0 & \text{otherwise}.
    \end{cases}
    \end{align*}
    
    Note that for $i\in \{0,1,2\}$, we have $\E[X_{i,uv}] = \Pr[X_{i,uv} = 1]$ and $X_i=\sum_{(u,v)\in E_L}X_{i,uv}$. Therefore,
    \begin{align*}
        \E\qty[X_i^2] &= \E\qty[ \qty(\sum_{(u,v)\in E_L} X_{i,uv})^2]\\
        &= \E\qty[\sum_{(u,v)\in E_L}\sum_{(w,z)\in E_L}X_{i,uv} X_{i,wz}]\\
        &=\sum_{(u,v)\in E_L}\sum_{(w,z)\in E_L}\E[X_{i,uv}X_{i,wz}]\\ 
        &=\sum_{(u,v)\in E_L}\sum_{(w,z)\in E_L} \Pr[X_{i,uv} = 1\wedge X_{i,wz} = 1]\\
        &= \sum_{\substack{(u,v), (w,z)\in E_L: \\ \{u,v\}\cap \{w,z\} = \emptyset}} \Pr[X_{i,uv} = 1\wedge X_{i,wz} = 1] + \sum_{\substack{(u,v), (w,z)\in E_L: \\\{u,v\}\cap \{w,z\} \neq \emptyset}} \Pr[X_{i,uv} = 1\wedge X_{i,wz} = 1]\\
        &\leq \sum_{\substack{(u,v), (w,z)\in E_L: \\\{u,v\}\cap \{w,z\} = \emptyset}} \Pr[X_{i,uv} = 1] \Pr[X_{i,wz} = 1] + \sum_{\substack{(u,v), (w,z)\in E_L: \\\{u,v\}\cap \{w,z\} \neq \emptyset}} 1\\
        &\leq \sum_{(u,v)\in E_L}\sum_{(w,z)\in E_L}\E[X_{i,uv}]\E[X_{i,wz}] + m_L\cdot 2\cdot \theta m \\
        &\leq (\E[\sum_{(u,v)\in E_L}X_{i,uv}])^2 + 2\theta m^2\\
        &= (\E[X_i])^2 + 2\theta m^2,
    \end{align*}
    where the third-to-last step follows from $\deg(v) < \theta m$ for $v\in L$, and the second-to-last step follows from $m_L \le m$.
    
    Then we have $\Var[X_i] = \E[X_i^2] - (\E[X_i])^2 \le 2\theta m^2$.
    By Chebyshev's inequality, 
    \begin{align*}
        \Pr[|X_i - \E[X_i]| > \frac{\eps'^{2}\eta m}{16}] 
        \le \frac{\Var[X_i]}{(\eps'^{2}\eta m/{16})^2}
        &\le \frac{512\theta m^2}{\eps'^{4}\eta^2 m^2} &&\rhd \Var[X_i] \le 2\theta m^2\\
        &= \frac{512\theta}{\eps'^{4}\eta^2}
        < \frac{\delta}{6}.  &&\rhd \theta < \frac{\eps'^{4} \delta \eta^2}{3072}
    \end{align*}
\end{proof}

Define
    $k_{L}:= |\{(u,v)\in E: u\in S^*\cap L, v\in T^*\cap L\}|$
to be the number of directed edges with both endpoints in $L$ (i.e., both low-degree) that contribute to the \textsc{Max-Dicut} value.
The next lemma shows that the modified estimator for $k_L$ remains accurate  when $\eps$ is replaced by $\eps'$.

\begin{lemma}
\label{lem:concentration-k-L-estimated-eps}
Let $\delta,\eta \in (0,1)$.
We have $|\widetilde{k}_L - k_L| \le \frac{\eta m}{32}$ with probability at least $1-\frac{\delta}{2}$.
\end{lemma}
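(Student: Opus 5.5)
The proof splits $\widetilde{k}_L - k_L$ into a \emph{sampling error} and a \emph{bias}. Let $D(\eps')$ denote the linear decoder $D(\eps')(x_0,x_1,x_2) := \frac{1}{4\eps'^2}[(\frac12+\eps')^2 x_0 - (\frac12+\eps')(\frac12-\eps')x_1 + (\frac12-\eps')^2 x_2]$. Then
\[
\widetilde{k}_L - k_L = \bigl(D(\eps')(X) - D(\eps')(\E[X])\bigr) + \bigl(D(\eps')(\E[X]) - k_L\bigr).
\]
Each term will be bounded by $\frac{\eta m}{64}$.

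For the first term, apply \cref{lem:concentration-low-estimated-eps} and a union bound over $i\in\{0,1,2\}$. With probability at least $1-\frac{\delta}{2}$ we have $|X_i-\E[X_i]|\le \frac{\eps'^2\eta m}{16}$ for all $i$. The absolute coefficients of $D(\eps')$ sum to $\frac{1}{4\eps'^2}(\frac34+\eps'^2)\le \frac{1}{4\eps'^2}$. Exactly as in the proof of \cref{lem:concentration-k-L}, this gives a sampling error of at most $\frac{\eta m}{64}$.

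The second term is deterministic and is the main obstacle. Recall from the proof of \cref{lem:concentration-k-L} that $\E[X_i]$ is given by the $3\times 3$ system in $(k_0,k_1,k_2)$ with parameter $\eps$. It is cleanest to parametrize by $p=\frac12+\eps$ and $q=\frac12-\eps$, with primes for $\eps'$. Substituting and simplifying gives the exact identity
\[
D(\eps')(\E[X]) = \frac{1}{4\eps'^2}\Bigl[(p'p - q'q)^2\,k_0 + (p'p-q'q)(p'q-q'p)\,k_1 + (p'q - q'p)^2\,k_2\Bigr].
\]
The key computations are $p'p-q'q = \eps+\eps'$ and $p'q-q'p = \eps'-\eps$. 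Hence
\[
D(\eps')(\E[X]) = \frac{(\eps+\eps')^2}{4\eps'^2}k_0 + \frac{(\eps+\eps')(\eps'-\eps)}{4\eps'^2}k_1 + \frac{(\eps'-\eps)^2}{4\eps'^2}k_2.
\]
This correctly reduces to $k_0$ when $\eps'=\eps$. I would verify the identity by direct expansion; the structure is that the 2-variable noise acts as a tensor square of the $2\times2$ channel matrix.

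Now write $\eps' = \eps(1+\gamma)$ with $|\gamma|\le \frac{\eta}{256}$. The coefficient of $k_0$ is $\frac{(2+\gamma)^2}{4(1+\gamma)^2}$, which is $1 + O(|\gamma|)$; the error is at most about $|\gamma|$ (more precisely $\le 1.01|\gamma|$). The coefficient of $k_1$ is at most $\frac{(2+\gamma)|\gamma|}{4(1+\gamma)^2}\le 0.51|\gamma|$. The coefficient of $k_2$ is $O(\gamma^2)$. Using $k_0+k_1+k_2\le m$, the bias is at most roughly $1.1|\gamma| m \le \frac{1.1\,\eta m}{256} \le \frac{\eta m}{64}$, with room to spare.

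Adding the two terms gives $|\widetilde{k}_L - k_L|\le \frac{\eta m}{32}$ with probability at least $1-\frac{\delta}{2}$, as claimed.
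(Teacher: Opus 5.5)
Your proposal is correct and follows the paper's proof essentially step for step. It uses the same split into the sampling error $|\widetilde{k}_L-\E[\widetilde{k}_L]|$, controlled by \cref{lem:concentration-low-estimated-eps} and the coefficient sum $\frac{1}{4\eps'^2}(\frac34+\eps'^2)$, and the deterministic bias, computed from the same identity $\E[\widetilde{k}_L]=\frac{(\eps+\eps')^2}{4\eps'^2}k_0+\frac{\eps'^2-\eps^2}{4\eps'^2}k_1+\frac{(\eps-\eps')^2}{4\eps'^2}k_2$; each part is bounded by $\frac{\eta m}{64}$. The only difference is cosmetic: you measure the error relative to $\eps$ via $\eps'=\eps(1+\gamma)$, whereas the paper sets $q=|\eps-\eps'|/\eps'<\eta/255$ and bounds each coefficient's deviation by $2q$.
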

\begin{proof}
    Let $k_0$ denote the number of edges from $S^* \cap L$ to $T^* \cap L$. Let $k_1$ denote the number of edges whose endpoints both lie in $S^*\cap L$ or both lie in $T^*\cap L$. Let $k_2$ denote the number of edges from $T^* \cap L$ to $S^* \cap L$. Note that $k_0 = k_L$.
We have
\begin{align}
    \E[X_0] &= \left(\frac{1}{2}+\eps \right)^2 \cdot k_0 + \left(\frac{1}{2}+\eps \right)\left(\frac{1}{2}-\eps \right)\cdot k_1 + \left(\frac{1}{2}-\eps \right)^2 \cdot k_2, \label{eq:exp-X0}\\
    \E[X_1] &= 2\left(\frac{1}{2}+\eps \right)\left(\frac{1}{2}-\eps \right)\cdot k_0 + \left[ \left(\frac{1}{2}+\eps \right)^2 + \left(\frac{1}{2}-\eps \right)^2 \right]\cdot k_1 + 2\left(\frac{1}{2}+\eps \right)\left(\frac{1}{2}-\eps \right)\cdot k_2, \label{eq:exp-X1}\\
    \E[X_2] &= \left(\frac{1}{2}-\eps \right)^2 \cdot k_0 + \left(\frac{1}{2}+\eps \right)\left(\frac{1}{2}-\eps \right)\cdot k_1 + \left(\frac{1}{2}+\eps \right)^2 \cdot k_2. \label{eq:exp-X2}
\end{align}

To bound $|\widetilde{k}_L - k_L|$, we apply  the triangle inequality: $|\widetilde{k}_L - k_L| \le |\widetilde{k}_L - \E[\widetilde{k}_L]| + |\E[\widetilde{k}_L] - k_L|$.

We first bound the second term. By Line~\ref{line:tildekL-estimated-eps} of \cref{alg:general-offline-estimated-eps}, we have $\widetilde{k}_{L} := \frac{1}{4\eps'^2}[(\frac{1}{2}+\eps')^2 X_0 - (\frac{1}{2}+\eps')(\frac{1}{2}-\eps')X_1 + (\frac{1}{2}-\eps')^2 X_2]$. Therefore,
\begin{align*}
    \E[\widetilde{k}_{L}] &= \frac{1}{4\eps'^2}\left[\left(\frac{1}{2}+\eps'\right)^2 \E[X_0] - \left(\frac{1}{2}+\eps'\right)\left(\frac{1}{2}-\eps'\right)\E[X_1] + \left(\frac{1}{2}-\eps'\right)^2 \E[X_2]\right]\\
    &= \frac{(\eps+\eps')^2}{4\eps'^2}\cdot k_0 + \frac{\eps'^2 - \eps^2}{4\eps'^2}\cdot k_1 + \frac{(\eps-\eps')^2}{4\eps'^2}\cdot k_2,
\end{align*}
where the second equality follows by substituting Equations~(\ref{eq:exp-X0})--(\ref{eq:exp-X2}).
Then we have
\begin{align*}
    |\E[\widetilde{k}_{L}] - k_L| &= \left|\left(\frac{(\eps+\eps')^2}{4\eps'^2}-1\right)\cdot k_0 + \frac{\eps'^2 - \eps^2}{4\eps'^2}\cdot k_1 + \frac{(\eps-\eps')^2}{4\eps'^2}\cdot k_2\right|\\
    &\le \left|\frac{(\eps+\eps')^2}{4\eps'^2}-1\right|\cdot k_0 + \left|\frac{\eps'^2 - \eps^2}{4\eps'^2}\right|\cdot k_1 + \left|\frac{(\eps-\eps')^2}{4\eps'^2}\right|\cdot k_2.
\end{align*}

Let $q:= \frac{|\eps - \eps'|}{\eps'}$. We have $q \le \frac{\frac{\eta}{256}\eps}{(1-\frac{\eta}{256})\eps} = \frac{\eta}{256-\eta} < \frac{\eta}{255}$. Then
\begin{align*}
    \left|\frac{(\eps+\eps')^2}{4\eps'^2}-1\right| &= \left| \frac{\eps - \eps'}{\eps'} + \frac{(\eps - \eps')^2}{4\eps'^2} \right| \le \frac{|\eps - \eps'|}{\eps'} + \frac{(\eps - \eps')^2}{4\eps'^2} = q + \frac{q^2}{4} < 2q,\\
    \left|\frac{\eps'^2 - \eps^2}{4\eps'^2}\right| &= \frac{|\eps - \eps'|\cdot |\eps + \eps'|}{4\eps'^2} \le \frac{q\eps' \cdot (|\eps - \eps'|+2\eps')}{4\eps'^2} = \frac{q(2+q)}{4}<2q,\\
    \left|\frac{(\eps-\eps')^2}{4\eps'^2}\right| &= \frac{(\eps-\eps')^2}{4\eps'^2} = \frac{q^2}{4}<2q.
\end{align*}
Therefore, $|\E[\widetilde{k}_{L}] - k_L| < 2q(k_0+k_1+k_2)\le 2qm < \frac{2\eta m}{255} < \frac{\eta m}{64}$.

Next, we bound $|\widetilde{k}_L - \E[\widetilde{k}_L]|$. By applying union bound on Lemma~\ref{lem:concentration-low-estimated-eps}, we have that with probability at least $1-\frac{\delta}{6}\cdot 3 = 1-\frac{\delta}{2}$, $|X_i - \E[X_i]| \le \frac{\eps'^{2}\eta m}{16}$ for all $i\in \{0,1,2\}$. Therefore,
\begin{align*}
    &|\widetilde{k}_L - \E[\widetilde{k}_L]| \\ \le~&\frac{1}{4\eps'^2}\left[\left(\frac{1}{2}+\eps'\right)^2 |X_0 - \E[X_0]| + \left(\frac{1}{2}+\eps'\right)\left(\frac{1}{2}-\eps'\right)|X_1-\E[X_1]| + 
     \left(\frac{1}{2}-\eps'\right)^2 |X_2-\E[X_2]|\right]\\
    \le~&\frac{1}{4\eps'^2}\left[\left(\frac{1}{2}+\eps'\right)^2 + \left(\frac{1}{2}+\eps'\right)\left(\frac{1}{2}-\eps'\right)
    + \left(\frac{1}{2}-\eps'\right)^2 \right] \cdot \frac{\eps'^{2}\eta m}{16}\\
    =~& \frac{1}{4} \cdot \left( \frac{3}{4}+\eps'^2\right) \cdot \frac{\eta m}{16}
    \le\frac{\eta m}{64}.
\end{align*}

Therefore, we have $|\widetilde{k}_L - k_L| \le |\widetilde{k}_L - \E[\widetilde{k}_L]| + |\E[\widetilde{k}_L] - k_L| \le \frac{\eta m}{64} + \frac{\eta m}{64}= \frac{\eta m}{32}$ with probability at least $1-\frac{\delta}{2}$.
\end{proof}

For each high-degree vertex $v\in H$, define 
\begin{enumerate}
    \item $k_{S^*,v}:= |\{(u,v)\in E: u\in S^*\cap L\}|$ to be the number of directed edges from $S^*\cap L$ to $v$.
    \item $k_{T^*,v}:= |\{(u,v)\in E: u\in T^*\cap L\}|$ to be the number of directed edges from $T^*\cap L$ to $v$.
    \item $k_{v,S^*}:= |\{(v,u)\in E:  u\in S^*\cap L\}|$ to be the number of directed edges from $v$ to $S^*\cap L$.
    \item $k_{v,T^*}:= |\{(v,u)\in E: u\in T^*\cap L\}|$ to be the number of directed edges from $v$ to $T^*\cap L$.
\end{enumerate}

\begin{lemma}
\label{lem:concentration-general-estimated-eps}
Let $\eps' \in (0,\frac{1}{2}]$ and $\delta,\eta \in (0,1)$. For each vertex $v\in H$,
each of the following inequalities holds with probability at least $1-\frac{\eps'^{4}\delta^2 \eta^2}{24576}$:
\begin{enumerate}
    \item[(1)] $|\widetilde{k}_{S^*,v} - k_{S^*,v}| \le \frac{\eps'^{4}\delta \eta^3 m}{393216}+\frac{q}{2}\cdot e(L,v)$,
    \item[(2)] $|\widetilde{k}_{v,T^*} - k_{v,T^*}| \le \frac{\eps'^{4}\delta \eta^3 m}{393216}+\frac{q}{2}\cdot e(v,L)$,
\end{enumerate}
where $q:= \frac{|\eps - \eps'|}{\eps'}$.
\end{lemma}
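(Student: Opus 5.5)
We prove (1) and obtain (2) by the symmetric argument with the roles of $+$ and $-$ exchanged. The plan is the triangle inequality
\[
|\widetilde{k}_{S^*,v} - k_{S^*,v}| \le |\widetilde{k}_{S^*,v} - \E[\widetilde{k}_{S^*,v}]| + |\E[\widetilde{k}_{S^*,v}] - k_{S^*,v}|.
\]
The first term is a concentration term, handled exactly as in \cref{lem:concentration-general}. The second is a deterministic bias term caused by decoding with $\eps'$ in place of $\eps$, and it will produce the $\frac{q}{2}\cdot e(L,v)$ contribution.

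\textbf{Bias term.} As in \cref{lem:concentration-general}, the expectations are
\[
\E[X_{L^+,v}] = (\tfrac12+\eps)k_{S^*,v} + (\tfrac12-\eps)k_{T^*,v}, \qquad \E[X_{L^-,v}] = (\tfrac12-\eps)k_{S^*,v} + (\tfrac12+\eps)k_{T^*,v},
\]
since these depend only on the true $\eps$. Substituting into $\widetilde{k}_{S^*,v} = (\frac12+\frac1{4\eps'})X_{L^+,v} + (\frac12-\frac1{4\eps'})X_{L^-,v}$, the coefficient of $k_{S^*,v}$ becomes $\frac12+\frac{\eps}{2\eps'}$ and the coefficient of $k_{T^*,v}$ becomes $\frac12-\frac{\eps}{2\eps'}$. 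Therefore
\[
\E[\widetilde{k}_{S^*,v}] - k_{S^*,v} = \frac{\eps-\eps'}{2\eps'}\,(k_{S^*,v} - k_{T^*,v}).
\]
Its absolute value is at most $\frac{q}{2}(k_{S^*,v}+k_{T^*,v}) = \frac{q}{2}\,e(L,v)$, which is exactly the second summand in (1).

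\textbf{Fluctuation term.} Writing $X_{L^-,v} = e(L,v) - X_{L^+,v}$, we get $\widetilde{k}_{S^*,v} = \frac{1}{2\eps'}X_{L^+,v} + \mathrm{const}$. Hence
\[
\Var[\widetilde{k}_{S^*,v}] = \frac{1}{4\eps'^2}\Var[X_{L^+,v}] \le \frac{m}{16\eps'^2},
\]
using independence of the labels on the low endpoints and $\Var \le \frac14$ per edge. Chebyshev's inequality with deviation $t = \frac{\eps'^4\delta\eta^3 m}{393216}$ then gives failure probability at most $\frac{393216^2}{16\,\eps'^{10}\delta^2\eta^6 m}$.

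\textbf{Main obstacle.} The one delicate step is checking that this bound is at most $\frac{\eps'^4\delta^2\eta^2}{24576}$. That requires $m = \Omega(\eps'^{-14}\delta^{-4}\eta^{-8})$, which must be derived from the hypothesis $m=\Omega(\eps^{-26}\delta^{-10}\eta^{-14})$, stated in terms of $\eps$ rather than $\eps'$. We handle this using $\eps' \ge (1-\frac{\eta}{256})\eps \ge \frac{255}{256}\eps$, so $\eps'^{-c} \le (256/255)^c\,\eps^{-c}$ and powers of $\eps'$ and $\eps$ agree up to absolute constants. These constants are absorbed into the hidden constant of the $\Omega(\cdot)$, and the exponents $26,10,14$ dominate $14,4,8$ because $\eps,\delta,\eta\le 1$.

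Combining the bias and fluctuation bounds yields (1) with the claimed probability.
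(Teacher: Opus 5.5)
Your proposal is correct and follows the paper's proof essentially step for step. It uses the same triangle-inequality split: the bias term $\frac{\eps-\eps'}{2\eps'}(k_{S^*,v}-k_{T^*,v})$, bounded by $\frac{q}{2}e(L,v)$, and a Chebyshev bound on the fluctuation term via $\Var[\widetilde{k}_{S^*,v}]\le \frac{m}{16\eps'^2}$. Your conversion of $m=\Omega(\eps^{-26}\delta^{-10}\eta^{-14})$ into the needed $m=\Omega(\eps'^{-14}\delta^{-4}\eta^{-8})$ via $\eps'\ge\frac{255}{256}\eps$ is simply a more explicit version of the paper's appeal to $\eps'=\Theta(\eps)$.
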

\begin{proof}
    We prove the first inequality below; the second inequality follows by the same argument.

    We have
    \begin{align}
    \E[X_{L^+,v}] &= \left(\frac{1}{2}+\eps \right)\cdot k_{S^*,v} + \left(\frac{1}{2}-\eps \right) \cdot k_{T^*,v}, \label{eq:exp-XL+v}\\
    \E[X_{L^-,v}] &= \left(\frac{1}{2}-\eps \right)\cdot k_{S^*,v} + \left(\frac{1}{2}+\eps \right) \cdot k_{T^*,v}.\label{eq:exp-XL-v}
\end{align}

To bound $|\widetilde{k}_{S^*,v} - k_{S^*,v}|$, we apply  the triangle inequality: $|\widetilde{k}_{S^*,v} - k_{S^*,v}| \le |\widetilde{k}_{S^*,v} - \E[\widetilde{k}_{S^*,v}]| + |\E[\widetilde{k}_{S^*,v}] - k_{S^*,v}|$.

We first bound the second term. By Line~\ref{line:tilde-v-S*-estimated-eps} of \cref{alg:general-offline-estimated-eps}, we have $\widetilde{k}_{S^*,v} := (\frac{1}{2}+\frac{1}{4\eps'})X_{L^+,v} + (\frac{1}{2}-\frac{1}{4\eps'})X_{L^-,v}$. Hence,
\begin{align*}
    \E[\widetilde{k}_{S^*,v}] &= \left(\frac{1}{2}+\frac{1}{4\eps'}\right)\E[X_{L^+,v}] + \left(\frac{1}{2}-\frac{1}{4\eps'}\right)\E[X_{L^-,v}]\\
    &=\frac{\eps + \eps'}{2\eps'}\cdot k_{S^*,v} + \frac{\eps'-\eps}{2\eps'}\cdot k_{T^*,v}\\
    &= k_{S^*,v}+\frac{\eps-\eps'}{2\eps'}\cdot(k_{S^*,v}-k_{T^*,v}).
\end{align*}
Therefore, we have
\begin{align*}
    |\E[\widetilde{k}_{S^*,v}]-k_{S^*,v}| 
    = \left|\frac{\eps-\eps'}{2\eps'}\cdot(k_{S^*,v}-k_{T^*,v})\right|
    = \left|\frac{\eps-\eps'}{2\eps'}\right|\cdot|k_{S^*,v}-k_{T^*,v}|
    \le \frac{q}{2}\cdot e(L,v).
\end{align*}

Next, we bound $|\widetilde{k}_{S^*,v} - \E[\widetilde{k}_{S^*,v}]|$. Note that 
    \begin{align*}
        \widetilde{k}_{S^*,v} &= \left(\frac{1}{2}+\frac{1}{4\eps'}\right)X_{L^+,v} + \left(\frac{1}{2}-\frac{1}{4\eps'}\right)X_{L^-,v} \\
        &= \left(\frac{1}{2}+\frac{1}{4\eps'}\right)X_{L^+,v} + \left(\frac{1}{2}-\frac{1}{4\eps'}\right)(e(L,v) - X_{L^+,v}) \\
        &= \frac{1}{2\eps'}X_{L^+,v} + \left(\frac{1}{2}-\frac{1}{4\eps'}\right)e(L,v).
    \end{align*}
Therefore, $\Var[\widetilde{k}_{S^*,v}] = \frac{1}{4\eps'^2}\Var[X_{L^+,v}]$.

Given a high-degree vertex $v\in H$, for each incoming edge $(u,v)\in E$ with $u\in L$, define an indicator random variable $X_{uv}$  as follows:
    \begin{align*}
        X_{uv} =
    \begin{cases}
    1 & \text{if } Y_u=1,\\
    0 & \text{otherwise}.
    \end{cases}
    \end{align*}
    Note that $X_{L^+,v}=\sum_{(u,v)\in E: u\in L}X_{uv}$ and $\E[X_{uv}] = \Pr[X_{uv} = 1]$.
    Then $\Var[X_{uv}] = \E[X^2_{uv}] - (\E[X_{uv}])^2 = \E[X_{uv}]\cdot (1-\E[X_{uv}]) =(\frac{1}{2}+\eps)(\frac{1}{2}-\eps)=\frac{1}{4}-\eps^2 \le \frac{1}{4}$. 
    Since $X_{uv}$'s are independent, we have
    \begin{align*}
        \Var[X_{L^+,v}] = \sum_{(u,v)\in E: u\in L}\Var[X_{uv}]\le \frac{m}{4}.
    \end{align*}

    So, $\Var[\widetilde{k}_{S^*,v}]=\frac{1}{4\eps'^2}\Var[X_{L^+,v}] \le \frac{m}{16\eps'^2}$.
    By Chebyshev's inequality,
    \begin{align*}
        \Pr[|\widetilde{k}_{S^*,v} - \E[\widetilde{k}_{S^*,v}]|
        \ge \frac{\eps'^{4}\delta \eta^3 m}{393216}] 
        \le \frac{\Var[\widetilde{k}_{S^*,v}]}{(\eps'^{4}\delta \eta^3 m/393216)^2}
        \le \frac{393216^2 / 16}{\eps'^{10}\delta^2\eta^6 m} \le \frac{\eps'^{4}\delta^2 \eta^2}{24576},
    \end{align*}
    where the last step follows from our assumption that $m =\Omega(\eps^{-26}\delta^{-10}\eta^{-14})$ and $\eps' = \Theta(\eps)$.

    Therefore, we have $|\widetilde{k}_{S^*,v} - k_{S^*,v}| \le |\widetilde{k}_{S^*,v} - \E[\widetilde{k}_{S^*,v}]| + |\E[\widetilde{k}_{S^*,v}] - k_{S^*,v}|\le \frac{\eps'^{4}\delta \eta^3 m}{393216}+\frac{q}{2}\cdot e(L,v)$ with probability at least $1-\frac{\eps'^{4}\delta^2 \eta^2}{24576}$.
\end{proof}

Define 
    $k_{H}:= |\{(u,v)\in E: u\in S^*\cap H, v\in T^*\cap H\}|$
to be the number of directed edges with both endpoints in $H$ (i.e., both high-degree) that contribute to the \textsc{Max-Dicut} value.
Note that $|H| \le \frac{2m}{\theta m}=\frac{2}{\theta} = \frac{6144}{\eps'^{4} \delta\eta^2}$. Thus, we have the following observation. 

\begin{fact}
\label{fact:size-H-estimated-eps}
It holds that $k_{H} < |H|^2 \le \frac{6144^2}{\eps'^{8} \delta^2\eta^4}$.
\end{fact}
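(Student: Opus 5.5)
The statement to prove is \cref{fact:size-H-estimated-eps}: $k_H < |H|^2 \le \frac{6144^2}{\eps'^8\delta^2\eta^4}$. The plan is to bound $|H|$ by a degree-counting argument and then bound $k_H$ by the number of possible ordered vertex pairs inside $H$. Both steps parallel \cref{fact:size-H}, with the threshold of \cref{alg:general-offline-estimated-eps} in place of the original one.

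For the first inequality, recall that $k_H$ counts directed edges $(u,v)$ with $u\in S^*\cap H$ and $v\in T^*\cap H$. The graph is loopless and has no repeated directed edges, so each such edge corresponds to a distinct ordered pair $(u,v)$ of distinct vertices of $H$. Hence $k_H \le |H|(|H|-1)$. If $H\neq\emptyset$, this is strictly less than $|H|^2$. If $H=\emptyset$, then $k_H=0=|H|^2$. In that degenerate case the strict inequality fails, so I would either state the bound as $k_H\le |H|^2$ there or note that the later proof uses only the non-strict version. Alternatively, since $S^*\cap H$ and $T^*\cap H$ are disjoint, $k_H\le |S^*\cap H|\cdot|T^*\cap H|\le |H|^2/4$, which gives the same conclusion.

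For the second inequality, every $v\in H$ satisfies $\deg(v)\ge \theta m$ with $\theta=\frac{\eps'^4\delta\eta^2}{3072}$. Summing degrees over all vertices gives $\sum_v \deg(v)=2m$, so $|H|\,\theta m\le 2m$. Therefore $|H|\le 2/\theta = \frac{6144}{\eps'^4\delta\eta^2}$, and squaring yields $|H|^2\le \frac{6144^2}{\eps'^8\delta^2\eta^4}$.

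There is no real obstacle. The only point needing care is the strict inequality when $H$ is empty, which is handled by the case remark above.
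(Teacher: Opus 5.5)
Your proposal is correct and matches the paper's argument. The paper likewise bounds $|H|\le 2m/(\theta m)=2/\theta=\frac{6144}{\eps'^4\delta\eta^2}$ by degree counting and bounds $k_H$ by the number of ordered pairs inside $H$; your observation that the strict inequality $k_H<|H|^2$ fails when $H=\emptyset$ is a valid technicality, and, as you note, only the non-strict bound is used later.
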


\subsubsection{Proof of \cref{thm:general-offline-estimated-eps}}
\begin{proof}
    Recall that $k_L$ (resp. $k_H$) is the number of directed edges with both endpoints in $L$ (resp. $H$) that contribute to the \textsc{Max-Dicut} value.

Furthermore, define 
\begin{align*}
    k_{L,H}:=|\{(u,v)\in E: u\in S^*\cap L, v\in T^*\cap H\}|
\end{align*}
to be the number of directed edges from $L$ to $H$ that contribute to the \textsc{Max-Dicut} value, and 
\begin{align*}
    k_{H,L}:=|\{(u,v)\in E: u\in S^*\cap H, v\in T^*\cap L\}|
\end{align*}
to be the number of directed edges from $H$ to $L$ that contribute to the \textsc{Max-Dicut} value.

Therefore, we have $\opt = k_L+k_H+ k_{L,H} + k_{H,L}$. 

For each $v\in H$, define $c_v := \frac{\eps'^{4}\delta \eta^3 m}{393216} + \frac{q}{2}\cdot (e(L,v)+e(v,L))$. By Lemma~\ref{lem:concentration-general-estimated-eps}, we have that for each vertex $v\in H$,
each of the following inequalities holds with probability at least $1-\frac{\eps'^{4}\delta^2 \eta^2}{24576}$:
\begin{align*}
    |\widetilde{k}_{S^*,v} - k_{S^*,v}| &\le \frac{\eps'^{4}\delta \eta^3 m}{393216}+\frac{q}{2}\cdot e(L,v)\le c_v,\\
    |\widetilde{k}_{v,T^*} - k_{v,T^*}| &\le \frac{\eps'^{4}\delta \eta^3 m}{393216}+\frac{q}{2}\cdot e(v,L) \le c_v.
\end{align*}
Hence,
    $|\max\{\widetilde{k}_{S^*,v}, \widetilde{k}_{v,T^*}\} - \max\{{k}_{S^*,v}, {k}_{v,T^*}\}|\le c_v$.
Therefore, with probability at least $1-\frac{\eps'^{4}\delta^2 \eta^2}{24576}\cdot 2 \cdot |H| \ge 1-\frac{\delta}{2}$, we have
\begin{align}
    &\left| \sum_{v\in H}\max\{\widetilde{k}_{S^*,v}, \widetilde{k}_{v,T^*}\} - \sum_{v\in H}\max\{{k}_{S^*,v}, {k}_{v,T^*}\}\right| \notag \\
    \le~&\sum_{v\in H} |\max\{\widetilde{k}_{S^*,v}, \widetilde{k}_{v,T^*}\} - \max\{{k}_{S^*,v}, {k}_{v,T^*}\}| \notag\\
    \le~&\sum_{v\in H} c_v \notag\\
    =~&|H|\cdot \frac{\eps'^{4}\delta \eta^3 m}{393216} + \frac{q}{2}\sum_{v\in H} (e(L,v)+e(v,L))\notag\\
    \le~&\frac{\eta m}{64} + \frac{qm}{2} &&\rhd \sum_{v\in H} (e(L,v)+e(v,L))\le m\notag\\
    <~&\frac{\eta m}{64} + \frac{\eta m}{510}&&\rhd q:= \frac{|\eps - \eps'|}{\eps'}= \frac{\eta}{256-\eta} <\frac{\eta }{255}\notag\\
    <~& \frac{\eta m}{32}.\label{ineq:high-part-sum-error}
\end{align}

Let $\mathrm{ALG}$ denote the output of \cref{alg:general-offline-estimated-eps}.
By union bound, with probability at least $1-\frac{\delta}{2}-\frac{\delta}{2} = 1-\delta$, we have
\begin{align*}
        \mathrm{ALG} &= \widetilde{k}_{L} + \sum_{v\in H}\max\{\widetilde{k}_{S^*,v}, \widetilde{k}_{v,T^*}\} - \frac{\eta m}{16}\\
        &\le k_L + \frac{\eta m}{32} + \sum_{v\in H}\max\{{k}_{S^*,v}, {k}_{v,T^*}\} + \frac{\eta m}{32} - \frac{\eta m}{16} &&\rhd \text{Lemma~\ref{lem:concentration-k-L-estimated-eps} and Inequality~(\ref{ineq:high-part-sum-error})}\\
        &= k_L + \sum_{v\in H}\max\{{k}_{S^*,v}, {k}_{v,T^*}\} \\
        &\le \opt,
    \end{align*}
where the last step follows since $k_L + \sum_{v\in H}\max\{{k}_{S^*,v}, {k}_{v,T^*}\}$ is a lower bound on the value of a feasible dicut of $G$.
Specifically, we fix the restriction on $L$  to be $(S^*\cap L, T^*\cap L)$, and then assign each vertex $v \in H$ to the side that maximizes the resulting dicut value.

Furthermore, 
    \begin{align*}
        \mathrm{ALG} &= \widetilde{k}_{L} + \sum_{v\in H}\max\{\widetilde{k}_{S^*,v}, \widetilde{k}_{v,T^*}\} - \frac{\eta m}{16}\\
        &\ge k_L - \frac{\eta m}{32} + \sum_{v\in H}\max\{{k}_{S^*,v}, {k}_{v,T^*}\} -  \frac{\eta m}{32}- \frac{\eta m}{16} &&\rhd \text{Lemma~\ref{lem:concentration-k-L-estimated-eps} and Inequality~(\ref{ineq:high-part-sum-error})}\\
        &\ge k_L + k_{L,H} + k_{H,L} -\frac{\eta m}{8} &&\rhd \sum_{v\in H}\max\{{k}_{S^*,v}, {k}_{v,T^*}\} \ge k_{L,H} + k_{H,L}\\
        &= \opt - k_H  -\frac{\eta m}{8} &&\rhd \opt = k_L+k_H+ k_{L,H} + k_{H,L}\\
        &\ge 
        \opt - \frac{6144^2}{\eps'^{8} \delta^2\eta^4} - \frac{\eta}{2} \cdot \opt &&\rhd \text{Fact~\ref{fact:size-H-estimated-eps} and } m\le 4\opt\\
        &\ge (1-\eta)\cdot \opt, &&\rhd \frac{6144^2}{\eps'^{8} \delta^2\eta^4} \le \frac{\eta m}{8}\le \frac{\eta}{2}\cdot \opt 
    \end{align*}
    where the third step follows from 
    \begin{align*}
        \sum_{v\in H}\max\{{k}_{S^*,v}, {k}_{v,T^*}\} &= \sum_{v\in H\cap S^*}\max\{{k}_{S^*,v}, {k}_{v,T^*}\}+\sum_{v\in H\cap T^*}\max\{{k}_{S^*,v}, {k}_{v,T^*}\} \\
        &\ge \sum_{v\in H\cap S^*} {k}_{v,T^*} + \sum_{v\in H\cap T^*} {k}_{S^*,v}\\
        &= k_{H,L} + k_{L,H},
    \end{align*}
    and the last step follows from our assumption that $m =\Omega(\eps^{-26}\delta^{-10}\eta^{-14})$ and $\eps' = \Theta(\eps)$.

\end{proof}

\subsection{Estimated $\eps$ for \textsc{Max-$k$AND}}
\label{sec:estimated-eps-max-kand}
We also use the same estimated-$\eps$ modification for the \textsc{Max-$k$AND} algorithms.  Namely, replace $\eps$ by $\eps'$ in every decoder coefficient and in every threshold or sketch parameter, run the corresponding exact-$\eps$ algorithm with accuracy parameter $\eta/8$, and use final subtractive slack $\eta m/2^{k+2}$.  This applies to:
\begin{enumerate}
    \item the bounded-degree algorithm  \cref{alg:bounded-max-kand}, with the bounded-degree assumption stated using $\eps'$ and $\eta/8$;
    \item the offline unbounded-degree algorithm  \cref{alg:offline-unbounded-max-kand}, with its adaptive threshold stated using $\eps'$ and $\eta/8$;
    \item the insertion-only and dynamic unbounded-degree algorithms in \cref{thm:streaming-unbounded-max-kand,thm:dynamic-streaming-unbounded-max-kand}, with all sketch parameters stated using $\eps'$ and $\eta/8$.
\end{enumerate}
The required accuracy of $\eps'$ is as follows.

\begin{corollary}[Estimated $\eps$ for \textsc{Max-$k$AND}]
\label{cor:estimated-eps-max-kand}
Let $\eps\in(0,\frac12]$, $\delta,\eta\in(0,1)$, and $k\ge 3$. If the oracle is $\eps$-accurate and
\[
\left|\frac{\eps'-\eps}{2\eps'}\right|\le \frac{\eta}{2^{k+5}k},
\]
then each modified algorithm above remains a $(1-\eta)$-approximation algorithm under the same assumptions as its exact-$\eps$ version.  The running-time and space bounds are unchanged up to constants, with $\eps'$ in place of $\eps$. Equivalently, it suffices to have $\eps'=(1\pm O(\eta/(k2^k)))\eps$.
\end{corollary}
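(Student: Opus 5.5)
The plan is to compare the decoder built with $\eps'$ to the exact-$\eps$ decoder, term by term, and absorb the difference into the enlarged slack. The key observation is the single-literal identity. With $\eps'$ in place of $\eps$,
\[
\E\Big[\tfrac12+\tfrac{b_tY_{j_t}}{4\eps'}\Big]=\tfrac12+\tfrac{\eps}{\eps'}\cdot\tfrac{b_t\sigma^*_{j_t}}{2}
=\tfrac{1+b_t\sigma^*_{j_t}}{2}+\rho\, b_t\sigma^*_{j_t},
\qquad \rho:=\frac{\eps-\eps'}{2\eps'},
\]
so each low-literal factor has bias exactly $\rho$ in absolute value, with $|\rho|\le \eta/(2^{k+5}k)$ by hypothesis.

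For the unbounded-degree decoder (\ref{eq:unbounded-max-kand-decoder}), independence across the distinct coordinates of a constraint lets the expectation factor as in \cref{lem:unbounded-max-kand-expectation}. The expectation of each constraint's product is then a product of at most $k$ factors. Each factor has the form $a_t+\rho_t$ with $a_t\in\{0,1\}$ and $|\rho_t|\le|\rho|$, and the $H$-factors are exact. Expanding the product, the deviation from the exact product is at most $(1+|\rho|)^k-1\le 2k|\rho|$, using $k|\rho|\le 1$. Summing over the $m$ constraints and over every fixed $\boldsymbol\sigma$ on $H$, the expected $\eps'$-decoded value differs from the true expectation in \cref{lem:unbounded-max-kand-expectation} by at most $2k|\rho|m\le \eta m/2^{k+4}$. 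Its maximum over $\boldsymbol\sigma$ is therefore within $\eta m/2^{k+4}$ of $\opt$.

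For the bounded-degree algorithm, I reduce to the same product form. The return expression of \cref{alg:bounded-max-kand} is exactly $\sum_{C}\prod_t(\tfrac12+\tfrac{b_tY_{j_t}}{4\eps})$, obtained by expanding the product in terms of the number $i$ of unsatisfied literals. So it is the $H=\emptyset$ case, and the same bias bound applies.

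The concentration side reruns the earlier lemmas with $\eps'$ and accuracy $\eta/8$. For the unbounded case this is \cref{lem:unbounded-max-kand-concentration}: the bounded-difference constants become $c_i=(2\eps')^{-k}\deg(i)$, and the threshold $\theta$ is stated via $\eps'$, so the same computation gives deviation at most $(\eta/8)m/2^{k+1}$ uniformly over assignments. For the bounded case this is \cref{lem:concentration-max-kand} together with the coefficient sum bound, which still holds with $\eps'$. For the streaming versions, \cref{lem:streaming-unbounded-high-set} is unchanged, since it does not involve the oracle. In \cref{lem:streaming-unbounded-reconstruction} the coefficients have absolute value at most $(2\eps')^{-1}$, and the sketch accuracy parameters are stated using $\eps'$ and $\eta/8$, so the reconstruction error is at most $(\eta/8)m/2^{k+2}$. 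Because $\eps'=\Theta(\eps)$, replacing $\eps$ by $\eps'$ changes the space and running-time bounds only by constants.

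Combining the three error sources, each returned value is within $\eta m/2^{k+4}+\eta m/2^{k+4}<\eta m/2^{k+2}$ of $\opt$, on both the upper and lower side, before the slack is subtracted. Subtracting the slack $\eta m/2^{k+2}$ then gives $\mathrm{ALG}\le\opt$ and $\mathrm{ALG}\ge\opt-\eta m/2^{k+1}$. Using $\opt\ge m/2^k$, this is at least $(1-\eta/2)\opt\ge(1-\eta)\opt$.

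The main obstacle is the bias step. Per-literal bias errors could compound multiplicatively across the factors, reaching size $(2\eps')^{-k}$. The product expansion avoids this because the expected factors are $\{0,1\}$-valued up to $\pm\rho$, not of size $1/\eps$, which keeps the bias linear in $k|\rho|$. The final equivalence with $\eps'=(1\pm O(\eta/(k2^k)))\eps$ then follows from $|\rho|\approx|\eps-\eps'|/(2\eps)$.
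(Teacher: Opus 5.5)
Your proposal is correct and follows essentially the same route as the paper: the one-literal bias identity, reading the bounded-degree counter decoder as the $H=\emptyset$ product decoder, a per-constraint bias of at most $2k|\rho|$ summed to $\eta m/2^{k+4}$, rerunning the concentration and reconstruction lemmas with $\eps'$ and $\eta/8$, and absorbing everything into the slack $\eta m/2^{k+2}$ via $\opt\ge m/2^k$. The only differences are cosmetic. You bound the bias through the full expansion $(1+|\rho|)^k-1$ instead of the paper's split into the cases $s=0$ and $s\ge 1$. You also announce ``three error sources'' but add only two terms; this is harmless, since even three terms of size $\eta m/2^{k+4}$ stay below $\eta m/2^{k+2}$.
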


\begin{proof}
Fix one of the modified algorithms and fix the high set used by it; for
\cref{alg:bounded-max-kand} we take $H=\emptyset$ and $L=[n]$. Fix an
assignment $\boldsymbol{\tau}_H\in\{\pm1\}^H$, and let
$\boldsymbol{\tau}$ be the assignment that agrees with
$\boldsymbol{\tau}_H$ on $H$ and with $\boldsymbol{\sigma}^*$ on $L$. We compare the modified decoded value with
$\mathrm{val}_{\Psi}(\boldsymbol{\tau})$.

We first record the one-literal calculation. Since
$\E[Y_i]=2\eps\sigma_i^*$,
\begin{align*}
\E\left[\frac12+\frac{bY_i}{4\eps'}\right]
=
\frac12+\frac{b\E[Y_i]}{4\eps'}
=
\frac12+\frac{b\sigma_i^*\eps}{2\eps'} 
=
\begin{cases}
1-\frac{\eps'-\eps}{2\eps'}, & \text { if } b\sigma_i^*=1,\\[1mm]
\frac{\eps'-\eps}{2\eps'}, & \text { if } b\sigma_i^*=-1.
\end{cases}
\end{align*}
The bounded-degree counter decoder is exactly the product of these one-literal
decoders. Indeed, if a constraint has exactly $i$ literals not satisfied by
the predicted assignment, then
\begin{align*}
\prod_{t=1}^k\left(\frac12+\frac{b_tY_{j_t}}{4\eps'}\right)
=
\left(\frac12+\frac{1}{4\eps'}\right)^{k-i}
\left(\frac12-\frac{1}{4\eps'}\right)^i 
=
\frac{1}{(2\eps')^k}
\left(\frac12+\eps'\right)^{k-i}
\left(-\frac12+\eps'\right)^i,
\end{align*}
which is the coefficient used by \cref{alg:bounded-max-kand} after replacing
$\eps$ by $\eps'$.

Now consider a constraint
$C=((j_1,\ldots,j_k),(b_1,\ldots,b_k))$. If some literal with $j_t\in H$ is
false under $\boldsymbol{\tau}_H$, then the contribution of $C$ to both the
decoded expression and $\mathrm{val}_{\Psi}(\boldsymbol{\tau})$ is zero.
Otherwise, suppose that there are $\ell$ low variables in $C$, and exactly
$s$ of their literals are false under $\boldsymbol{\sigma}^*$. Since the
coordinates of $C$ are distinct, the corresponding oracle labels are
independent, and the expected decoded contribution of $C$ is
\[
\left(1-\frac{\eps'-\eps}{2\eps'}\right)^{\ell-s}
\left(\frac{\eps'-\eps}{2\eps'}\right)^s .
\]
The true contribution of $C$ to $\mathrm{val}_{\Psi}(\boldsymbol{\tau})$ is
$1$ if $s=0$ and $0$ otherwise. Under the assumption
$\left|\frac{\eps'-\eps}{2\eps'}\right|\le 1/(2k)$, if $s=0$ then
\begin{align*}
\left|
\left(1-\frac{\eps'-\eps}{2\eps'}\right)^\ell-1
\right|
\le
\ell\left|\frac{\eps'-\eps}{2\eps'}\right|
\left(1+\left|\frac{\eps'-\eps}{2\eps'}\right|\right)^{\ell-1} 
\le
2k\left|\frac{\eps'-\eps}{2\eps'}\right|,
\end{align*}
and if $s\ge1$ then
\begin{align*}
\left|
\left(1-\frac{\eps'-\eps}{2\eps'}\right)^{\ell-s}
\left(\frac{\eps'-\eps}{2\eps'}\right)^s
\right|
\le
\left(1+\left|\frac{\eps'-\eps}{2\eps'}\right|\right)^{k-1}
\left|\frac{\eps'-\eps}{2\eps'}\right| 
\le
2\left|\frac{\eps'-\eps}{2\eps'}\right|
\le
2k\left|\frac{\eps'-\eps}{2\eps'}\right|.
\end{align*}
Thus, for every fixed $\boldsymbol{\tau}_H$,
\begin{align}
\left|
\E[\widetilde{\mathrm{val}}_{\eps'}(\boldsymbol{\tau}_H)]
-
\mathrm{val}_{\Psi}(\boldsymbol{\tau})
\right|
&\le
2km\left|\frac{\eps'-\eps}{2\eps'}\right|
\le
\frac{\eta m}{2^{k+4}}.\label{eq:max-kand-eps-bias}
\end{align}
Here $\widetilde{\mathrm{val}}_{\eps'}$ denotes the decoded value after
replacing $\eps$ by $\eps'$.

Next we bound the centered error around the new expectation. For the
bounded-degree algorithm, applying \cref{lem:concentration-max-kand} with
$\eps'$ and internal accuracy parameter $\eta/8$ gives, after a union bound,
\[
|X_i-\E[X_i]|\le \frac{(\eps')^k\eta m}{16}
\quad\text{for all }i\in\{0,\ldots,k\}.
\]
Therefore the centered error of the decoded counter value is at most
\begin{align}
\frac{(\eps')^k\eta m}{16(2\eps')^k}
\sum_{i=0}^k
\left(\frac12+\eps'\right)^{k-i}
\left(\frac12-\eps'\right)^i
&\le \frac{\eta m}{2^{k+4}},\label{eq:max-kand-bounded-centered}
\end{align}
where we used
$\sum_{i=0}^k(\frac12+\eps')^{k-i}(\frac12-\eps')^i\le 1$.

For the unbounded-degree offline algorithm, the same bounded-differences
calculation as in \cref{lem:unbounded-max-kand-concentration} applies with
$\eps'$ in place of $\eps$. Let $r=|H|$. By the stopping condition, every
$i\in L$ satisfies $\deg(i)\le \theta m/(r+1)$. For a fixed high assignment,
changing the low-variable oracle label $Y_i$ changes one factor by
$1/(2\eps')$, and all other decoded factors have absolute value at most
$(2\eps')^{-1}$. Hence changing $Y_i$ changes the decoded value by at most
\[
c_i=(2\eps')^{-k}\deg(i),
\]
and
\begin{align*}
\sum_{i\in L} c_i^2
=(2\eps')^{-2k}\sum_{i\in L}\deg(i)^2
\le
(2\eps')^{-2k}\frac{\theta m}{r+1}\sum_{i\in L}\deg(i) 
\le
\frac{k\theta m^2}{(2\eps')^{2k}(r+1)}.
\end{align*}
With internal accuracy $\eta/8$, the threshold is chosen so that the
bounded-differences inequality and a union bound over the $2^r$ high
assignments give
\begin{align}
\Pr\left[
\exists \boldsymbol{\tau}_H:
\left|
\widetilde{\mathrm{val}}_{\eps'}(\boldsymbol{\tau}_H)
-
\E[\widetilde{\mathrm{val}}_{\eps'}(\boldsymbol{\tau}_H)]
\right|
>
\frac{\eta m}{2^{k+4}}
\right]
&\le \delta.\label{eq:max-kand-unbounded-centered}
\end{align}
In the streaming algorithms, the reconstruction proofs are algebraic and use
only the coefficient bound $(2\eps')^{-k}$; after the same replacement of
parameters and with internal accuracy $\eta/8$, the additional sketching error
is also at most $\eta m/2^{k+4}$ uniformly over all candidates.

Combining the bounds in (\ref{eq:max-kand-eps-bias})--(\ref{eq:max-kand-unbounded-centered})
and the streaming reconstruction bound, every decoded candidate value differs
from the value of its corresponding feasible assignment by at most
$\eta m/2^{k+2}$. The maximum of these feasible-assignment values is $\opt$:
choosing $\boldsymbol{\tau}_H=\boldsymbol{\sigma}^*|_H$ gives $\opt$, while
every other choice is still a feasible assignment. Hence, if $M$ denotes the
maximum decoded candidate value, then
\[
|M-\opt|\le \frac{\eta m}{2^{k+2}}.
\]
The modified algorithms output
$\mathrm{ALG}=M-\eta m/2^{k+2}$. Therefore
\[
\mathrm{ALG}\le \opt,
\]
and, by taking $\boldsymbol{\tau}_H=\boldsymbol{\sigma}^*|_H$ in the maximum,
\begin{align*}
\mathrm{ALG}
\ge
\opt-\frac{\eta m}{2^{k+2}}-\frac{\eta m}{2^{k+2}} 
\ge
\opt-\frac{\eta m}{2^k}
\ge
(1-\eta)\opt,
\end{align*}
where the last step uses $\opt\ge m/2^k$ for \textsc{Max-$k$AND}. The proof
uses independence only among the oracle labels of distinct variables in this
single oracle vector, exactly as in the exact-$\eps$ proofs; it does not use
independence between repeated executions of the algorithm.
\end{proof}

\end{document}